\title{Quadratically Constrained Two-way Adversarial  Channels}
\author{
\IEEEauthorblockN{
Yihan Zhang\IEEEauthorrefmark{1},
Shashank Vatedka\IEEEauthorrefmark{2}
Sidharth Jaggi\IEEEauthorrefmark{1}
}\\
\IEEEauthorblockA{
\IEEEauthorrefmark{1}Dept.\ of Information Engineering, The Chinese University of Hong Kong, Hong Kong SAR \\
\{\href{mailto:zy417@ie.cuhk.edu.hk}{zy417},\href{mailto:jaggi@ie.cuhk.edu.hk}{jaggi}\}@ie.cuhk.edu.hk \\
\IEEEauthorrefmark{2}Dept. of Electrical Engineering, Indian Institute of Technology, Hyderabad \\
\href{mailto:shashankvatedka@iith.ac.in}{shashankvatedka@iith.ac.in} 
}
}
\begin{document}
\maketitle

\begin{abstract}
	We study achievable rates of reliable communication in a power-constrained two-way additive interference channel over the real alphabet where communication is disrupted by a power-constrained jammer. This models the wireless communication scenario where two users Alice and Bob, operating in the full duplex mode, wish to exchange messages with each other in the presence of a jammer, James. Alice and Bob simultaneously transmit their encodings $ \underline{x}_A $ and $\underline{x}_B $ over $ n $ channel uses. It is assumed that James can choose his jamming signal $ \underline{s} $ as a noncausal randomized function of $ \underline{x}_A $ and $ \underline{x}_B $, and the codebooks used by Alice and Bob. Alice and Bob observe $ \underline{x}_A+\underline{x}_B +\underline{s}$, and must recover each others' messages reliably. In this article, we provide upper and lower bounds on the capacity of this channel which match each other and equal $ \frac{1}{2}\log\paren{\frac{1}{2} + \mathsf{SNR}} $ in the high-$\mathsf{SNR}$ regime (where $\mathsf{SNR}$, \emph{signal to noise ratios}, is defined as the ratio of the power constraints of the users to the power constraint of the jammer). We give a code construction based on lattice codes, and derive achievable rates for large $\mathsf{SNR}$. We also present upper bounds based on two specific attack strategies for James. Along the way, sumset property of lattices for the achievability and general properties of capacity-achieving codes for memoryless channels for the converse are proved, which might be of independent interest.
\end{abstract}


\section{Introduction}

Our work is motivated by jamming in multiuser wireless channels. Consider two users Alice and Bob who wish to exchange independent messages (assumed to be uniformly distributed in a set of size $ 2^{nR} $) with each other over the wireless medium. The communications is disrupted by an adversarial jammer, James, who injects additive noise into the channel. We assume that all three parties operate in the full-duplex mode, which means that they are able to transmit and receive simultaneously. Alice and Bob encode their messages into $ n $-length sequences $ \vbfx_A $ and $ \vbfx_B $ with real valued components and are simultaneously transmitted across the channel. At the same time, James transmits a jamming sequence $ \vbfs $. The channel is additive, and each user gets to observe $ \vbfy = \vbfx_A+\vbfx_B+\vbfs $. The goal of the two users is to recover each others' message reliably from this observation.

The signals transmitted by Alice, Bob and James are required to satisfy quadratic power constraints of $ nP,nP $, and $ nN $ respectively, i.e.,
\[
\Vert \vbfx_A\Vert^2\leq nP, \quad \Vert \vbfx_B\Vert^2\leq nP,\quad \Vert \vbfs\Vert^2\leq nN.
\]
We assume that James can select his jamming signal $ \vbfs $ as a noncausal function of $\vbfz\coloneq \vbfx_A+\vbfx_B $, and also the codebooks/coding strategies used by Alice and Bob. However, James has no additional information about the messages or the transmitted signals in addition to that revealed by $ \vbfx_A+\vbfx_B $ and the users' codebooks. We call this the $ (P,N) $ quadratically constrained two-way adversarial channel problem. This is illustrated in Fig.~\ref{fig:two-way-gaussian-adv-ic}.

The goal is to design sequences of encoders and decoders for Alice and Bob such that the probability of error of decoding the respective messages is vanishing in $ n $. Here, the randomness is over the encoding processes used by Alice and Bob, as well as the jamming signal. We say that a rate $ R $ is achievable if there exist sequences of codes for which the associated probabilities of decoding error is vanishing in $ n $, and the capacity is the supremum of all achievable rates.

In this paper, we give an upper bound on the capacity. We show that reliable communication is impossible for $ N\geq 3P/4 $. For $ N<3P/4 $, we show that the capacity is upper bounded by $ \frac{1}{2}\log \left(\frac{1}{2}+\frac{P}{N}\right) $. We also describe a coding scheme which shows that for sufficiently large values of $ P/N $, this bound is achievable.


The problem considered in this paper falls under the general setup of arbitrarily varying channels (AVCs), introduced by Blackwell et al.~\cite{blackwell-avc-1960}. This framework is a good model for channels where the noise statistics are arbitrary and unknown, and also where communication is disrupted by active adversaries.
Much of the literature has focused on point-to-point communication where Alice wants to send a message to Bob, and James attempts to jam the transmission. 
The quadratically constrained point-to-point AVC (also called the Gaussian AVC) was studied by Blachman~\cite{blachman-1962}, who gave upper and lower bounds on the capacity of the channel under the assumption that
James observes a noiseless version of the transmitted codeword (a.k.a.\ the \emph{omniscient} adversary). 
Later,  Hughes and Narayan~\cite{hughes-narayan-it1987}, and Csisz\'ar and Narayan~\cite{csiszar-narayan-it1991}, studied the problem with an ``oblivious'' James, who knows the codebook, but does not see the transmitted codeword.  
They showed that under an average probability of error metric, the capacity of the oblivious adversarial channel is equal to $ \frac{1}{2}\log\left(1+\frac{P}{N}\right) $ when $ P>N $ and zero otherwise.

Successive works have characterized the error exponent of the oblivious Gaussian AVC~\cite{thomas1991exponential}, capacity of the oblivious vector Gaussian AVC~\cite{hughes1988vectorgaussAVC}, and the Gaussian AVC with an unlimited amount of shared secret key between Alice and Bob~\cite{sarwate-gastpar-isit2006}. Sarwate~\cite{sarwate-spcom2012}, and later Zhang et al.~\cite{zhang-quadratic-isit}  studied the myopic AVC,  where James can choose his jamming vector as a function of the codebooks and a noisy copy of the transmitted signal. A related model was studied by Haddadpour et al.~\cite{haddadpour-isit2013}, who assumed that James knows the message, but not the exact codeword transmitted by Alice. Game-theoretic versions of the problems have also been considered in the literature, including the point-to-point case~\cite{medard}, with multiple antennas at the transmitter and receiver~\cite{baker-chao-siam1996}, and also the two-sender scenario~\cite{shafiee}. The list decoding capacity under the oblivious and omniscient cases were studied by Hosseinigoki and Kosut~\cite{hosseinigoki-kosut-2018-oblivious-gaussian-avc-ld} and Zhang and Vatedka~\cite{zhang-vatedka-2019-ld-real} respectively.


Multiuser AVCs have received attention only very recently. Multiple access channels with adversarial jamming were studied in~\cite{pereg2019capacitymac,sangwan2019byzantine}.  The capacity of the relay channel was analyzed in~\cite{pereg2019arbitrarily}, while~\cite{pereg2019arbitrarilybroadcast} gave inner and outer bounds on the capacity region of the degraded broadcast channel with side information at the encoder.

The work most related to our paper is that on the discrete-alphabet two-way additive channel with an adversarial jammer which was studied by Jaggi and Langberg~\cite{jaggi-langberg-2017-two-way}. They showed that for discrete additive channels over $ \bF_q $ where James' transmissions must satisfy a Hamming weight constraint of $ p $, the capacity is equal to $ 1-H_q(p) $. In other words, James can do no worse than transmitting random noise. Many of our ideas were inspired by this work, and we will elaborate on this in the coming sections. However, the conclusions that we can draw about the quadratically constrained case are different. In particular, the capacity is \emph{lower} that what we would get if the noise vector were Gaussian. A game-theoretic version of the quadratically constrained case we study here was studied by McDonald et al.~\cite{mcdonald2019two}.

\section{Overview of our results and techniques}
\subsection{Overview of results}
For a $ (P,N) $ quadratically constrained two-way adversarial channel, let $ \snr\coloneqq P/N $ be the \emph{signal-to-noise ratio}.
\begin{theorem}[Achievability]\label{thm:achievability}
For a $ (P,N) $ quadratically constrained two-way adversarial channel, given any sufficiently small constant $ \delta>0 $, if $ \snr>g(\delta) $ for some function $ g $ such that $ g(\delta)\xrightarrow{\delta\to0}\infty $, then both users can achieve rate $ \frac{1}{2}\log\paren{\frac{1}{2}+\frac{P}{N}} -\delta $. That is, $ C_A\ge\sqrbrkt{\frac{1}{2}\log\paren{\frac{1}{2}+\frac{P}{N}}}^+ $ and $ C_B\ge\sqrbrkt{\frac{1}{2}\log\paren{\frac{1}{2}+\frac{P}{N}}}^+ $.
\end{theorem}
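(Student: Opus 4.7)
The plan is to reduce the problem to a one-way channel with a ``sum-informed'' adversary and then exploit a structural property of lattice codes to hide each user's individual codeword from the jammer. Since Alice knows $\vbfx_A$, she can subtract it from her observation to obtain $\vbfx_B + \vbfs$; symmetrically Bob gets $\vbfx_A + \vbfs$. It therefore suffices to show that from either user's point of view, the partner's codeword can be decoded reliably against an adversary whose only information about the message pair is the sum $\vbfz = \vbfx_A + \vbfx_B$ (together with the shared codebook). Concretely, both Alice and Bob would draw their codewords from a common lattice codebook $\mathcal{C} = \Lambda \cap B(0, \sqrt{nP})$, where $\Lambda$ is a Poltyrev-good lattice drawn from a suitable random ensemble and the shaping region is a Euclidean ball, tuned so that $|\mathcal{C}| \approx 2^{nR}$ with $R$ approaching $\tfrac{1}{2}\log\paren{\tfrac{1}{2} + P/N} - \delta$.

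The core technical ingredient is the \emph{sumset property} alluded to in the abstract: for the chosen random lattice ensemble, for a $1-o(1)$ fraction of sums $\vbfz$ in the support of $\vbfx_A + \vbfx_B$ (when the codewords are drawn uniformly from $\mathcal{C}$), the number of preimages $(\vbfx_A, \vbfx_B) \in \mathcal{C}^2$ with $\vbfx_A + \vbfx_B = \vbfz$ is exponentially large in $n$. Equivalently, $|\mathcal{C} + \mathcal{C}|$ grows much more slowly than $|\mathcal{C}|^2$, so that after observing $\vbfz$, James retains substantial uncertainty about each individual codeword. This is where the lattice/linear structure is essential --- a generic i.i.d.\ Gaussian random codebook would not enjoy such a collision property.

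With this in hand, each user applies minimum-distance (lattice) decoding on the partner's codeword. For any fixed deterministic jamming map $\vbfz \mapsto \vbfs(\vbfz)$, the single attack vector attached to each sum points in only one direction, while by the sumset property there are exponentially many codeword pairs compatible with $\vbfz$; a geometric counting argument then shows that $\vbfs(\vbfz)$ can drive only a vanishing fraction of those pairs into decoding error. Averaging over the random encoding and the uniformly chosen messages, followed by the usual expurgation, yields a deterministic code of rate $R$. The principal obstacle I expect is proving the sumset property with parameters sharp enough to match the exact rate $\tfrac{1}{2}\log\paren{\tfrac{1}{2} + P/N}$: one must pin down not just a typical multiplicity but also control the exceptional set of low-multiplicity sums $\vbfz$, and interface the statement cleanly with lattice decoding against an $\ell_2$-bounded adversary --- a step where Euclidean geometry, rather than off-the-shelf Pl\"unnecke--Ruzsa type combinatorics, must do the work.
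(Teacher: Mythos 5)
Your plan captures the code design (expurgated lattice codes with spherical shaping) and one real ingredient (the sumset property: exponentially many $(\vbfx_A,\vbfx_B)$-pairs sum to a typical $\vbfz$, so James retains uncertainty), but the core of your decoding and rate argument has a gap that prevents reaching the claimed rate $\frac{1}{2}\log(\frac{1}{2}+P/N)$.

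The problem is the naive reduction. You propose: Bob subtracts his own $\vbfx_B$ from $\vbfy_B$ to get $\vbfx_A+\vbfs$, then applies minimum-distance decoding with a Jaggi--Langberg style counting argument showing a fixed $\vbfs(\vbfz)$ can only corrupt a vanishing fraction of the confusable pairs. Exactly this pipeline is analyzed in the paper (Sec.~6) and it tops out at the list-decoding capacity $\frac{1}{2}\log\frac{P}{N}$: after cancellation the residual noise $\vbfs$ still carries the full power $nN$, and once the rate exceeds $\frac{1}{2}\log\frac{P}{N}$, \emph{every} Euclidean code of radius $\sqrt{nP}$ has lists of size exponential in $n$ against an $\ell_2$-adversary of power $nN$, so the expurgation/counting step cannot recover unique decodability regardless of what $\vbfz$ hides from James. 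Since $\frac{1}{2}\log(\frac{1}{2}+P/N)>\frac{1}{2}\log\frac{P}{N}$, your argument falls short of the target by the additive $\frac{1}{2}$ inside the log.

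The missing idea is the posterior-estimation decoder. James' best attack decomposes as $\vbfs=-\alpha\vbfz+\vbfsperp$ with $\vbfsperp\perp\vbfz$; the power he spends along $\vbfz$ effectively shrinks both users' signals instead of pushing $\vbfx_A$ off course. Bob estimates $\alpha$ via $\wh\alpha=1-\inprod{\vbfy_B}{\vbfx_B}/(nP)$, cancels $(1-\wh\alpha)\vbfx_B$ rather than $\vbfx_B$, and decodes against the perpendicular noise $\vbfsperp$ of power only $\wt N\approx N-2\alpha^2P$. Even then the worst-case decoding radius is too large; the crucial extra step is that the codewords consistent with $\vbfz$ live in a thin strip $\cT$ perpendicular to $\vbfz$, and averaged over that strip the \emph{typical} effective decoding radius drops to $\sqrt{n\frac{\wt P\wt N}{\wt P+\wt N}}$, which yields exactly $\frac{1}{2}\log(1+\wt P/\wt N)=\frac{1}{2}\log(\frac12+P/N)$ after optimizing over $\alpha$. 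Without the $\alpha$-estimation, the strip geometry, and the typical-vs-worst-case decoding radius analysis (and the accompanying McDiarmid concentration over expurgation), the argument cannot exceed $\frac{1}{2}\log\frac{P}{N}$.
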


\begin{theorem}[Converse]\label{thm:converse_scaleandbabble}
For a $ (P,N) $ quadratically constrained two-way adversarial channel, for any  sufficiently small constant $ \delta>0 $, neither of the users can achieve  rate larger than $ \frac{1}{2}\log\paren{\frac{1}{2}+\frac{P}{N}} + \delta $. That is,  $ C_A\le\sqrbrkt{\frac{1}{2}\log\paren{\frac{1}{2}+\frac{P}{N}}}^+ $ and $ C_B\le\sqrbrkt{\frac{1}{2}\log\paren{\frac{1}{2}+\frac{P}{N}}}^+ $.
\end{theorem}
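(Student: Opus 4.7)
The plan is to construct an explicit randomized jamming strategy for James---a \emph{scale-and-babble} attack---and then apply Fano's inequality to the channel that results. Since restricting James to any particular randomized strategy can only make the channel easier, any rate upper bound obtained under such an attack is a valid converse for the original capacity. By the symmetry of the problem in Alice and Bob, it suffices to bound $C_A$.

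The attack I would analyze is: James samples independent Gaussian babble $\vbfg \sim \mathcal{N}(0, \sigma^2 I_n)$ and transmits
\[
\vbfs \;=\; -\alpha\,\vbfz + \vbfg,
\]
where $\vbfz = \vbfx_A + \vbfx_B$ is his noncausal observation and $(\alpha, \sigma^2)$ are deterministic parameters chosen to damage Alice's decoder as much as possible while respecting the (expected) power budget $2\alpha^2 P + \sigma^2 \leq N$ (using the heuristic $\|\vbfz\|^2 \approx 2nP$). A one-parameter optimization yields $\alpha^\ast = N/(2P)$ and $\sigma^{\ast 2} = N(1 - N/(2P))$, which is feasible exactly in the regime $N < 2P$ where the claimed bound is nontrivial. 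Under this attack, Alice receives $\vbfy = (1-\alpha)(\vbfx_A + \vbfx_B) + \vbfg$; subtracting her own contribution yields a memoryless scaled AWGN observation $\tilde{\vbfy} := \vbfy - (1-\alpha)\vbfx_A = (1-\alpha)\vbfx_B + \vbfg$. Fano's inequality, together with the independence of $W_A$ and $W_B$ and the classical Gaussian capacity formula, then gives
\[
R_B \;\leq\; \frac{1}{2}\log\!\left(1 + \frac{(1-\alpha)^2 P}{\sigma^2}\right) + \epsilon_n,
\]
and substituting $(\alpha^\ast, \sigma^{\ast 2})$ produces precisely the desired bound $\tfrac{1}{2}\log\!\left(\tfrac{1}{2} + \tfrac{P}{N}\right) + \epsilon_n$.

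The main obstacle is that James's quadratic constraint $\|\vbfs\|^2 \leq nN$ must hold with probability one, not merely in expectation. A short calculation shows $\|\vbfs\|^2 \approx \alpha^2\|\vbfz\|^2 + n\sigma^2$, so the attack is legitimate only when $\|\vbfx_A + \vbfx_B\|^2 \lesssim 2nP$; this can fail on codeword pairs where $\vbfx_A$ and $\vbfx_B$ are positively correlated (in the extreme, $\vbfx_A = \vbfx_B$ would give $\|\vbfz\|^2 = 4nP$). To close this gap I would invoke a structural property of capacity-achieving codes: for any code whose rate is close to the capacity of a quadratically power-constrained memoryless channel, almost all codewords have norm close to $\sqrt{nP}$ and almost all codeword pairs are approximately orthogonal, mirroring the i.i.d.\ Gaussian capacity-achieving input. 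This is the ``general properties of capacity-achieving codes for memoryless channels'' result flagged in the abstract. With such a property in hand, one either passes to a subcode of essentially full rate on which $|\langle \vbfx_A, \vbfx_B\rangle| \leq \eta n P$ holds uniformly, or one lets $\alpha$ and $\sigma^2$ depend adaptively on $\|\vbfz\|^2$, giving up only $o(1)$ in the effective SNR. Establishing this structural property, and then propagating its small deterministic error terms through Fano's inequality while preserving the tight constant $1/2$ inside the logarithm, is the main technical work; the AWGN-capacity calculation downstream is routine.
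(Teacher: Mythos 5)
Your proposal takes essentially the same approach as the paper: the scale-and-babble attack $\vbfs = -\alpha\vbfz + \vbfg$ with optimal $\alpha^* = N/(2P)$, a Fano-plus-AWGN-capacity argument on the resulting effective channel, and, as the technical crux, a structural lemma that AWGN-capacity-achieving codes have almost all codeword pairs approximately orthogonal (which the paper proves by contradiction via subcode extraction and a power-reduction argument). The one detail where the paper's execution differs from your sketch is the handling of the power-constraint violation: rather than extracting a product subcode on which orthogonality holds uniformly (delicate, since approximate orthogonality of \emph{most} pairs $(\vbfx_A,\vbfx_B)$ need not yield a full-rate \emph{product} subcode $\cC_A'\times\cC_B'$ with the property) or letting $\alpha$ depend adaptively on $\normtwo{\vbfz}^2$ (which makes the effective channel non-memoryless and complicates revealing $\alpha$ to the decoder), the paper simply truncates $\vbfs$ to the ball of radius $\sqrt{nN}$ whenever it overshoots, splits the Fano chain on the binary indicator of the truncation event $\cE$, and shows $Q := \prob{\cE} = o_n(1)$ by combining the orthogonality lemma with Gaussian and $\chi^2$ concentration for the babble component.
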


\begin{corollary}[Capacity]\label{cor:capacity}
For a $ (P,N) $ quadratically constrained two-way adversarial channel, given any  sufficiently small constant $ \delta>0 $, if $ \snr>g(\delta) $ for some function $ g $ such that $ g(\delta)\xrightarrow{\delta\to0}\infty $, then $ C_A = C_B = \sqrbrkt{\frac{1}{2}\log\paren{\frac{1}{2}+\frac{P}{N}}}^+ $. 
\end{corollary}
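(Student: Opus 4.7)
The plan is to derive Corollary~\ref{cor:capacity} as an immediate consequence of Theorems~\ref{thm:achievability} and~\ref{thm:converse_scaleandbabble} by sandwiching. Theorem~\ref{thm:converse_scaleandbabble} contributes the unconditional upper bound $C_A, C_B \le \left[\tfrac{1}{2}\log\!\left(\tfrac{1}{2} + P/N\right)\right]^+$, and Theorem~\ref{thm:achievability} contributes the matching lower bound $C_A, C_B \ge \left[\tfrac{1}{2}\log\!\left(\tfrac{1}{2} + P/N\right)\right]^+$ under the hypothesis $\snr > g(\delta)$. Both cleaned-up inequalities (the ``That is'' clauses of the two theorems) already absorb the auxiliary $\delta$ slack through the standard capacity-definition step of letting the rate gap tend to $0$, so the sandwich is immediate and forces the asserted equality under the corollary's SNR hypothesis. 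Concretely, I would simply write $\left[\tfrac{1}{2}\log\!\left(\tfrac{1}{2} + P/N\right)\right]^+ \le C_A \le \left[\tfrac{1}{2}\log\!\left(\tfrac{1}{2} + P/N\right)\right]^+$ and symmetrically for $C_B$, concluding both equalities in a single line.

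The only bookkeeping point I would flag is the positive-part operator $[\,\cdot\,]^+$: in the regime hypothesized by the corollary the argument $\tfrac{1}{2}\log\!\left(\tfrac{1}{2} + P/N\right)$ is automatically positive, since $g(\delta) \to \infty$ as $\delta \to 0$ forces $\snr = P/N$ to be large (in particular well above $1/2$), so $[\,\cdot\,]^+$ is transparent and acts as the identity. Equivalently, the corollary silently excludes the trivially-zero-capacity regime by insisting on $\delta$ small enough. I do not anticipate any substantive obstacle: the entire content of the corollary is contained in the two preceding theorems, and the proof reduces to a direct citation of both, with the remark above about the positive part. All the real work lies upstream, in establishing Theorems~\ref{thm:achievability} and~\ref{thm:converse_scaleandbabble}.
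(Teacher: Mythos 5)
Your proof is correct and matches the paper's implicit argument exactly: Corollary~\ref{cor:capacity} is stated as an immediate consequence of Theorems~\ref{thm:achievability} and~\ref{thm:converse_scaleandbabble}, with precisely the sandwich you describe, and the paper offers no separate proof because none is needed. Your side remark that $[\,\cdot\,]^+$ is transparent in the hypothesized high-$\snr$ regime is accurate and a reasonable bit of bookkeeping to flag.
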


Both our achievability and converse results can be trivially generalized to the asymmetric case, where the transmissions of Alice and Bob must satisfy 
\[
\Vert \vbfx_A\Vert^2\leq nP_A, \quad \Vert \vbfx_B\Vert^2\leq nP_B,
\]
James can independently jam the received vectors of Alice and Bob with jamming signals $ \vbfs_A $ and $ \vbfs_B $ which must satisfy
\[
 \Vert \vbfs_A\Vert^2\leq nN_A, \quad  \Vert \vbfs_B\Vert^2\leq nN_B.
\] 
Here Alice and Bob respectively receive $ \vbfy_A =\vbfx_A+\vbfx_B+\vbfs_A $ and $ \vbfy_B =\vbfx_A+\vbfx_B+\vbfs_B $.
For this $ (P_A,P_B,N_A,N_B) $ quadratically constrained two-way adversarial channel, let $ \snr_A\coloneqq P_B/N_A $ and $ \snr_B\coloneqq P_A/N_B $ be the SNRs of user one and two, respectively. Then we have
\begin{corollary}[Capacity, asymmetric case]\label{cor:capacity_asymm}
For a $ (P_A,P_B,N_A,N_B) $ quadratically constrained two-way adversarial channel, given any sufficiently small constants  $ \delta_1,\delta_2>0 $, if $ \snr_A>g_1(\delta_1) $ and $ \snr_B>g_2(\delta_2) $ for some functions $ g_1 $ and $ g_2 $ such that $ g_1(\delta_1)\xrightarrow{\delta_1\to0}\infty $ and $ g_2(\delta_2)\xrightarrow{\delta_2\to0}\infty $, then $ C_A =  \sqrbrkt{\frac{1}{2}\log\paren{ \frac{P_A}{P_A+P_B} + \frac{P_B}{N_A} }}^+ $ and $ C_B = \sqrbrkt{\frac{1}{2}\log\paren{ \frac{P_B}{P_A+P_B} + \frac{P_A}{N_B} }}^+ $. 
\end{corollary}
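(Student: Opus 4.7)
The plan is to obtain Corollary~\ref{cor:capacity_asymm} by replaying the arguments of Theorems~\ref{thm:achievability} and~\ref{thm:converse_scaleandbabble} with the symbols $(P,N)$ replaced by the appropriate asymmetric parameters. Alice's decoding problem depends only on $(P_A,P_B,N_A)$ and Bob's only on $(P_A,P_B,N_B)$, so the two capacity expressions can be proved independently; by the symmetry of the argument under swapping the roles of the users, it suffices to derive $C_A=\sqrbrkt{\frac{1}{2}\log\paren{\frac{P_A}{P_A+P_B}+\frac{P_B}{N_A}}}^+$. A first sanity check is that setting $P_A=P_B=P$ and $N_A=N$ recovers the symmetric expression $\frac{1}{2}\log\paren{\frac{1}{2}+P/N}$ of Corollary~\ref{cor:capacity}.

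For the converse, I would generalize the scale-and-babble attack: James sets $\vbfs_A=-\alpha(\vbfx_A+\vbfx_B)+\vbfg$, with $\vbfg$ an approximately codeword-independent babble component, subject to $\Vert\vbfs_A\Vert^2\le nN_A$. After Alice subtracts her known codeword from $\vbfy_A$, the residual channel carries useful signal of power $\approx(1-\alpha)^2P_B$ against noise of power $\approx\alpha^2(P_A+P_B)+\Vert\vbfg\Vert^2/n$, under the budget $\alpha^2(P_A+P_B)+\Vert\vbfg\Vert^2/n\le N_A$. Optimizing this one-variable Lagrangian gives $\alpha^\star=N_A/(P_A+P_B)$, and substituting yields the claimed rate $\frac{1}{2}\log\paren{\frac{P_A}{P_A+P_B}+\frac{P_B}{N_A}}$. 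The general capacity-achieving-code structural lemma advertised in the abstract is the same one used in Theorem~\ref{thm:converse_scaleandbabble}, now applied to the sum signal $\vbfz=\vbfx_A+\vbfx_B$ of second moment $P_A+P_B$ rather than $2P$.

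For achievability, Alice and Bob draw codewords from nested lattice codebooks of second moments $P_A$ and $P_B$ respectively, and Alice decodes Bob's message by quantizing $\vbfy_A-\vbfx_A=\vbfx_B+\vbfs_A$ against Bob's fine lattice. The same covering/packing estimates as in Theorem~\ref{thm:achievability} then deliver the desired rate, provided an asymmetric analogue of the sumset-of-lattices property holds: the sum $\vbfx_A+\vbfx_B$ of codewords drawn from two lattices of different densities must still lie in a structured superset that can be absorbed modulo a common coarse lattice.

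The main obstacle lies on this achievability side. The symmetric construction crucially exploits that Alice and Bob use \emph{identical} lattices, so that $\vbfx_A+\vbfx_B$ is itself lattice-structured and can be canceled by a single coarse lattice. When $P_A\neq P_B$ this direct sumset identity breaks, and the codebooks must be carefully aligned --- for instance by taking the two fine lattices as differently scaled sublattices of a common mother lattice sharing a compatible coarse lattice --- so that the jammer's coherent leverage over $\vbfx_A+\vbfx_B$ can still be neutralized by lattice quantization while preserving the smoothness and covering parameters used in the achievability proof. Extending the sumset property and the associated concentration estimates to this asymmetric setting is where the real technical work would go; the converse, by contrast, reduces to the one-line optimization sketched above.
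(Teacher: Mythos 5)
The converse half of your sketch is on target and matches the paper's own treatment: the asymmetric scale-and-babble optimization (James sets $\vbfs_A=-\alpha(\vbfx_A+\vbfx_B)+\vbfg$, maximizes over $\alpha$ subject to $\alpha^2(P_A+P_B)+\gamma^2\le N_A$, lands at $\alpha^\star=N_A/(P_A+P_B)$) is spelled out in the remarks of Sec.~\ref{sec:beyond_listdec_cap}, and the empirical-independence lemma for AWGN-good codes carries over verbatim with $\vbfz$ of second moment $P_A+P_B$. So that part is fine.

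The achievability side is where your proposal drifts from the paper, and also where the real subtlety lies, but not quite for the reason you give. First, the paper's scheme is not nested-lattice modulo quantization: Alice and Bob each use the same fine Construction-A lattice $\Lf$ with \emph{spherical} shaping $\cC_A=\Lf\cap\cB^n(\vzero,\sqrt{nP_A})$, $\cC_B=\Lf\cap\cB^n(\vzero,\sqrt{nP_B})$, plus independent random expurgation, and the decoder is the posterior-estimation rule of Sec.~\ref{sec:beyond_listdec_cap} (estimate $\wh\alpha$, recenter, decode in a shrunken ball), not a lattice quantizer. Second, your stated worry --- that ``when $P_A\ne P_B$ the sumset identity breaks'' --- is not correct for this construction: since both codebooks live in the \emph{same} group $\Lf$, $\vbfx_A+\vbfx_B\in\Lf$ is automatic, and for $\vz=\vx_A+\vx_B$ the confusable pairs are exactly $\Lf\cap\cB^n(\vzero,\sqrt{nP_A})\cap\cB^n(\vz,\sqrt{nP_B})$, still an exponentially large lens. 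Third, your opening claim that ``the two capacity expressions can be proved independently'' is false on the achievability side: the definition of capacity requires the \emph{same} codebook pair to make both $P_{\e,A}$ and $P_{\e,B}$ small, so the two decoding problems are coupled through the shared lattice.

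That said, your instinct that something nontrivial remains is sound, and the genuine obstruction is one notch deeper than the sumset. With a single lattice $\Lf$ of effective-radius parameter $\tau=\reff(\Lf)^2/n$, the achievability proof needs $\Lf$ to be list-decodable (after the $(1-\alpha)$-rescaling) at the average effective decoding radius, which works out to roughly $\tau\gtrsim\tau_A^{\mathrm{req}}\approx N_A$ for Alice's decoding and $\tau\gtrsim\tau_B^{\mathrm{req}}\approx N_B$ for Bob's decoding, while the pre-expurgation rates are $\frac12\log(P_A/\tau)$ and $\frac12\log(P_B/\tau)$. When $N_A\ne N_B$ these two requirements cannot both be tight for a single $\tau$, so one user's rate lands strictly below the claimed capacity unless one compensates --- for instance by heavier expurgation on one side (the Lipschitz constant $L$ and $\expt{f}$ in the McDiarmid bound both scale with the inflated list size, so the extra $L$ cancels and one can re-balance $\gamma_A,\gamma_B$), or by the nested/common-mother-lattice construction you gesture at, which the paper already has machinery for in Sec.~\ref{sec:improved_sumset}. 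The paper itself does not carry this out; it simply states the asymmetric corollary ``without proof.'' So: fix the converse reasoning as-is, drop the lattice-quantization decoder in favor of the paper's estimation-style decoder, and replace the sumset worry with the actual list-decoding-radius mismatch; then your nested-lattice fix is aimed at the right problem.
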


Note that the capacity $ \frac{1}{2}\log\paren{\frac{1}{2}+\frac{P}{N}} $ vanishes when $ N\ge 2P $ or $ \snr\le1/2 $. 
Though the capacity theorem indicates that $  \frac{1}{2}\log\paren{\frac{1}{2}+\frac{P}{N}} $ is the capacity in high-$\snr$  regime, we do not believe that this is tight in all regimes. 
Our intuition comes from the following improved converse result.  
We are able to push the boundary of zero-rate regime inward via   certain symmetrization strategy which we call \emph{$ \vbfz $-aware symmetrization}. 
\begin{theorem}[Converse]\label{thm:converse_symm}
For a $ (P,N) $ quadratically constrained two-way adversarial channel,  neither of the users can achieve positive rate if $ N>3P/4 $, or $ \snr < 4/3 $. 
\end{theorem}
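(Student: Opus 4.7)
The plan is to exhibit a randomized ``$\vbfz$-aware'' symmetrization attack for James that forces Alice's decoder to err with constant probability whenever $N > 3P/4$, yielding $C_B = 0$; by the symmetry of the setting (swapping the roles of Alice and Bob), an analogous attack gives $C_A = 0$. Using private randomness, James draws a uniformly random fake Bob message $m'_B$, encodes it with Bob's encoder to produce the fake codeword $\vbfx'_B$, observes $\vbfz = \vbfx_A + \vbfx_B$, and transmits
\[
\vbfs = \tfrac{1}{2}\bigl(\vbfx'_B - \vbfz\bigr).
\]
Alice then receives
\[
\vbfy = \vbfz + \vbfs = \tfrac{1}{2}\bigl(\vbfx_A + \vbfx_B + \vbfx'_B\bigr),
\]
which is manifestly invariant under the exchange $\vbfx_B \leftrightarrow \vbfx'_B$.

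Since $m'_B$ is uniform and independent of the actual messages, this invariance implies that the joint distribution of $(\vbfx_A, \vbfy)$ under the hypothesis ``true Bob message $= m_B$, fake $= m'_B$'' coincides with that under ``true $= m'_B$, fake $= m_B$''. Alice's decoder $\hat m_B(\vbfy, m_A)$ therefore returns the same guess in both cases and must err on at least one; pairing outcomes and averaging over uniformly random $(m_A, m_B, m'_B)$ lower-bounds the average decoding error by $\tfrac{1}{2}(1 - 2^{-nR_B}) \to \tfrac{1}{2}$, ruling out any positive rate $R_B$. A symmetric construction (James sampling a fake Alice codeword instead and transmitting $\tfrac{1}{2}(\vbfx'_A - \vbfz)$) yields $C_A = 0$.

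The delicate step is checking the power budget. Expanding,
\[
\|\vbfs\|^2 = \tfrac{1}{4}\bigl(\|\vbfx'_B\|^2 + \|\vbfz\|^2 - 2\langle\vbfx'_B, \vbfz\rangle\bigr).
\]
A preliminary reduction---invoking the general structural properties of capacity-achieving codes for memoryless channels established earlier in the paper---lets us assume the codebooks are approximately centered, so that by independence $\mathbb{E}[\langle \vbfx'_B, \vbfz\rangle] \approx 0$; combined with $\mathbb{E}[\|\vbfx'_B\|^2] \le nP$ and $\mathbb{E}[\|\vbfz\|^2] \le 2nP$, this yields $\mathbb{E}[\|\vbfs\|^2] \le 3nP/4$. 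Whenever $N > 3P/4$ strictly, a concentration-of-measure argument ensures $\Pr[\|\vbfs\|^2 > nN] = o(1)$; on this rare event James aborts (sets $\vbfs = \mathbf{0}$), perturbing the error bound by only $o(1)$ and still precluding positive rate.

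The main obstacle is enforcing the per-realization hard power constraint $\|\vbfs\|^2 \le nN$ for arbitrary deterministic codebooks, rather than only in expectation. One must control $\langle \vbfx'_B, \vbfz\rangle$ uniformly via a combinatorial/expurgation argument that discards a vanishing fraction of codeword pairs with atypical correlation, so that $\|\vbfs\|^2 \approx 3nP/4$ with high probability across all realizations; without this control the abort probability fails to vanish and the symmetrization bound on the error degrades below a constant. This combinatorial geometry of worst-case codebooks, together with the justification of the centering reduction, is where most of the technical effort lies.
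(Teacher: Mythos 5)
Your attack is essentially the one used in the paper (the paper's version swaps the roles and samples a fake Alice codeword to jam Bob, but that is immaterial), and the symmetrization and collision-probability parts of your argument are correct. However, the ``main obstacle'' you identify at the end is not actually an obstacle, and the way you propose to handle it is both unnecessary and would not work in general.

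You want $\Pr\bigl[\normtwo{\vbfs}^2 > nN\bigr] = o(1)$, and you worry that for arbitrary deterministic codebooks the correlation $\inprod{\vbfx_B'}{\vbfz}$ cannot be controlled uniformly without expurgating atypical pairs. But a vanishing abort probability is not needed: the theorem only claims no positive rate is achievable, so it suffices to show the average error probability is a constant bounded away from zero. Once you have $\exptover{}{\normtwo{\wt\vbfs}^2}\le 3nP/4$ (which you correctly derive from the centering assumption and independence of $\vbfx_A,\vbfx_B,\vbfx_B'$), Markov's inequality already gives
\[
\Pr\bigl[\normtwo{\wt\vbfs}^2 > nN\bigr]\le \frac{3P}{4N} < 1
\]
whenever $N>3P/4$, with no concentration, no expurgation, and no per-realization control. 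Combined with the collision bound $\Pr[m_B'=m_B]\le 1/|\cC_B|$, this lower-bounds the average error by a positive constant (on the order of $\frac{1}{2}\bigl(1-\frac{3P}{4N}\bigr)$ minus a vanishing term), which is exactly what the paper does. Your proposed concentration-of-measure step would in fact fail for a general codebook---there is no reason a worst-case deterministic code should concentrate---so it is fortunate that it is not needed.

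One smaller point: the centering reduction does not rest on the empirical properties of AWGN-capacity-achieving codes proved in Sec.~\ref{sec:empirical_properties_awgn}. That machinery requires the code to be AWGN-good, which cannot hold in the symmetrizable regime where no positive rate is achievable (the paper explicitly notes this in its closing remarks on the symmetrization section). The centering is justified by a much simpler deterministic argument: James knows $\cC_A$ and $\cC_B$, hence knows the codebook means $\va=\exptover{}{\vbfx_A}$ and $\vb=\exptover{}{\vbfx_B}$, and can simply fold the constants $\va,\vb$ into his jamming vector so that the effective signals are mean-zero. No structural property of good codes is invoked.
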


Again, the above theorem can be trivially generalized to the asymmetric case which reads as follows.
\begin{theorem}[Converse]\label{thm:converse_symm_asymm}
For a $ (P_A,P_B,N_A,N_B) $ quadratically constrained two-way adversarial channel,  user one cannot achieve positive rate if $ N_A>\frac{2P_B+P_A}{4} $; user two cannot achieve positive rate if $ N_B>\frac{2P_A+P_B}{4} $. 
\end{theorem}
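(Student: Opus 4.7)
The plan is to extend the $\vbfz$-aware symmetrization attack that underlies Theorem~\ref{thm:converse_symm} to the asymmetric setting by tracking the two distinct power budgets separately. Consider the converse for user one (Alice, who decodes Bob's message); the argument for user two is the mirror image. To jam Alice, James draws an auxiliary message $m_B'$ uniformly at random from Bob's codebook independently of the true $(m_A,m_B)$, sets $\vbfx_B'$ to be the corresponding codeword, and transmits
\[
\vbfs_A \;=\; \tfrac{1}{2}\paren{\vbfx_B' - \vbfz} \;=\; \tfrac{1}{2}\paren{\vbfx_B' - \vbfx_A - \vbfx_B},
\]
which is a legal randomized function of $\vbfz$ and the codebooks. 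The symmetric attack on Bob is $\vbfs_B = \tfrac{1}{2}(\vbfx_A' - \vbfz)$ with $\vbfx_A'$ drawn uniformly from Alice's codebook.

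The first step is to verify the symmetrization property at Alice's receiver. Substituting,
\[
\vbfy_A \;=\; \vbfx_A + \vbfx_B + \vbfs_A \;=\; \tfrac{1}{2}\paren{\vbfx_A + \vbfx_B + \vbfx_B'},
\]
so, conditioned on her own $\vbfx_A$, Alice's observation has a distribution that is invariant under the swap $\vbfx_B \leftrightarrow \vbfx_B'$. The standard symmetrization argument then yields, for any decoder $\hat m_B$ and any distinct $m_B \neq m_B'$,
\[
\Pr\sqrbrkt{\hat m_B \neq m_B \,\big|\, m_B, m_B'} + \Pr\sqrbrkt{\hat m_B \neq m_B' \,\big|\, m_B', m_B} \;\geq\; 1,
\]
so averaging over uniform $m_A$, $m_B$ and the independent $m_B'$ forces Alice's average error probability to be at least $\tfrac{1}{2}-o(1)$ whenever Bob's codebook has more than one message, precluding any positive rate for user one.

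The second step is to bound the jamming power. Expanding,
\[
\|\vbfs_A\|^2 \;=\; \tfrac{1}{4}\paren{\|\vbfx_B'\|^2 + \|\vbfx_A+\vbfx_B\|^2 - 2\langle \vbfx_B', \vbfx_A+\vbfx_B\rangle}.
\]
Because $\vbfx_B'$ is independent of $(\vbfx_A, \vbfx_B)$, and after passing to a typical sub-code one may assume each codebook has approximately zero centroid, the cross-term above has mean zero and $\langle \vbfx_A, \vbfx_B\rangle$ is $o(n)$ on typical codeword pairs. Taking expectations therefore gives $\mathbb{E}\|\vbfs_A\|^2 \le \tfrac{n}{4}(P_A + 2P_B) + o(n)$, so whenever $N_A > (2P_B+P_A)/4$ the power budget on $\vbfs_A$ is satisfied with probability bounded away from zero over the random draw of $m_B'$. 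The symmetric calculation for $\vbfs_B$ yields the $(2P_A+P_B)/4$ threshold for user two.

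The main technical obstacle is precisely this last step: upgrading the expected-power bound to a simultaneous guarantee that holds with positive probability over $m_B'$ for \emph{every} positive-rate code, rather than just on average over a random codebook. A naive triangle-inequality bound on $\|\vbfz\|^2$ costs a factor of two and destroys the $\tfrac{1}{4}$ coefficient, so one must first prune to a typical sub-code on which the empirical first and second moments of the codewords are well-controlled, and then apply a concentration inequality to $\langle \vbfx_B', \vbfz\rangle$ under the uniform choice of $m_B'$ to conclude that symmetrization and the power budget hold simultaneously. This is exactly the subtle step already resolved in the proof of Theorem~\ref{thm:converse_symm}; the passage from $(P,N)$ to $(P_A,P_B,N_A,N_B)$ is otherwise purely bookkeeping.
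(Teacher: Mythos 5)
Your attack and symmetrization argument are exactly the paper's: James draws an independent codeword from the same codebook Alice is trying to decode, transmits half the difference, and Alice's effective observation $\tfrac12(\vbfx_A+\vbfx_B+\vbfx_B')$ is symmetric in $\vbfx_B\leftrightarrow\vbfx_B'$, forcing error probability $\ge\tfrac12(1-\prob{\cE_1}-\prob{\cE_2})$ where $\cE_1$ is the power-violation event and $\cE_2$ is the collision event. That part is right.

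Where you go astray is in the power analysis, and specifically in what you flag as the ``main technical obstacle.'' There is no such obstacle. You do \emph{not} need to prune to a typical sub-code, you do \emph{not} need concentration of $\inprod{\vbfx_B'}{\vbfz}$, and there is no $o(n)$ error term to control. The relevant expectation is \emph{exact}: since $\bfm_A$, $\bfm_B$, $\bfm_B'$ are mutually independent and (after the standard WLOG re-centering, which James can perform because the codebooks and hence the centroids are public) $\expt{\vbfx_A}=\expt{\vbfx_B}=\vzero$, every cross term vanishes identically, i.e.\ $\expt{\inprod{\vbfx_A}{\vbfx_B}}=\inprod{\expt{\vbfx_A}}{\expt{\vbfx_B}}=0$ and similarly for the terms involving $\vbfx_B'$. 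This gives the clean bound
\begin{align}
\expt{\normtwo{\wt\vbfs_A}^2}=\frac14\paren{\expt{\normtwo{\vbfx_A}^2}+\expt{\normtwo{\vbfx_B}^2}+\expt{\normtwo{\vbfx_B'}^2}}\le\frac{n}{4}\paren{P_A+2P_B},\notag
\end{align}
with no slack, and Markov's inequality (Lemma~\ref{lem:markov}) immediately yields $\prob{\cE_1}\le\frac{(P_A+2P_B)/4}{N_A}<1$ whenever $N_A>(2P_B+P_A)/4$. Combined with $\prob{\cE_2}=1/\card{\cC_B}$, the error probability is bounded below by a positive constant, exactly as in Sec.~\ref{sec:z-aware_symm}. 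Your worry about ``upgrading the expected-power bound to a simultaneous guarantee'' conflates an average over random codebooks (which the paper never takes --- the code is arbitrary and fixed) with an average over message selection (which Markov handles directly). The paper's proof is just the Markov-plus-error-decomposition argument with $P_A,P_B$ tracked separately; the extra machinery you propose is unnecessary and somewhat obscures that the bound is already tight at the second-moment level.
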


\subsection{Overview of proof techniques and related work}
Our ideas are inspired by~\cite{jaggi-langberg-2017-two-way}, which characterized the capacity of the discrete additive two-way channel with a jammer. They showed that using randomly expurgated \emph{linear} codebooks for Alice and Bob achieves the symmetric capacity $ 1-H_q(p) $, where $ H_q(p) $ denotes the $ q $-ary entropy of $ p $. This implies that James can do no worse than transmitting random noise.
It was also observed that neither linear codes nor uniformly random codebooks can achieve the capacity of this channel. Indeed, our codebook design closely mimics~\cite{jaggi-langberg-2017-two-way}: we use randomly expurgated lattice codebooks.

Unlike the discrete case studied in~\cite{jaggi-langberg-2017-two-way}, the setup we study in this paper poses additional challenges. In our setup, if the additive noise were random Gaussian with independent and identically distributed (i.i.d.) $ \cN(0,N) $ components, then the capacity  is equal to $ \frac{1}{2}\log_2\left(1+\frac{P}{N}\right) $. However, we give a converse to show that the capacity is in fact strictly below this. An important observation is that the capacity of the \emph{discrete} additive adversarial two-way channel is equal to the list decoding capacity (which also turns out to be the capacity with random noise). Unlike the discrete case, we show that the capacity of the $ (P,N) $ quadratically constrained two-way adversarial channel is (for large values of $ P/N $) strictly above the list decoding capacity.

\subsubsection{Proof techniques for upper bound} We provide three separate converse bounds for this problem by providing three attack strategies for James:
\begin{itemize}
	\item Clearly, if $ P\leq N $, then James can transmit a random codeword from Alice's (resp.\ Bob's) codebook chosen independently of everything else. Over the randomness in the choice of the codeword, Bob (resp.\ Alice) will then be unable to distinguish between the codewords transmitted by Alice (resp.\ Bob) and James. Hence, the capacity is zero.
	\item We can improve this to show that the capacity is zero for $ P\leq 3N/4 $. James independently selects a random codeword $ \vbfx_A' $ from Alice's codebook and transmits $ -\frac{1}{2}(\vbfz-\vbfx_A') $ whenever he has enough power. With high probability (w.h.p.), this attack vector satisfies the power constraint, and Bob receives $ 0.5\vbfx_B + 0.5(\vbfx_A+\vbfx_A') $. Bob cannot decide whether $ \vbfx_A $ or $ \vbfx_A' $ was transmitted, and therefore the probability of error is bounded away from zero.
	\item In the regime when $ N\leq 3P/4 $, we define a different attack for James. He can transmit $ \vbfs = -\alpha\vbfz+\vbfg $, where $ \vbfg\sim \cN(0,\gamma^2\bfI_n) $ and $ \alpha,\gamma $ are constants that can be optimized over. This instantiates an effective AWGN channel for Bob (resp.
	Alice) which implies that the capacity cannot exceed that of this effective AWGN channel. Upon optimizing the constants, we get that the capacity cannot be any larger than $ \frac{1}{2}\log\left(\frac{1}{2}+\frac{P}{N}\right) $. 
	To prove this, we analyze  general properties of the empirical properties of capacity-achieving codes for the AWGN channel (which we call AWGN-good codes) which we believe are novel results and might be of independent interest.
	We show that independent codewords chosen uniformly from any AWGN-good code are approximately orthogonal with high probability.
\end{itemize}

\subsubsection{Proof techniques for lower bound}
Let us briefly summarize the main elements of the achievability proof in~\cite{jaggi-langberg-2017-two-way}. 
A key step used is that even after expurgation, James is sufficiently confused about the transmitted codeword: if $ \cC_A $ and $ \cC_B $ are the codebooks obtained by independent random expurgations of the original linear code $ \cC $, then $ |\cC_A+\cC_B|\approx |\cC_A| $ and leaks very little information about the individual codewords to James. As a consequence, James cannot ``push'' the transmitted codeword to the nearest codeword in the corresponding codebook. The final step is to show that as long as the original code is list decodable with small list sizes, the expurgated code is uniquely decodable w.h.p. (over the randomness in the code expurgation).

Unlike the discrete case, we are not able to prove a matching lower bound on the capacity for all values of $ P,N $. We show that for sufficiently large $ P/N $, the capacity is $ C = \frac{1}{2}\log\left(\frac{1}{2}+\frac{P}{N}\right) $. The code for Alice and Bob is obtained by independently expurgating a lattice code with spherical shaping (to satisfy power constraint). What makes the quadratically constrained case more challenging than the discrete one is that due to the power constraint, the sum of two codewords leaks information about the individual codewords. However, if the original lattice code is suitably chosen, then we can show that James is sufficiently confused. Even then, following the approach in~\cite{jaggi-langberg-2017-two-way} gets us to only the list decoding capacity of $ \frac{1}{2}\log\frac{P}{N} $. To improve the rate, we introduce a proof technique inspired by~\cite{zhang-quadratic-arxiv}. We show that for every attack vector that James can instantiate, the effective decoding region is significantly smaller than $ \cB^n(\vbfy,\sqrt{nN}) $\footnote{Here $\cB^n\paren{\vu,r} $ denotes an $n$-dimensional Euclidean ball centered around $\vu$ of radius $r$.} w.h.p. (over the randomness in the choice of message).

To prove the lower bound, we show the following results which may be of independent interest:
\begin{itemize}
	\item Given any ``good'' lattice $ \Lf $ and the associated lattice codebook $ \cC \coloneq \Lf\cap \cB^n(\vzero,\sqrt{nP}) $, the sum of two independently and uniformly chosen codewords from $ \cC $ lies in a thin shell of radius $ \sqrt{2nP} $. We call this the \emph{typical sumset} of the lattice code. 
	\item For any vector $ \vv\in\bR^n $, a uniformly chosen codeword from $ \cC $ is almost orthogonal to $ \vv $. Consequently, two random codewords are almost orthogonal to each other.
	\item The above points reinforce the idea that codewords from a good lattice code have many properties similar to those chosen from random Gaussian codebooks.
	\item For any vector in the typical sumset, most pairs of codewords that sum to this vector respectively lie in a thin strip (See Fig. \ref{fig:sumset} and Fig. \ref{fig:strip}). This implies that given James' observation, the actual $ \vbfx_A $ (resp.\  $ \vbfx_B $) is uniformly distributed in a thin strip. 
	\item As a result of the above property, for every attack vector $ \vs $ that James can instantiate, the \emph{effective} decoding radius turns out to be $ \sqrt{n\frac{2PN}{2P - N}} = \sqrt{n\frac{N}{1 - \frac{N}{2P}}} $ which is even larger than $ \sqrt{nN} $.
	However, the (effective) decoding ball actually makes a relatively small intersection with the coding ball with high probability. We show that with high probability, the \emph{average/typical} effective decoding radius is $ \sqrt{n\frac{2PN}{2PN+N}} = \sqrt{n\frac{N}{1 + \frac{N}{2P}}} $ which is  smaller than $ \sqrt{n\frac{N}{1 - \frac{N}{2P}}} $, and actually also smaller than  $ \sqrt{nN} $ as one would naively assume. 
	We can then use the list decoding argument followed by the analysis of unique decodability as in~\cite{jaggi-langberg-2017-two-way}.
\end{itemize}

\section{Organization of the paper}
The rest of the paper is organized as follows. 
The notational conventions that we follow throughout the paper is fixed in Sec. \ref{sec:notation}.
Basics on concentration inequalities, high-dimensional geometry, information/coding theory and background on lattices are provided in Sec. \ref{sec:prelim} and Appendix \ref{sec:primer_lattices}.
We formally define the problem treated in this paper in Sec. \ref{sec:problem_formulation}.
In Sec. \ref{sec:beyond_listdec_cap}, to motivate our posterior estimation-style decoding rules, we provide intuition as to why in the high-$\snr$ regime, the capacity turns out to be $ \frac{1}{2}\log\paren{\frac{1}{2} + \frac{P}{N}} $, lower than the AWGN$(P,N)$ channel capacity $ \frac{1}{2}\log\paren{1+\frac{P}{N}} $, higher than the list-decoding capacity $ \frac{1}{2}\log\frac{P}{N} $. 
Sec. \ref{sec:achievability} contains a full proof of the achievability results.
Specifically,
\begin{enumerate}
	\item Our code construction based on expurgated lattice code is described in Sec. \ref{sec:code_design};
	\item Various error events to be considered in subsequent sections are defined in Sec. \ref{sec:error_events};
	\item In Sec. \ref{sec:sumset}, we prove sumset property of lattices which is useful in  the rest of the analysis and might be of independent interest elsewhere;
	\item In Sec. \ref{sec:estimating_alpha}, we show that $ \alpha $, the component of $ \vbfs $ that is parallel to $ \vbfz $ can be well estimated by the receiver;
	\item In Sec. \ref{sec:estimating_effdecrad}, we show that the effective decoding radius can also be well estimated by the receiver;
	\item The rate is properly set in Sec. \ref{sec:settingrate};
	\item  The \emph{average}  effective decoding radius is computed in Sec. \ref{sec:compute_avgdecrad};
	\item Finally, the probability of decoding error is bounded in Sec. \ref{sec:bouding_pe} using McDiarmid's inequality;
	\item Additionally, as a bonus section, in Sec. \ref{sec:improved_sumset}, we provide improved analysis of sumset property which yields bounds independent of $ \rcov(\Lf) $.
\end{enumerate}
Converse results are proved in Sec. \ref{sec:converse}.
Specifically,
\begin{enumerate}
	\item The scale-and-babble strategy that yields a tight outer bound in the high-$\snr$ regime is described in Sec. \ref{sec:scale_babble_strategy}; 
	\item The strategy is analyzed in Sec. \ref{sec:converse} using information inequalities;
	\item The bounding procedure of certain term $ Q $ (the probability that the scale-and-babble jamming vector violates James' power constraint) is deferred to Sec. \ref{sec:bound_q};
	\item Being useful in the converse argument and of independent interest, the proof of an empirical independence property that is universal to \emph{any} AWGN capacity-achieving code is further deferred to Sec. \ref{sec:empirical_properties_awgn};
	\item A symmetrization-type attack strategy which we call $ \vbfz $-aware symmetrization is described and analyzed in Sec. \ref{sec:z-aware_symm}.
\end{enumerate}
The paper is concluded in Sec. \ref{sec:concl_rk_open_problems} with some final remarks and open questions of future interests.

\section{Notation}\label{sec:notation}
\noindent\textbf{Conventions.}
Sets are denoted by capital letters in calligraphic typeface, e.g., $\cC,\cI$, etc. 
Random variables are denoted by lower case letters in boldface or capital letters in plain typeface, e.g., $\bfm,\bfx,\bfs,U,W$, etc. Their realizations are denoted by corresponding lower case letters in plain typeface, e.g., $m,x,s,u,w$, etc. Vectors (random or fixed) of length $n$, where $n$ is the blocklength without further specification, are denoted by lower case letters with  underlines, e.g., $\vbfx,\vbfs,\vx,\vs$, etc. The $i$-th entry of a vector $\vx\in\cX^n$ is denoted by $\vx(i)$  since we can alternatively think $\vx$ as a function from $[n]$ to $\cX$. Same for a random vector $\vbfx$. Matrices are denoted by capital letters in boldface, e.g., $\bfP,\mathbf{\Sigma}$, etc. Similarly, the $(i,j)$-th entry of a matrix $\bfG\in\bF^{n\times m}$ is denoted by $\bfG(i,j)$. We sometimes write $\bfG_{n\times m}$ to explicitly specify its dimension. For square matrices, we write $\bfG_n$ for short. Letter $\bfI$ is reserved for identity matrix.  

\noindent\textbf{Functions.}
We use the standard Bachmann--Landau (Big-Oh) notation for asymptotics of real-valued functions in positive integers. 

For two real-valued functions $f(n),g(n)$ in positive integers, we say that $f(n)$ \emph{asymptotically equals} $g(n)$, denoted $f(n)\asymp g(n)$, if 
\[\lim_{n\to\infty}\frac{f(n)}{g(n)} = 1.\]
For instance, $2^{n+\log n}\asymp2^{n+\log n}+2^n$, $2^{n+\log n}\not\asymp2^n$.
 We write $f(n)\doteq g(n)$ (read $f(n)$ dot equals $g(n)$) if the coefficients of the dominant terms in the exponents of $f(n)$ and $g(n)$ match,
\[\lim_{n\to\infty}\frac{\log f(n)}{\log g(n)} = 1.\]
For instance, $2^{3n}\doteq2^{3n+n^{1/4}}$, $2^{2^n}\not\doteq2^{2^{n+\log n}}$. Note that $f(n)\asymp g(n)$ implies $f(n)\doteq g(n)$, but the converse is not true.

For any $q\in\bR_{>0}$, we write $\log_q(\cdot)$ for the logarithm to the base $q$. In particular, let $\log(\cdot)$ and $\ln(\cdot)$ denote logarithms to the base two and $e$, respectively.

For any $\cA\subseteq\Omega$, the indicator function of $\cA$ is defined as, for any   $x\in\Omega$, 
\[\one{\cA}(x)\coloneqq\begin{cases}1,&x\in \cA\\0,&x\notin \cA\end{cases}.\]
At times, we will slightly abuse notation by saying that $\one{\sfA}$ is $1$ when event $\sfA$ happens and 0 otherwise. Note that $\one{\cA}(\cdot)=\indicator{\cdot\in\cA}$.

\noindent\textbf{Sets.}
For any two sets $\cA$ and $\cB$ with additive and multiplicative structures, let $\cA+\cB$ and $\cA\cdot\cB$ denote the Minkowski sum and Minkowski product of them which are defined as
\[\cA+\cB\coloneqq\curbrkt{a+b\colon a\in\cA,b\in\cB},\quad\cA\cdot\cB\coloneqq\curbrkt{a\cdot b\colon a\in\cA,b\in\cB},\]
respectively.
If $\cA=\{x\}$ is a singleton set, we write $x+\cB$ and $x\cdot\cB$ for $\{x\}+\cB$ and $\{x\}\cdot\cB$.

For $M\in\bZ_{>0}$, we let $[M]$ denote the set of first $M$ positive integers $\{1,2,\cdots,M\}$.

\noindent\textbf{Geometry.}
Let $ \normtwo{\cdot} $ denote the Euclidean/$L^2$-norm. Specifically, for any $\vx\in\bR^n$,
\[ \normtwo{\vx} \coloneqq\paren{\sum_{i=1}^n\vx_i^2}^{1/2}.\]

Let $\vol_n(\cdot)$ denote the $n$-dimensional Lebesgue volume of an Euclidean body (set with nonempty interior). Specifically, for any Euclidean body $\cA\subseteq\bR^n$,
\[\vol_n(\cA)=\intgover\cA\diff \vx=\intgover{\bR^n}\one{\cA}(\vx)\diff\vx,\]
where $\diff\vx$ denotes the differential of $\vx$ with respect to (w.r.t.) the Lebesgue measure on $\bR^n$. 
For convenience, the  subscript for dimension  will be dropped if no confusion will be caused. 

An $(n-1)$-dimensional Euclidean sphere centered at $\vx$ of radius $r$ is denoted by
\[\cS^{n-1}(\vx,r) \coloneqq \curbrkt{\vy\in\bR^n\colon \normtwo{\vy} =r}.\]
An $n$-dimensional Euclidean ball centered at $\vx$ of radius $r$ is denoted by
\[\cB^n(\vx,r) \coloneqq \curbrkt{\vy\in\bR^n\colon \normtwo{\vy} \le r}.\]
We will drop the  superscript for  dimension  when they are clear from the context.
When the center of the ball or sphere is not important, we also drop the first argument. 

Let $ V_n \coloneqq \vol_n(\cB^n(\vzero,1)) $. 

\noindent\textbf{Information theory.}
We use $H(\cdot)$ to interchangeably denote  the binary entropy function and the (differential or discrete) Shannon entropy; the exact meaning  will  be clear from the context.
In particular, if $P_{\vbfx}\colon \bR^n\to\bR_{\ge0}$ is the p.d.f. of a random vector $\vbfx$ in $\bR^n$, $H(\vbfx)$ denotes the differential entropy of $\vbfx\sim P_{\vbfx}$,
\[H(\vbfx)=-\intgover{\bR^n} P_{\vbfx}(\vx)\log P_{\vbfx}(\vx)\diff \vx.\]
If $ \cX $ is a discrete set, and $ P_{\vbfx}\colon \cX^n\to[0,1] $ is the p.m.f. of a random vector $ \vbfx $ on $ \cX^n $, $ H(\vbfx) $ denotes the Shannon entropy of $ \vbfx\sim P_{\vbfx} $,
\begin{align}
H(\vbfx)\coloneqq& \sum_{\vx\in\cX^n}P_{\vbfx}(\vx)\log\frac{1}{P_{\vbfx}(\vx)}. \notag
\end{align}
For any $p\in[0,1]$, $H(p)$ denotes the binary entropy 
\[H(p)=p\log\frac{1}{p}+(1-p)\log\frac{1}{1-p}.\]
The same convention is followed for mutual information.

\section{Preliminaries}\label{sec:prelim}

\noindent\textbf{Algebraic inequalities.}
\begin{fact}\label{fact:log_ineq}
For any $ x\ge0 $, $ \log(1-x)\le-x $ and $ \log(1+x)\le2x $. For any $ 0\le x\le1/2 $, $ \log(1-x)\ge-2x $. For any $ 0\le x\le 1 $, $ \log(1+x)\ge x $. 
\end{fact}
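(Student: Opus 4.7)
The plan is to handle the four inequalities in two groups based on their tightness. The upper bounds $\log(1-x)\le -x$ and $\log(1+x)\le 2x$ are tight only at $x=0$ and admit a simple tangent-line proof, while the lower bounds $\log(1-x)\ge -2x$ on $[0,1/2]$ and $\log(1+x)\ge x$ on $[0,1]$ are tight at \emph{both} endpoints of their respective intervals and are cleanest via a concavity argument after exponentiation. Throughout, I keep in mind the paper's convention that $\log$ is to base $2$, so $\log(1+y)=\ln(1+y)/\ln 2$; the factor $\ln 2$ is precisely what dictates the numerical constants in the statement.

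For the upper bounds, I would use that $\ln$ is concave, so $\ln(1+y)\le y$ for all $y>-1$, and hence $\log(1+y)\le y/\ln 2$. Plugging in $y=-x$ for $x\ge 0$ gives $\log(1-x)\le -x/\ln 2\le -x$, where the second inequality uses $\ln 2<1$. Plugging in $y=x$ gives $\log(1+x)\le x/\ln 2\le 2x$, where the second inequality uses $\ln 2>1/2$. Both steps are one-line verifications.

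For the lower bounds, I would exponentiate to base $2$ and exhibit the resulting function as concave with zeros at the endpoints. For the first bound, $\log(1-x)\ge -2x$ is equivalent to $f(x)\coloneqq (1-x)-4^{-x}\ge 0$; one checks $f(0)=f(1/2)=0$ and $f''(x)=-(\ln 4)^2\, 4^{-x}<0$, so $f$ is strictly concave on $[0,1/2]$ and therefore nonnegative on that interval by the minimum principle for concave functions. For the second bound, $\log(1+x)\ge x$ is equivalent to $g(x)\coloneqq (1+x)-2^x\ge 0$, and again $g(0)=g(1)=0$ with $g''(x)=-(\ln 2)^2\, 2^x<0$, so concavity yields $g\ge 0$ on $[0,1]$.

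I do not foresee any real obstacle here: these are standard calculus inequalities, and the only mild subtlety is bookkeeping the constant $\ln 2$, which is exactly what accounts for the looser factor of $2$ appearing in two of the four bounds.
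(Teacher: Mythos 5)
The paper states Fact~\ref{fact:log_ineq} without proof, so there is no in-paper argument to compare against; your proposal supplies a complete and correct verification. Your bookkeeping of $\ln 2$ is right (the chain $\log(1+y)\le y/\ln 2$ combined with $1/2<\ln 2<1$ delivers both upper bounds), and the concavity argument for the two lower bounds is clean: in each case you exhibit $f(x)=(1-x)-4^{-x}$ (resp.\ $g(x)=(1+x)-2^{x}$) as a strictly concave function vanishing at both endpoints of the relevant interval, which forces nonnegativity on the interval. The only cosmetic point is that $\log(1-x)\le -x$ should really be read on $0\le x<1$ (or with the convention $\log 0=-\infty$ at $x=1$), which is also implicit in your substitution $y=-x$ into $\ln(1+y)\le y$ for $y>-1$; this matches the paper's intended use and is not a gap.
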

\begin{fact}\label{fact:exp_ineq}
For any $ x\ge-1 $ and $ L\notin(0,1) $, $ (1-x)^L\ge 1-Lx $. For any $ x\in[0,1] $ and any $ L\in\bZ_{\ge0} $, $ (1-x)^L\le\frac{1}{1+Lx} $. 
\end{fact}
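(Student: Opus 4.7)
The plan is to treat Fact~\ref{fact:exp_ineq} as two elementary algebraic inequalities and dispatch each with a short self-contained argument. The first inequality, $(1-x)^L \ge 1-Lx$, is Bernoulli's inequality in disguise: substituting $y=-x$ turns it into the standard form $(1+y)^L \ge 1+Ly$. I would prove the standard form by setting $f(y):=(1+y)^L - 1 - Ly$, computing $f'(y) = L\bigl((1+y)^{L-1}-1\bigr)$, and checking that for $L\ge 1$ (respectively $L\le 0$) the derivative $f'$ has the same sign as $y$ on the relevant domain, so $f$ is minimised at $y=0$ with $f(0)=0$. The only delicate point is the domain of validity: the cleanest range is $1+y\ge 0$, i.e.\ $x\le 1$, so I would verify that the hypothesis as written is consistent with this (the natural real-exponent Bernoulli requires $1-x\ge 0$ when $L$ is non-integer) and patch the statement if necessary.

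For the second inequality $(1-x)^L \le \frac{1}{1+Lx}$ on $x\in[0,1]$ and $L\in\bZ_{\ge 0}$, I would use a short induction on $L$. The base case $L=0$ is immediate since both sides equal $1$. For the inductive step, multiplying the induction hypothesis $(1-x)^L \le 1/(1+Lx)$ by $1-x\ge 0$ yields $(1-x)^{L+1}\le (1-x)/(1+Lx)$, and it remains to observe that
\[
(1-x)\bigl(1+(L+1)x\bigr) = 1 + Lx - (L+1)x^2 \le 1+Lx,
\]
which rearranges to $(1-x)/(1+Lx) \le 1/(1+(L+1)x)$ and closes the induction.

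As a sanity check, the second inequality can alternatively be derived in one shot from the pair of exponential bounds $1-x\le e^{-x}$ and $1+Lx\le e^{Lx}$ valid on the stated domain, which together give $(1-x)^L \le e^{-Lx} \le 1/(1+Lx)$; the induction above is, however, more elementary and avoids invoking transcendental inequalities. I do not anticipate any real obstacle: the entire fact collects textbook tools, and the only genuinely subtle issue is pinning down the exact domain in the first inequality, since the real-exponent version of Bernoulli's inequality needs $1-x\ge 0$, not merely $x\ge -1$.
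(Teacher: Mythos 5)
The paper states this as an unproved \emph{Fact}, so there is no in-text proof to compare against; your job was simply to supply a correct argument, and you do. Both halves of your proof are sound: the derivative-sign argument for Bernoulli's inequality is the standard calculus proof, and your induction for $(1-x)^L \le \frac{1}{1+Lx}$ closes cleanly because the key algebraic step $(1-x)\bigl(1+(L+1)x\bigr) = 1 + Lx - (L+1)x^2 \le 1+Lx$ holds for all $x\in[0,1]$ (and the cross-multiplication is legitimate since both denominators are strictly positive on that range). The exponential one-liner you offer as a sanity check is also valid.

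Your flag about the domain of the first inequality is a genuine catch, not just pedantry. As written, the hypothesis $x\ge -1$ is the wrong transform of the standard Bernoulli hypothesis $1+y\ge 0$: substituting $y=-x$ in $(1+y)^L\ge 1+Ly$ turns the domain into $x\le 1$, not $x\ge -1$. The condition $x\le 1$ is also exactly what makes $(1-x)^L$ well-defined over $\mathbb{R}$ for non-integer $L$, and even for integer $L$ the inequality genuinely fails outside that range (e.g.\ $L=3$, $x=3.1$ gives $(1-x)^3 = -9.261 < -8.3 = 1-Lx$). The uses of this Fact later in the paper (in the proof of Lemma~\ref{lem:bounding_pe}, at Eqn.~\eqref{eqn:apply_exp_ineq} and Eqn.~\eqref{eqn:lb_on_exptf}) all take $x = 2^{-n\gamma}\in(0,1)$ with integer $L$, which lies safely inside both the intended domain and the misstated one, so no downstream argument breaks; but the Fact as stated should read ``for any $x\le 1$'' rather than ``for any $x\ge -1$.''
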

\begin{corollary}\label{cor:log_ratio_ineq}
For any $ a,b\ge0 $, $ 0\le \eps\le a/2 $ and $ \delta\ge0 $, we have
\begin{align}
\log\frac{a-\eps}{b+\delta}\ge& \log\frac{a}{b} - \frac{2\eps}{a} - \frac{2\delta}{b}. \notag
\end{align}
\end{corollary}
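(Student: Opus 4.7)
The plan is to reduce the inequality to the two log bounds supplied by Fact \ref{fact:log_ineq}. First I would rewrite the left-hand side by pulling out the factor $a/b$:
\[
\log\frac{a-\eps}{b+\delta} \;=\; \log\frac{a}{b} \,+\, \log\!\left(1-\frac{\eps}{a}\right) \,-\, \log\!\left(1+\frac{\delta}{b}\right).
\]
So the target reduces to showing $\log(1-\eps/a) \ge -2\eps/a$ and $\log(1+\delta/b) \le 2\delta/b$; combining these two lower bounds gives precisely the right-hand side.

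Next I would verify that each Fact \ref{fact:log_ineq} bound applies in its required regime. The hypothesis $0 \le \eps \le a/2$ gives $\eps/a \in [0,1/2]$, which is exactly the regime in which Fact \ref{fact:log_ineq} asserts $\log(1-x) \ge -2x$. For the second term, $\delta \ge 0$ and $b>0$ yield $\delta/b \ge 0$, which is the regime in which Fact \ref{fact:log_ineq} asserts $\log(1+x) \le 2x$; negating gives the needed lower bound $-\log(1+\delta/b) \ge -2\delta/b$.

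There is essentially no obstacle: the result is a one-line algebraic decomposition followed by two direct invocations of Fact \ref{fact:log_ineq}. The only mild care is in the edge case $b=0$ (or $a=0$), where the claim should be read with the usual convention that the inequality becomes vacuous or is interpreted as $+\infty \ge \cdots$; the interesting regime has $a,b>0$, and in that case the argument above is complete.
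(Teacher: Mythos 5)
Your proof is correct and follows essentially the same route as the paper: the same additive decomposition $\log\frac{a-\eps}{b+\delta} = \log\frac{a}{b} + \log(1-\eps/a) - \log(1+\delta/b)$, followed by the same two invocations of Fact \ref{fact:log_ineq}. (Your reading of the hypothesis as $\eps/a \le 1/2$ is in fact cleaner than the paper's phrasing, which states the needed condition as ``$2\eps/a \le 1/2$''.)
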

\begin{proof}
\begin{align}
\log\frac{a - \eps}{b+\delta} =& \log(a-\eps) - \log(b+\delta) \notag \\
=& \log a + \log\paren{1 - \frac{\eps}{a}} - \log b - \log\paren{1 + \frac{\delta}{b}} \notag \\
\ge& \log\frac{a}{b} - \frac{2\eps}{a} - \frac{2\delta}{b}, \label{eqn:apply_log_ineq}
\end{align}
where Inequality \eqref{eqn:apply_log_ineq} follows from Fact \ref{fact:log_ineq} since $ 2\eps/a\le1/2 $  by assumption. 
\end{proof}


\noindent\textbf{Probability.}
\begin{fact}\label{fact:error_decomp}
For any events $\cA$ and $\cE$, $\prob{\cA}\le\prob{\cE} + \prob{\cA\cap\cE^c}$.
\end{fact}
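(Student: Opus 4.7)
The plan is to decompose event $\cA$ relative to the partition $\{\cE,\cE^c\}$ of the sample space. Writing $\cA = (\cA \cap \cE) \cup (\cA \cap \cE^c)$ expresses $\cA$ as a disjoint union, so by finite additivity of the probability measure one obtains the identity
\[
\prob{\cA} = \prob{\cA \cap \cE} + \prob{\cA \cap \cE^c}.
\]

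Next I would invoke monotonicity: since $\cA \cap \cE \subseteq \cE$, we have $\prob{\cA \cap \cE} \le \prob{\cE}$. Substituting this bound into the above identity yields
\[
\prob{\cA} \le \prob{\cE} + \prob{\cA \cap \cE^c},
\]
which is exactly the claim.

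There is no real obstacle here; the statement is simply the ``bad-event conditioning'' decomposition in its bare form, and the argument uses only finite additivity together with monotonicity of probability under set inclusion. In the context of the paper, the fact will presumably be invoked by choosing $\cE$ to be an atypical or bad event with small probability (for instance, a violation of James' power constraint, or failure of a concentration/geometric property of the codebook), so that the remaining term $\prob{\cA\cap\cE^c}$ can be analyzed under the favorable regime guaranteed by $\cE^c$.
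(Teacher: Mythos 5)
Your proof is correct and is the standard elementary argument (partition $\cA$ by $\cE$ and $\cE^c$, use finite additivity, then monotonicity). The paper states this as a Fact without proof, and your argument is exactly what would be supplied.
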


\begin{lemma}[Markov's inequality]\label{lem:markov}
If $X$ is a nonnegative random variable, then for any $a>0$, $ \prob{X\ge a}\le \expt{X}/a $. 
\end{lemma}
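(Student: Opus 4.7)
The plan is to prove Markov's inequality via the standard one-line argument using the fact that a pointwise inequality between nonnegative random variables is preserved under expectation. There are essentially no obstacles here; the only subtlety is that the paper introduces both $\one{\cA}$ and $\indicator{\cdot}$ notation for indicators, so I would pick one and use it consistently.

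Concretely, I would first fix $a>0$ and introduce the event $\sfA \coloneqq \{X \ge a\}$ together with its indicator $\indicator{\sfA}$. The key pointwise inequality is
\[
a \cdot \indicator{\sfA} \le X,
\]
which holds deterministically: on the event $\sfA$ we have $X \ge a = a \cdot 1$, and on $\sfA^c$ we have $X \ge 0 = a \cdot 0$ by the nonnegativity assumption on $X$. I would spell this case split out in one sentence to make clear where nonnegativity is used (it is the only place).

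Then I would take expectations on both sides, invoking monotonicity of expectation, to obtain
\[
a \cdot \prob{X \ge a} \;=\; a \cdot \expt{\indicator{\sfA}} \;\le\; \expt{X},
\]
where the first equality uses $\expt{\indicator{\sfA}} = \prob{\sfA}$. Dividing through by $a > 0$ yields the claimed bound $\prob{X \ge a} \le \expt{X}/a$.

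The main (and only) obstacle is stylistic: keeping the proof compact enough to match the tone of a preliminaries section, while still making the role of the nonnegativity hypothesis explicit, since it is precisely what guarantees the pointwise bound $a \cdot \indicator{\sfA} \le X$ on $\sfA^c$. No probabilistic machinery beyond linearity/monotonicity of expectation is required, and no earlier result from the excerpt needs to be invoked.
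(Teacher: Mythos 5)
Your proof is correct and is the standard one-line argument for Markov's inequality: establish the pointwise bound $a\cdot\indicator{X\ge a}\le X$ (using nonnegativity on the complementary event), take expectations, and divide by $a$. The paper states this lemma in its preliminaries without proof, treating it as a well-known fact, so there is no paper proof to compare against; your argument is the canonical one and would serve fine if a proof were to be included.
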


\begin{lemma}[Chernoff bound]\label{lem:chernoff}
Suppose $ X_1,\cdots,X_N $ is a sequence of $N$ $ \curbrkt{0,1} $-valued independent random variables. 
Let $ X\coloneqq\sum_{i=1}^NX_i $. Then for any $ \delta\in[0,1] $,
\begin{align}
\prob{ X\ge(1+\delta)\expt{X} }\le& \exp\paren{ -\frac{\delta^2}{3}\expt{X} }, \notag \\ 
\prob{ X\le(1-\delta)\expt{X} }\le& \exp\paren{ -\frac{\delta^2}{2}\expt{X} }, \notag \\ 
\prob{ X\notin(1\pm\delta)\expt{X} }\le& 2\exp\paren{ -\frac{\delta^2}{3}\expt{X} }. \notag 
\end{align}
\end{lemma}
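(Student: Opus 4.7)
The plan is to prove this by the standard Chernoff/moment generating function technique, that is, applying Markov's inequality (Lemma~\ref{lem:markov}) to the random variable $e^{tX}$ and optimizing over the free parameter $t$.

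For the upper tail, I would first fix $t > 0$ and observe that
\begin{align*}
\prob{X \ge (1+\delta)\expt{X}} = \prob{e^{tX} \ge e^{t(1+\delta)\expt{X}}} \le \frac{\expt{e^{tX}}}{e^{t(1+\delta)\expt{X}}}.
\end{align*}
Independence of the $X_i$ gives $\expt{e^{tX}} = \prod_{i=1}^N \expt{e^{tX_i}}$. For each Bernoulli variable with mean $p_i \coloneqq \expt{X_i}$, I would bound $\expt{e^{tX_i}} = 1 + p_i(e^t - 1) \le \exp(p_i(e^t-1))$ using $1+x \le e^x$. Multiplying across $i$ and letting $\mu \coloneqq \expt{X} = \sum_i p_i$ yields $\expt{e^{tX}} \le \exp(\mu(e^t-1))$. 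Setting $t = \ln(1+\delta) > 0$ then produces the classical bound
\begin{align*}
\prob{X \ge (1+\delta)\mu} \le \left(\frac{e^\delta}{(1+\delta)^{1+\delta}}\right)^\mu.
\end{align*}
The final step is the calculus inequality $\frac{e^\delta}{(1+\delta)^{1+\delta}} \le e^{-\delta^2/3}$ valid for $\delta \in [0,1]$, which is obtained by taking logarithms and checking that $f(\delta) \coloneqq \delta - (1+\delta)\ln(1+\delta) + \delta^2/3 \le 0$ on $[0,1]$ via $f(0)=0$ together with a sign analysis of $f'$.

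The lower tail is entirely symmetric: I would apply Markov to $e^{-tX}$ with $t > 0$, use $\expt{e^{-tX_i}} \le \exp(p_i(e^{-t}-1))$, optimize at $t = -\ln(1-\delta)$, and then invoke the sharper inequality $\frac{e^{-\delta}}{(1-\delta)^{1-\delta}} \le e^{-\delta^2/2}$ on $[0,1)$ (again by a one-variable calculus check). The two-sided bound is then immediate by a union bound over the two one-sided events. The only mildly technical step is verifying the two elementary calculus inequalities that convert the raw optimized bounds into the clean $e^{-\delta^2/3}$ and $e^{-\delta^2/2}$ forms; these are standard and I would just cite or verify them in one line rather than grind through the derivatives in detail.
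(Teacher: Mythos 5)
The paper states this Chernoff bound in its Preliminaries section as a standard fact with no proof supplied, so there is no in-paper argument to compare against. Your proposal is the classical moment-generating-function derivation (Markov on $e^{tX}$, independence, $1+x\le e^x$, optimize $t$, then the calculus inequalities converting $e^\delta/(1+\delta)^{1+\delta}$ and $e^{-\delta}/(1-\delta)^{1-\delta}$ to $e^{-\delta^2/3}$ and $e^{-\delta^2/2}$), and it is correct as outlined; this is exactly the standard route one would use to justify the lemma.
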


\begin{corollary}
\label{cor:expurgation}
Suppose a codebook $ \cC $ consists of $ 2^{nK} $ ($K>0$) codewords  in a set $ \cV $ and $ \cW $ is a subset of $ \cV $. 
Let $ \cC' $ denote the codebook obtained by independently removing  each codeword in $ \cC $ with probability $ 1-2^{-n\gamma} $ ($\gamma<K$). Then
\begin{align}
\prob{ \card{ \cC'\cap \cW }\notin(1\pm1/2)\expt{\card{\cC'\cap \cW}} }\le& 2\exp\paren{ -\frac{1}{12}2^{n(K-\gamma)} }. \notag
\end{align}
\end{corollary}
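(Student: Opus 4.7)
The plan is simply to identify $|\cC'\cap\cW|$ as a sum of independent Bernoulli indicators and invoke the two-sided multiplicative Chernoff bound of Lemma \ref{lem:chernoff}.

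Concretely, I would enumerate the elements of $\cC\cap\cW$ as $c_1,\ldots,c_N$ with $N\coloneqq|\cC\cap\cW|\le 2^{nK}$, and for each $i\in[N]$ introduce the indicator $X_i\coloneqq\indicator{c_i\in\cC'}$. By the expurgation rule in the statement, each codeword of $\cC$ is retained independently with probability $p\coloneqq 2^{-n\gamma}$, so $X_1,\ldots,X_N$ are mutually independent $\mathrm{Bernoulli}(p)$ variables and
\[
|\cC'\cap\cW|=\sum_{i=1}^N X_i,
\]
with expectation $\mu\coloneqq\expt{|\cC'\cap\cW|}=N\cdot 2^{-n\gamma}$. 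This puts us exactly in the hypothesis of Lemma \ref{lem:chernoff}.

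The second step is to apply the two-sided Chernoff bound with $\delta=1/2$, which immediately yields
\[
\prob{|\cC'\cap\cW|\notin(1\pm 1/2)\mu}\le 2\exp(-\mu/12).
\]
Under the natural reading of the corollary, namely $\cC\subseteq\cW$, we have $N=|\cC|=2^{nK}$ and hence $\mu=2^{n(K-\gamma)}$, which is exactly the claimed bound; more generally the same argument gives the analogous inequality with $2^{n(K-\gamma)}$ replaced by $\mu=N\cdot 2^{-n\gamma}$.

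There is no real obstacle here: the content is essentially bookkeeping on top of Chernoff. The only mild point to verify is that intersecting with $\cW$ does not destroy independence among the $X_i$, which is immediate from the fact that each codeword's retention depends only on its own coin flip and is independent of all other codewords' retention decisions.
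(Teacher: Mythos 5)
Your proposal is correct and is essentially the argument the paper intends: the paper gives no separate proof of Corollary~\ref{cor:expurgation}, presenting it as an immediate consequence of the two-sided Chernoff bound in Lemma~\ref{lem:chernoff} applied with $\delta=1/2$ to the independent Bernoulli indicators of retention. Your observation about the exponent is a sharp one: the literal statement $2\exp(-\tfrac{1}{12}2^{n(K-\gamma)})$ only follows when $|\cC\cap\cW|=|\cC|=2^{nK}$; in general Chernoff gives $2\exp(-\mu/12)$ with $\mu=|\cC\cap\cW|\cdot 2^{-n\gamma}$, and indeed this more careful form is exactly what is invoked later in the proof of Lemma~\ref{lem:expurgation}, where the exponent is $\tfrac{2^{-n\gamma}}{12}|\Lf\cap\cW|$ rather than a fixed $2^{n(K-\gamma)}$. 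So the corollary as written is best read as a slightly informal template, and your parenthetical correction is the version actually used.
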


\begin{lemma}[Gaussian tail]\label{lem:gaussian_tail}
If $\bfg\sim\cN(0,\sigma^2)$, then for any $ \delta\ge0 $,  $\prob{\bfg\ge\delta}\le\exp\left(-\frac{\delta^2}{2\sigma^2}\right)$.
\end{lemma}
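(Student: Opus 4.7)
The plan is to apply the standard Chernoff/Cram\'er exponential moment method, which is the cleanest route to a bound of precisely this form. For any $t\ge 0$, Markov's inequality (Lemma \ref{lem:markov}) applied to the nonnegative random variable $e^{t\bfg}$ gives
\[
\prob{\bfg\ge\delta} \;=\; \prob{e^{t\bfg}\ge e^{t\delta}} \;\le\; e^{-t\delta}\,\expt{e^{t\bfg}}.
\]
So the first step is just to rewrite the tail event in this exponentiated form and invoke Markov.

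Next I would compute the moment generating function of $\bfg\sim\cN(0,\sigma^2)$ explicitly by completing the square in the Gaussian integral:
\[
\expt{e^{t\bfg}} \;=\; \frac{1}{\sqrt{2\pi\sigma^2}}\intgover{\bR} \exp\paren{tx-\frac{x^2}{2\sigma^2}}\diff x \;=\; \exp\paren{\frac{t^2\sigma^2}{2}},
\]
where the last equality comes from writing $tx-\frac{x^2}{2\sigma^2}=-\frac{1}{2\sigma^2}(x-t\sigma^2)^2+\frac{t^2\sigma^2}{2}$ and using that the shifted Gaussian integrates to one. Substituting this into the Markov bound gives
\[
\prob{\bfg\ge\delta}\;\le\;\exp\paren{-t\delta+\frac{t^2\sigma^2}{2}}.
\]

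The final step is to optimize the free parameter $t\ge 0$. Differentiating the exponent in $t$ yields the unique stationary point $t^\star=\delta/\sigma^2$, which is nonnegative because $\delta\ge 0$ by hypothesis, so it is admissible. Plugging $t^\star$ back in, the exponent becomes $-\delta^2/\sigma^2+\delta^2/(2\sigma^2)=-\delta^2/(2\sigma^2)$, yielding the advertised bound $\prob{\bfg\ge\delta}\le\exp(-\delta^2/(2\sigma^2))$. The boundary case $\delta=0$ is also consistent, since the bound gives $1$ while the probability is $1/2$.

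There is essentially no obstacle here: the MGF computation and the one-dimensional optimization are both routine, and the only care needed is to ensure the optimizer $t^\star=\delta/\sigma^2$ lies in the allowed range $t\ge 0$, which it does thanks to the hypothesis $\delta\ge 0$. An alternative, slightly less slick route would be to bound the Gaussian integral $\int_{\delta}^{\infty}e^{-x^2/(2\sigma^2)}\diff x$ directly using the Chernoff-style change of variables $x=\delta+u$ and $x^2\ge \delta^2+2\delta u$, but the MGF approach above is the shortest and matches the style of later concentration arguments in the paper.
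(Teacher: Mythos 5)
Your proof is correct: the Chernoff/Cramér argument (Markov applied to $e^{t\bfg}$, the Gaussian MGF $\expt{e^{t\bfg}}=e^{t^2\sigma^2/2}$, and optimization at $t^\star=\delta/\sigma^2\ge0$) yields exactly the stated bound, and the paper itself cites this lemma as a standard fact without giving a proof, so there is nothing to reconcile. This is the canonical derivation and can stand as written.
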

\begin{lemma}[$\chi^2$ tail]\label{lem:chisquared_tail}
If $\vbfg\sim\cN(\vzero,\sigma^2\bfI_n)$, then $\|\vbfg\|_2^2$ has (scaled) $\chi^2$-distribution and
\begin{align}
\prob{\normtwo{\vbfg}^2\ge n\sigma^2(1+\delta)}\le& \exp\left(-\frac{\delta^2}{4}n\right), \notag \\
\prob{\normtwo{\vbfg}^2\le n\sigma^2(1-\delta)}\le& \exp\left(-\frac{\delta^2}{2}n\right), \notag \\
\prob{\normtwo{\vbfg}^2\notin n\sigma^2(1\pm\delta)}\le& 2\exp\left(-\frac{\delta^2}{4}n\right). \notag
\end{align}
\end{lemma}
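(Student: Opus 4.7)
The plan is to reduce the three bounds to Chernoff-type tail estimates on the standard chi-squared distribution via the explicit moment generating function (MGF). By homogeneity, $X \coloneqq \normtwo{\vbfg}^2/\sigma^2 = \sum_{i=1}^n \paren{\vbfg(i)/\sigma}^2$ is a sum of $n$ independent squares of standard Gaussians, hence $X\sim\chi^2_n$, and the three claims reduce to tail bounds on $X$ about $\expt{X}=n$. The workhorse identity is the MGF $\expt{e^{tX}} = (1-2t)^{-n/2}$, valid for every $t<1/2$, which follows from independence and the one-coordinate Gaussian integral $\expt{e^{t h^2}} = (1-2t)^{-1/2}$ for $h\sim\cN(0,1)$.

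For the upper tail I would apply Markov's inequality (Lemma \ref{lem:markov}) to $e^{tX}$: for any $t\in(0,1/2)$,
\[
\prob{X \ge n(1+\delta)} \;\le\; e^{-tn(1+\delta)} (1-2t)^{-n/2}.
\]
Optimising the right-hand side over $t$ yields the minimiser $t^\star = \delta/(2(1+\delta))$ and the Cram\'er exponent $\prob{X \ge n(1+\delta)} \le \exp\paren{(n/2)\bigl(\ln(1+\delta) - \delta\bigr)}$. A short analytic bound on the exponent, for instance $\ln(1+\delta) - \delta \le -\delta^2/(2(1+\delta))$, then produces an exponential of the form $\exp(-n\delta^2/4)$ on a suitable range of $\delta$.

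The lower tail is handled symmetrically. Applying Markov to $e^{-tX}$ for $t>0$ and using $\expt{e^{-tX}} = (1+2t)^{-n/2}$, optimisation at $t^\star = \delta/(2(1-\delta))$ yields $\prob{X \le n(1-\delta)} \le \exp\paren{(n/2)\bigl(\delta + \ln(1-\delta)\bigr)}$. The Taylor identity $\delta + \ln(1-\delta) = -\sum_{k\ge 2}\delta^k/k \le -\delta^2/2$ immediately gives the claimed negative exponent. The two-sided estimate is then the union of the two one-sided bounds, with the slower of the two rates dominating and the factor $2$ absorbing the second term.

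The only delicate point is extracting the advertised numerical constants from the Cram\'er rate functions $\ln(1+\delta)-\delta$ and $\delta+\ln(1-\delta)$: the naive Taylor inequality $\ln(1+\delta)-\delta \le -\delta^2/2$ fails for $\delta>0$ because the next Taylor term $+\delta^3/3$ is positive, so the upper-tail estimate requires either restricting $\delta$ to a convenient range or invoking the sharper form $\ln(1+\delta)-\delta\le -\delta^2/(2(1+\delta))$ obtained by integrating $1/(1+s)^2$. Apart from this small numerical step, the entire argument is a mechanical Chernoff calculation.
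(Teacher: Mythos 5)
Your Chernoff scaffolding is the right machinery, and the paper itself offers no proof (the lemma is listed in the preliminaries as a standard fact), but neither of the two numerical steps you flag actually closes, and both fail in the same direction. For the upper tail, the optimal Chernoff exponent you correctly derive is $\frac{n}{2}\paren{\ln(1+\delta)-\delta}$; the sharpened inequality $\ln(1+\delta)-\delta\le -\delta^2/(2(1+\delta))$ that you propose therefore yields only $\exp\paren{-\frac{n\delta^2}{4(1+\delta)}}$, which is \emph{larger}, not smaller, than the claimed $\exp(-n\delta^2/4)$ for every $\delta>0$, and restricting to a ``suitable range of $\delta$'' cannot help because $1/(1+\delta)<1$ identically on $(0,\infty)$. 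In fact $\ln(1+\delta)-\delta = -\delta^2/2+\delta^3/3-\cdots>-\delta^2/2$, so the Cram\'er rate $\frac{1}{2}(\delta-\ln(1+\delta))$ lies strictly below $\delta^2/4$ for every $\delta>0$; the printed constant is unreachable by Chernoff, and by Cram\'er's theorem the stated inequality is in fact violated once $n$ is large. For the lower tail your derivation gives $\frac{n}{2}\paren{\delta+\ln(1-\delta)}\le \frac{n}{2}\cdot\paren{-\delta^2/2}=-n\delta^2/4$, which misses the paper's claim $-n\delta^2/2$ by a factor of two; you pass over this with ``immediately gives the claimed negative exponent,'' but it does not, and again $\delta+\ln(1-\delta)=-\delta^2/2-\delta^3/3-\cdots>-\delta^2$ shows the stated rate $\delta^2/2$ is not attainable for small $\delta$.

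The mismatch lies in the lemma's constants, not in your method: the standard Laurent--Massart-type bounds put both tails of $\chi^2_n$ at rate $\delta^2/4$ (up to the $1/(1+\delta)$ factor on the upper side), so what your argument honestly proves, and what every use of the lemma downstream needs (all invocations take $\delta$ a small constant), is $\prob{\normtwo{\vbfg}^2\ge n\sigma^2(1+\delta)}\le\exp\paren{-\frac{n\delta^2}{4(1+\delta)}}$ and $\prob{\normtwo{\vbfg}^2\le n\sigma^2(1-\delta)}\le\exp\paren{-\frac{n\delta^2}{4}}$, with the two-sided bound then following by a union bound at the weaker rate. You should record the bounds you actually obtain and note explicitly that the constants printed in Lemma~\ref{lem:chisquared_tail} cannot be recovered by the Chernoff method and appear to be too aggressive.
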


\begin{lemma}[McDiarmid's inequality]\label{lem:mcdiarmid_ineq}
Suppose $ X_1,\cdots,X_N $ is a sequence of $\cX$- valued random variables.
Let $ f\colon\cX^N\to\bR $ be a function. 
Define the Lipschitz constant of $f$ at the $i$-th input as
\begin{align}
\lip_i(f)\coloneqq& \max_{x_1,\cdots,x_i,x_i',\cdots,x_N\in\cX}\abs{ f(x_1,\cdots,x_i,\cdots,x_N) - f(x_1,\cdots,x_i',\cdots,x_N) }. \notag
\end{align}
Define the Lipschitz constant of $f$ as 
\begin{align}
\lip(f)\coloneqq& \max_{i\in[N]}\lip_i(f). \notag
\end{align}
Then we have
\begin{align}
\prob{ f(X_1,\cdots,X_N)>(1+\delta)\expt{f} }\le& \exp\paren{ -\frac{2\delta^2\expt{f}^2}{N\lip(f)^2} }, \notag \\
\prob{ f(X_1,\cdots,X_N)<(1-\delta)\expt{f} }\le& \exp\paren{ -\frac{2\delta^2\expt{f}^2}{N\lip(f)^2} }, \notag \\
\prob{ f(X_1,\cdots,X_N)\notin(1\pm\delta)\expt{f} }\le& 2\exp\paren{ -\frac{2\delta^2\expt{f}^2}{N\lip(f)^2} }. \notag
\end{align}
\end{lemma}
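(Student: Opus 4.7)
The plan is to derive this concentration bound via the classical Doob martingale construction combined with the Azuma--Hoeffding inequality, after tacitly assuming that $X_1, \ldots, X_N$ are \emph{independent} (which is the standard hypothesis for McDiarmid-style results and appears to be implicit in the statement). First I would construct the Doob martingale associated with $f$ and the filtration generated by $X_1, \ldots, X_N$: set
\begin{equation*}
Z_i \;\coloneqq\; \expt{ f(X_1,\dots,X_N) \,\big|\, X_1,\dots,X_i }, \qquad i = 0,1,\dots,N,
\end{equation*}
so that $Z_0 = \expt{f}$ (a constant) and $Z_N = f(X_1,\dots,X_N)$. Telescoping gives $f - \expt{f} = \sum_{i=1}^{N} (Z_i - Z_{i-1})$, a sum of martingale differences.

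The crucial step is to establish the bounded-differences property $|Z_i - Z_{i-1}| \leq \lip_i(f)$ almost surely. Writing both conditional expectations explicitly and exploiting independence of the $X_j$'s (so that the conditional laws of $X_{i+1},\dots,X_N$ given $X_1,\dots,X_i$ coincide with their marginals), one obtains
\begin{equation*}
Z_i - Z_{i-1} \;=\; \expt{ f(X_1,\dots,X_i,X_{i+1},\dots,X_N) - f(X_1,\dots,X_{i-1},X_i',X_{i+1},\dots,X_N) \,\big|\, X_1,\dots,X_i },
\end{equation*}
where $X_i'$ is an independent copy of $X_i$. The integrand is pointwise bounded in absolute value by $\lip_i(f)$, so the same holds for the conditional expectation. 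I would then invoke Azuma--Hoeffding to conclude
\begin{equation*}
\prob{ f - \expt{f} \geq t } \;\leq\; \exp\!\paren{-\frac{2t^2}{\sum_{i=1}^{N}\lip_i(f)^2}}, \qquad \prob{ f - \expt{f} \leq -t } \;\leq\; \exp\!\paren{-\frac{2t^2}{\sum_{i=1}^{N}\lip_i(f)^2}},
\end{equation*}
for every $t \geq 0$.

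To land exactly the form stated in the lemma, substitute $t = \delta \expt{f}$ and use the crude bound $\sum_{i=1}^{N} \lip_i(f)^2 \leq N\,\lip(f)^2$ in the denominator; the two-sided inequality follows from a union bound. The only mildly subtle point is the bounded-differences verification in the step above, because it genuinely needs the independence of the coordinates and a careful coupling via the fresh copy $X_i'$ -- this is the one place the argument can fail if the joint law is not a product measure. Everything else is routine algebraic manipulation; in particular, the weakening from $\sum_i \lip_i(f)^2$ to $N\,\lip(f)^2$ is essentially free and matches the form in which the lemma is invoked later in the paper.
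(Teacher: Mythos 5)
Your proof is correct, and it is the standard Doob-martingale / Azuma--Hoeffding derivation of McDiarmid's inequality. Note that the paper states this lemma in its preliminaries section (Lemma~\ref{lem:mcdiarmid_ineq}) as a known result and gives no proof of its own, so there is no authorial argument to compare against; your argument fills that gap in the textbook way. Two small remarks. First, you are right that independence of $X_1,\dots,X_N$ must be assumed even though the lemma's phrasing omits it; your bounded-differences step $|Z_i - Z_{i-1}| \le \lip_i(f)$ does genuinely use that the conditional law of $(X_{i+1},\dots,X_N)$ given $(X_1,\dots,X_i)$ is the product of the marginals, and without this the inequality can fail. Second, the substitution $t = \delta\,\expt{f}$ at the end implicitly requires $\expt{f}\ge 0$ (otherwise $t<0$ and Azuma does not apply in that form); this holds in every application in the paper (where $f$ is a nonnegative count), and is consistent with how the lemma is written, so it is fine to leave implicit.
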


\begin{lemma}[First mean value theorem for integrals]\label{lem:mean_value_thm}
Let $ \Omega\subset\bR^n $ be a closed set. Let $ f\colon\Omega\to\bR $ be a continuous function and $ g\colon \Omega\to\bR $ be a integrable function that does not change sign. Then there exists $ \vx\in\Omega $ such that
\begin{align}
\intgover{\Omega} f(\vxi)g(\vxi)\diff\vxi = f(\vx)\intgover{\Omega}g(\vxi)\diff\vxi. \notag
\end{align}
\end{lemma}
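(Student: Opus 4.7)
The plan is to follow the standard intermediate-value argument, adapted from the one-dimensional proof. First I would reduce to the case $g \geq 0$ on $\Omega$, since if $g \leq 0$ we simply apply the non-negative case to $-g$ and both sides of the desired identity flip sign consistently. Note that although the lemma only asserts $\Omega$ is closed, the statement implicitly requires $\Omega$ to be compact and connected for the argument (and for the intended applications in the paper); I will assume this throughout, which is how the result is used downstream.

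Next I would invoke the extreme value theorem: since $f$ is continuous on the compact set $\Omega$, it attains its minimum $m$ and maximum $M$ on $\Omega$. From the pointwise sandwich
\begin{align}
m\, g(\vxi) \le f(\vxi) g(\vxi) \le M\, g(\vxi) \notag
\end{align}
valid because $g\ge0$, integrating over $\Omega$ gives
\begin{align}
m \intgover{\Omega} g(\vxi)\diff\vxi \le \intgover{\Omega} f(\vxi) g(\vxi)\diff\vxi \le M \intgover{\Omega} g(\vxi)\diff\vxi. \notag
\end{align}

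Then I would split into two cases. If $\intgover{\Omega} g(\vxi)\diff\vxi = 0$, the sandwich forces $\intgover{\Omega} f(\vxi) g(\vxi) \diff\vxi = 0$ as well, so the identity holds trivially for any $\vx\in\Omega$. Otherwise $\intgover{\Omega} g(\vxi)\diff\vxi > 0$ (strictly, since $g\ge 0$ and its integral is nonzero), so dividing yields
\begin{align}
m \le \frac{\intgover{\Omega} f(\vxi)g(\vxi)\diff\vxi}{\intgover{\Omega} g(\vxi)\diff\vxi} \le M. \notag
\end{align}
Since $f$ is continuous on the connected set $\Omega$ and attains both the values $m$ and $M$, the intermediate value theorem supplies some $\vx\in\Omega$ with $f(\vx)$ equal to this ratio, which is exactly the claim.

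The only nontrivial point is the topological hypothesis: the argument requires compactness (for the extreme value theorem) and connectedness (for the intermediate value theorem), which is slightly stronger than what the lemma literally states. If one insists on the bare hypothesis "closed," the statement fails (e.g.\ $\Omega$ could be a disjoint union of two closed regions on which $f$ takes disjoint ranges). I would simply remark that in all applications of this lemma in the paper, $\Omega$ will in fact be a compact, connected subset of $\bR^n$ (indeed, typically a Euclidean ball or spherical cap), so the proof above suffices.
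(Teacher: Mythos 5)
The paper states this lemma without proof (it is cited as a standard preliminary), so there is no paper argument to compare against; your proof is the standard one-variable argument correctly transplanted to $\bR^n$, with the right case split on whether $\int_\Omega g$ vanishes and the correct sign reduction from $g\le 0$ to $g\ge 0$. Your remark on the hypotheses is a genuine catch: ``closed'' alone is insufficient, since the intermediate value theorem needs $\Omega$ connected (your two-component counterexample is decisive), and your route through the extreme value theorem additionally needs boundedness. In the one place the paper actually invokes the lemma (Eqn.~\eqref{eqn:extraction_by_mean_value}, Sec.~\ref{sec:empirical_properties_awgn}), the domain is the ball $\cB^n(\vzero,\sqrt{nP})$, which is compact and connected, so the extra hypotheses are harmless there --- though one could also observe that the stated ``closed'' hypothesis can be weakened to ``connected with $fg$ integrable'' by arguing that the ratio $\int fg / \int g$ cannot be an unattained endpoint of the interval $f(\Omega)$, avoiding the extreme value theorem entirely.
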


\noindent\textbf{Geometry.}
\begin{fact}
$\frac{V_{n-1}}{V_{n}}\asymp\sqrt{\frac{n}{2\pi}}$.
\end{fact}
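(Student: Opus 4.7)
The plan is to pass through the closed form for the volume of the Euclidean unit ball and then invoke Stirling's approximation. Recall that
\[
V_n \;=\; \frac{\pi^{n/2}}{\Gamma\!\left(\tfrac{n}{2}+1\right)},
\]
so forming the ratio yields
\[
\frac{V_{n-1}}{V_n} \;=\; \frac{1}{\sqrt{\pi}}\cdot\frac{\Gamma\!\left(\tfrac{n}{2}+1\right)}{\Gamma\!\left(\tfrac{n+1}{2}\right)}.
\]
This reduces the claim to the statement that $\Gamma(n/2+1)/\Gamma((n+1)/2) \asymp \sqrt{n/2}$.

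To see this, I would apply the standard asymptotic $\Gamma(x+a)/\Gamma(x+b) \asymp x^{a-b}$ as $x\to\infty$, with $x = n/2$, $a=1$, $b=1/2$, which gives $x^{1/2} = \sqrt{n/2}$. Combining with the $1/\sqrt{\pi}$ factor produces $\sqrt{n/(2\pi)}$, as desired. If one prefers an elementary route, write Stirling directly: $\Gamma(z+1) \asymp \sqrt{2\pi z}\,(z/e)^z$, and compute the ratio $\Gamma(n/2+1)/\Gamma((n+1)/2) = \Gamma(n/2+1)/\Gamma((n-1)/2+1)$ by plugging in $z = n/2$ and $z = (n-1)/2$ respectively. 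The exponential factors $(z/e)^z$ produce a term asymptotic to $\sqrt{n/2}$ after the usual manipulation $((n/2)/((n-1)/2))^{(n-1)/2} \to \sqrt{e}$, and the Stirling prefactor $\sqrt{2\pi z}$ combines to give the $\sqrt{n/2}$ scaling cleanly.

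There is no real obstacle here; the only bookkeeping issue is that the half-integer arguments of $\Gamma$ shift depending on the parity of $n$, but since both the $n$ even and $n$ odd cases are captured uniformly by Stirling, this causes no difficulty. The conclusion is
\[
\frac{V_{n-1}}{V_n} \;\asymp\; \frac{1}{\sqrt{\pi}}\cdot\sqrt{\tfrac{n}{2}} \;=\; \sqrt{\tfrac{n}{2\pi}},
\]
which is the claimed asymptotic equivalence.
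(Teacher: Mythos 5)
The paper states this only as an unproved \emph{Fact}, so there is no authorial proof to compare against; your argument is the standard verification via the closed form $V_n = \pi^{n/2}/\Gamma(n/2+1)$. Your algebra is right: the ratio collapses to $\pi^{-1/2}\,\Gamma(n/2+1)/\Gamma(n/2+1/2)$, and the Gamma-ratio asymptotic $\Gamma(x+a)/\Gamma(x+b)\sim x^{a-b}$ with $x=n/2$, $a=1$, $b=1/2$ gives exactly $\sqrt{n/2}$, so the stated asymptotic $\sqrt{n/(2\pi)}$ follows. The alternative Stirling computation you sketch, with the $(1+\tfrac{1}{n-1})^{(n-1)/2}\to\sqrt{e}$ cancellation, also checks out. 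Correct and complete.
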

\begin{fact}
$\vol(\cB^n\paren{r}) = V_nr^n $ and $ V_n\asymp\frac{1}{\sqrt{\pi n}}\paren{{2\pi e}/{n}}^{n/2} $.
\end{fact}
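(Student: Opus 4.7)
The plan is to prove two separate claims. For the scaling identity $\vol(\cB^n(r)) = V_n r^n$, I would apply the change-of-variables formula for Lebesgue measure to the dilation map $\vx \mapsto r \vx$, whose Jacobian determinant is $r^n$. Concretely, substituting $\vy = \vx / r$ in the defining integral gives
\begin{align}
\vol(\cB^n(\vzero, r)) = \intgover{\bR^n} \one{\cB^n(\vzero,r)}(\vx) \diff \vx = r^n \intgover{\bR^n} \one{\cB^n(\vzero,1)}(\vy) \diff \vy = V_n r^n. \notag
\end{align}

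For the asymptotic formula for $V_n$, the strategy is to first derive the classical closed form $V_n = \pi^{n/2} / \Gamma(n/2+1)$, then invoke Stirling. To get the closed form, I would evaluate the Gaussian integral $I_n \coloneqq \intgover{\bR^n} e^{-\normtwo{\vx}^2} \diff \vx$ in two ways. In Cartesian coordinates it factors as $I_n = (\intgover{\bR} e^{-t^2}\diff t)^n = \pi^{n/2}$. In spherical coordinates, using that the Lebesgue measure decomposes into radial shells whose surface area is $\frac{\diff}{\diff r} V_n r^n = n V_n r^{n-1}$, one gets
\begin{align}
I_n = n V_n \intgover{0}^{\infty} r^{n-1} e^{-r^2} \diff r = \frac{n}{2} V_n \Gamma(n/2) = V_n\, \Gamma(n/2 + 1), \notag
\end{align}
after the substitution $u = r^2$. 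Equating the two expressions yields $V_n = \pi^{n/2}/\Gamma(n/2+1)$.

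Finally, I would apply Stirling's approximation $\Gamma(z+1) \asymp \sqrt{2\pi z}\, (z/e)^z$ with $z = n/2$ to obtain $\Gamma(n/2+1) \asymp \sqrt{\pi n}\, (n/(2e))^{n/2}$. Dividing $\pi^{n/2}$ by this estimate and combining the $\pi$ factors gives
\begin{align}
V_n \asymp \frac{\pi^{n/2}}{\sqrt{\pi n}\, (n/(2e))^{n/2}} = \frac{1}{\sqrt{\pi n}} \paren{\frac{2\pi e}{n}}^{n/2}, \notag
\end{align}
as claimed. There is no real obstacle here; the only thing to watch carefully is bookkeeping the polynomial and exponential prefactors so that the $\pi$'s and $e$'s recombine correctly into $(2\pi e/n)^{n/2}$ with the subexponential prefactor $1/\sqrt{\pi n}$, and verifying that $\asymp$ (ratio tending to $1$) is preserved through these multiplicative manipulations, which it is since each factor on each side is positive and the ratio of Stirling's two sides tends to $1$.
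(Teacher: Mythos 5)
Your proof is correct. The paper states this as a bare fact without proof, so there is no in-paper argument to compare against; your derivation — the dilation/change-of-variables step for $\vol(\cB^n(r)) = V_n r^n$, the Gaussian-integral evaluation giving $V_n = \pi^{n/2}/\Gamma(n/2+1)$, and Stirling with $z = n/2$ yielding $\Gamma(n/2+1)\asymp\sqrt{\pi n}\,(n/(2e))^{n/2}$ — is the standard route and the bookkeeping checks out, including the fact that $\asymp$ (ratio tending to $1$) is preserved under the positive multiplicative manipulations. The only implicit ingredient worth acknowledging is the polar-coordinate (radial-shell) decomposition of Lebesgue measure used to evaluate $I_n$, which is itself standard and, as you arranged the argument, relies on the already-established scaling identity rather than creating any circularity.
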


The following lemma can be used to estimate the number of lattice points in any convex body, whose proof is along the lines of \cite{ordentlich2016simple}.
\begin{lemma}\label{lem:lattice_pt_bd}
For any  body $\cK\subset\bR^n$ and a lattice $\Lf$, the number of lattice point in $\cK$ is upper and lower bounded by
\begin{align*}
    \frac{\vol(\wc\cK)}{\vol(\Lf)}\le&\card{ \Lf \cap\cK}\le\frac{\vol(\wh\cK)}{\vol(\Lf)},
\end{align*}
where
\begin{align*}
    {\wc\cK}\coloneqq&\curbrkt{\vx\in\cK\colon d(\vx,\partial \cK)\ge\rcov(\Lf)},\\
    {\wh\cK}\coloneqq&\cK+\cV(\Lf).
\end{align*}
\end{lemma}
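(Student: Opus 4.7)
The plan is a standard covering/packing argument based on the fundamental Voronoi cell $\cV(\Lf)$. I rely on two basic facts about $\cV(\Lf)$: (i) $\vol(\cV(\Lf)) = \vol(\Lf)$, and the translates $\{\vx + \cV(\Lf) : \vx \in \Lf\}$ tile $\bR^n$ in the sense that their pairwise intersections have Lebesgue measure zero and their union is all of $\bR^n$; (ii) by the very definition of the covering radius, $\cV(\Lf) \subseteq \cB^n(\vzero, \rcov(\Lf))$, so every $\vv \in \cV(\Lf)$ satisfies $\normtwo{\vv} \le \rcov(\Lf)$.

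For the upper bound, I would associate to each lattice point $\vx \in \Lf \cap \cK$ its shifted Voronoi cell $\vx + \cV(\Lf)$. By the definition $\wh\cK = \cK + \cV(\Lf)$, each such shifted cell lies inside $\wh\cK$. Using additivity of volume across the essentially disjoint translates,
\[
\card{\Lf \cap \cK} \cdot \vol(\Lf) \;=\; \sum_{\vx \in \Lf \cap \cK} \vol(\vx + \cV(\Lf)) \;\le\; \vol(\wh\cK),
\]
which rearranges to $\card{\Lf \cap \cK} \le \vol(\wh\cK)/\vol(\Lf)$.

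For the lower bound, the goal is to show that $\wc\cK$ is covered by $\{\vx + \cV(\Lf) : \vx \in \Lf \cap \cK\}$. Fix $\vy \in \wc\cK$. By the tiling property (i), there exists $\vx \in \Lf$ with $\vy \in \vx + \cV(\Lf)$, hence $\normtwo{\vy - \vx} \le \rcov(\Lf)$ by (ii). Because $\vy \in \wc\cK$ means $d(\vy, \partial\cK) \ge \rcov(\Lf)$, the closed ball $\cB^n(\vy, \rcov(\Lf))$ lies entirely inside $\cK$; in particular $\vx \in \cK$, and so $\vx \in \Lf \cap \cK$. Thus $\wc\cK$ is contained in the essentially disjoint union of the translates $\vx + \cV(\Lf)$ over $\vx \in \Lf \cap \cK$, and taking volumes yields $\vol(\wc\cK) \le \card{\Lf \cap \cK} \cdot \vol(\Lf)$.

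The argument poses no serious obstacle. The only technicalities are that the Voronoi translates overlap on sets of measure zero (irrelevant for volume sums), and that one needs $\cK$ to be measurable and $\cK + \cV(\Lf)$ to make sense as an ordinary Minkowski sum, both of which are automatic for the convex/closed bodies used elsewhere in the paper. No probabilistic tools from Section~\ref{sec:prelim} are required.
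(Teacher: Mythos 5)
Your proof is correct, and since the paper does not actually supply a proof of this lemma (it only remarks that the argument follows the cited reference), the standard Voronoi-tiling argument you give is exactly what is intended. Both directions are handled cleanly: the upper bound by packing the translates $\vx+\cV(\Lf)$, $\vx\in\Lf\cap\cK$, into $\wh\cK$ and using essential disjointness; the lower bound by observing that the Voronoi translate containing any $\vy\in\wc\cK$ has a center within distance $\rcov(\Lf)$ of $\vy$, which the definition of $\wc\cK$ forces to lie in $\cK$. No gap.
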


\begin{lemma}\label{lem:improved_lattice_pt_bd}
Let $\Lc\le\bR^n$ be a full rank lattice. Then for any $\vy\in\bR^n$ and any $r>\rcov(\Lc)/q$,
\begin{align}
\frac{1}{q^n}\card{\frac{1}{q}\Lc\cap\cB^n(\vy,r)}\in&\paren{\frac{r}{\reff(\Lc)}}^n\paren{1\pm\frac{\rcov(\Lc)}{qr}}^n. \notag
\end{align}
\end{lemma}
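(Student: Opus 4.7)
\medskip

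\noindent\textbf{Proof plan for Lemma~\ref{lem:improved_lattice_pt_bd}.} The strategy is to apply the preceding general estimate (Lemma~\ref{lem:lattice_pt_bd}) directly to the scaled lattice $\tfrac{1}{q}\Lc$ and to the convex body $\cK = \cB^n(\vy,r)$. First I would record two elementary scaling identities: $\vol(\tfrac{1}{q}\Lc) = \vol(\Lc)/q^n = V_n\,\reff(\Lc)^n/q^n$ (by the definition of effective radius) and $\rcov(\tfrac{1}{q}\Lc) = \rcov(\Lc)/q$. The hypothesis $r > \rcov(\Lc)/q$ ensures that the ``shrunk'' set $\wc\cK$ associated with $\tfrac{1}{q}\Lc$ is non-degenerate.

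Next I would identify $\wc\cK$ and $\wh\cK$ explicitly for the ball. Since $\cB^n(\vy,r)$ has the property that points within distance $\rho$ of its boundary form the spherical shell of outer radius $r$ and inner radius $r-\rho$, we have
\begin{align*}
\wc\cK \;=\; \cB^n\!\bigl(\vy,\; r - \rcov(\Lc)/q\bigr).
\end{align*}
For the upper bound, the fundamental Voronoi cell $\cV(\tfrac{1}{q}\Lc)$ is contained in $\cB^n(\vzero,\rcov(\Lc)/q)$, hence
\begin{align*}
\wh\cK \;\subseteq\; \cB^n\!\bigl(\vy,\; r + \rcov(\Lc)/q\bigr).
\end{align*}
Computing the Lebesgue volumes as $V_n(r\pm\rcov(\Lc)/q)^n$ and plugging into Lemma~\ref{lem:lattice_pt_bd} yields
\begin{align*}
q^n\,\frac{\bigl(r - \rcov(\Lc)/q\bigr)^n}{\reff(\Lc)^n}
\;\le\; \card{\tfrac{1}{q}\Lc\cap\cB^n(\vy,r)}
\;\le\; q^n\,\frac{\bigl(r + \rcov(\Lc)/q\bigr)^n}{\reff(\Lc)^n}.
\end{align*}

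Finally, I would divide by $q^n$ and factor $r/\reff(\Lc)$ out of each bracket, which gives the desired two-sided estimate $\bigl(r/\reff(\Lc)\bigr)^n\bigl(1\pm \rcov(\Lc)/(qr)\bigr)^n$. There is no genuine obstacle: the content of the lemma is purely the volume comparison, and the only subtle point is checking that the inner-ball radius is strictly positive, which is exactly what the assumption $r > \rcov(\Lc)/q$ buys us. The interpretive punchline, worth noting, is that as $q\to\infty$ the error term $\rcov(\Lc)/(qr)$ vanishes, so $q^{-n}\card{\tfrac{1}{q}\Lc \cap \cB^n(\vy,r)}$ Riemann-converges to $\vol(\cB^n(\vy,r))/\vol(\Lc)$, as expected.
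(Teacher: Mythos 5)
Your proof is correct. The paper states Lemma~\ref{lem:improved_lattice_pt_bd} without proof, but the argument you give — specializing Lemma~\ref{lem:lattice_pt_bd} to the scaled lattice $\tfrac{1}{q}\Lc$ and the ball $\cB^n(\vy,r)$, using $\rcov(\tfrac{1}{q}\Lc)=\rcov(\Lc)/q$ and $\vol(\tfrac{1}{q}\Lc)=V_n\reff(\Lc)^n/q^n$ — is the intended one, and your identification of $\wc\cK=\cB^n(\vy,r-\rcov(\Lc)/q)$ and the containment $\wh\cK\subseteq\cB^n(\vy,r+\rcov(\Lc)/q)$ is exactly what makes the two-sided estimate close.
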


\begin{lemma}\label{lem:expurgation}
Let $\Lf$ be a full rank lattice in $ \bR^n $. Let $ \cV $ be a convex body and $ \cW $ be a convex subset of $ \cV $. Assume $ \vol(\wc\cV)>0 $ and $ \vol(\wc\cW)>0 $. 
Independently remove each lattice point in $\Lf$ with probability $ 1-2^{-n\gamma} $ and let $\Lf'$ denote the resulting configuration. 
Let $ \vbfx $ be a lattice point uniformly distributed on $ \Lf'\cap\cV $. Then
\begin{align}
\probover{\Lf'}{\probover{\vbfx\sim\Lf'\cap\cV}{\vbfx\in\cW}>3\frac{\vol(\wh\cW)}{\vol(\wc\cV)}}\le& \exp\paren{ -\frac{2^{-n\gamma}}{12}\frac{\vol(\wc\cW)}{\vol(\Lf)} }+\exp\paren{ -\frac{2^{-n\gamma}}{12}\frac{\vol(\wc\cV)}{\vol(\Lf)} }. \notag
\end{align}
\end{lemma}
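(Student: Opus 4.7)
The plan is a textbook concentration-and-ratio argument. Set $N_\cV \coloneqq \card{\Lf'\cap\cV}$ and $N_\cW \coloneqq \card{\Lf'\cap\cW}$, so that $\probover{\vbfx\sim\Lf'\cap\cV}{\vbfx\in\cW} = N_\cW/N_\cV$. Each of $N_\cV$ and $N_\cW$ is a sum of independent Bernoulli$(2^{-n\gamma})$ indicators, one per lattice point in $\Lf\cap\cV$ (resp.\ $\Lf\cap\cW$); the hypotheses $\vol(\wc\cV)>0$ and $\vol(\wc\cW)>0$, combined with Lemma~\ref{lem:lattice_pt_bd}, guarantee that both $\card{\Lf\cap\cV}$ and $\card{\Lf\cap\cW}$, and hence the means $\expt{N_\cV}$ and $\expt{N_\cW}$, are strictly positive.

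The core step is to apply Lemma~\ref{lem:chernoff} with $\delta=1/2$ to each binomial, giving
\[
\probover{\Lf'}{N_\cV < \tfrac{1}{2}\expt{N_\cV}} \le \exp\paren{-\tfrac{1}{12}\expt{N_\cV}}, \qquad \probover{\Lf'}{N_\cW > \tfrac{3}{2}\expt{N_\cW}} \le \exp\paren{-\tfrac{1}{12}\expt{N_\cW}}.
\]
Substituting the lower bounds $\expt{N_\cV} \ge 2^{-n\gamma}\vol(\wc\cV)/\vol(\Lf)$ and $\expt{N_\cW} \ge 2^{-n\gamma}\vol(\wc\cW)/\vol(\Lf)$ from Lemma~\ref{lem:lattice_pt_bd} turns these into the two exponential terms appearing on the right-hand side of the claim.

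Off the union of the two bad events,
\[
\frac{N_\cW}{N_\cV} \le \frac{(3/2)\expt{N_\cW}}{(1/2)\expt{N_\cV}} = 3\,\frac{\card{\Lf\cap\cW}}{\card{\Lf\cap\cV}} \le 3\,\frac{\vol(\wh\cW)}{\vol(\wc\cV)},
\]
where the last step combines the upper bound $\card{\Lf\cap\cW}\le\vol(\wh\cW)/\vol(\Lf)$ with the lower bound on $\card{\Lf\cap\cV}$ from Lemma~\ref{lem:lattice_pt_bd}. A union bound over the two bad events then closes the argument.

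There is no genuine obstacle: the lemma is essentially the observation that random thinning of a lattice preserves relative densities up to a universal constant, modulo the lattice-to-volume discretization encoded in $\wc{\cdot}$ and $\wh{\cdot}$. The only bookkeeping point is the asymmetric use of Lemma~\ref{lem:lattice_pt_bd}: the upper bound on $\card{\Lf\cap\cW}$ is needed for the ratio step (producing the $\vol(\wh\cW)$ in the conclusion), whereas the lower bounds on both counts drive the Chernoff exponents (producing $\vol(\wc\cV)$ and $\vol(\wc\cW)$ in the probability bound).
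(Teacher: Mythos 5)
Your proof is correct and follows essentially the same route as the paper: Chernoff concentration on the two binomial counts $\card{\Lf'\cap\cW}$ and $\card{\Lf'\cap\cV}$ with $\delta=1/2$, union bound, and Lemma~\ref{lem:lattice_pt_bd} to pass between counts and volumes. The only difference is presentational—you state the bad events directly in terms of the expected counts and derive the volume-ratio bound at the end, whereas the paper writes the thresholds in terms of $\vol(\wh\cW)/\vol(\Lf)$ and $\vol(\wc\cV)/\vol(\Lf)$ from the start and relaxes to counts afterward—but the two are interchangeable and yield the identical final bound.
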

\begin{proof}
\begin{align}
\probover{\Lf'}{\probover{\vbfx\sim\Lf'\cap\cV}{\vbfx\in\cW}>3\frac{\vol(\wh\cW)}{\vol(\wc\cV)}}=& \probover{\Lf'}{
	\frac{\card{\Lf'\cap\cW}}{\card{\Lf'\cap\cV}}>3\frac{\vol(\wh\cW)}{\vol(\wc\cV)}
} \notag \\
\le& \probover{\Lf'}{ \card{\Lf'\cap\cW}>\frac{3}{2}2^{-n\gamma}\frac{\vol(\wh\cW)}{\vol(\Lf)} } + \probover{\Lf'}{ \card{\Lf'\cap\cV}<\frac{1}{2}2^{-n\gamma}\frac{\vol(\wc\cV)}{\vol(\Lf)}} \notag \\
\le& \probover{\Lf'}{ \card{\Lf'\cap\cW}>\frac{3}{2}2^{-n\gamma}\card{\Lf\cap\cW} } + \probover{\Lf'}{ \card{\Lf'\cap\cV}<\frac{1}{2}2^{-n\gamma}\card{\Lf\cap\cV}} \label{eqn:bound1} \\
\le& \exp\paren{ -\frac{2^{-n\gamma}}{12}\card{\Lf\cap\cW} }+\exp\paren{ -\frac{2^{-n\gamma}}{12}\card{\Lf\cap\cV} } \label{eqn:expectation} \\
\le& \exp\paren{ -\frac{2^{-n\gamma}}{12}\frac{\vol(\wc\cW)}{\vol(\Lf)} }+\exp\paren{ -\frac{2^{-n\gamma}}{12}\frac{\vol(\wc\cV)}{\vol(\Lf)} }, \label{eqn:bound2} 
\end{align}
where Inequality \eqref{eqn:expectation} follows from Corollary \ref{cor:expurgation} by noting that
\begin{align}
\exptover{\Lf'}{\card{\Lf'\cap\cW}}=& 2^{-n\gamma}\card{\Lf\cap\cW}, \quad\exptover{\Lf'}{\card{\Lf'\cap\cV}}= 2^{-n\gamma}\card{\Lf\cap\cV}. \notag
\end{align}
Inequalities \eqref{eqn:bound1} and \eqref{eqn:bound2} are by Lemma \ref{lem:lattice_pt_bd}. 
\end{proof}

\noindent\textbf{Information theory.}
The following inequalities are standard in information theory. 
\begin{lemma}[Cardinality bound]\label{lem:cardinality_bd}
	If $X$ is a random variable distributed on a finite set $ \cX $, then $ H(X)\le\log|\cX| $.
\end{lemma}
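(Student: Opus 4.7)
The plan is to give the textbook Jensen's inequality argument, which is the cleanest route for this bound. First I would write $H(X)$ in its defining form as $H(X) = \sum_{x\in\cX'} P_X(x) \log \frac{1}{P_X(x)}$, where $\cX' \coloneqq \{x \in \cX : P_X(x) > 0\}$ is the support of $X$ (so that all logarithms are finite, using the standard convention $0 \log 0 = 0$). Note $|\cX'| \le |\cX|$, so it suffices to show $H(X) \le \log|\cX'|$.

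Next I would invoke Jensen's inequality on the concave function $\log$ with respect to the probability measure $P_X$ on $\cX'$: since $\sum_{x \in \cX'} P_X(x) = 1$ and $\log$ is concave,
\begin{align}
H(X) = \sum_{x \in \cX'} P_X(x) \log\frac{1}{P_X(x)} \le \log \sum_{x \in \cX'} P_X(x) \cdot \frac{1}{P_X(x)} = \log|\cX'| \le \log|\cX|. \notag
\end{align}
This completes the proof in one display. An equally short alternative that I would mention is to observe that the KL divergence between $P_X$ and the uniform distribution $U$ on $\cX$ satisfies
\begin{align}
0 \le D(P_X \| U) = \sum_{x \in \cX} P_X(x) \log\frac{P_X(x)}{1/|\cX|} = \log|\cX| - H(X), \notag
\end{align}
which rearranges to the claim. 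Non-negativity of $D(\cdot\|\cdot)$ is itself a consequence of Jensen, so the two proofs are really the same.

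There is no genuine obstacle here: the only minor bookkeeping point is handling zero-probability atoms, which is dispatched by restricting to the support as above. I would therefore keep the proof to the three or four lines indicated, since the statement is standard and is invoked only as a reference inequality elsewhere in the paper.
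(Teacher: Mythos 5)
Your proof is correct. The paper does not actually prove this lemma; it is listed (together with Lemmas~\ref{lem:entropy_vs_var}--\ref{lem:dpi}) under the heading ``The following inequalities are standard in information theory'' and is simply cited as a reference fact. Both arguments you give --- Jensen's inequality applied to $\log$ on the support, and non-negativity of $D(P_X\|U)$ --- are the standard textbook routes and are equivalent, as you note. The care taken in restricting to the support $\cX'$ to avoid $0\log 0$ issues is appropriate, though harmless either way under the usual convention.
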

\begin{lemma}[Entropy vs. variance bound]\label{lem:entropy_vs_var}
If $ X $ is a real-valued random variable, then $ H(X)\le\frac{1}{2}\log(2\pi e\var{X}) $. 
\end{lemma}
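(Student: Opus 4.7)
The plan is to use the standard variational characterization of the Gaussian as the maximum differential entropy distribution for a fixed second moment, via non-negativity of KL divergence. Let $X$ be the random variable in question, with mean $\mu \coloneqq \expt{X}$, variance $\sigma^2 \coloneqq \var{X}$, and density $p$ (the inequality is vacuous otherwise, since for discrete or singular laws the differential entropy is $-\infty$ or undefined; one should read the statement as applying to absolutely continuous $X$). Let $q$ be the density of $\mathcal{N}(\mu,\sigma^2)$, namely $q(x) = (2\pi\sigma^2)^{-1/2}\exp(-(x-\mu)^2/(2\sigma^2))$.

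The first step is to invoke $D(p\,\|\,q) \ge 0$, which after expanding gives
\begin{align}
0 \;\le\; \int p(x)\log\frac{p(x)}{q(x)}\,\diff x \;=\; -H(X) \;-\; \int p(x)\log q(x)\,\diff x. \notag
\end{align}
Rearranging yields $H(X) \le -\int p(x)\log q(x)\,\diff x$, so it suffices to compute the cross-entropy on the right.

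The second step is the routine calculation: substituting $q$ and pulling constants out,
\begin{align}
-\int p(x)\log q(x)\,\diff x \;=\; \tfrac{1}{2}\log(2\pi\sigma^2) \;+\; \frac{\log e}{2\sigma^2}\int p(x)(x-\mu)^2\,\diff x \;=\; \tfrac{1}{2}\log(2\pi\sigma^2) + \tfrac{1}{2}\log e, \notag
\end{align}
where the integral evaluates to $\sigma^2$ by definition of variance. Combining with the previous display gives $H(X) \le \tfrac{1}{2}\log(2\pi e\sigma^2)$, which is the claim.

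There is no genuine obstacle: this is a textbook consequence of $D(p\,\|\,q)\ge0$ and the fact that $\log q$ is (up to a constant) a quadratic in $x$, so that $\mathbb{E}_p[\log q(X)]$ only depends on the mean and variance of $X$. The only subtlety worth a sentence in the write-up is to clarify that $H$ here denotes differential entropy and that the bound is finite only when $X$ admits a density; the Gaussian with matched mean and variance attains equality, showing the bound is tight.
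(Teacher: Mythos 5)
Your proof is correct and is the standard maximum-entropy argument via non-negativity of relative entropy against a matched Gaussian; the paper states this lemma without proof as a standard information-theoretic fact, so there is nothing to compare against. Your remark that the statement implicitly assumes $X$ is absolutely continuous (so that $H$ denotes differential entropy and is not $-\infty$) is a worthwhile clarification.
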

\begin{lemma}[Fano's inequality]\label{lem:fano_ineq}
If $ X\to Y\to \wh X $ is a Markov chain where $ X $ is distributed on $ [2^{nR}] $, then $ H(X|Y)\le H(X|\wh X)\le 1 + \prob{\wh X\ne X}nR $. 
\end{lemma}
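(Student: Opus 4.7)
\textbf{Proof proposal for Lemma \ref{lem:fano_ineq}.} The plan is to handle the two inequalities separately. The first inequality $H(X|Y)\le H(X|\wh X)$ is the standard data-processing inequality for conditional entropy applied to the Markov chain $X\to Y\to \wh X$: since $\wh X$ is a (possibly randomized) function of $Y$, conditioning on $Y$ can only reduce uncertainty more than conditioning on $\wh X$ does. Formally, I would write $I(X;Y)\ge I(X;\wh X)$, which is the usual statement of the data-processing inequality, and subtract from $H(X)$ on both sides.

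For the second inequality, the plan is the classical ``error indicator'' trick. Introduce the binary random variable $E \coloneqq \one{\wh X\ne X}$, and expand the joint conditional entropy $H(E,X\mid \wh X)$ in two different ways using the chain rule. On the one hand,
\begin{align}
H(E,X\mid \wh X) = H(X\mid \wh X) + H(E\mid X,\wh X) = H(X\mid \wh X), \notag
\end{align}
since $E$ is a deterministic function of $(X,\wh X)$. On the other hand,
\begin{align}
H(E,X\mid \wh X) = H(E\mid \wh X) + H(X\mid E,\wh X) \le H(E) + H(X\mid E,\wh X). \notag
\end{align}
Now $H(E)\le 1$ since $E$ is binary, and I would split $H(X\mid E,\wh X)$ into the two conditional pieces: when $E=0$ we have $X=\wh X$, so this contribution is zero; when $E=1$, $X$ lives in a set of size at most $2^{nR}-1$, so Lemma \ref{lem:cardinality_bd} gives $H(X\mid \wh X,E=1)\le \log(2^{nR}-1)\le nR$. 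Averaging over $E$ yields $H(X\mid E,\wh X)\le \prob{\wh X\ne X}\,nR$, and combining the two expansions gives the desired bound.

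I do not expect any real obstacle here, as both inequalities are textbook. The only minor care needed is to verify that $H(E\mid X,\wh X)=0$ (it is, because $E$ is a determined by $(X,\wh X)$) and that the $\log(2^{nR}-1)\le nR$ step is applied correctly. The argument does not use any property of $Y$ beyond the Markov structure, so no additional conditions on the channel are required.
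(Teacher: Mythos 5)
Your proof is correct. The paper does not actually prove this lemma---it lists Fano's inequality under the heading ``The following inequalities are standard in information theory'' and cites it without argument---so there is no paper proof to compare against. What you give is the standard textbook derivation: the first inequality $H(X|Y)\le H(X|\wh X)$ via the data-processing inequality applied to the Markov chain, and the second via the error-indicator $E=\one{\wh X\ne X}$, two applications of the chain rule, $H(E)\le 1$, and the cardinality bound on $H(X\mid \wh X, E=1)$. One minor remark: the step $H(X\mid\wh X, E=1)\le\log(2^{nR}-1)$ tacitly assumes $\wh X$ takes values in $[2^{nR}]$ (so that $X\ne\wh X$ restricts $X$ to $2^{nR}-1$ values); if you do not want to assume that, the cruder bound $H(X\mid\wh X,E=1)\le H(X)\le nR$ still gives exactly the claimed conclusion, so nothing is lost either way.
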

\begin{lemma}[Data processing inequality]\label{lem:dpi}
If $ X\leftrightarrow Y \leftrightarrow  Z $ form a Markov chain, then $ I(X;Z)\le I(Y;Z) $. 
\end{lemma}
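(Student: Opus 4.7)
The plan is to prove the data processing inequality via the chain rule for mutual information applied in two different orders to $I(X,Y;Z)$, together with the Markov property and the nonnegativity of conditional mutual information.

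First I would expand $I(X,Y;Z)$ in two ways using the chain rule:
\begin{align}
I(X,Y;Z) &= I(Y;Z) + I(X;Z\mid Y), \notag \\
I(X,Y;Z) &= I(X;Z) + I(Y;Z\mid X). \notag
\end{align}
Equating these gives $I(Y;Z) + I(X;Z\mid Y) = I(X;Z) + I(Y;Z\mid X)$.

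Next I would invoke the Markov chain hypothesis $X \leftrightarrow Y \leftrightarrow Z$, which by definition asserts that $X$ and $Z$ are conditionally independent given $Y$. This is exactly the statement $I(X;Z\mid Y) = 0$. Substituting into the equation above yields $I(Y;Z) = I(X;Z) + I(Y;Z\mid X)$.

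Finally, I would use the fact that conditional mutual information is always nonnegative, so $I(Y;Z\mid X) \ge 0$, and conclude $I(Y;Z) \ge I(X;Z)$. There is no real obstacle here since every ingredient (chain rule, Markov characterization via conditional independence, nonnegativity of $I(\,\cdot\,;\,\cdot\mid\,\cdot\,)$) is already a standard fact; the only care needed is to state the Markov condition in the correct direction so that $I(X;Z\mid Y)$ (rather than $I(Y;Z\mid X)$) is the term that vanishes.
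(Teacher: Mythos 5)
Your proof is correct and is the standard textbook argument (chain rule on $I(X,Y;Z)$ in two orders, Markov condition to kill $I(X;Z\mid Y)$, nonnegativity of conditional mutual information). The paper states this lemma without proof as a standard information-theoretic fact, so there is nothing to contrast against; your argument would serve as a perfectly adequate proof if one were needed.
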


\begin{definition}[Quadratically constrained myopic adversarial channel]\label{def:myopic_channel}
A $ (P,\sigma^2,N) $-quadratically constrained myopic adversarial channel takes as input $ \vx_m $ which encodes $ m\in\cM $ subject to power constraint $ \normtwo{\vx}\le \sqrt{nP} $.
The transmitted codeword $ \vx $ is also broadcast through a AWGN$ (P,\sigma^2) $ channel and James receives $ \vbfz = \vx + \vbfsz $ where $ \vbfsz\sim\cN(\vzero,\sigma^2\bfI_n) $.
Based on $ \vbfz $ and the codebook $ \cC = \curbrkt{\vx_m}_{m\in\cM} $ (which is known to every party), James designs an adversarial noise vector $ \vs $ subject to power constrant  $ \normtwo{\vs}\le\sqrt{nN} $.
Once $ \vs $ is transmitted, the channel adds it to $ \vx $ and outputs $ \vbfy = \vx + \vs\in\cB^n\paren{\vzero,\sqrt{nP} + \sqrt{nN}} $.
Bob receiving $ \vbfy $ is required to reliably decode to the message corresponding to $ \vs $. 
\end{definition} 

\begin{definition}[List decodability of Euclidean codes]\label{def:list_dec}
A code $ \cC = \curbrkt{\vx_i}_{i \in\cM}\subseteq\bR^n $ is said to be $ (P,N,L) $-list decodable for some $ P,N>0 $ and $ L\in\bZ_{>0} $ if $ \normtwo{\vx_i}\le\sqrt{nP} $ for every $ i\in\cM $ and for any $ \vy\in\bR^n $, $ \card{ \cC\cap\cB^n\paren{\vy,\sqrt{nN}} }\le L $. 
\end{definition}

\begin{definition}[List decodability of infinite lattices]\label{def:list_dec}
An infinite lattice $ \Lf\le\bR^n $ is said to be $ (N,L) $-list decodable for some $ N>0 $ and $ L\in\bZ_{>0} $ if for every $ \vy\in\bR^n $, $ \card{\Lf\cap\cB^n\paren{\vy,\sqrt{nN}}}\le L $. 
\end{definition}

\begin{definition}[Normalized logarithmic density]\label{def:nld}
Let $\Lf \le\bR^n $ be an infinite lattice. 
The density of $\Lf$ is defined as 
\begin{align}
\Delta(\Lf) \coloneq& \limsup_{n\to\infty}\frac{\card{\Lf\cap[0,a)^n}}{a^n}. \notag
\end{align}
With slight abuse of notation, the normalized logarithmic density (NLD) of $\Lf$ is defined as 
$
R(\Lf) \coloneq\frac{1}{n} {\log\Delta(\Lf)}
$. 
NLD measure the ``rate'' of a lattice. 
\end{definition}

\section{Problem formulation}\label{sec:problem_formulation}
This paper is concerned with the following communication scenario. 
Two transmitters Alice and Bob want to exchange their messages $ \bfm_A $ and $ \bfm_B $ over a noisy channel governed by an adversary James. 
Specifically, we assume $ \bfm_A $ and $ \bfm_B $ are uniformly distributed in Alice's and Bob's message sets $ \cM $ and $ \cW $, respectively.
To fight against the adversarial noise to be introduced by James, Alice encodes her message into a length-$n$ real-valued vector $ \vbfx_A\coloneqq \vx_{\bfm_A} $ satisfying $ \normtwo{\vbfx_A}\le \sqrt{nP_A} $ for some channel parameter $ P_A>0 $. 
Similarly, Bob is allowed to encode his message into a codeword $ \vbfx_B \coloneqq \vx_{\bfm_B} $ satisfying $ \normtwo{\vbfx_B}\le\sqrt{nP_B} $ for some $ P_B>0 $.
By Kerckhoffs's principle, we assume that codebooks (collection of codewords) used by Alice and Bob are known to every party in the system. 
Codewords $ \vbfx_A $ and $ \vbfx_B $ are transmitted and added in the channel. 
James gets to know the sum $ \vbfz \coloneqq \vbfx_A+\vbfx_B $. 
Based on his observation, James designs adversarial vectors $ \vs_A $ and $ \vs_B $ such that $ \normtwo{\vs_A}\le\sqrt{nN_A} $ and $ \normtwo{\vs_B}\le\sqrt{nN_B} $ for some $ N_A>0 $ and $ N_B>0 $, respectively.
Once $ \vs_A $ and $ \vs_B $ are fed into the channel, Alice receives $ \vbfy_A \coloneqq \vbfz+\vs_A $ and Bob receives $ \vbfy_B \coloneqq \vbfz+\vs_B $.
The goal for Alice/Bob is to reliably decode the other transmitter Bob's/Alice's message w.h.p. over $ \bfm_A $ and $ \bfm_B $.

The channel model is depicted in Fig. \ref{fig:two-way-gaussian-adv-ic}.
\begin{figure}[htbp]
    \centering
    \includegraphics[width = \textwidth]{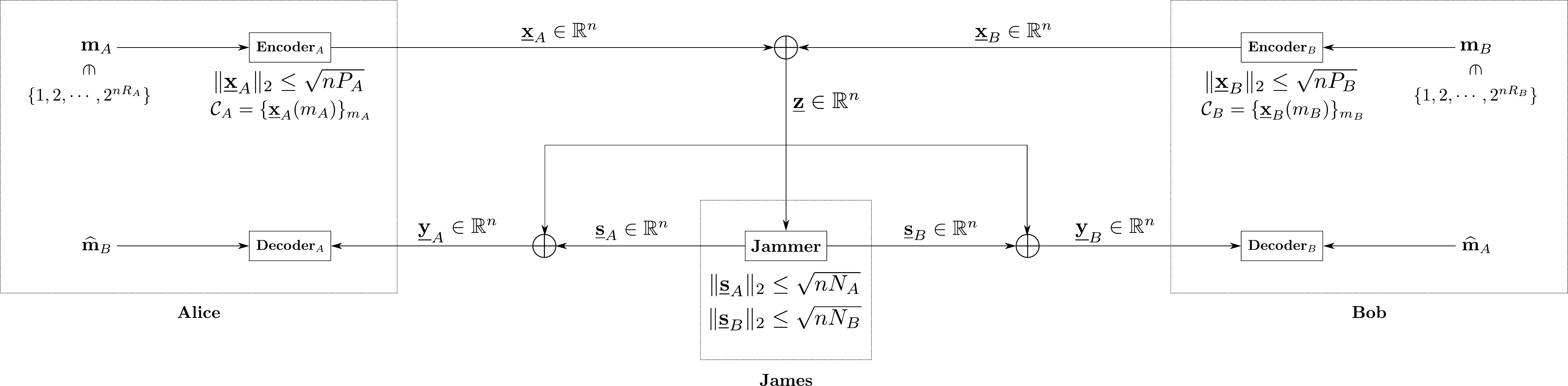}
    \caption{A quadratically constrained two-way adversarial  channel. In our proof, we assume $ P_A=P_B=P $ and $ N_A = N_B = N $. All of our results can be easily extended to the general asymmetric case.}
    \label{fig:two-way-gaussian-adv-ic}
\end{figure}

\begin{definition}[Quadratically constrained two-way adversarial  channel]
A $ (P_A,P_B,N_A,N_B) $-quadratically constrained two-way adversarial  channel is a function pair $ (W_A, W_B) $,
\begin{align}
&\begin{array}{rlll}
W_A\colon& \cB^n\paren{ \vzero,\sqrt{nP_A} } \times\cB^n\paren{ \vzero,\sqrt{nP_B} } \times\cB^n\paren{ \vzero,\sqrt{nN_A} } &\to &\cB^n\paren{ \vzero, \sqrt{nP_A} + \sqrt{nP_B} + \sqrt{nN_A} } \\
& (\vx_A,\vx_B,\vs_A) &\mapsto &\vy_A \coloneqq \vx_A+\vx_B+\vs_A
\end{array},
\notag \\
&\begin{array}{rlll}
W_B\colon& \cB^n\paren{ \vzero,\sqrt{nP_A} } \times\cB^n\paren{ \vzero,\sqrt{nP_B} } \times\cB^n\paren{ \vzero,\sqrt{nN_B} } &\to &\cB^n\paren{ \vzero, \sqrt{nP_A} + \sqrt{nP_B} + \sqrt{nN_B} } \\
& (\vx_A,\vx_B,\vs_B) &\mapsto &\vy_B \coloneqq \vx_A+\vx_B+\vs_B
\end{array}.
\notag
\end{align}
Here $ \vs_A $ and $ \vs_B $ 
are outputs of an \emph{arbitrary} jamming map pair $ (\jam_A,\jam_B) $ of the following form,
\begin{align}
&\begin{array}{rlll}
\jam_{A,\cC_A,\cC_B}\colon& \cC_A+\cC_B &\to &\cB^n\paren{\vzero,\sqrt{nN_A}} \\
& \vz &\mapsto & \vs_A
\end{array}, \notag \\
&\begin{array}{rlll}
\jam_{B,\cC_A,\cC_B}\colon& \cC_A+\cC_B &\to &\cB^n\paren{\vzero,\sqrt{nN_B}} \\
& \vz &\mapsto & \vs_B
\end{array}. \notag 
\end{align}
Note that both $ \jam_A $ and $ \jam_B $ can depend on $ (\cC_A,\cC_B) $.
\end{definition}

\begin{remark}
Throughout this paper, we focus on the symmetric case where $ P_A = P_B = P $ and $ N_A = N_B = N $. 
Such channels are denoted by $ (P,N) $-quadratically constrained two-way adversarial channels for short. 
All results can be trivially extended to the asymmetric case. 
We will state the extension without proof. 
\end{remark}

\begin{definition}[Code]
A code $ (\enc_A,\enc_B,\dec_A,\dec_B) $ 
consists of 
\begin{itemize}
	\item Alice's encoder 
	\begin{align}
	\begin{array}{rlll}
	\enc_A\colon& \cM&\to &\cB^n\paren{\vzero,\sqrt{nP_A}} \\
	& m & \mapsto & \vx_{A,m}
	\end{array};
	\notag
	\end{align}
	\item Bob's encoder 
	\begin{align}
	\begin{array}{rlll}
	\enc_B\colon& \cW &\to &\cB^n\paren{\vzero,\sqrt{nP_B}} \\
	& w & \mapsto & \vx_{B,w}
	\end{array};
	\notag
	\end{align}
	\item Alice's decoder 
	\begin{align}
	\begin{array}{rlll}
	\dec_A\colon &\cB^n\paren{\vzero, \sqrt{nP_A} + \sqrt{nP_B} + \sqrt{nN_A}} &\to &\cW \\
	& \vy_A & \mapsto & {\wh w}
	\end{array};
	\notag
	\end{align}
	\item Bob's decoder 
	\begin{align}
	\begin{array}{rlll}
	\dec_B\colon &\cB^n\paren{\vzero,\sqrt{nP_A} + \sqrt{nP_B} + \sqrt{nN_B}} &\to &\cW \\
	& \vy_B & \mapsto & {\wh m}
	\end{array}.
	\notag
	\end{align}
\end{itemize}
The dimension $n$ is called the blocklength of the code.

Let $ M\coloneqq|\cM| $ and $ W\coloneqq|\cW| $. 
The message sets $ \cM $ and $ \cW $ are identified with $ \sqrbrkt{M} $ and $ \sqrbrkt{W} $, respectively. 
The rate of a code $ (\cC_A,\cC_B) $ is defined as a pair $ (R_A,R_B) $ where $ R_A = R(\cC_A) \coloneqq\frac{\log M}{n} $ and $ R_B = R(\cC_B) \coloneqq\frac{\log W}{n} $. 

At times, we also abuse the notation and call the collection of codewords (images of the encoding maps) codebooks, i.e., $ \cC_A\coloneqq\curbrkt{\vx_{A,m}}_{m = 1}^M $, and $ \cC_B \coloneqq\curbrkt{\vx_{B,w}}_{w = 1}^{W} $. 
\end{definition}

\begin{definition}[Average probability of error]
The average probability of error of a codebook pair $ (\cC_A,\cC_B) $ associated with $ (\enc_A,\enc_B,\dec_A,\dec_B) $ used over a $ (P_A,P_B,N_A,N_B) $-quadratically constrained two-way adversarial  channel is defined as
\begin{align}
P_{\e,A}(\cC_A,\cC_B) \coloneqq& \max_{\jam_{A,\cC_A,\cC_B}} \probover{\substack{\bfm_A\sim\cM\\\bfm_B\sim\cW}}{ \dec_A\paren{ \vx_{A,\bfm_A} + \vx_{B,\bfm_B} + \jam_{A,\cC_A,\cC_B}(\vx_{A,\bfm_A} + \vx_{B,\bfm_B}) }\ne\bfm_B } \notag \\
=& \max_{\jam_{A,\cC_A,\cC_B}} \frac{1}{MW} \sum_{m_A\in\cM} \card{\curbrkt{ m_B\in\cW\colon \dec_A\paren{ \vx_{A,m_A} + \vx_{B, m_B} + \jam_{A,\cC_A,\cC_B}(\vx_{A, m_A} + \vx_{B, m_B}) }\ne m_B }}, \notag \\
P_{\e,B}(\cC_A,\cC_B) \coloneqq& \max_{\jam_{B,\cC_A,\cC_B}} \probover{\substack{\bfm_A\sim\cM\\\bfm_B\sim\cW}}{ \dec_B\paren{ \vx_{A,\bfm_A} + \vx_{B,\bfm_B} + \jam_{B,\cC_A,\cC_B}(\vx_{A,\bfm_A} + \vx_{B,\bfm_B}) }\ne\bfm_A } \notag \\
=& \max_{\jam_{B,\cC_A,\cC_B}} \frac{1}{MW} \sum_{m_B\in\cW} \card{\curbrkt{ m_A\in\cM\colon \dec_B\paren{ \vx_{A,m_A} + \vx_{B, m_B} + \jam_{B,\cC_A,\cC_B}(\vx_{A, m_A} + \vx_{B, m_B}) }\ne m_A }}, \notag
\end{align}
where the probabilities are taken over uniform selection of $ \bfm_A $ and $ \bfm_B $. 
\end{definition}

\begin{definition}[Achievable rate]
A rate pair $ (R_A,R_B) $ is said to be achievable if for any constant $ \beta_1,\beta_2 >0 $ and $ \eps_1,\eps_2>0 $, there exists a sequence of codes $ \curbrkt{(\cC_{A,n},\cC_{B,n})}_{n} $ for infinitely many $n  $ such that, there is an $ n_0 $, for every $ n>n_0 $,
\begin{itemize}
	\item   $ R_{A,n} \ge R_A - \beta_1 $ and $ R_{B,n} \ge R_B - \beta_2 $;
	\item the probabilities of Alice's and Bob's decoding errors vanish in $ n $, 
	\begin{align}
	P_{\e,A}(\cC_{A,n},\cC_{B,n}) \le& \eps_1, \notag \\
	P_{\e,B}(\cC_{A,n},\cC_{B,n}) \le& \eps_2. \notag 
	\end{align}
\end{itemize}
\end{definition}

\begin{definition}[Capacity]
The capacity $ (C_A,C_B) $ of a $ (P_A,P_B,N_A,N_B) $ quadratically constrained two-way adversarial  channel is defined as the supremum of all achievable rates, 
\begin{align}
C_A \coloneqq& \limsup_{\substack{\eps_1\downarrow0,\eps_2\downarrow0}} \limsup_{n\uparrow\infty} \max_{ \substack{\cC_{A,n} \subset\cB^n\paren{\vzero,\sqrt{nP_A}} \\ \cC_{B,n}\subset\cB^n\paren{\vzero,\sqrt{nP_B}} \\P_{\e,A}(\cC_{A,n},\cC_{B,n})\le\eps_1\\P_{\e,B}(\cC_{A,n},\cC_{B,n})\le\eps_2} } R( \cC_{A,n} ), \notag \\
C_B \coloneqq& \limsup_{\substack{\eps_1\downarrow0,\eps_2\downarrow0}} \limsup_{n\uparrow\infty} \max_{ \substack{\cC_{A,n} \subset\cB^n\paren{\vzero,\sqrt{nP_A}} \\ \cC_{B,n}\subset\cB^n\paren{\vzero,\sqrt{nP_B}}\\P_{\e,A}(\cC_{A,n},\cC_{B,n})\le\eps_1\\P_{\e,B}(\cC_{A,n},\cC_{B,n})\le\eps_2} } R( \cC_{B,n} ). \notag 
\end{align}
\end{definition}

\section{Beyond list decoding capacity: modified decoding rules} \label{sec:beyond_listdec_cap}
{Naively following the proof strategy in \cite{jaggi-langberg-2017-two-way}, we cannot prove any achievable rate that is larger than the list decoding capacity $\frac{1}{2}\log\frac{P}{N}$ above which the list size of \emph{any} code has to be exponential in the \emph{blocklength} $n$.}

\subsection{Decoding rule (informal)}
Bob computes 
\begin{align}
{\wh\alpha} \coloneq& 1-\frac{\langle\vbfy_B,\vbfx_B \rangle}{nP}, \notag \\
{\wt{\vbfy}_B}\coloneq& \vbfy_B-(1-\wh\alpha)\vbfx_B, \notag \\
r_{\text{dec}} \coloneq& \normtwo{\vbfy_B}^2 - 2(1-\wh\alpha)\inprod{\vbfy_B}{\vbfx_B}. \notag
\end{align}
If there is a single codeword 
\[\vbfx_A\in  \cC_A\cap \frac{1}{(1-\wh\alpha)}\cB^n(\widetilde{\vbfy}_B,r_{\text{dec}}),\]
then the decoder outputs the message associated to $\vbfx_A $. Otherwise, it declares an error. Alice's decoder operates likewise.

\subsection{Intuition}
We provide intuition behind our posterior-estimation-style decoding rule. 
All slack factors will be omitted in the rough calculations in this section.

Before proceeding, we would like to remind the readers of a fact from high dimensional geometry: as long as the $ \rcov(\Lf) $ is sufficiently small,  a random lattice point in a ball is concentrated near the surface of the ball and is approximately orthogonal to any given vector. 



Suppose a random pair of $\vbfx_A$ and $\vbfx_B$ is transmitted. They are concentrated in a thin shell near the sphere $\cS^{n-1} (\vzero,\sqrt{nP})$ and are almost orthogonal with high probability. Consider Bob trying to decode. Alice's decoding rule is symmetric. Bob receives $\vbfy_B=\vbfx_A+\vbfx_B+\vbfs_B$. From James' view, he observes $\vbfz=\vbfx_A+\vbfx_B$ which has norm about $\sqrt{2nP}$ w.h.p. There is a large number of pairs of $(\vbfx_A,\vbfx_B)$ which sums up to $\vbfz$. Moreover, each pair is approximately orthogonal and each of $\vbfx_A$ and $\vbfx_B$ is approximately uniformly distributed in a thin strip of radius $\sqrt{nP/2}$ perpendicular to $\vbfz$. 
James' jamming vector can be generically decomposed into directions parallel and perpendicular to $\vbfz$, 
\[
\vbfs_B=-\alpha\vbfz+\vbfsperp =-\alpha(\vbfx_A+\vbfx_B)+\vbfsperp 
,\]
where $\vbfsperp = \proj_{\vbfz^\perp}(\vbfs_B) $ is orthogonal to $\vbfz$. He has to choose $\alpha$ so that $\vbfs_B$ does not violate his power constraint, 
\[
\normtwo{-\alpha\vbfx_A-\alpha\vbfx_B+\vbfsperp }^2\approx2\alpha^2nP+\normtwo{\vbfsperp }^2\le nN
.\]
This imposes a constrain on $\alpha$: $ \abs{\alpha}\le\sqrt{\frac{N}{2P}} $. 
Under this decomposition,  Bob's received word can be written as $ \vbfy_B\coloneq (1-\alpha)\vbfx_A + (1-\alpha)\vbfx_B + \vbfsperp $.
From James' view, if $ \vbfz $ is typical (i.e., $ \normtwo{\vbfz}\in\sqrt{2\alpha^2P(1\pm\delta)} $), there is a large number of pairs of codewords $ (\vbfx_A,\vbfx_B) $ that were potentially transmitted (i.e., $ \vbfx_A+\vbfx_B = \vbfz $). 
Furthermore, these codewords are uniformly distributed in  a thin strip $ \cT $ near the surface of $ \cB^n\paren{\vzero,\sqrt{nP}} $, orthogonal to $ \vbfz $, of radius approximately $ \sqrt{nP/2} $. (See Fig. \ref{fig:eff_dec} for the geometry.) 
Hence the value of 
\[
\inprod{\vbfy_B}{\vbfx_B}=\inprod{(1-\alpha)\vbfx_A+(1-\alpha)\vbfx_B+\vbfsperp }{\vbfx_B}
\]
is well concentrated around
\[
0+(1-\alpha)\normtwo{\vbfx_B}^2+0\approx(1-\alpha)nP
\]
w.h.p. over  message selection. 
Thereby the value of $\alpha$ that was chosen by James  can be well estimated by Bob  via estimator $\wh\alpha\coloneqq1-\frac{\inprod{\vbfy_B}{\vbfx_B}}{nP}$. 
Then Bob computes $\vbfy_B-(1-\wh\alpha)\vbfx_B$ which in turn well approximates $(1-\alpha)\vbfx_A+\vbfsperp $. 
We now observe that, once James receives $ \vbfz $ and instantiates his jamming vector $ \vbfs $ based on  $ \vbfz $,
the effective channel to Bob is essentially $ \wt\vbfy_B = \wt\vbfx_A+\vbfsperp $ where  $\vbfsperp $ is fixed, $ \wt\vbfx_A\coloneq(1-\alpha)\vbfx_A $ and $\vbfx_A$ is  uniformly distributed in the strip $\cT \cap \cC_A $. 
It turns out that $\vbfx_A$ and $\vbfsperp $ are almost orthogonal w.h.p. 
Let $\wt P\coloneqq(1-\alpha)^2P$. Assuming James used up all his power (which is the worst case for Bob), let $\wt N\coloneqq N-2\alpha^2P$. 
For any $ \cA\subset\bR^n $, let $ \wt\cA $ denote $ (1-\alpha)\cA $.
One can compute the \emph{typical}  radius  of the decoding region induced by $ \wt\vbfx_A\in\wt\cT\cap\wt\cC_A $ under the translation of $ \vbfsperp $, which turns out to be approximately $ \sqrt{n\frac{\wt P\wt N}{\wt P+\wt N}} $. 
Now invoking techniques in \cite{jaggi-langberg-2017-two-way} allows us to show that as long as $\wt\Lf$ is $(\wt N,L)$ list-decodable with constant (independent of $n$) list size $L$, then $ \cC_A $ is uniquely decodable with probability $ 1-2^{-2^{\Omega(n)}} $ over expurgation. 
Hence the $ (\wt P,\wt N) $-list-decoding capacity $\frac{1}{2}\log\paren{\frac{\wt P}{\wt P\wt N/(\wt P+\wt N)}}=\frac{1}{2}\log\paren{1+\frac{\wt P}{\wt N}}$ can be achieved. 
Minimizing over James' choice of $\alpha$ subject to $\abs{\alpha}\le\sqrt{\frac{N}{2P}}$ gives that under the worst jamming strategy that James can impose, the rate $\frac{1}{2}\log\paren{\frac{1}{2}+\frac{P}{N}}$ can be  achieved. (The maximizer $\alpha_*$ turns out to be $\frac{N}{2P}$.)
This optimization problem coincides with the one that shows up in our converse.

\subsection{Some remarks}
\begin{enumerate}
	\item We assume $ N<P $ (otherwise the capacity is obviously 0). Hence $ \alpha\le\sqrt{\frac{N}{2P}}\le\sqrt{1/2}<1 $. 
	\item Let us examine what Bob gains by running the above decoder. Consider the worst channel to Bob that James could instantiate, which, in hindsight, corresponds to $ \alpha $ being $ \alpha_* \coloneq \frac{N}{2P} $.
	The original channel is $ \vbfy_B = \vbfx_A+\vbfx_B+\vbfs_B $. Naively cancelling his signal $\vbfx_B $, Bob gets $ \wt{\wt \vbfy}_B \coloneq \vbfx_A+\vbfs_B $. 
	Being over-pessimistic and assuming worst-case $\vbfs_B $, one would expect the SNR to be $ \wt{\wt \snr} \coloneq P/N $ and only $ \frac{1}{2}\log\frac{P}{N} $ (which coincides with the $(P,N)$-list-decoding capacity) could be achieved. 
	However, by running the above decoder, Bob in fact gets the effective channel 
	\[\wt\vbfy_B \coloneq \wt\vbfx_A + \vbfsperp = (1-\alpha_*)\vbfx_A+\vbfsperp = \frac{2P-N}{2P}\vbfx_A + \vbfsperp.\]
	Scaling everything back by $ \frac{1}{1-\alpha_*} $, Bob gets $ \vbfx_A + \frac{2P}{2P-N}\vbfsperp $.
	Note that $ \vbfsperp $ typically has power
	\[ \wt N \coloneq N-2\alpha_*^2P = \frac{N(2P-N)}{2P} . \]
	The effective SNR is hence 
	\[ \wt{\snr} = \frac{P}{ \paren{\frac{2P}{2P-N}}^2 \wt N} = \frac{P}{N} - \frac{1}{2} < \frac{P}{N} .\]
	At this point, it seems that our reduction can only lead to achievable rate $ \frac{1}{2}\log{\wt\snr} = \frac{1}{2}\log\paren{\frac{P}{N} - \frac{1}{2}} $, which is, somewhat counterintuitively, even less than the naive $ \frac{1}{2}\log\frac{P}{N} $. 
	However, it turns out that though $ \vbfsperp $ comes from an adversarial noise  $ \vbfs_B $, Bob can actually achieve $ \frac{1}{2}\log\paren{1+ \wt{\snr}} $, as if $ \vbfsperp $ was a Gaussian of the same variance. 
	The miracle is essentially due to the fact that James only gets to observe $ \vbfz $, rather than individual signals $ \vbfx_A $ and $ \vbfx_B $.
	As a consequence of measure concentration, the \emph{average/typical} effective decoding radius $ \overline{\vbfsperp} $ has much lower power: 
	\[ \overline{N} \coloneq \frac{\wt P\wt N}{\wt P+\wt N} = \frac{N(2P-N)^2}{2P(2P+N)}, \]
	where 
	\[\wt P \coloneq (1 - \alpha_*)^2P = \frac{(2P - N)^2}{4P}.\]
	Therefore the \emph{average/typical} effective SNR is 
	\[ \overline{\snr} \coloneq \frac{P}{\paren{\frac{2P}{2P-N}}^2\overline{N}} = \frac{P}{N} + \frac{1}{2} > \frac{P}{N}.\]
	Now Bob is in a good shape and he could transmit at the $ \paren{ P, \paren{\frac{2P}{2P-N}}^2\overline{N} } $-list-decoding capacity: 
	\[ \frac{1}{2}\log{ \overline{\snr} } = \frac{1}{2}\log\paren{1 + \wt\snr} = \frac{1}{2}\log\paren{\frac{1}{2} + \frac{P}{N}} > \frac{1}{2}\log\frac{P}{N}. \]
    \item In \cite{jaggi-langberg-2017-two-way}, it is claimed that when $R>1/2$, random codes also achieve capacity. This is not true for the quadratically constrained case. No matter how large the $\snr$ is, we cannot use a random spherical/ball code. We have to use codes with linear structures. This is because if codewords are independently and uniformly distributed in $ \cB^n\paren{\vzero,\sqrt{nP}} $, then given $ \vbfz = \vbfx_1 + \vbfx_2 $, with probability 1, $ (\vbfx_1,\vbfx_2) $ is the unique pair of codewords that sum up to $ \vbfz $. Then James knows $ \vbfx_1 $ and $ \vbfx_2 $ and is hence omniscient. 
    \item In general, suppose that Alice and Bob have power constraints $P_A $ and $ P_B $, respectively, and the noise vectors to them are subject to power constraints $N_A $ and $ N_B $, respectively. 
    Assume that $N_B<P_A+P_B $, otherwise $C_A = C_B = 0 $, obviously.
    Consider Bob.  Following exactly the same proof, in the high-$\snr$ regime, the rate given by the following optimization can be achieved.
    \begin{align}
    \begin{array}{rl}
    \maximize & \frac{1}{2}\log\paren{1+\frac{(1-\alpha^2)P_A }{N_B - \alpha^2(P_A+P_B) }} \\
    \subto & 0 \le \alpha \le \sqrt{\frac{N_B}{P_A+P_B}}.
    \end{array}
    \notag
    \end{align}
    Solving it, we have the maximizer $\alpha_* = \frac{N_B}{P_A+P_B} $ and the maxima is 
    \begin{align}
     C_B =& \frac{1}{2}\log\paren{ \frac{P_B}{P_A+P_B} + \frac{P_A}{N_B} }. \notag
    \end{align} 
    Exactly the same optimization also shows up in the scale-and-babble converse. Hence the above expression is the capacity of user Bob in the high-$\snr$ regime.

    Similarly, if we consider Alice, by the same calculations, we get  the capacity for user Alice 
    \begin{align}
    C_A =& \frac{1}{2}\log\paren{ \frac{P_A}{P_A+P_B} + \frac{P_B}{N_A} }. \notag
    \end{align}
\end{enumerate}

\section{Achievability}\label{sec:achievability}
\subsection{Code design}\label{sec:code_design}
Let $\Lf$ be a lattice obtained by lifting random linear codes $\cC'$ over $\bF_q$ via Construction-A. 
Specifically, let $\bfG\sim\bF_q^{n\times k}$ be a uniformly ranodm matrix.
The field size $q$ and dimension $k$ will be fixed later. 
Define the random linear code generated by $\bfG$ as $\cC' = \bfG\bF_q^k$.
Define  $\Lf =  \frac{1}{q}\Phi(\cC') + \bZ^n $, where $ \Phi\colon \bF_q\to\bZ $ is the natural embedding which maps  any field element $ j\in\bF_q $ to an integer $ j\in\bZ $. 
One can easily check that $\Lf$ is indeed a lattice.
Our lattice code is finally defined as $\cC \coloneqq \Lf\cap\cB^n\paren{\vzero,\sqrt{nP}} $.
It was proved in \cite{erez-2005-lattices-goodfor-everything} that the above ensemble of lattices is good for covering w.h.p.
\begin{lemma}[Theorem 2, \cite{erez-2005-lattices-goodfor-everything}]\label{lem:coveringgoodness}
Let $ \Lf $ be a lattice randomly drawn from the ensemble defined above whose parameters are restricted as follows.
Let $q$ and $k$ be such that
\begin{align}
q^k =& \frac{1}{\cB^n\paren{\reff(\Lf)}} \asymp \sqrt{n\pi}\paren{ \frac{n}{2\pi \reff(\Lf)^2} }^{n/2}. \notag
\end{align}
Fix $ \reff(\Lf) $ to a constant. 
Let $ k\le(1-c)n $ for some constant $ c\in(0,1) $ and $ k = \omega(\log^2n) $. 
This in turn imposes constraints on $ q $, $ q = \omega(\sqrt{n}) $ and $ \log q = o(n/\log n) $. Define $ d(n) \coloneqq \frac{\sqrt{n}}{2q} $. Then  $ \Lf $ is good for covering w.h.p., 
\begin{align}
\probover{\bfG}{ \frac{\rcov(\Lf)}{\reff(\Lf)} \le f(n) }\ge& (1-2^{-\nu(n)})(1-2^{-\nu(n)+1})^{\log n+\log\log q}, \label{eqn:coveringgoodness}
\end{align}
where $f(n)>1$ is defined as 
\begin{align}
f(n) \coloneqq& \paren{\frac{\rcov(\Lf)}{\rcov(\Lf) - 2d(n)}}n^{\lambda/n}2^{(\log n+\log\log q+1)\frac{\log q}{n}}, \notag
\end{align}
for some fixed constant $ \lambda>0 $;
in the RHS of Eqn. \eqref{eqn:coveringgoodness}, $ \nu(n) $ is defined as 
\begin{align}
\nu(n) \coloneqq& 2\log(\log n+\log\log q). \notag 
\end{align}
\end{lemma}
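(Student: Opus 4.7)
Since this lemma is quoted verbatim as Theorem~2 of \cite{erez-2005-lattices-goodfor-everything}, I would follow their argument rather than reprove it from scratch. The strategy is a discretization-plus-union-bound, iterated over a logarithmic number of scales: reduce covering goodness on $\bR^n$ to a statement about finitely many test points, bound the miss probability at each test point using the randomness of $\bfG$, and combine these via a union bound refined across scales. The aim is to exhibit every point of $\bR^n$ as lying within distance $f(n)\reff(\Lf)$ of some lattice point with the claimed probability.

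First I would exploit the inclusion $\bZ^n\subset\Lf$ to reduce to showing that every point in the unit cube is within distance approximately $f(n)\reff(\Lf)$ of some lattice point, and then discretize the cube on a grid of resolution $d(n)=\sqrt{n}/(2q)$. If every grid point is covered by a ball of radius $r-2d(n)$, every continuum point is covered by a ball of radius $r$; the factor $\rcov(\Lf)/(\rcov(\Lf)-2d(n))$ in the definition of $f(n)$ is exactly the slack this inflation induces. For each fixed grid point $\vy$ I would then bound $\probover{\bfG}{\Lf\cap\cB^n(\vy,r)=\emptyset}$. The randomness of $\bfG\sim\bF_q^{n\times k}$ makes the cosets $\Phi(\bfG\bF_q^k)\bmod q$ close to uniform in $\bF_q^n$, so a first-moment argument in the spirit of Minkowski--Hlawka for Construction-A ensembles yields a per-point miss probability of at most $2^{-\nu(n)}$ once $V_n r^n q^k$ is a mildly growing function of $n$; the correction $n^{\lambda/n}$ in $f(n)$ is precisely the slack needed to guarantee this growth.

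Second, I would iterate the covering argument across $\log n+\log\log q$ nested scales, starting from a trivial coarse covering at radius of order $\reff(\Lf)\cdot 2^{\log n+\log\log q}$ and successively halving until reaching $f(n)\reff(\Lf)$. At each level one re-uses an existing covering at scale $2r$ together with a uniform lower bound on lattice-point counts inside balls of radius $r$ to produce a covering at scale $r$, which explains both the multiplicative $(1-2^{-\nu(n)+1})^{\log n+\log\log q}$ factor in the stated success probability and the extra $2^{(\log n+\log\log q+1)\log q/n}$ factor in $f(n)$. The main obstacle is the coupled parameter tuning: $d(n)$, $r$, $q$, and $k$ must all scale together so that (i)~the grid correction $\rcov(\Lf)/(\rcov(\Lf)-2d(n))\to 1$, (ii)~the residual factor $n^{\lambda/n}2^{(\log n+\log\log q+1)\log q/n}\to 1$, and (iii)~the product of $(\log n+\log\log q)$ per-scale failure probabilities $2^{-\nu(n)}$ remains $o(1)$. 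The parameter restrictions $k=\omega(\log^2 n)$, $k\le(1-c)n$, $q=\omega(\sqrt n)$, and $\log q=o(n/\log n)$ imposed in the lemma are precisely those that make these three conditions simultaneously achievable, forcing the final $f(n)\to 1^+$.
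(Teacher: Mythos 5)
The paper does not supply its own proof of this lemma; it is imported verbatim as Theorem~2 of the cited reference \cite{erez-2005-lattices-goodfor-everything}, and you correctly recognize this and defer to that source rather than attempting an independent reproof. Your high-level reconstruction of the argument there --- reduction to the unit cube via $\bZ^n\subset\Lf$, discretization at resolution $d(n)$, a first-moment bound at each grid point, and a multi-scale union bound over $\log n+\log\log q$ halving steps accounting for the product structure of the success probability and the slack factors in $f(n)$ --- is consistent with the shape of the statement and with what the cited paper does, so there is nothing to correct here.
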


\begin{remark}
As $n$ approaches infinity,  by the choice of $ d(n)\xrightarrow{n\to\infty}0 $, for any constant $\lambda>0$ and by the choice of $ q $, respectively, we have
\begin{align}
\frac{\rcov(\Lf)}{\rcov(\Lf) - 2d(n)}\to& 1, \notag \\
n^{\lambda/n}\to&1, \notag \\
2^{(\log n+\log\log q+1)\frac{\log q}{n}}\to&1. \notag
\end{align}
Hence $ f(n)\xrightarrow{n\to\infty}1 $. That is, $ \rcov(\Lf)/\reff(\Lf) = 1+o_n(1) $ and $ \Lf $ is good for covering.
Also, note that, by the choice of $ \nu(n) $ and $p$, the RHS of Eqn. \eqref{eqn:coveringgoodness} approaches 1 from left as $ n\to\infty $,
\begin{align}
(1-2^{-\nu(n)})(1-2^{-\nu(n)+1})^{\log n+\log\log q} \to& 1. \notag
\end{align}
Hence the covering goodness property holds w.h.p. 
\end{remark}

\begin{remark}
Under the above choices of parameters, there are \emph{superexponentially} many lattice points in the unit cube $ [0,1]^n $ (and any of its integer translation $ [0,1]^n+\va $ where $ \va\in\bZ^n $).
For the purpose of coding, it is desirable to have exponentially many lattice points to keep the rate fixed. 
Indeed, we will scale $ \Lf $ properly momentarily.
\end{remark}

For the convenience of future calculations, define $ \rcov(\Lf) \coloneqq \sqrt{n\omega} $ and $ \reff(\Lf) \coloneqq \sqrt{n\tau} $. 

Scale $ \Lf $ properly so that 
\begin{align}
\card{\Lf\cap\cB^n\paren{\vzero,\sqrt{nP}}} \ge& 2^{\frac{1}{2}\log\paren{\frac{1}{2} + \frac{P}{N}} - \beta_0}, \label{eqn:rate_requirement}
\end{align}
where $ \beta_0 \coloneqq\beta + \beta_2 + \beta_3 $ and $ \beta, \beta_2,\beta_3 $ will be defined in Sec. \ref{sec:settingrate} (see Eqn. \eqref{eqn:apply_log_ratio_ineq}, \eqref{eqn:def_beta2} and \eqref{eqn:def_beta3}).
Since 
\begin{align}
\card{\Lf\cap\cB^n\paren{\vzero,\sqrt{nP}}} \ge& \frac{\cB^n\paren{ \vzero, \sqrt{nP} - \rcov(\Lf) }}{\cB^n\paren{ \reff(\Lf) }} \notag \\
=& \paren{ \frac{\sqrt{nP} - \rcov(\Lf)}{\reff(\Lf)} }^n \notag \\
=& \paren{ \sqrt{\frac{P}{\tau}} - \sqrt{\frac{\omega}{\tau}} }^n, 
\end{align}
the above requirement (Eqn. \eqref{eqn:rate_requirement}) translates to 
\begin{align}
\log\paren{ \sqrt{\frac{P}{\tau}} - \sqrt{\frac{\omega}{\tau}} } =& \frac{1}{2}\log\paren{ \frac{1}{2} + \frac{P}{N} } - \beta_0. \label{eqn:rcovreff_requirement}
\end{align}
For large $ \snr $ and covering-good $ \Lf $ (such that $ \omega\approx\tau $), Eqn. \eqref{eqn:rcovreff_requirement} implies that $ \tau\approx N $, i.e., $ \reff(\Lf)\approx\sqrt{nN} $.

Note that scaling does not change covering goodness since $ \rcov(\Lf) $ and $ \reff(\Lf) $ (and $ \rpack(\Lf) $) are scaling homogeneous, i.e., $ \rcov(a\Lf) = a\reff(\Lf) $, $ \reff(a\Lf) = a\reff(\Lf) $ for any $ a>0 $. 

Let $ \eps_n'>0 $ be a function such that $ \eps_n\xrightarrow{n\to\infty}0 $ and the decaying speed is lower than that of $ f(n) - 1 $ (by Lemma \ref{lem:coveringgoodness}, we know that $ f(n) - 1>0 $ and $ f(n )- 1\xrightarrow{n\to\infty}0 $).
Then, by Lemma \ref{lem:coveringgoodness}, w.h.p. a random lattice from the above ensemble satisfies
\begin{align}
\frac{\rcov(\Lf)}{\reff(\Lf)} = \frac{\sqrt{n\omega}}{\sqrt{n\tau}} = \sqrt{\omega/\tau} = 1+\eps_n', \notag
\end{align}
or
$
\omega/\tau = 1+\eps_n
$, 
for $ \eps_n \coloneqq2\eps_n'+\eps_n'^2\to0 $.

Over the randomness of picking $q$-ary linear codes, it was shown in \cite{zhang-vatedka-2019-ld-real} that the infinite lattice $ \Lf $ is  list decodable. 
\begin{lemma}[\cite{zhang-vatedka-2019-ld-real}]
Let $ \Lf $ be a lattice randomly drawn from the ensemble defined above whose parameters are restricted as follows. Let 
\begin{align}
q \coloneqq& \frac{\sqrt{P/N}}{2^{\beta /8} - 1}. \notag
\end{align}
Let $ \reff(\Lf)\ge \sqrt{nN}2^{\beta} $. Then $ \Lf $ is $ (\wt P', \wt N', L) $-list decodable w.h.p.,
\begin{align}
\frac{1}{n}\log\probover{\bfG}{ \Lf\text{ is not }(\wt P',\wt N',L)\text{-list decodable} } \le& -\frac{5}{8}\beta\frac{\log L}{\log q} + \log(5q)<0, \notag
\end{align}
where $ \wt P' $ and $ \wt N' $ are given by Eqn. \eqref{eqn:listdec_p} and Eqn. \eqref{eqn:listdec_n}, respectively,
and $ L\coloneqq 2^{ \cO\paren{ \frac{1}{\beta}\log^2\frac{1}{\beta} } } $. 
\end{lemma}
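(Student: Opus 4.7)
The plan is a first-moment (union-bound) argument over the random generator matrix $ \bfG \in \bF_q^{n\times k} $ that defines $\Lf $. Because $\Lf $ contains $\bZ^n $ by construction and is therefore $\bZ^n $-periodic, it suffices to show that for every center $\vy$ in the unit cube $[0,1)^n$, the count
\[N_\vy \coloneqq \card{\Lf \cap \cB^n(\vy, \sqrt{n\wt N'})}\]
is at most $L$. I would discretize the cube via an $\eps$-net with $ \eps $ slightly smaller than $ \reff(\Lf) $, so that $ N_\vy $ is essentially constant on net cells and the union bound over $\vy$ inflates the failure probability only by a factor of $(1/\eps)^n = 2^{O(n\log q)} $, which can be absorbed into the $\log(5q)$ slack appearing in the target failure exponent.

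For a fixed $\vy$, I would estimate $\probover{\bfG}{N_\vy \geq L+1}$ via the $(L+1)$-st factorial moment of $N_\vy$ and Markov's inequality. The factorial moment equals the expected number of ordered distinct $(L+1)$-tuples of lattice points lying in the ball. Writing each candidate lattice point as $\vx_i = \tfrac{1}{q}\Phi(\vc_i) + \vz_i$ with $\vc_i \in \bF_q^n$ and $\vz_i \in \bZ^n$, membership in $\Lf$ is equivalent to $\vc_i \in \cC'$, so the expectation decouples into (i) a purely geometric count of $(L+1)$-tuples in $\tfrac{1}{q}\bZ^n \cap \cB^n(\vy, \sqrt{n\wt N'})$, which is estimated using Lemma~\ref{lem:improved_lattice_pt_bd}, and (ii) the probability that the associated codewords $\vc_0,\ldots,\vc_L$ all lie in the random linear code $\cC'$.

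The latter probability depends only on the $\bF_q$-rank $r$ of $\{\vc_0,\ldots,\vc_L\}$: for uniformly random $\bfG $, a collection spanning a subspace of dimension $r$ lies in $\cC' $ with probability exactly $q^{r(k-n)} $. Stratifying the tuple sum by this rank $r$ and inserting the geometric count of points per rank class, the factorial moment becomes a geometric series in $L$ that decays whenever $q$, $\reff(\Lf)$ and $\wt N' $ are chosen so that the per-ball expected number of codewords is just below $L $. With the tuning $q = \sqrt{P/N}/(2^{\beta/8}-1) $ and $\reff(\Lf) \geq \sqrt{nN}\,2^\beta $ proposed in the statement, balancing the algebraic factor $q^{r(k-n)} $ against the geometric count produces the exponent $-\tfrac{5}{8}\beta\,\frac{\log L}{\log q}$, and the net-based union bound contributes the $\log(5q)$ additive loss.

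The main obstacle is the treatment of rank-deficient tuples. Clusters of lattice points whose $\bF_q $-preimages satisfy low-dimensional linear relations (for example, short integer combinations of one another) are not suppressed by the $q^{r(k-n)} $ factor and in principle could overwhelm the factorial moment. Controlling them requires a separate geometric argument bounding, for each relation pattern, how many structured lattice-point tuples can actually coexist in a ball of radius $\sqrt{n\wt N'} $; this is precisely the rank-stratified counting carried out in \cite{zhang-vatedka-2019-ld-real}, whose combinatorial lemma I would invoke as a black box rather than re-derive, since the present paper only needs its conclusion.
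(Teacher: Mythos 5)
This lemma is presented in the paper as a citation to \cite{zhang-vatedka-2019-ld-real} and carries no proof in the present manuscript, so there is no paper-internal argument to compare against; the only ``proof'' the paper offers is the reference itself. Your sketch is a plausible reconstruction of the argument in the cited work: first-moment bound via the $(L+1)$-st factorial moment, union bound over an $\reff$-scale net of a fundamental domain using the $\bZ^n$-periodicity of $\Lf$, decomposition of $(L+1)$-tuples by the $\bF_q$-rank of their Construction-A preimages, and suppression of full-rank tuples by the $q^{r(k-n)}$ factor. You also correctly identify the genuine obstacle — rank-deficient tuples, i.e.\ clusters of lattice points whose preimages satisfy low-dimensional linear relations — as the place where the first-moment calculation is not self-contained and a separate geometric/combinatorial counting lemma is needed. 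However, by your own admission you do not supply that lemma but invoke it from \cite{zhang-vatedka-2019-ld-real} as a black box, so your proposal ultimately does what the paper does (defer to the reference), just with a more explicit roadmap of what is being deferred. Two minor inaccuracies worth flagging: the probability that a rank-$r$ collection of vectors lies in $\cC'$ is only \emph{approximately} $q^{r(k-n)}$ for a uniformly random generator matrix (it is exact for a Haar-random $k$-dimensional subspace, not for $\bfG\bF_q^k$ with $\bfG$ uniform, since $\bfG$ may be rank-deficient); and the net cardinality bookkeeping (``$(1/\eps)^n$'') should be done at the per-coordinate scale $\reff/\sqrt{n}$ rather than at the radius scale $\reff$, though this only affects constants inside the $\log(5q)$ slack and not the conclusion.
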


\begin{remark}
If we take $ \reff(\Lf) = \sqrt{nN}2^\beta $, then the density of $\Lf$ is
\begin{align}
R(\Lf) =& \frac{1}{\vol(\Lf)} \notag \\
=& \frac{1}{\cB^n\paren{\reff(\Lf)}} \notag \\
\asymp& \frac{1}{\frac{1}{\sqrt{\pi n}}\paren{2\pi e/n}^{n/2}\reff(\Lf)^n} \notag \\
\asymp& \frac{\sqrt{\pi n}}{\paren{2\pi e N2^{2\beta}}^{n/2}}. \notag
\end{align}
Hence $ (N,L) $-list decodable lattices can achieve NLD
\begin{align}
R(\Lf) =& \frac{1}{n}\log\frac{\sqrt{\pi n}}{\paren{2\pi e N2^{2\beta}}^{n/2}} \notag \\
=& \frac{\log\sqrt{\pi n}}{n} + \frac{1}{2}\log\frac{1}{2\pi eN2^{2\beta}} \notag \\
\xrightarrow{n\to\infty}& \frac{1}{2}\log\frac{1}{2\pi eN} - \beta,  \notag 
\end{align}
and list size $ L = 2^{\cO\paren{\frac{1}{\beta}\log^2\frac{1}{\beta}}} $.
\end{remark}

\begin{remark}
For small constant $ \beta>0 $, $q$ scales as $ \cO(1/\beta) $ and $k$ scales as $ \cO\paren{n/\log\frac{1}{\beta}} $. 
Our choice of parameters falls into the regime specified in Lemma \ref{lem:coveringgoodness} after proper scaling. 
\end{remark}
\begin{remark}
Roughly speaking, 
\begin{align}
{\wt P'}\approx& \frac{(2P-N)^2}{4P},\quad {\wt N'}\approx \frac{2NP-N^2}{2P}. \notag
\end{align}
Note that, by setting $ \alpha = \frac{N}{2P} $, they can be written as  $ \wt P' \approx (1-\alpha)^2P $ and $ \wt N' \approx N - 2\alpha^2P $. 
In fact, the list-decoding capacity $ \frac{1}{2}\log\frac{\wt P'}{\wt N'} - \beta $ happen to equal the two-way adversarial channel capacity $ \frac{1}{2}\log\paren{ \frac{1}{2} + \frac{P}{N} } - \beta_2 - \beta_3 - \beta $ under the above choices of $ \wt P' $ and $ \wt N' $. 
This coincidence matches our intuition in Sec. \ref{sec:beyond_listdec_cap}. 
\end{remark}

By union bound, w.h.p. a random lattice from the above ensemble is simultaneously good for covering and $ (\wt P', \wt N',L) $-list decodable.
Fix $ \Lf $ to be any of such lattice. 

Given two identical copies of $\cC$, independently expurgate them and get $\cC_A$ and $\cC_B$ as Alice's and Bob's codebooks, respectively. Specifically,   each codeword in $ \cC $ is independently picked into $ \cC_A $ with probability $ 2^{-\gamma n} $ for certain sufficiently small constant $ \gamma>0 $. Bob's codebook $ \cC_B $ is obtained in the same manner independently. By Chernoff bound (Corollary \ref{cor:expurgation}), we have that $ |\cC_A| $ and $ |\cC_B| $ are at least $\frac{1}{2}\cdot 2^{ \frac{1}{2}\log\paren{ \frac{1}{2}+\frac{P}{N} } - \beta_0 - \gamma } $ with probability doubly exponentially close to 1. 
The rate incurs essentially no loss if $\gamma$ is sufficiently small.

\begin{remark}
In the proof in subsequent sections, the probability is only taken over message selection $ \bfm_A\sim\cM, \bfm\sim\cW $ and the expurgation process. 
The base lattice $ \Lf $ is fixed throughout the paper. 
\end{remark}

\subsection{Error events}\label{sec:error_events}
Take a $ \sqrt{n\eta'} $-net $ \cS $ of $ \cB^n\paren{\vzero,\sqrt{nN}} $ such that for every $ \vs\in\cB^n\paren{\vzero,\sqrt{nN}} $, there is $ \vs'\in\cS $ satisfying $ \normtwo{ \vs - \vs' }\le \sqrt{n\eta'} $. 
We can take a lattice $ \Lf_{\cS} $ of covering radius $ \sqrt{n\eta'} $. If $ \Lf_\cS $ is good for covering, then the size of the net $\cS$ is at most 
\begin{align}
|\cS| =& \card{ \cB^n\paren{\vzero,\sqrt{nN}}\cap\Lf_\cS } \notag \\
\le& \frac{\vol\paren{ \cB^n\paren{\sqrt{nN} + \sqrt{n\eta'}} }}{\vol\paren{ \cB^n\paren{\sqrt{n\eta'}} }} \notag \\
=& \paren{ \sqrt{\frac{N}{\eta'}} + 1 }^n \notag \\
=& \paren{ \frac{1}{\eta'} }^{\cO(n)}. \notag 
\end{align}
\begin{itemize}
    \item[$\elen$] The transmitted $\vbfx_A$ or $\vbfx_B$ is not close to the surface of the codebook,
    \begin{align}
    \elen\coloneqq&\curbrkt{\normtwo{\vbfx_A}\le\sqrt{nP(1-\zeta_1)}}\cup\curbrkt{\normtwo{\vbfx_B}\le\sqrt{nP(1-\zeta_1)}}.
    \label{eqn:elen_def}
    \end{align}
    \item[$\einprod$] The transmitted codewords $\vbfx_A$ and $\vbfx_B$ are not approximately orthogonal, 
    \begin{align}
    \einprod\coloneqq&\curbrkt{\abs{\inprod{\vbfx_A}{\vbfx_B}}\ge nP\zeta_1}.
    \label{eqn:einprod_def}
    \end{align}
    \item[$\ez$] The sum of transmitted codeword pair $\vbfx_A$ and $\vbfx_B$ has length deviating from its typical value $\sqrt{2nP}$,
    \begin{align}
    \ez\coloneqq&\curbrkt{\normtwo{\vbfz}\notin\sqrt{2nP(1\pm\delta)}}.
    \label{eqn:ez_def}
    \end{align}
    \item[$\ezz$] The norm of $ \normtwo{\vbfsperp } $ deviates from its typical value $ \sqrt{n\wt N} = \sqrt{n(N-2\alpha^2P)} $, 
    \begin{align}
    \ezz\coloneqq&\curbrkt{ \normtwo{\vbfsperp }\notin \sqrt{n(N - 2\alpha^2P(1\pm\delta))} }. 
    \label{eqn:ezz_def}
    \end{align}
    \item[$\cE$] The union of $\elen$, $\einprod$ and $\ez$, i.e., the transmitted $ \vbfx_A $ and $ \vbfx_B $ are not jointly typical,
    \begin{align}
    \cE\coloneqq&\elen\cup\einprod\cup\ez.
    \label{eqn:e_def}
    \end{align}
    \item[$\esumset$] Given James' received $ \vbfz $, there is not a large number of pairs  of codewords  $ (\vbfx_A,\vbfx_B) $ such that $\vbfx_A+\vbfx_B = \vbfz$.
    \begin{align}
    \esumset \coloneqq& \curbrkt{ \card{\curbrkt{(\vx_A,\vx_B)\in\cC_A\times\cC_B\colon\vx_A+\vx_B = \vbfz }} \le 2^{n(F_1-o(1))}} .
    \label{eqn:esumset_def}
    \end{align}
    \item[$\cE_\cT$] Codeword pairs $(\vbfx_A,\vbfx_B)$ which sum up to $\vbfz$ are not in a thin strip $\cT$ which will be defined later,
    \begin{align}
    \cE_\cT\coloneqq&\curbrkt{\vbfx_A\notin\cT} \cup \curbrkt{\vbfx_B\notin\cT}.
    \label{eqn:et_def}
    \end{align}
    \item[$\cE_1 $] Codewords $\vbfx_A $ or $ \vbfx_B $ in the strip have norm  much less than $\sqrt{nP}$,
    \begin{align}
    \cE_1\coloneqq\curbrkt{ \normtwo{\vbfx_A}\le\sqrt{n(P-c_2)} }\cup\curbrkt{ \normtwo{\vbfx_B}\le\sqrt{n(P-c_2)} }.
    \label{eqn:e1_def}
    \end{align}
    \item[$\cE_2$] Codeword pairs  $(\vbfx_A,\vbfx_B)$ in the strip $\cT$ that sum up to James' observation $\vbfz$ are not approximately orthogonal,
    \begin{align}
    \cE_2\coloneqq&\curbrkt{\abs{\inprod{\vbfx_A}{\vbfx_B}}\ge nP\theta}.
    \label{eqn:e2_def}
    \end{align}
    \item[$\cE_3$] Codewords  $\vbfx_A$ or $ \vbfx_B $ in the strip $\cT$ are not approximately orthogonal to $\vbfsperp $,
    \begin{align} 
    \cE_3\coloneqq&\curbrkt{\abs{\inprod{\vbfx_A}{\vbfsperp }}\ge n\zeta}.
    \label{eqn:e3_def}
    \end{align}
    Here $\vbfsperp  = \proj_{\vbfz^\perp}(\vbfs)$ is the projection of $\vbfs$ to the subspace orthogonal to $\vbfz$. 
    \item[$\cE'$] The union of $ \cE_1 $, $ \cE_2 $ and $ \cE_3 $, i.e., codewords $ \vbfx_A$, $\vbfx_B $ in the strip $ \cT $ and any given $ \vbfsperp $ are not jointly typical,
    \begin{align}
    \cE'\coloneq& \cE_1\cup\cE_2\cup\cE_3. \label{eqn:eqn:eprime_def}
    \end{align}
    \item[$\ealpha$] Bob' estimate $\wh\alpha$  is imprecise w.r.t. the true value $\alpha$ used by James,
    \begin{align}
    \ealpha\coloneqq&\curbrkt{\wh\alpha\notin\alpha\pm\xi}.
    \label{eqn:ealpha_def}
    \end{align}
    Here $\alpha$ is the fractional length of $\vbfs$ along the direction of $\vbfz$, i.e.,  $ \normtwo{\proj_{\vbfz}(\vbfs)} = \alpha\normtwo{\vbfz} $
    \item[$\edecrad$] Bob's estimate of decoding radius w.r.t. the effective channel deviates from its typical value $ \sqrt{n\wt N} $.
    \begin{align}
    \edecrad\coloneqq&\curbrkt{ \normtwo{\wh\vbfsperp }\notin\sqrt{n\paren{ N-2\alpha^2P(1\mp\delta)\pm\mu }} }. 
    \label{eqn:edecrad_def}
    \end{align}
    \item[$ \eavgrad $] The (normalized) effective decoding radius deviates from its typical value (averaged over the strip) $ \frac{\wt P\wt N}{\wt P+\wt N} $.
    \begin{align}
    \eavgrad\coloneqq&\curbrkt{ \wh\bfr\notin\frac{\wt P\wt N}{\wt P+\wt N}\pm\nu }. 
    \label{eqn:eavgrad_def}
    \end{align}
\end{itemize}
The dependencies among the above events are plotted in Fig. \ref{fig:event_dependency}, where an arrow $ \cE_i\to\cE_j $ from event $ \cE_i $ to event $ \cE_j $ denotes the inclusion $ \cE_i\subseteq\cE_j $. 
\begin{figure}[htbp]
	\centering
	\includegraphics[width=0.4\textwidth]{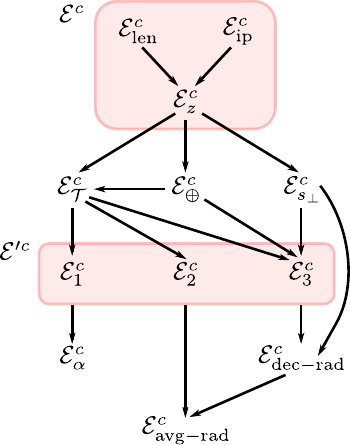}
	\caption{Event dependencies. An arrow $ \cE_i\to\cE_j $  denotes the inclusion $ \cE_i\subseteq\cE_j $. Events are defined in Sec. \ref{sec:error_events}.}
	\label{fig:event_dependency}
\end{figure}
We would like to point out that most ``good'' events are implied purely by $ \ez^c $. 
The proofs mostly follow from geometric arguments. 

In Sec. \ref{sec:sumset}, event $ \elen $ is analyzed in Lemma \ref{lem:lattice_pt_norm}, $ \einprod $ in Lemma \ref{lem:lattice_pt_orthogonal} and \ref{lem:xa_ip_xb}, $ \ez $ in Lemma \ref{lem:length_z}, and $ \ezz $ in Corollary \ref{cor:bound_zprime}.
In Sec. \ref{sec:estimating_alpha}, the event $ \esumset $ is analyzed in Lemma \ref{lem:many_confusing_pairs}, $ \cE_\cT $ in Lemma \ref{lem:many_cw_in_strip}, $ \cE_2 $ in Lemma \ref{lem:bound_cos_in_strip} and Corollary \ref{cor:bound_ip_in_strip}, $ \cE_1 $ in Corollary \ref{cor:conc_norm_cw_in_strip}, $ \cE_3 $ in Lemma \ref{lem:xa_ip_zprime}, and $ \ealpha $ in Lemma \ref{lem:estimate_alpha}. 
In Sec. \ref{sec:estimating_effdecrad}, the event $ \edecrad $ is analyzed in Lemma \ref{lem:estimate_decrad} and Corollary \ref{cor:estimate_decrad}.
In Sec. \ref{sec:compute_avgdecrad}, the event $ \eavgrad $ is analyzed in Lemma \ref{lem:compute_avgdecrad}.
Finally, the average probability of decoding error is bounded in Lemma \ref{lem:bounding_pe} in Sec. \ref{sec:bouding_pe}.

\subsection{Sumset property}\label{sec:sumset}
For notational convenience, we write $ \reff = \reff(\Lf) $ and $ \rcov = \rcov(\Lf) $.

\begin{lemma}\label{lem:lattice_pt_norm}
A lattice point chosen uniformly from $ \cC $ is concentrated within a thin shell near the sphere w.h.p. For any constant $ \zeta\in(0,1) $,
\[\probover{\vbfx\sim\cC}{ \normtwo{\vbfx}\le\sqrt{nP(1-\zeta)} }\le\paren{\frac{ \sqrt{P(1-\zeta)}+\sqrt{\omega} }{\sqrt{P} - \sqrt{\omega}}}^n.\]
\end{lemma}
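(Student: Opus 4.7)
The plan is to compute the ratio directly using the two-sided lattice-point count from Lemma \ref{lem:lattice_pt_bd}. Since $\vbfx$ is uniform on $\cC = \Lf\cap\cB^n(\vzero,\sqrt{nP})$, the probability in question is exactly
\[
\frac{\card{\Lf\cap\cB^n(\vzero,\sqrt{nP(1-\zeta)})}}{\card{\Lf\cap\cB^n(\vzero,\sqrt{nP})}},
\]
so it suffices to upper bound the numerator and lower bound the denominator.

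For the numerator I would use the upper bound $\card{\Lf\cap\cK}\le\vol(\wh\cK)/\vol(\Lf)$ with $\cK=\cB^n(\vzero,\sqrt{nP(1-\zeta)})$. Since $\cV(\Lf)\subseteq\cB^n(\vzero,\rcov)$, Minkowski summation gives $\wh\cK\subseteq \cB^n(\vzero,\sqrt{nP(1-\zeta)}+\sqrt{n\omega})$, hence
\[
\card{\Lf\cap\cB^n(\vzero,\sqrt{nP(1-\zeta)})}\le\frac{V_n(\sqrt{nP(1-\zeta)}+\sqrt{n\omega})^n}{\vol(\Lf)}.
\]
For the denominator I would use the lower bound $\card{\Lf\cap\cK}\ge\vol(\wc\cK)/\vol(\Lf)$ with $\cK=\cB^n(\vzero,\sqrt{nP})$. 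For a Euclidean ball, $\wc\cK=\cB^n(\vzero,\sqrt{nP}-\rcov)=\cB^n(\vzero,\sqrt{nP}-\sqrt{n\omega})$, so
\[
\card{\Lf\cap\cB^n(\vzero,\sqrt{nP})}\ge\frac{V_n(\sqrt{nP}-\sqrt{n\omega})^n}{\vol(\Lf)}.
\]

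Taking the ratio, the $V_n$ and $\vol(\Lf)$ factors cancel and one is left with
\[
\left(\frac{\sqrt{nP(1-\zeta)}+\sqrt{n\omega}}{\sqrt{nP}-\sqrt{n\omega}}\right)^{\!n}=\left(\frac{\sqrt{P(1-\zeta)}+\sqrt{\omega}}{\sqrt{P}-\sqrt{\omega}}\right)^{\!n},
\]
which is the claimed bound. There is no real obstacle: the only thing to be careful about is that the denominator is positive, which requires $\omega<P$; this is guaranteed because $\Lf$ has been scaled so that $\reff(\Lf)\approx\sqrt{nN}$ and $\rcov(\Lf)/\reff(\Lf)=1+o(1)$ by covering goodness, so $\omega\approx N\ll P$ in the high-$\snr$ regime we work in. Note in particular that no averaging over the random lattice ensemble is needed—the statement is deterministic once $\Lf$ is a fixed lattice with covering radius $\sqrt{n\omega}$.
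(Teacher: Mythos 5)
Your proposal is correct and follows essentially the same route as the paper: both express the probability as a ratio of lattice-point counts, apply Lemma~\ref{lem:lattice_pt_bd} to upper bound the numerator by $\vol(\cB^n(\sqrt{nP(1-\zeta)}+\rcov))/\vol(\Lf)$ and lower bound the denominator by $\vol(\cB^n(\sqrt{nP}-\rcov))/\vol(\Lf)$, and then cancel. Your added remarks about $\wh\cK\subseteq\cB^n(\vzero,\cdot+\rcov)$ via $\cV(\Lf)\subseteq\cB^n(\vzero,\rcov)$ and the positivity of the denominator are just the steps the paper leaves implicit.
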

\begin{proof}
\begin{align*}
    \probover{\vbfx\sim\cC}{ \normtwo{\vbfx}\le\sqrt{nP(1-\zeta)} }=&\frac{\card{ \Lf\cap\cB^n(\vzero,\sqrt{nP(1-\zeta)}) }}{\card{ \Lf\cap\cB^n(\vzero,\sqrt{nP}) }}\\
    \le&\frac{\vol\paren{\cB^n\paren{\vzero,\sqrt{nP(1-\zeta)}+\rcov} }}{\vol\paren{\cB^n\paren{\vzero,\sqrt{nP} - \rcov}}}\\
    =&\paren{\frac{\sqrt{nP(1-\zeta)} + \rcov}{\sqrt{nP} - \rcov}}^n \\
    =& \paren{\frac{ \sqrt{P(1-\zeta)}+\sqrt{\omega} }{\sqrt{P} - \sqrt{\omega}}}^n .
\end{align*}
\end{proof}

\begin{lemma}\label{lem:lattice_pt_orthogonal} 
For  any vector $\vv\in\cS^{n-1}(\vzero,\sqrt{nP})$, a  lattice point uniformly drawn from $ \cC $ is almost orthogonal to $\vv$ w.h.p. For any constant $ \zeta\in(0,1) $,
\begin{align*}
    \probover{\vbfx\sim\cC}{ \abs{ \cos\angle_{{\vbfx},{\vv}} }\ge \zeta }\le& 2\paren{\frac{\sqrt{P(1-\zeta^2)} + \sqrt{\omega}}{\sqrt{P} - \sqrt{\omega}}}^n .
\end{align*}
\end{lemma}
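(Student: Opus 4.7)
My plan is to mimic the proof of Lemma~\ref{lem:lattice_pt_norm}: count lattice points in the ``bad'' region via Lemma~\ref{lem:lattice_pt_bd}, then divide by a lower bound on $|\cC|$ to turn the count into a probability. Because the event $|\cos\angle_{\vbfx,\vv}|\ge\zeta$ is the union of the two symmetric half-cone events $\vbfx\cdot\hat{\vv}\ge\zeta\normtwo{\vbfx}$ and $\vbfx\cdot\hat{\vv}\le-\zeta\normtwo{\vbfx}$, the factor of $2$ in the statement will come from bounding each half-cone separately and summing.

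The key geometric step is to show that the half-cone $\cK_+\coloneq\{\vx\in\cB^n(\vzero,\sqrt{nP}) : \vx\cdot\hat{\vv}\ge\zeta\normtwo{\vx}\}$ is contained in a translate of $\cB^n(\vzero,\sqrt{nP(1-\zeta^2)})$. Writing $\vx=(\vx\cdot\hat{\vv})\hat{\vv}+\vx_\perp$, the cone inequality immediately gives $\normtwo{\vx_\perp}^2\le(1-\zeta^2)\normtwo{\vx}^2\le nP(1-\zeta^2)$; centering the candidate ball at $\sqrt{nP}\,\zeta\hat{\vv}$ and verifying that every $\vx\in\cK_+$ lies within distance $\sqrt{nP(1-\zeta^2)}$ of this centre is a short algebraic calculation that uses both the cone constraint and $\normtwo{\vx}\le\sqrt{nP}$.

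With the containment in hand, the upper bound of Lemma~\ref{lem:lattice_pt_bd} applied to this translated ball gives
\[
\card{\Lf\cap\cK_+}\le\frac{\vol\paren{\cB^n\paren{\sqrt{nP(1-\zeta^2)}+\rcov}}}{\vol(\Lf)},
\]
and the lower bound of Lemma~\ref{lem:lattice_pt_bd} applied to $\cB^n(\vzero,\sqrt{nP})$ gives $\card{\cC}\ge\vol(\cB^n(\sqrt{nP}-\rcov))/\vol(\Lf)$. Taking the ratio of these two $n$-th-power volume expressions, substituting $\rcov=\sqrt{n\omega}$ to convert to the form of the lemma, and doubling to handle the negative half-cone $\cK_-$ (which is a reflection of $\cK_+$ with identical volume bound) yields exactly the claimed bound.

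The subtlety I expect to have to handle carefully is verifying the ball-containment claim uniformly in $\zeta$. The natural centre at $\sqrt{nP}\,\zeta\hat{\vv}$ works cleanly when $\zeta\le 1/\sqrt{2}$, but once $\zeta>1/\sqrt{2}$ the point $\vzero\in\cK_+$ falls outside that ball. In that regime I would either slide the centre toward $\sqrt{nP}\,\hat{\vv}/2$ and re-optimise the enclosing radius, or bound $\cK_+$ by an axis-aligned cylinder of perpendicular radius $\sqrt{nP(1-\zeta^2)}$ and axial length $2\sqrt{nP}$; the latter introduces only a $\mathrm{poly}(n)$ factor, which is comfortably absorbed into the exponential slack $(1+\sqrt{\omega/(P(1-\zeta^2))})^n$ already implicit in the RHS of the lemma.
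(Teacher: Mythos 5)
Your proof takes essentially the same route as the paper's: bound the number of lattice points in the half-cone $\cB_1$ via Lemma~\ref{lem:lattice_pt_bd} by enclosing it in a translate of $\cB^n\paren{\sqrt{nP(1-\zeta^2)}}$, lower-bound $\card{\cC}$ the same way, take the ratio, and double to handle the two symmetric half-cones. Your observation that the ball-containment argument centred at $\sqrt{nP}\zeta$ along $\vv$ breaks down for $\zeta>1/\sqrt 2$ is correct, and it is a step the paper's one-line assertion $\vol\paren{\wh\cB_1}\le\vol\paren{\cB^n\paren{\sqrt{nP(1-\zeta^2)}+\rcov}}$ glosses over; either of your proposed patches (re-centring the ball, or passing to a cylinder at a $\mathrm{poly}(n)$ cost) is a legitimate way to close this for general $\zeta$, and in the application the lemma is only invoked with small $\zeta$, where the simple containment already holds.
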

\begin{proof}
Let $\cB\coloneqq\cB^n(\vzero,\sqrt{nP})$. Define
\begin{align}
\cB_1 \coloneqq& \curbrkt{ \vx\in\cB\colon \cos\angle_{\vx,\vv}\ge\zeta }, \notag \\
\cB_1 \coloneqq& \curbrkt{ \vx\in\cB\colon \cos\angle_{\vx,\vv}\le-\zeta }. \notag
\end{align}
Geometrically (Fig. \ref{fig:rand_lattice_pt_in_ball_ortho}), $ \cB_1 $ and $ \cB_2 $ are the blue and pink cones restricted to the ball $ \cB $. 
\begin{figure}[htbp]
    \centering
    \includegraphics[width = .5\textwidth]{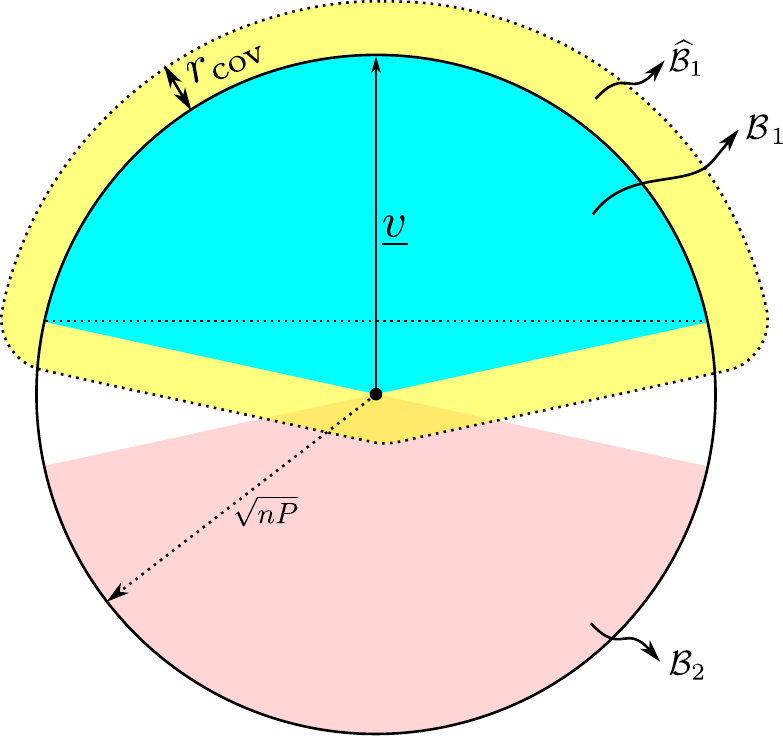}
    \caption{A random lattice point in a ball is approximately orthogonal to any given vector $\vv$ with high probability.}
    \label{fig:rand_lattice_pt_in_ball_ortho}
\end{figure}
Then
\begin{align}
    \prob{ \abs{\cos\angle_{{\vbfx},{\vv}}}\ge \zeta }=\frac{\card{\Lf\cap(\cB_1\cup\cB_2)}}{\card{\Lf\cap\cB}}. \label{eqn:lattice_pt_ortho}
\end{align}
We  apply Lemma \ref{lem:lattice_pt_bd} to upper bound the numerator and lower bound the denominator. To this end, we only need to upper bound the volume of $\wh{\cB_1\cup\cB_2}$ which is at most $2\vol(\wh\cB_1)$. 

\begin{align*}
    2\vol(\wh\cB_1)\le&2\vol\paren{\cB^n\paren{\sqrt{nP(1-\zeta^2)} +{\rcov} }}.
\end{align*}
The probability in Eqn. \eqref{eqn:lattice_pt_ortho} is hence at most
\begin{align*}
    &\frac{2\vol(\wh\cB_1)}{\vol(\wc\cB)}\\
    =&\frac{2\vol(\wh\cB_1)}{\vol\paren{\cB^n\paren{\sqrt{nP} - \rcov}}}\\
    \le&2\paren{\frac{\sqrt{nP(1-\zeta^2)} + \rcov}{\sqrt{nP} - \rcov}}^n \\
    =& 2\paren{\frac{\sqrt{P(1-\zeta^2)} + \sqrt{\omega}}{\sqrt{P} - \sqrt{\omega}}}^n .
\end{align*}
\end{proof}

\begin{lemma}\label{lem:xa_ip_xb}
If $\vbfx_A$ and $\vbfx_B$ are two  lattice points independently and uniformly chosen from $\cC$, then their inner product is  close to  0 w.h.p. For any constant $ \zeta\in(0,1) $,
\begin{align}
\probover{\vbfx_A,\vbfx_B\iid\cC}{\abs{\inprod{\vbfx_A}{\vbfx_B}}\ge nP\zeta}
\le&2\paren{\frac{\sqrt{P(1-\zeta^2)} + \sqrt{\omega}}{\sqrt{P} - \sqrt{\omega}}}^n.\notag
\end{align}
\end{lemma}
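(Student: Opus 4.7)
The plan is to reduce this lemma to Lemma \ref{lem:lattice_pt_orthogonal} by conditioning on one of the two codewords. The first step is to observe that the inner product and the cosine of the angle differ only by the product of the norms, and both codewords lie in $\cB^n(\vzero,\sqrt{nP})$, so $\normtwo{\vbfx_A}\normtwo{\vbfx_B}\le nP$. Consequently,
\begin{align*}
\abs{\inprod{\vbfx_A}{\vbfx_B}} \;=\; \normtwo{\vbfx_A}\normtwo{\vbfx_B}\,\abs{\cos\angle_{\vbfx_A,\vbfx_B}} \;\le\; nP\,\abs{\cos\angle_{\vbfx_A,\vbfx_B}},
\end{align*}
so the event $\curbrkt{\abs{\inprod{\vbfx_A}{\vbfx_B}}\ge nP\zeta}$ is contained in $\curbrkt{\abs{\cos\angle_{\vbfx_A,\vbfx_B}}\ge \zeta}$. (The degenerate case $\vbfx_B=\vzero$ is trivial because then the inner product is $0$.)

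Next I would apply the tower rule, conditioning on $\vbfx_B$:
\begin{align*}
\probover{\vbfx_A,\vbfx_B\iid\cC}{\abs{\cos\angle_{\vbfx_A,\vbfx_B}}\ge \zeta}
\;=\; \exptover{\vbfx_B\sim\cC}{\,\probover{\vbfx_A\sim\cC}{\abs{\cos\angle_{\vbfx_A,\vbfx_B}}\ge \zeta\,\middle|\,\vbfx_B}\,}.
\end{align*}
For every fixed realization $\vbfx_B=\vv\neq\vzero$, Lemma \ref{lem:lattice_pt_orthogonal} bounds the inner probability by $2\paren{(\sqrt{P(1-\zeta^2)}+\sqrt{\omega})/(\sqrt{P}-\sqrt{\omega})}^n$. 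Since this bound is uniform over $\vv$, it passes through the expectation and yields the claimed estimate.

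There is essentially no hard step here; the only subtlety is that Lemma \ref{lem:lattice_pt_orthogonal} is stated for $\vv\in\cS^{n-1}(\vzero,\sqrt{nP})$, whereas our $\vbfx_B$ is merely some vector in $\cB^n(\vzero,\sqrt{nP})$. However, the quantity $\cos\angle_{\vx,\vv}$ depends only on the direction of $\vv$ and not on its magnitude, so the bound applies verbatim after normalizing $\vv$ to lie on the sphere. Thus the lemma is really a direct corollary of Lemma \ref{lem:lattice_pt_orthogonal}, with the factor of $P$ in the threshold $nP\zeta$ coming from the crude norm bound $\normtwo{\vbfx_A}\normtwo{\vbfx_B}\le nP$ rather than from any sharper concentration of $\normtwo{\vbfx_A}$ and $\normtwo{\vbfx_B}$ near $\sqrt{nP}$ (which would only tighten the bound).
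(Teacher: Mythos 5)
Your proof is correct and takes essentially the same route as the paper: bound $\normtwo{\vbfx_A}\normtwo{\vbfx_B}\le nP$ to reduce the inner-product event to a cosine event, then invoke Lemma \ref{lem:lattice_pt_orthogonal}. You are actually a bit more careful than the paper's own proof, which silently applies Lemma \ref{lem:lattice_pt_orthogonal} (stated only for $\vv$ on the sphere $\cS^{n-1}(\vzero,\sqrt{nP})$) to a random $\vbfx_B$ in the ball; your observation that $\cos\angle_{\vx,\vv}$ is scale-invariant in $\vv$ closes this small gap explicitly.
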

\begin{proof}
\begin{align}
\prob{\abs{\inprod{\vbfx_A}{\vbfx_B}}\ge nP\zeta}=&\prob{\normtwo{\vbfx_A}\cdot\normtwo{\vbfx_B}\cdot\abs{\cos\angle_{\vbfx_A,\vbfx_B}}\ge nP\zeta}\notag\\
\le&\prob{nP\cdot\abs{\cos\angle_{\vbfx_A,\vbfx_B}}\ge nP\zeta}\notag\\
=&\prob{\abs{\cos\angle_{\vbfx_A,\vbfx_B}}\ge \zeta}\notag\\
\le&2\paren{\frac{\sqrt{P(1-\zeta^2)} + \sqrt{\omega}}{\sqrt{P} - \sqrt{\omega}}}^n,\notag
\end{align}
where the last inequality is by Lemma \ref{lem:lattice_pt_orthogonal}.
\end{proof}

\begin{lemma}\label{lem:length_z}
Let $\vbfx_A,\vbfx_B$ be  random lattice points sampled uniformly and independently from $\cC$. Then $\vbfz\coloneqq\vbfx_A+\vbfx_B$ has norm approximately $\sqrt{2nP}$ w.h.p. 
For any constant $ \delta\in(0,1) $, $ \lambda\in(0,\delta) $, let $ \zeta\coloneqq\delta-\lambda $. Then
\begin{align*}
    \probover{\vbfx_A,\vbfx_B\iid\cC}{ \normtwo{\vbfz}\notin\sqrt{2nP(1\pm\delta)} }\le&
     2\paren{\frac{ \sqrt{P(1-\lambda)}+\sqrt{\omega} }{\sqrt{P} - \sqrt{\omega}}}^n + 2\paren{\frac{\sqrt{P(1-\zeta^2)} + \sqrt{\omega}}{\sqrt{P} - \sqrt{\omega}}}^n ,
\end{align*}
\end{lemma}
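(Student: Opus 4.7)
\textbf{Proof proposal for Lemma \ref{lem:length_z}.}
The plan is a direct decomposition and union bound, using the two preceding lemmas as black boxes. First I would expand
\[
\normtwo{\vbfz}^2 \;=\; \normtwo{\vbfx_A}^2 + \normtwo{\vbfx_B}^2 + 2\inprod{\vbfx_A}{\vbfx_B},
\]
and observe that the desired event $\{\normtwo{\vbfz}\in\sqrt{2nP(1\pm\delta)}\}$ is implied by the conjunction of three events: $\normtwo{\vbfx_A}^2 \ge nP(1-\lambda)$, $\normtwo{\vbfx_B}^2 \ge nP(1-\lambda)$, and $|\inprod{\vbfx_A}{\vbfx_B}| \le nP\zeta$. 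Indeed, the codebook constraint $\cC\subseteq\cB^n(\vzero,\sqrt{nP})$ already gives the deterministic upper bound $\normtwo{\vbfx_A}^2,\normtwo{\vbfx_B}^2 \le nP$, so on the intersection of these three events
\[
\normtwo{\vbfz}^2 \;\in\; \bigl[\,2nP(1-\lambda) - 2nP\zeta,\; 2nP + 2nP\zeta\,\bigr] \;\subseteq\; 2nP(1\pm\delta),
\]
where the inclusion uses $\zeta \le \delta$ for the upper end and $\lambda+\zeta \le \delta$ (which holds with equality by the choice $\zeta = \delta-\lambda$) for the lower end.

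Next I would apply a union bound over the three complementary bad events. The two bad norm events are each bounded using Lemma \ref{lem:lattice_pt_norm} with parameter $\lambda$, contributing a total of
\[
2\paren{\frac{\sqrt{P(1-\lambda)}+\sqrt{\omega}}{\sqrt{P}-\sqrt{\omega}}}^n,
\]
since $\vbfx_A$ and $\vbfx_B$ are each uniform on $\cC$. The bad inner-product event is bounded using Lemma \ref{lem:xa_ip_xb} with parameter $\zeta$, since $\vbfx_A$ and $\vbfx_B$ are independently drawn, contributing
\[
2\paren{\frac{\sqrt{P(1-\zeta^2)}+\sqrt{\omega}}{\sqrt{P}-\sqrt{\omega}}}^n.
\]
Summing these two contributions yields exactly the stated bound.

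There is essentially no hard step here; the only subtlety is the bookkeeping of the constants, namely checking that $\zeta = \delta-\lambda$ is the correct choice so that $\lambda+\zeta = \delta$, which ensures the lower tail of $\normtwo{\vbfz}^2$ stays inside $2nP(1-\delta)$. All ingredients (random lattice point concentration near the sphere and near-orthogonality of two independent random lattice points) are already established in Lemmas \ref{lem:lattice_pt_norm} and \ref{lem:xa_ip_xb}, so the proof reduces to a two-line application of the triangle-type decomposition above combined with a union bound.
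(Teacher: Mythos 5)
Your proof is correct and follows essentially the same route as the paper's: decompose $\normtwo{\vbfz}^2 = \normtwo{\vbfx_A}^2 + \normtwo{\vbfx_B}^2 + 2\inprod{\vbfx_A}{\vbfx_B}$, control the two norms via Lemma~\ref{lem:lattice_pt_norm} (with parameter $\lambda$, using the deterministic upper bound $nP$), control the inner product via Lemma~\ref{lem:xa_ip_xb} (with parameter $\zeta = \delta - \lambda$), and union-bound. The parameter bookkeeping $\lambda + \zeta = \delta$ and $\zeta < \delta$ matches the paper exactly.
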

\begin{proof}
\begin{align*}
    &\prob{ \normtwo{\vbfz}\notin\sqrt{2nP(1\pm\delta)} }\\
    =&\prob{ \normtwo{\vbfx_A+\vbfx_B}\notin\sqrt{2nP(1\pm\delta)} }\\
    =&\prob{\normtwo{\vbfx_A}\le\sqrt{nP(1-\lambda)}\text{ or }\normtwo{\vbfx_B}\le\sqrt{nP(1-\lambda)}}\\
    &+\prob{ \normtwo{\vbfx_A+\vbfx_B}\notin\sqrt{2nP(1\pm\delta)},\;\normtwo{\vbfx_A}\in(\sqrt{nP(1-\lambda)},\sqrt{nP}],\;\normtwo{\vbfx_B}\in(\sqrt{nP(1-\lambda)},\sqrt{nP}] }.
\end{align*}
The first term, by Lemma  \ref{lem:lattice_pt_norm},  is at most
\begin{align}
    \prob{ \normtwo{\vbfx_A}\le\sqrt{nP(1-\lambda)} }+\prob{ \normtwo{\vbfx_B}\le\sqrt{nP(1-\lambda)} }  
    \le& 2\paren{\frac{ \sqrt{P(1-\lambda)}+\sqrt{\omega} }{\sqrt{P} - \sqrt{\omega}}}^n. \label{eqn:bound_z_1}
\end{align}
The second term is at most
\begin{align}
    &\prob{ \normtwo{\vbfx_A}^2+\normtwo{\vbfx_B}^2+2\inprod{\vbfx_A}{\vbfx_B}\notin{2nP(1\pm\delta)},\;\normtwo{\vbfx_A}^2\in({nP(1-\lambda)},{nP}],\;\normtwo{\vbfx_B}^2\in({nP(1-\lambda)},{nP}] }\notag\\
    \le& \prob{ 2nP + 2\inprod{\vbfx_A}{\vbfx_B} > 2nP(1+\delta)\text{ or }2nP(1-\zeta) + 2\inprod{\vbfx_A}{\vbfx_B}<2nP(1-\delta) } \notag \\
    \le&\prob{ \inprod{\vbfx_A}{\vbfx_B}>nP\delta\text{ or }\inprod{\vbfx_A}{\vbfx_B}<-nP(\delta-\lambda) }\notag \\
    \le& \prob{ \abs{\inprod{\vbfx_A}{\vbfx_B}}>nP(\delta-\lambda) } \notag \\
    \le&2\paren{\frac{\sqrt{P(1-\zeta^2)} + \sqrt{\omega}}{\sqrt{P} - \sqrt{\omega}}}^n.\label{eqn:bound_z_2}
\end{align}
Combining bounds \eqref{eqn:bound_z_1} and \eqref{eqn:bound_z_2} allows us to conclude
\begin{align}
\prob{ \normtwo{\vbfz}\notin\sqrt{2nP(1\pm\delta)} } \le& 2\paren{\frac{ \sqrt{P(1-\lambda)}+\sqrt{\omega} }{\sqrt{P} - \sqrt{\omega}}}^n + 2\paren{\frac{\sqrt{P(1-\zeta^2)} + \sqrt{\omega}}{\sqrt{P} - \sqrt{\omega}}}^n . \notag 
\end{align}
\end{proof}

\begin{corollary}\label{cor:bound_zprime}
Fix $\vs\in\cS^{n-1}\paren{\vzero,\sqrt{nN }}$. Let $ \vbfx_A $ and $ \vbfx_B $ be two random lattice points independently and uniformly sampled from $ \cC $. 
Let $ \vbfz\coloneqq\vbfx_A+\vbfx_B $ and $ \vbfsperp \coloneqq\proj_{\vbfz^\perp}(\vs) $. 
Then the norm  of $\vbfsperp $ is concentrated around $\sqrt{n(N-2\alpha^2P)}$ w.h.p. For any $ \delta\in(0,1) $,
\begin{align}
\probover{\vbfx_A,\vbfx_B\iid\cC}{ \normtwo{\vbfsperp }\notin\sqrt{ n(N - 2\alpha^2P(1\pm \delta)) }}\le&
2\paren{\frac{ \sqrt{P(1-\lambda)}+\sqrt{\omega} }{\sqrt{P} - \sqrt{\omega}}}^n + 2\paren{\frac{\sqrt{P(1-\zeta^2)} + \sqrt{\omega}}{\sqrt{P} - \sqrt{\omega}}}^n . \notag
\end{align}
\end{corollary}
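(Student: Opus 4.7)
\medskip

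\noindent\textbf{Proof proposal.} The plan is to reduce the claim to Lemma \ref{lem:length_z} via the Pythagorean theorem applied to the orthogonal decomposition of $\vs$. Since $\vbfsperp = \vs - \proj_{\vbfz}(\vs)$ and these two summands are orthogonal by construction, one has
\begin{equation}
\normtwo{\vbfsperp}^2 \;=\; \normtwo{\vs}^2 - \normtwo{\proj_{\vbfz}(\vs)}^2 \;=\; nN - \alpha^2\normtwo{\vbfz}^2, \notag
\end{equation}
using $\normtwo{\vs}^2 = nN$ (since $\vs\in\cS^{n-1}(\vzero,\sqrt{nN})$ is fixed) and the definition $\normtwo{\proj_{\vbfz}(\vs)} = \alpha\normtwo{\vbfz}$.

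Next I would plug in the concentration of $\normtwo{\vbfz}^2$ supplied by Lemma \ref{lem:length_z}. On the good event $\ez^c = \{\normtwo{\vbfz}\in\sqrt{2nP(1\pm\delta)}\}$, we have $\normtwo{\vbfz}^2 \in 2nP(1\pm\delta)$, and hence
\begin{equation}
\normtwo{\vbfsperp}^2 \;\in\; nN - 2n\alpha^2 P(1\pm\delta) \;=\; n\bigl(N - 2\alpha^2 P(1\pm\delta)\bigr). \notag
\end{equation}
Taking square roots (monotone on the nonnegative reals, and well-defined because the power budget forces $\alpha^2 \leq N/(2P)$ so the quantity inside is nonnegative) gives $\normtwo{\vbfsperp}\in\sqrt{n(N-2\alpha^2 P(1\pm\delta))}$ on $\ez^c$. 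Therefore
\begin{equation}
\prob{\normtwo{\vbfsperp}\notin\sqrt{n(N-2\alpha^2 P(1\pm\delta))}} \;\le\; \prob{\ez} , \notag
\end{equation}
and the stated bound follows verbatim from the bound on $\prob{\ez}$ proved in Lemma \ref{lem:length_z}.

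There is essentially no obstacle here: the statement is a deterministic geometric identity combined with one high-probability estimate. The only mild care needed is to check that the target interval $n(N - 2\alpha^2 P(1\pm\delta))$ is indeed nonnegative (so that taking square roots preserves the inclusion) and to note that $\alpha$ is a function of the random pair $(\vs,\vbfz)$ rather than a fixed constant; fortunately the identity $\normtwo{\vbfsperp}^2 = nN - \alpha^2\normtwo{\vbfz}^2$ holds pointwise for every realization of $\vbfz$, so no independent control of $\alpha$ is required.
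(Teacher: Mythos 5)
Your proposal is correct and follows essentially the same route as the paper: both arguments hinge on the pointwise Pythagorean identity $\normtwo{\vbfsperp}^2 = \normtwo{\vs}^2 - \alpha^2\normtwo{\vbfz}^2 = nN - \alpha^2\normtwo{\vbfz}^2$ and then invoke Lemma~\ref{lem:length_z}. The paper in fact observes the slightly sharper fact that, for $\alpha>0$, the two events $\{\normtwo{\vbfsperp}^2 \notin n(N-2\alpha^2 P(1\pm\delta))\}$ and $\{\normtwo{\vbfz}^2 \notin 2nP(1\pm\delta)\}$ coincide rather than merely satisfy an inclusion, but that buys nothing in the bound, so your phrasing is equivalent in effect. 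Your remark that $\alpha$ is random but the identity holds realization-by-realization is exactly the right thing to say and matches the paper's implicit reasoning.
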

\begin{proof}
\begin{align}
\probover{\vbfx_A,\vbfx_B\iid\cC}{ \normtwo{\vbfsperp }^2\notin n(N - 2\alpha^2P(1\pm \delta)) } =& \prob{ \normtwo{\vs}^2 - \alpha^2\normtwo{\vbfz}^2 \notin n(N-2\alpha^2P(1\pm\delta)) } \notag \\
=& \prob{ \normtwo{\vbfz}^2\notin2nP(1\pm\delta) } \notag \\
\le& 2\paren{\frac{ \sqrt{P(1-\lambda)}+\sqrt{\omega} }{\sqrt{P} - \sqrt{\omega}}}^n + 2\paren{\frac{\sqrt{P(1-\zeta^2)} + \sqrt{\omega}}{\sqrt{P} - \sqrt{\omega}}}^n . \notag 
\end{align}
\end{proof}

\subsection{Estimating $\alpha$}\label{sec:estimating_alpha}

\begin{lemma}\label{lem:many_confusing_pairs}
For any $\vz\in\cC+\cC$ such that $\normtwo{\vz}\in\sqrt{2nP(1\pm\delta)}$, there is a large number of pairs  $(\vx_A,\vx_B)\in\cC\times\cC$ which sum up to $\vz$,
\begin{align}
\card{\curbrkt{(\vx_A,\vx_B)\in\cC\times\cC\colon \vx_A+\vx_B = \vz}}\ge& C_1\cdot 2^{\frac{n}{2}\paren{ \log\paren{P/2-\comega} + \log\frac{1}{\tau} }} , \notag
\end{align}
where $ C_1 = C_1(P) $ and $ \comega  $ are positive constants to be defined later. In particular $ \comega\xrightarrow{\omega,\delta\to0}0 $.
\end{lemma}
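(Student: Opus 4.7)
The plan is to translate the count into a lattice-point count inside the intersection (``lens'') of two equal balls and then lower bound that count by inscribing a short cylinder around the equator of the lens.

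For the reduction, observe that $\vz\in\Lf$ (it is a sum of two lattice points), so the map $\vx_A\mapsto(\vx_A,\vz-\vx_A)$ is a bijection from the set in question onto $\Lf\cap\cB^n(\vzero,\sqrt{nP})\cap\cB^n(\vz,\sqrt{nP})$. Denote the lens on the right-hand side by $\cK$. By Lemma~\ref{lem:lattice_pt_bd} it suffices to lower bound $\vol(\wc\cK)/\vol(\Lf)$, where $\wc\cK$ is $\cK$ shrunk inward by $\rcov=\sqrt{n\omega}$.

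To estimate $\vol(\wc\cK)$ I would inscribe an explicit cylinder aligned with $\vz$ and centered at $\vz/2$. Writing $\vx=\vz/2+t\,\vz/\normtwo{\vz}+\vx_\perp$ with $\vx_\perp\perp\vz$, both ball constraints reduce to $(\normtwo{\vz}/2+|t|)^2+\normtwo{\vx_\perp}^2\le nP$. Given $\normtwo{\vz}^2\in 2nP(1\pm\delta)$, the cylinder $\cD=\{|t|\le h,\ \normtwo{\vx_\perp}\le r',\ \vx_\perp\perp\vz\}$ centered at $\vz/2$ lies in $\cK$ provided $r'^2\le n(P/2-\comega_0)$ for a small $\comega_0$ depending on $\delta$ and $h/\sqrt n$. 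I would choose $h=\Theta(\rcov)$ (say $h=2\sqrt{n\omega}$) so that after shrinking by $\rcov$ in every direction the remaining cylinder retains axial half-height $\Theta(\rcov)$ and cross-sectional radius $\sqrt{n(P/2-\comega)}$ for a slightly enlarged constant $\comega$ that still satisfies $\comega\to 0$ as $\omega,\delta\to 0$.

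Applying Lemma~\ref{lem:lattice_pt_bd} to this shrunk cylinder and using $\vol(\Lf)=V_n(n\tau)^{n/2}$ together with the standard ratio $V_{n-1}/V_n\asymp\sqrt{n/(2\pi)}$ then yields a lower bound of the form $C_1(P)\cdot\paren{(P/2-\comega)/\tau}^{n/2}$ with $C_1(P)=\Theta(\sqrt n)$, which is exactly the claim; the $\sqrt n$ polynomial factor is harmless and is absorbed into the $-o(1)$ appearing in the definition of $\esumset$. The main obstacle is that the obvious alternative---inscribing a full $n$-ball at $\vz/2$---has radius only $\sqrt{nP}-\normtwo{\vz}/2\approx(1-1/\sqrt 2)\sqrt{nP}$, yielding a bound of the form $\paren{(3/2-\sqrt 2)P/\tau}^{n/2}$ that is exponentially weaker than the required $(P/(2\tau))^{n/2}$. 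In high dimension the lens's volume is concentrated in a thin equatorial slab where the perpendicular cross-section has the much larger radius $\sqrt{nP/2}$, and the short cylinder above captures precisely this dominant contribution while remaining compatible with the $\rcov$-shrinkage that Lemma~\ref{lem:lattice_pt_bd} forces on us.
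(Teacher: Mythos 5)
Your proposal is correct and follows essentially the same route as the paper: reduce via lattice symmetry and translation invariance to counting $\Lf\cap\cB^n(\vzero,\sqrt{nP})\cap\cB^n(\vz,\sqrt{nP})$, invoke Lemma~\ref{lem:lattice_pt_bd}, and lower-bound the eroded-lens volume through an equatorial cross-section of radius $\approx\sqrt{nP/2}$. In fact your explicit inscribed cylinder of half-height $h=\Theta(\rcov)$ tidies up a dimensional imprecision in the paper's chain, which writes $\vol(\wc\ufo)\ge\vol(\cB^{n-1}(r'))$ (comparing an $n$-dimensional volume to an $(n-1)$-dimensional one, implicitly a unit-height slab); the extra $\Theta(\sqrt{n\omega})$ polynomial factor you pick up is, as you observe, absorbed into the $o(1)$ slack in the exponent of $\esumset$.
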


\begin{remark}
For future convenience, 
let $ F_1  >0 $ be the largest constant such that
\begin{align}
2^{n(F_1-o(1))} \le& C_1\cdot 2^{\frac{n}{2}\paren{ \log\paren{P/2-\comega} + \log\frac{1}{\tau} }}. \notag
\end{align}
It suffices to take
\begin{align}
F_1 \coloneqq& \frac{1}{2}\log\paren{\frac{P}{2} - \comega} + \frac{1}{2}\log\frac{1}{\tau}. \notag
\end{align}
Note that $ F_1 \xrightarrow{\tau\to0}\infty $. 
\end{remark}

\begin{remark}
For readers who are familiar with the \emph{myopic} channel model \cite{zhang-quadratic-isit}, we would like to take this opportunity to point out that, as opposed to the myopic case where the uncertain codewords from James' perspective are \emph{approximately} uniformly distributed in his uncertainty set (which was named an \emph{oracle-given set} of thickness only $ \cO((\log n)/n) $), in our case the uncertain codewords are \emph{exactly} uniformly distributed on $\ufo\cap\Lf$ given James' observation $ \vbfz $. 
\end{remark}

\begin{proof}
First note that for each $ \vx_A\in\cC $, there is a unique $ \vx_B\in\cC $ such that $ \vx_A+\vx_B=\vz $. Indeed, such an $\vx_B $ is given by $ \vx_B = \vz - \vx_A $.
Hence, to count 
\begin{align}
{\curbrkt{(\vx_A,\vx_B)\in\cC\times\cC\colon \vx_A+\vx_B = \vz}} , \notag
\end{align}
it is equivalent to count 
\begin{align}
\curbrkt{ \vx_A\in\cC\colon \vz - \vx_A\in\cC } =& \curbrkt{ \vx_A\in\cC\colon \vx_A\in-\cC+\vz } \notag \\
=& \curbrkt{ \vx_A\in\cC\colon \vx_A\in\cC+\vz } \label{eqn:lattice_symm} \\
=& \curbrkt{ \vx_A\in\Lf\colon \vx_A\in\cB^n\paren{ \vzero,\sqrt{nP} }\cap\cB^n\paren{ \vz,\sqrt{nP} } }, \label{eqn:lattice_transl_inv}
\end{align}
where Equality \eqref{eqn:lattice_symm} follows from symmetry of lattice, i.e., $\vx\in\Lf$ iff $ -\vx\in\Lf $ for any $ \vx\in\bR^n $, and Equality \eqref{eqn:lattice_transl_inv} follows from translation invariance of lattice, i.e., $ \vx+\Lf = \Lf $ for any $ \vx\in\Lf $.
See Fig. \ref{fig:sumset}.
\begin{figure}[htbp]
    \centering
    \includegraphics[width = 0.5\textwidth]{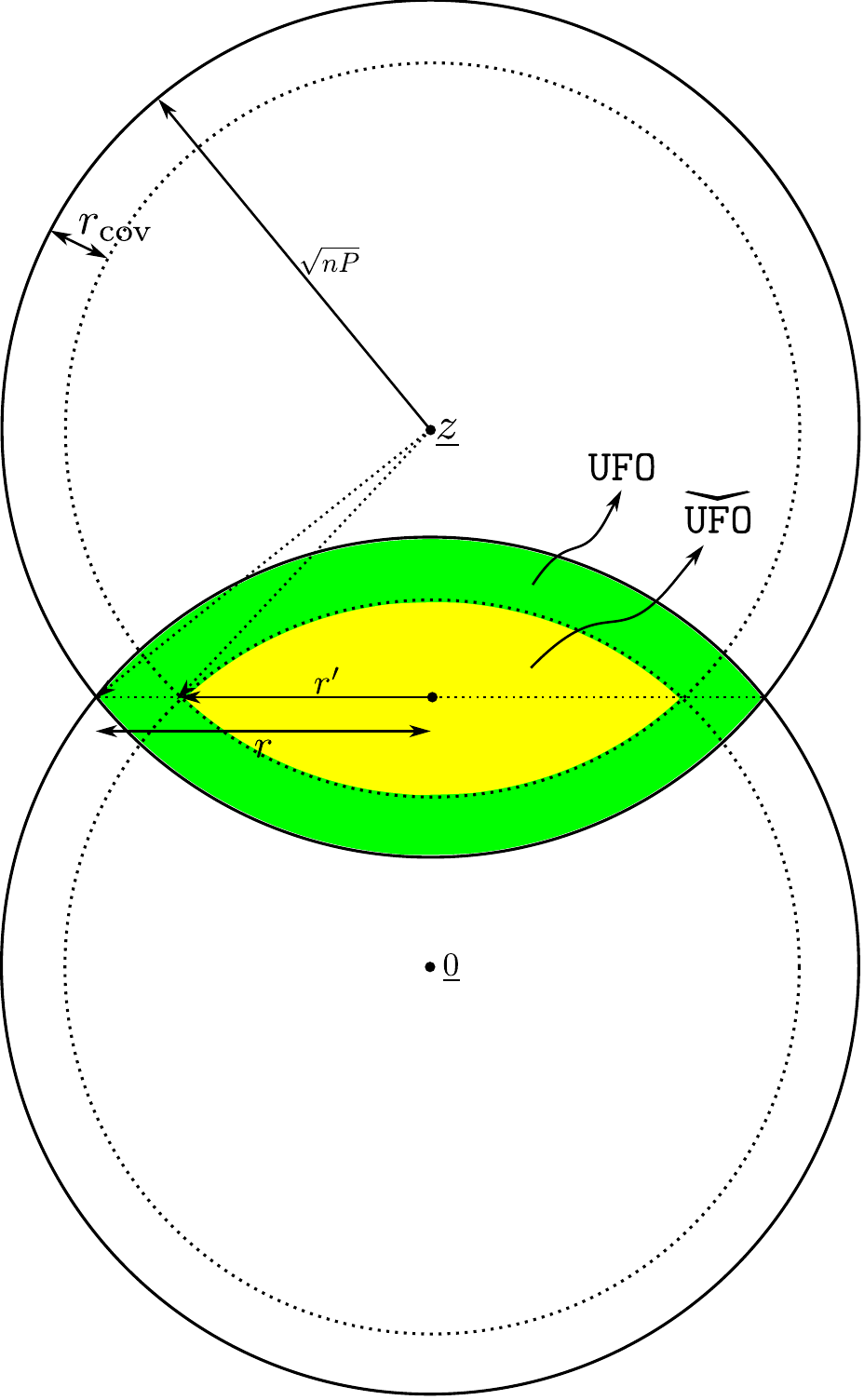}
    \caption{For any $ \vz\in\cC+\cC $, the  pairs of codewords $ (\vx_A,\vx_B)\in\cC\times\cC $ that sum up to $\vz$ lie within a $\ufo$ (the green and yellow regions). If $ \normtwo{\vz}\in\sqrt{2nP(1\pm\delta)} $, then the number of such pairs is exponentially large.}
    \label{fig:sumset}
\end{figure}
Define
\begin{align}
\ufo \coloneqq& \cB^n\paren{ \vzero,\sqrt{nP} }\cap\cB^n\paren{ \vz,\sqrt{nP} } . \notag
\end{align}
See Fig. \ref{fig:ufo} for a real UFO (if it exists).

The number of  $\vx_A$ in the $\ufo$ is at least
\begin{align}
	\frac{\vol\paren{\wc\ufo}}{\vol(\Lf)} 
    =&\frac{1}{\vol(\Lf)}\vol\paren{\wc{\cB^n\paren{\vzero,\sqrt{nP}}}\cap\wc{\cB^n\paren{\vz,\sqrt{nP}}}} \notag \\
    \ge&\frac{1}{\vol(\Lf)}\vol\paren{\cB^{n-1}\paren{r'}} \label{eqn:rp_def} \\
    \asymp&\frac{\sqrt{n}}{\sqrt{2\pi}\reff}\paren{\frac{\sqrt{(\sqrt{nP}-\rcov)^2 - \normtwo{\vz}^2/4}}{\reff}}^{n-1}  \notag \\
    \ge& \frac{\sqrt{n}}{\sqrt{2\pi}\reff}\paren{ \frac{\sqrt{ nP/2 + \rcov^2 - 2\rcov\sqrt{nP}- nP\delta/2 }}{\reff} }^{n-1} \notag \\
    =& \frac{1}{\sqrt{2\pi\tau}}\paren{ \frac{P/2+\omega-2\sqrt{P\omega}-P\delta/2}{\tau} }^{\frac{n-1}{2}} \label{eqn:lowerbound_ufo} \\
    =& C_1\cdot 2^{\frac{n}{2}\paren{ \log\paren{P/2-\comega} + \log\frac{1}{\tau} }} ,  \label{eqn:c1_comega_def} 
\end{align}
where we defined
\begin{align}
r'\coloneqq&\sqrt{\paren{\sqrt{nP}-\rcov}^2-\normtwo{\vz}^2/4}, 
\label{eqn:def_rp} \notag \\
C_1\coloneqq& \frac{1}{\sqrt{2\pi(P/2+\omega-2\sqrt{P\omega}-P\delta/2)}} \notag \\
\le& \frac{1}{\sqrt{7(P/2+1)}} , \notag \\
\comega\coloneqq&2\sqrt{P\omega}-\omega+P\delta/2 \notag
\end{align}
in Eqn. \eqref{eqn:rp_def} and \eqref{eqn:c1_comega_def}, respectively.
\end{proof}

\begin{lemma}\label{lem:many_cw_in_strip}
Fix $\vz\in\cC+\cC$ such that $\normtwo{\vz}\in\sqrt{2nP(1\pm\delta)}$. 
Among those pairs \[\curbrkt{(\vx_A,\vx_B)\in\cC\times\cC\colon \vx_A+\vx_B=\vz},\] most are in a thin strip $\cT$ (to be precisely defined in the proof) of radius approximately $\sqrt{nP/2}$ perpendicular to $\vz$,
\begin{align}
\frac{\card{\Lf\cap\cT}}{\card{\Lf\cap\ufo}}\ge& 1 - p(n)\paren{\frac{\frac{P}{2}(1+\delta)(1-\rho)+c_{3,\omega}}{P/2 - \comega}}^{\frac{n-1}{2}} - C_2\paren{\frac{P/2 - c_\eps + c_{4,\omega}}{P/2 - \comega}}^{\frac{n-1}{2}} , \notag
\end{align}
where 
\begin{align}
p(n) = p_{\eps,\omega}(n) \coloneqq& \sqrt{n\eps}+2\sqrt{n\omega}, \notag \\
C_2 = C_2(P) \le& 6\paren{\sqrt{P/2}+1}, \notag 
\end{align}
and $ c_\eps,c_{3,\omega} $ and $ c_{4,\omega} $ to be defined later satisfy $ c_\eps\xrightarrow{\eps\to0}0 $ and $ c_{3,\omega},c_{4,\omega}\xrightarrow{\omega\to0}0 $. 
\end{lemma}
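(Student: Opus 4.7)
The plan is to define the strip $\cT$ as the intersection of $\ufo$ with a symmetric slab of width $2\sqrt{n\eps}$ perpendicular to $\vz$ and centered at $\vz/2$, namely
\[
\cT \coloneqq \Bigl\{\vx \in \ufo : \bigl|\inprod{\vx - \vz/2}{\vz}\bigr| \leq \sqrt{n\eps}\,\normtwo{\vz}\Bigr\}.
\]
Geometrically $\vz/2$ is the equator of the two balls defining $\ufo$, so $\cT$ is the central fat disk-like region of $\ufo$, while $\ufo \setminus \cT$ consists of two small symmetric spherical caps. I would then write
\[
\frac{\card{\Lf \cap \cT}}{\card{\Lf \cap \ufo}} \;=\; 1 - \frac{\card{\Lf \cap (\ufo \setminus \cT)}}{\card{\Lf \cap \ufo}},
\]
lower-bound the denominator by $\vol(\wc\ufo)/\vol(\Lf)$ using the calculation from Lemma \ref{lem:many_confusing_pairs} (which gives the $(P/2 - \comega)^{(n-1)/2}$ base), and upper-bound $\card{\Lf \cap (\ufo \setminus \cT)}$ via Lemma \ref{lem:lattice_pt_bd} by $\vol(\wh{\ufo \setminus \cT})/\vol(\Lf)$. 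The two terms in the stated bound will come from two distinct regimes of the caps, which I would separate.

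To produce the two terms, I would split each cap along the $\vz$-axis at height $|h| = \sqrt{n\rho P/2}$ into an \emph{outer} piece ($|h| > \sqrt{n\rho P/2}$) and an \emph{inner annulus} ($\sqrt{n\eps} < |h| \leq \sqrt{n\rho P/2}$). Choosing coordinates with first axis parallel to $\vz/\normtwo{\vz}$, the cross-section of $\ufo$ at height $h$ above the equator is an $(n-1)$-ball of radius $\sqrt{nP - (\normtwo{\vz}/2 + |h|)^2}$. At $|h| = \sqrt{n\rho P/2}$, using $\normtwo{\vz} \leq \sqrt{2nP(1+\delta)}$, expansion gives the radius bound $\sqrt{n((P/2)(1+\delta)(1-\rho) + c_{3,\omega})}$, where the $c_{3,\omega}$ slack absorbs the $\wh{\cdot}$ inflation by $\rcov(\Lf) = \sqrt{n\omega}$. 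Integrating this $(n-1)$-volume over the outer height range (at most $\sqrt{nP}$) yields the first term, with the polynomial prefactor $p(n) = \sqrt{n\eps} + 2\sqrt{n\omega}$ arising from the combined slab-width and covering-radius corrections. For the inner annulus, the cross-section radius at the inner edge $|h| = \sqrt{n\eps}$ is at most $\sqrt{n(P/2 - c_\eps + c_{4,\omega})}$ for $c_\eps \approx \sqrt{2P\eps} \xrightarrow{\eps \to 0} 0$; the height range is at most $2\sqrt{n\rho P/2}$, contributing the constant $C_2 \leq 6(\sqrt{P/2}+1)$. Dividing both upper bounds by $\vol(\wc\ufo)/\vol(\Lf) \asymp V_{n-1}(n(P/2 - \comega))^{(n-1)/2}$ produces the two stated ratios.

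The main obstacle will be the bookkeeping of $\rcov$-corrections: each application of $\wh{\cdot}$ inflates every cross-section radius by $\rcov = \sqrt{n\omega}$, which upon squaring contributes the $c_{3,\omega}$ and $c_{4,\omega}$ slacks that must be shown to vanish as $\omega \to 0$. Ensuring that both exponential bases remain strictly less than $1$ for admissible $\eps, \rho, \omega, \delta$ (so that the $(1-\rho)$ gap in the first term and the $-c_\eps$ gap in the second term dominate the $\comega, c_{3,\omega}, c_{4,\omega}$ slacks) is the delicate bookkeeping. The polynomial prefactor $p(n)$, growing only as $\sqrt{n}$, is harmless in the presence of this exponential decay. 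A minor additional care is needed when upper-bounding the outer cap volume by height$\times$maximum cross-section area rather than integrating, but this only loses a further factor of $\sqrt{n}$ which is absorbed into $p(n)$.
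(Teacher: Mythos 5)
Your decomposition does not match the paper's, and this matters both for the stated bound and for the downstream argument.

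First, your $\cT$ is the \emph{entire} slab $\{\vx \in \ufo : |\inprod{\vx - \vz/2}{\vz}| \leq \sqrt{n\eps}\,\normtwo{\vz}\}$, i.e.\ the full central disk. The paper's $\cT$ is a thin \emph{annulus} inside that disk: it first forms the slab $\cF$ of width $\sqrt{n\eps}$, and then \emph{removes the inner cylinder} $\cB^{n-1}(\vzero, r\sqrt{1-\rho}) \times \bR$, keeping only the part of $\cF$ at radial distance in $[r\sqrt{1-\rho}, r]$. That exclusion is not cosmetic. Corollary~\ref{cor:conc_norm_cw_in_strip} requires that every $\vx \in \cT$ has $\normtwo{\vx} \in [\sqrt{n(P-c_2)}, \sqrt{nP}]$, and Lemma~\ref{lem:bound_cos_in_strip} requires that any two points of $\cT$ summing to $\vz$ are nearly orthogonal. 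Both fail for your $\cT$: a lattice point near the center of the disk has norm about $\normtwo{\vz}/2 \approx \sqrt{nP/2}$, far from $\sqrt{nP}$, and two such points can form an angle of essentially $180^\circ$. So even if you proved a lower bound on $|\Lf\cap\cT|/|\Lf\cap\ufo|$ with your $\cT$, the strip you defined would not serve the rest of Section~\ref{sec:estimating_alpha}.

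Second, your accounting of the two exponential terms does not go through. In the paper, the first term (base $\frac{P}{2}(1+\delta)(1-\rho)$, polynomial prefactor $p(n)=\sqrt{n\eps}+2\sqrt{n\omega}$) comes from the \emph{inner cylinder} $\cF\setminus\cT$: an $(n-1)$-dimensional disk of radius $r\sqrt{1-\rho}$ (hence the $(1-\rho)$) crossed with an interval of length $\sqrt{n\eps}$, all inflated by $\rcov=\sqrt{n\omega}$ (hence $p(n)$). Your candidate for the first term is the \emph{outer part of the caps} at height $|h|>\sqrt{n\rho P/2}$, whose height extent is $\Theta(\sqrt{nP})$, not $\sqrt{n\eps}+2\sqrt{n\omega}$; the prefactor you claim simply does not arise from that region. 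Similarly, the paper gets the constant prefactor $C_2$ on the second term by bounding the two caps $\ufo\setminus\cF$ by an $n$-ball, $\vol(\wh{\ufo\setminus\cF}) \leq 2V_n(r''+\rcov)^n$, and using $V_n/V_{n-1} \asymp \sqrt{2\pi/n}$ to cancel the extra $\Theta(\sqrt{n})$ in the leftover factor $r''+\rcov$. Bounding your ``inner annulus of the caps'' by (cross-section area)$\times$(height $\Theta(\sqrt{n\rho P})$) leaves an un-cancelled $\sqrt{n}$ in the prefactor. The right move is the paper's: keep $\ufo\setminus\cF$ whole and use the $n$-ball bound, and treat the slab's inner cylinder $\cF\setminus\cT$ as a separate piece bounded by a thin $(n-1)$-cylinder of the correct radius and thickness.
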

\begin{remark}
For future convenience, take the largest constant $f_1>0 $ such that 
\begin{align}
2^{-n(f_1-o_n(1))}\ge& p(n)\paren{\frac{\frac{P}{2}(1+\delta)(1-\rho)+c_{3,\omega}}{P/2 - \comega}}^{\frac{n-1}{2}} + C_2\paren{\frac{P/2 - c_\eps + c_{4,\omega}}{P/2 - \comega}}^{\frac{n-1}{2}} . \notag
\end{align}
\end{remark}
\begin{proof}
We will show that the volume of $\ufo$ is concentrated around a thin strip $ \cT $ (to be defined momentarily) on the equator of $\ufo$, so are the lattice points therein.

We slice $\ufo$ into many layers each of height $\sqrt{n\eps}$ for some small constant $\eps>0$. Obviously, the  layer with the largest volume is the one in the middle, denoted by $\cF$. 
Formally $ \cF $ is defined as 
\begin{align}
\cF\coloneqq& \paren{ \frac{\vz}{2} + \curbrkt{ \vx\in\bR\colon \normtwo{\proj_{\vz}(\vx)}\le\sqrt{n\eps}/2 } } \cap \ufo \notag \\
=& \paren{ \frac{\vz}{2} + \curbrkt{ \vx\in\bR\colon \abs{\inprod{\vx}{\vz}} \le \normtwo{\vz} \sqrt{n\eps}/2 } } \cap \ufo . \notag
\end{align}
Note that the disk $\cF$ has radius  
\begin{align}
r\coloneqq&\sqrt{nP-\normtwo{\vz}^2/4},
\label{eqn:def_r}  \\
\in& \sqrt{\frac{nP}{2}(1\pm\delta)}. \notag
\end{align}
We further take a strip $\cT$ around the boundary of $ \cF $,
\begin{align*}
    \cT\coloneqq&\cF\setminus\paren{\cB^{n-1}\paren{\vzero,r\sqrt{1-\rho}}\times\bR},
\end{align*}
where $\cB^{n-1}\paren{\vzero,r\sqrt{1-\rho}}\times\bR$ denotes an infinitely high cylinder of radius $r\sqrt{1-\rho}$, centered around $\vzero$, along the direction of $\vz$. See Fig. \ref{fig:strip} for the construction of $\cF$ and $\cT$.
\begin{figure}[htbp]
    \centering
    \subfloat[Strip $\cT$ is defined as the thin pink band around the equator of $\ufo$. The blue region denotes a thin disk $ \cF $.\label{fig:strip}]{\includegraphics[width=0.45\textwidth]{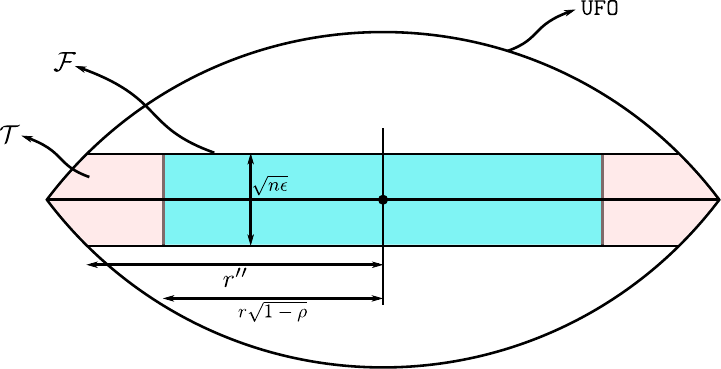}}
    \quad\quad\quad
    \subfloat[A real UFO (if exists).\label{fig:ufo}]{\includegraphics[width=0.45\textwidth]{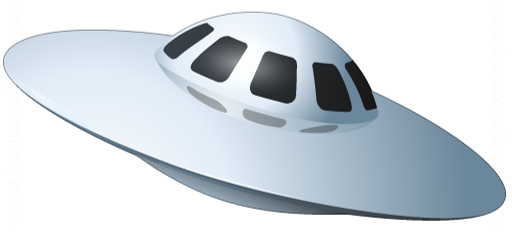}}
    \caption{Strip $\cT$ and UFO.}
    \label{fig:strip_ufo}
\end{figure}
We emphasize that there are two thickness parameters associated to $\cT$: $\rho$ --- (normalized) thickness along the radius of $\cF$; $\eps$ --- (normalized) thickness perpendicular to the radius of $\cF$. We want to show that there is a large (exponential) number of lattice points in $\cT$. 
To this end,  
it suffices to upper bound the number of lattice points in $\ufo$ but outside $\cT$.
Indeed, $\ufo\setminus \cT $ consists of two parts: 
\begin{align}
\cF\setminus\cT 
=& \cF\cap \paren{\cB^{n-1}\paren{\vzero,r\sqrt{1-\rho}}\times\bR}, \notag 
\end{align}
which is the blue disk in Fig. \ref{fig:strip}, and $ \ufo\setminus\cF $ which is the union of the upper and lower door of $\ufo$. 
We now upper bound the volumes of these two parts separately.
For the first part  $ \cF\setminus\cT $, we have
\begin{align}
	\vol\paren{ \wh{\cF\setminus\cT} }\le& \vol\paren{ \cB^{n-1}\paren{r\sqrt{1-\rho}+\rcov} }\paren{ \sqrt{n\eps}+2\rcov } \notag  \\
	=& V_{n-1} \paren{r\sqrt{1-\rho}+\rcov}^{n-1}\paren{\sqrt{n\eps}+2\rcov} \notag \\
	\le& V_{n-1} \paren{\sqrt{\frac{nP}{2}(1+\delta)(1-\rho)}+\sqrt{n\omega}}^{n-1}\paren{\sqrt{n\eps}+2\sqrt{n\omega}} . \label{eqn:upperbound_disk}
\end{align}
For the second part $ \ufo\setminus\cF $, we have
\begin{align}
\vol\paren{\wh{\ufo\setminus\cF}}\le& 2\cB^n\paren{ r''+\rcov } \notag \\
=& 2V_n \paren{r''+\rcov}^n . \label{eqn:upperbound_twocaps} 
\end{align}
where
\begin{align}
    r'' \coloneqq&\sqrt{\paren{\sqrt{nP} - \rcov}^2 - \paren{\frac{\normtwo{\vz}  }{2}+ \frac{\sqrt{n\eps}}{2}-\rcov}^2} .  \label{eqn:def_rpp}
\end{align}
Note that
\begin{align}
r''=& \sqrt{\paren{ nP - \frac{\normtwo{\vz}^2}{4} }
		- \paren{ \frac{n\eps}{4} + \frac{\normtwo{\vz}\sqrt{n\eps}}{2} } 
		- (2\sqrt{nP} - \normtwo{\vz} - \sqrt{n\eps})\rcov  } \notag \\
	=& \sqrt{ r^2 - n(c_\eps +c_\omega) } \notag \\
	\ge& \sqrt{ n \paren{ P/2 - c_1}} , \label{eqn:lowerbound_rpp}
\end{align}
where we define, for notational convenience,
\begin{align}
c_\eps \coloneqq& \frac{\eps}{4} + \frac{\normtwo{\vz}\sqrt{\eps}}{2\sqrt{n}}, \notag \\
\le& \frac{\eps}{4} + \sqrt{\frac{P(1+\delta)\eps}{2}} \notag \\
\le& \frac{\eps}{4} + \sqrt{P\eps} . \notag \\
c_\omega \coloneqq& (2\sqrt{P} - \normtwo{\vz}/\sqrt{n} - \sqrt{\eps})\sqrt{\omega}. \notag \\
\le& \paren{2\sqrt{P}-\sqrt{2P(1-\delta)}-\sqrt{\eps}}\sqrt{\omega} \notag \\
\le& 2\sqrt{P\omega}, \notag \\
c_1\coloneqq& P\delta/2+c_\eps+c_\omega. \notag
\end{align}

Combining  bounds \eqref{eqn:upperbound_disk}, \eqref{eqn:upperbound_twocaps}, \eqref{eqn:lowerbound_rpp} and \eqref{eqn:lowerbound_ufo}, we have that the probability that a uniformly lattice point in $\ufo$ falls outside $\cT$ is given by
\begin{align}
\probover{\vbfx\in\ufo}{\vbfx\notin\cT} =& \frac{\card{ \Lf\cap\paren{\cF\setminus\cT} }}{\card{\Lf\cap\ufo}} + \frac{\card{\Lf\cap\paren{\ufo\setminus\cF}}}{\card{\Lf\cap\ufo}} \notag \\
\le& \frac{\vol\paren{ \wh{\cF\setminus\cT} }}{\vol\paren{\wc{\ufo}}} + \frac{\vol\paren{ \wh{\ufo\setminus\cF} }}{\vol\paren{\wc{\ufo}}} \notag \\
\le& \frac{V_{n-1} \paren{\sqrt{\frac{nP}{2}(1+\delta)(1-\rho)}+\sqrt{n\omega}}^{n-1}\paren{\sqrt{n\eps}+2\sqrt{n\omega}}}{V_{n-1}\sqrt{P/2-\comega}^{n-1}}
+ \frac{2V_n\paren{\sqrt{n(P/2-c_1)}+\sqrt{n\omega}}^n}{V_{n-1}\sqrt{P/2-\comega}^{n-1}} \notag \\
\asymp& \paren{\sqrt{n\eps}+2\sqrt{n\omega}}\paren{\frac{\frac{P}{2}(1+\delta)(1-\rho) + \sqrt{2P(1+\delta)(1-\rho)\omega}+\omega}{P/2-\comega}}^{\frac{n-1}{2}} \label{eqn:bound_part1} \\
&+2\sqrt{2\pi}\paren{\sqrt{P/2-c_1}+\sqrt{\omega}}\paren{\frac{P/2-c_\eps-P\delta/2-c_\omega+2\sqrt{(P/2-c_1)\omega}+\omega}{P/2-\comega}}^{\frac{n-1}{2}} . \label{eqn:bound_part2} 
\end{align}
Observe that in Expression \eqref{eqn:bound_part1}, 
\begin{align}
c_{3,\omega}\coloneqq& \sqrt{2P(1+\delta)(1-\rho)\omega}+\omega \notag \\
\le& 2\sqrt{P\omega}+\omega \notag
\end{align}
in the numerator  vanishes as $ \omega\to0 $; 
in Expression \eqref{eqn:bound_part2}, 
\begin{align}
c_{4,\omega}\coloneqq&  -P\delta/2 - c_\omega + 2\sqrt{(P/2-c_1)\omega}+\omega \notag \\
\le& \sqrt{2P\omega}+\omega \notag
\end{align}
in the numerator  vanishes as $\omega\to0$. 
Also, $ \comega $ in the denominator of both Expression \eqref{eqn:bound_part1} and \eqref{eqn:bound_part2} vanishes as $ \omega,\delta\to0 $. 
By taking $ \rho,\eps\gg \omega,\delta $, we can make the bounds \eqref{eqn:bound_part1} and \eqref{eqn:bound_part2}   exponentially small in total.  This finishes the proof.
\end{proof}

\begin{lemma}\label{lem:bound_cos_in_strip}
Fix $\vz\in\cC+\cC$ such that $\normtwo{\vz}\in\sqrt{2nP(1\pm\delta)}$. 
All pairs of codewords $\vx_A$, $\vx_B$ in $\cT$ that sum up to $\vz$ are almost orthogonal,
\begin{align}
\cos(\angle_{\vx,\vx'})\in\sqrbrkt{ -\delta, \frac{(\rho+\delta)P - \rho P(1+\delta)/2}{(1-\rho)P+\rho P(1+\delta)/2} }. \notag 
\end{align} 
\end{lemma}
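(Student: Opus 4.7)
\medskip

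\noindent\textbf{Proof proposal.} The plan is to exploit the constraint $\vx+\vx'=\vz$ to reduce the computation of $\cos\angle_{\vx,\vx'}$ to a one-parameter geometric problem in the single vector $\vu\coloneqq\vx-\vz/2$. Then $\vx = \vz/2+\vu$ and automatically $\vx' = \vz - \vx = \vz/2 - \vu$, and both codewords are determined by $\vu$ alone. Decompose $\vu = \vu_\parallel + \vu_\perp$ with $\vu_\parallel \parallel \vz$ and $\vu_\perp\perp\vz$.

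First I would translate the three conditions defining $\cT$ into constraints on $\vu$. The thin-disk condition $\vx\in\cF$ gives $\abs{\inprod{\vu}{\vz}}\le\normtwo{\vz}\sqrt{n\eps}/2$, i.e.\ $\normtwo{\vu_\parallel}\le\sqrt{n\eps}/2$. The exclusion of the inner cylinder gives $\normtwo{\vu_\perp}^2>r^2(1-\rho)$. Finally, adding the two ball inequalities $\normtwo{\vx}^2\le nP$ and $\normtwo{\vx'}^2\le nP$ (the cross terms $\pm\inprod{\vz}{\vu}$ cancel) yields $\normtwo{\vu}^2\le nP-\normtwo{\vz}^2/4 = r^2$.

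Next I would directly compute
\begin{align}
\inprod{\vx}{\vx'} =& \inprod{\vz/2+\vu}{\vz/2-\vu} = \normtwo{\vz}^2/4 - \normtwo{\vu}^2, \notag \\
\normtwo{\vx}^2\normtwo{\vx'}^2 =& \paren{\normtwo{\vz}^2/4 + \normtwo{\vu}^2 + \inprod{\vz}{\vu}}\paren{\normtwo{\vz}^2/4 + \normtwo{\vu}^2 - \inprod{\vz}{\vu}} \notag \\
=& \paren{\normtwo{\vz}^2/4 + \normtwo{\vu}^2}^2 - \inprod{\vz}{\vu}^2. \notag
\end{align}
In the limit where the $\vz$-parallel slack is negligible (the correction $\inprod{\vz}{\vu}^2\le\normtwo{\vz}^2 \cdot n\eps/4$ can be absorbed for $\eps$ sufficiently small compared to $\rho,\delta$), this reduces to
\[
\cos\angle_{\vx,\vx'} \;=\; \frac{\normtwo{\vz}^2/4 - \normtwo{\vu}^2}{\normtwo{\vz}^2/4 + \normtwo{\vu}^2},
\]
which is monotonically decreasing in the single ratio $\varphi\coloneqq\normtwo{\vu}^2/(\normtwo{\vz}^2/4)$.

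The optimization over the admissible region then gives both bounds. For the upper bound, minimize $\varphi$ by taking $\normtwo{\vu}^2$ at its lower limit $r^2(1-\rho)=(nP-\normtwo{\vz}^2/4)(1-\rho)$ and $\normtwo{\vz}^2$ at its upper limit $2nP(1+\delta)$; substituting and simplifying yields the ratio $((\rho+\delta)P-\rho P(1+\delta)/2)/((1-\rho)P+\rho P(1+\delta)/2)$ stated in the lemma. For the lower bound, maximize $\varphi$ by taking $\normtwo{\vu}^2=r^2$ and $\normtwo{\vz}^2=2nP(1-\delta)$, so $\varphi=(1+\delta)/(1-\delta)$ and $\cos=(1-\varphi)/(1+\varphi)=-\delta$. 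The main bookkeeping obstacle is controlling the $\inprod{\vz}{\vu}^2$ correction in the product $\normtwo{\vx}\normtwo{\vx'}$: I would bound it crudely by $\normtwo{\vz}^2\normtwo{\vu_\parallel}^2\le 2nP(1+\delta)\cdot n\eps/4$ and show it perturbs the cosine by at most $O(\eps)$, which is subsumed by choosing $\eps\ll\rho,\delta$ consistent with the parameter regime of Lemma~\ref{lem:many_cw_in_strip}.
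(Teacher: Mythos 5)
Your proposal is correct and arrives at the same extremal formula, but via a genuinely different route from the paper. You reparametrize by the displacement $\vu\coloneqq\vx-\vz/2$ and reduce everything to the scalar $\varphi=\normtwo{\vu}^2/(\normtwo{\vz}^2/4)$, so that $\cos\angle_{\vx,\vx'}=(1-\varphi)/(1+\varphi)$ up to the $\inprod{\vz}{\vu}^2$ correction; the extremal values are then read off by monotonicity in $\varphi$. The paper instead views $O=\vzero$, $O'=\vz$, the apex $A=\vx$, applies the law of cosines in triangle $\Delta OAO'$, and argues via the inscribed angle theorem that the apex may WLOG be taken on the perpendicular bisector of $OO'$. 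Algebraically these produce the same expression — writing $a=\normtwo{\vu}^2$, $b=\normtwo{\vz}^2/4$, the paper's $\cos\angle^{\max}=1-\frac{\normtwo{\vz}^2}{2[r^2(1-\rho)+\normtwo{\vz}^2/4]}$ is exactly $\frac{a-b}{a+b}$ at $a=r^2(1-\rho)$, i.e.\ your $(1-\varphi)/(1+\varphi)$ with a sign flip. Your route buys two things: it avoids the paper's WLOG step (which, as stated, needs more care because moving the apex along the circumscribed circle may take it out of $\cT$, whereas your monotonicity argument optimizes directly over the admissible $(\normtwo{\vu_\parallel},\normtwo{\vu_\perp})$ region), and it makes the $\eps$-thickness correction explicit rather than implicit.

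Two small points of bookkeeping worth tightening. First, adding the two ball inequalities to get $\normtwo{\vu}^2\le r^2$ is correct but looser than what actually holds: each inequality separately gives $\normtwo{\vu}^2\pm\inprod{\vz}{\vu}\le r^2$, hence $\normtwo{\vu}^2+\abs{\inprod{\vz}{\vu}}\le r^2$, which shrinks the admissible region when $\vu_\parallel\neq\vzero$. Using the looser constraint is harmless (it can only widen the interval of $\cos$ values you optimize over), but you should say so. Second, when you bound the $\inprod{\vz}{\vu}^2$ correction, note that the denominator of the exact $\cos$ is $\sqrt{(a+b)^2-\inprod{\vz}{\vu}^2}$, so the perturbation to the cosine value scales like $\inprod{\vz}{\vu}^2/(a+b)^2\lesssim\eps/P$, and the endpoints of your stated interval are only attained in the limit $\eps\to 0$; this is consistent with the paper's own implicit treatment, but it means the lemma as stated should really be read with an $O(\eps)$ slack absorbed by the downstream parameter choices ($\eps\ll\rho,\delta$), which you correctly flag.
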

\begin{remark}
It is easy to verify that $ \frac{(\rho+\delta)P - \rho P(1+\delta)/2}{(1-\rho)P+\rho P(1+\delta)/2} > \delta $. 
For future convenience, define 
$
\theta \coloneqq \frac{(\rho+\delta)P - \rho P(1+\delta)/2}{(1-\rho)P+\rho P(1+\delta)/2}
$.
The above bound can be relaxed to $ \abs{ \cos\angle_{\vx,\vx'} } \le \theta $. 
\end{remark}

\begin{proof}
We will show that any two points (not necessarily in $\Lf$) $\vx,\vx'$ in $\cT$ are almost orthogonal.  Define
\begin{align}
\angle_{\cT}^{\max}\coloneqq&\max_{\substack{\vx,\vx'\in\cT\cap\Lf\\\vx+\vx'=\vz}}\angle_{\vx,\vx'},\quad\angle_{\cT}^{\min}\coloneqq\min_{\substack{\vx,\vx'\in\cT\cap\Lf\\\vx+\vx'=\vz}}\angle_{\vx,\vx'}.\notag
\end{align}
It turns out that, as depicted in Fig. \ref{fig:extremal_angles}, for any $\vx$ and $\vx'$ in $ \cT $ such that $ \vx+\vx' = \vz $,
\begin{align}
-\cos(\angle^{\min})\le&\cos(\angle_{\cT}^{\max})\le\cos(\angle_{\vx,\vx'})\le\cos(\angle_{\cT}^{\min})\le-\cos(\angle^{\max}), \notag
\end{align}
The extremal angles $ \angle^{\max} $ and $ \angle^{\min} $ are given by 
\begin{align}
    \cos(\angle^{\max})=&\frac{2\sqrbrkt{ \paren{r\sqrt{1-\rho}}^2 + \paren{\normtwo{\vz}/2}^2 } - \normtwo{\vz}^2}{ 2\sqrbrkt{\paren{r\sqrt{1-\rho}}^2 + \paren{\normtwo{\vz}/2}^2} } \notag \\
    =&1-\frac{\normtwo{\vz}^2}{2\sqrbrkt{r^2(1-\rho) + \normtwo{\vz}^2/4}} \notag \\
    =&1-\frac{\normtwo{\vz}^2}{ 2\sqrbrkt{ (1-\rho)\cdot nP + \rho\cdot\normtwo{\vz}^2/4 } } \label{eqn:monotone_fn_in_z} \\
    \in& \sqrbrkt{  1-\frac{P(1+\delta)}{(1-\rho)P+\rho P(1+\delta)/2}, 1-\frac{P(1-\delta)}{(1-\rho)P+\rho P(1-\delta)/2} } \label{eqn:monotonicity_z} \\
    =& \sqrbrkt{  \frac{\rho P(1+\delta)/2-(\rho+\delta)P}{(1-\rho)P+\rho P(1+\delta)/2}, \frac{\rho P(1-\delta)/2-(\rho-\delta)P}{(1-\rho)P+\rho P(1-\delta)/2} }, \label{eqn:cos_max_angle}
\end{align}
and 
\begin{align}
    \cos(\angle^{\min})=&1-\frac{\normtwo{\vz}^2}{2\sqrbrkt{r^2 + \normtwo{\vz}^2/4}} \notag \\
    =&1-\frac{\normtwo{\vz}^2}{2nP} \notag \\
    \in& [-\delta,\delta]. \label{eqn:cos_min_angle}
\end{align}
Eqn. \eqref{eqn:monotonicity_z} follows since Expression \eqref{eqn:monotone_fn_in_z} is a decreasing function in $ \normtwo{\vz} $. 

To see that $\angle^{\max}$ and $\angle^{\min}$ are indeed extremal angles, see Fig. \ref{fig:extremal_angles}. For $\vx$ and $\vx'$ in the strip $\cT$ (i.e., the light pink region in Fig. \ref{fig:extremal_angles}) which sum up to $\vz$, they form a triangle $\Delta OO'A$. We are interested in determining the maximum and minimum possible angles between such $\vx$ and $\vx'$. It is not hard to see that $\angle_{\vx,\vx'} = 180^\circ - \angle OAO'$. Hence $\cos(\angle_{\vx,\vx'}) = -\cos(\angle OAO')$ and it suffices to bound $\angle OAO'$. Let
\begin{align}
\angle^{\max}\coloneqq&\max_{A\in\cT}\angle OAO',\quad\angle^{\min}\coloneqq\min_{A\in\cT}\angle OAO'.\notag
\end{align}
For any apex $A$, all apexes $A'$s which are on the same circle determined by $O$, $O'$ and $A$ give the same angle $\angle OA'O' = \angle OAO'$. Hence, without loss of generality, we focus on $A$ which is on the radius of the strip $\cT$. Thereby, $\Delta OAO'$ is isosceles: $\overline{OA} = \overline{O'A}$. Now it is easy to see that $\angle^{\min}$ and $\angle^{\max}$ are given by {\color{red}$A$} and {\color{blue}$A$}, respectively in Fig. \ref{fig:extremal_angles}.
\begin{figure}[htbp]
    \centering
    \subfloat[ Extremal angles formed by vectors $ \vx,\vx'\in\cT $ such that $ \vx_A + \vx_B = \vz $ for some $ \vz $ with $ \normtwo{\vz}\in\sqrt{2nP(1\pm\delta)} $. The blue pair {\color{blue}$(\vx,\vx')$} forms the minimum angle $ \angle^{\min}_{\cT} $ and the red pair {\color{red}$(\vx,\vx')$} forms the maximum angle $ \angle^{\max}_\cT $. \label{fig:extremal_angles}]{\includegraphics[width=0.54\textwidth]{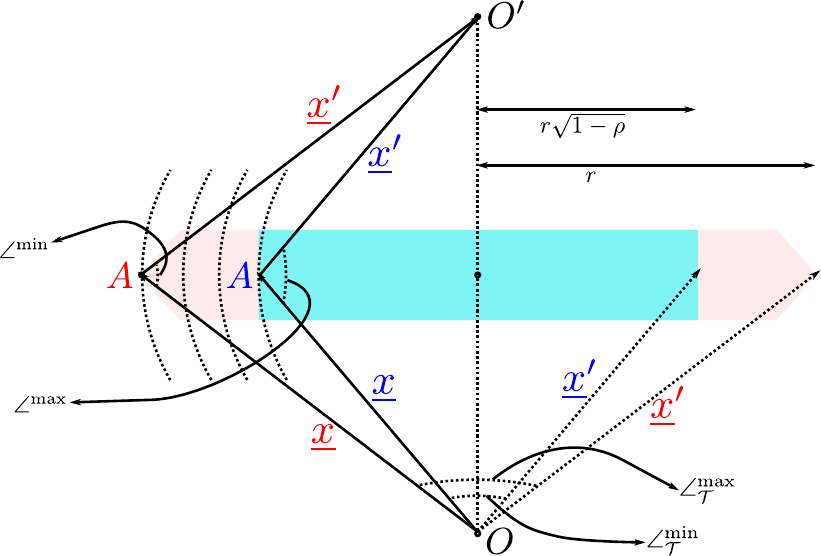}}
    \hfill
    \subfloat[ Extremal angles and extremal lengths in the strip  $\cT$ (the pink region). The blue region denotes $ \cF $. \label{fig:extremal_length}]{\includegraphics[width=0.43\textwidth]{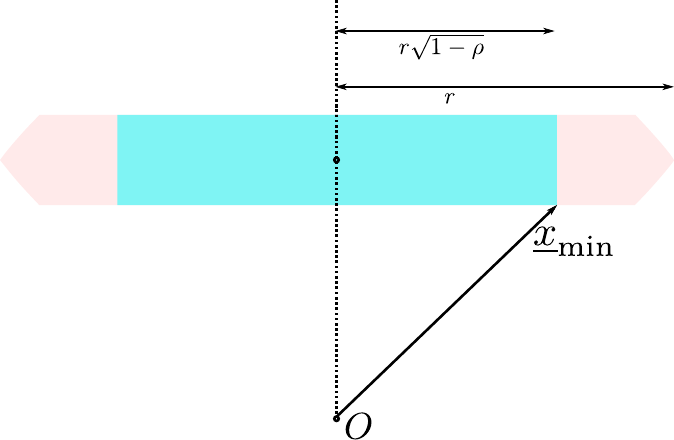}}
    \caption{Extremal angles and extremal lengths in the strip  $\cT$ (the pink region). The blue region denotes $ \cF $.}
    \label{fig:extremal_angle_len}
\end{figure}
\end{proof}

The following corollary follows directly from Lemma \ref{lem:bound_cos_in_strip}.
\begin{corollary}\label{cor:bound_ip_in_strip}
For $ \vz\in\cC+\cC $ such that $ \normtwo{\vz}\in\sqrt{2nP(1\pm\delta)} $, codewords $ \vx_A,\vx_B\in\cT $ such that $ \vx_A+\vx_B = \vz $ satisfy $ \abs{\inprod{\vx_A}{\vx_B}}\le nP\theta $. 
\end{corollary}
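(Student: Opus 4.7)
The plan is to derive this statement as an almost immediate consequence of Lemma \ref{lem:bound_cos_in_strip}. Writing the inner product in polar form,
\begin{align}
\abs{\inprod{\vx_A}{\vx_B}} = \normtwo{\vx_A}\cdot\normtwo{\vx_B}\cdot\abs{\cos\angle_{\vx_A,\vx_B}}, \notag
\end{align}
so I need to control both the two norm factors and the cosine factor.

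For the cosine, Lemma \ref{lem:bound_cos_in_strip} together with the remark immediately following it gives $\abs{\cos\angle_{\vx_A,\vx_B}} \le \theta$ for any pair $\vx_A,\vx_B\in\cT$ with $\vx_A+\vx_B=\vz$. For the norms, I use only the crude bound $\normtwo{\vx_A},\normtwo{\vx_B}\le\sqrt{nP}$ which holds since $\cT\subseteq\ufo\subseteq\cB^n(\vzero,\sqrt{nP})$ by the definitions of $\cT$ and $\ufo$, so any codeword in $\cT\cap\cC$ automatically satisfies the spherical power constraint.

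Chaining these two observations yields
\begin{align}
\abs{\inprod{\vx_A}{\vx_B}} \le \sqrt{nP}\cdot\sqrt{nP}\cdot\theta = nP\theta, \notag
\end{align}
which is exactly the claim. Since both ingredients have already been established and no further geometry is needed, I do not anticipate any obstacle here; the only subtlety worth noting is that one could sharpen the bound by replacing $\sqrt{nP}$ with the in-strip norm bound $\sqrt{n(P - c_2)}$ from $\cE_1^c$ later on (cf. Corollary \ref{cor:conc_norm_cw_in_strip}), but for the present statement the trivial ball bound already suffices to give the stated inequality with constant $\theta$.
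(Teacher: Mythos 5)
Your proof is correct and fills in exactly the elementary step the paper has in mind when it says the corollary "follows directly" from Lemma \ref{lem:bound_cos_in_strip}: bound $\abs{\cos\angle_{\vx_A,\vx_B}}$ by $\theta$ via the remark after that lemma, and bound each norm by $\sqrt{nP}$ since $\cT\subseteq\ufo\subseteq\cB^n(\vzero,\sqrt{nP})$. The sharpening via Corollary \ref{cor:conc_norm_cw_in_strip} that you mention is indeed unnecessary here, as the trivial ball bound already yields the stated constant $\theta$.
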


\begin{corollary}\label{cor:conc_norm_cw_in_strip}
Fix $\vz\in\cC+\cC$ such that $\normtwo{\vz}\in\sqrt{2nP(1\pm\delta)}$. 
Codewords $ \vx $ in $\cT$ have length close to $\sqrt{nP}$,
\begin{align}
\normtwo{ \vx }\in \sqrbrkt{ \sqrt{n(P-c_2)}, \sqrt{nP} }, \notag 
\end{align}
where $ c_2 = c_2(\rho,\eps) $ satisfies $ c_2\xrightarrow{\rho,\eps\to0}0 $. 
\end{corollary}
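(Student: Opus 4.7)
The plan is to parameterize any point $\vx \in \cT$ by its components parallel and perpendicular to $\vz$, then bound each piece using the two thickness parameters $(\eps, \rho)$ defining the strip, and finally plug in $\normtwo{\vz}^2 \in 2nP(1 \pm \delta)$ to get the claimed bound on $\normtwo{\vx}$. The upper bound $\normtwo{\vx} \le \sqrt{nP}$ is immediate from $\cT \subseteq \cF \subseteq \ufo \subseteq \cB^n(\vzero, \sqrt{nP})$, so all the work is in the lower bound.

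Concretely, write $\vx = t\hat{\vz} + \vb$ where $\hat{\vz} = \vz/\normtwo{\vz}$, $\vb \perp \vz$, and $t = \inprod{\vx}{\hat{\vz}}$. The definition of $\cF$ says $\proj_{\vz}(\vx - \vz/2)$ has norm at most $\sqrt{n\eps}/2$, which translates to $|t - \normtwo{\vz}/2| \le \sqrt{n\eps}/2$. The condition $\vx \notin \cB^{n-1}(\vzero, r\sqrt{1-\rho}) \times \bR$ (cylinder along $\vz$ through the origin) says exactly that the projection of $\vx$ onto the hyperplane $\vz^\perp$ has norm at least $r\sqrt{1-\rho}$, i.e., $\normtwo{\vb} \ge r\sqrt{1-\rho}$, where $r = \sqrt{nP - \normtwo{\vz}^2/4}$ as defined in Eqn.~\eqref{eqn:def_r}.

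By orthogonality, $\normtwo{\vx}^2 = t^2 + \normtwo{\vb}^2$. The minimum of $t^2$ over the allowed range of $t$ is $(\normtwo{\vz}/2 - \sqrt{n\eps}/2)^2$, so I obtain
\begin{align}
\normtwo{\vx}^2 \ge \paren{\normtwo{\vz}/2 - \sqrt{n\eps}/2}^2 + (nP - \normtwo{\vz}^2/4)(1-\rho) = nP(1-\rho) + \rho\normtwo{\vz}^2/4 - \normtwo{\vz}\sqrt{n\eps}/2 + n\eps/4. \notag
\end{align}
Dropping the $n\eps/4$ term, substituting the lower bound $\normtwo{\vz}^2 \ge 2nP(1-\delta)$ into the $\rho\normtwo{\vz}^2/4$ term, and the upper bound $\normtwo{\vz} \le \sqrt{2nP(1+\delta)}$ into the cross term, I get
\begin{align}
\normtwo{\vx}^2 \ge nP - nP\rho(1+\delta)/2 - n\sqrt{P\eps(1+\delta)/2} = n(P - c_2), \notag
\end{align}
with $c_2 \coloneqq P\rho(1+\delta)/2 + \sqrt{P\eps(1+\delta)/2}$, which vanishes as $\rho, \eps \to 0$. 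This yields the desired lower bound.

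There is essentially no obstacle here beyond bookkeeping; the only tiny subtlety is making sure the cross term $-\normtwo{\vz}\sqrt{n\eps}/2$ in the expansion of $t^2$ is dominated (and shrinks) after choosing $\eps$ small, which it does since it scales as $\sqrt{\eps}$ while the main loss $\rho P/2$ scales linearly in $\rho$.
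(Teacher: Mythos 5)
Your proof is correct and follows essentially the same geometric route the paper takes: identify the minimizer of $\normtwo{\vx}$ over $\cT$ as the point with parallel component $\normtwo{\vz}/2 - \sqrt{n\eps}/2$ and perpendicular component $r\sqrt{1-\rho}$, then expand. The paper simply reads $\normtwo{\vx_{\min}}^2 = r^2(1-\rho) + (\normtwo{\vz}/2 - \sqrt{n\eps}/2)^2$ directly off Fig.~\ref{fig:extremal_length}, whereas you derive it by the orthogonal decomposition $\vx = t\hat{\vz} + \vb$ and minimizing each piece separately — same content, slightly more explicit. One minor observation: in the paper's expansion, $c_\rho = (P - \normtwo{\vz}^2/(4n))\rho$ is bounded by $\frac{P}{2}(1-\delta)\rho$, but since the bound uses $\normtwo{\vz}^2 \ge 2nP(1-\delta)$ the correct coefficient is $\frac{P}{2}(1+\delta)\rho$, which is what your $c_2$ has; this is a harmless typo in the paper (the bound still vanishes as $\rho \to 0$), and your version is in fact the right one.
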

\begin{proof}
Define 
\begin{align}
\vx_{\min}\coloneqq& \argmin{\vx\in\cT}\normtwo{\vx}. \notag
\end{align}
From Fig. \ref{fig:extremal_length}, we get
\begin{align}
\normtwo{\vx_{\min}}^2 =& r^2(1-\rho) + \paren{ \frac{\normtwo{\vz}}{2} - \frac{\sqrt{n\eps}}{2} }^2 \notag \\
=& \paren{ nP - \frac{\normtwo{\vz}^2}{4} }(1-\rho) + \frac{ \normtwo{\vz}^2+n\eps - 2\normtwo{\vz}\sqrt{n\eps} }{4} \notag \\
=& nP - nc_2, \label{eqn:norm_x_in_strip}
\end{align}
where
\begin{align}
c_2\coloneqq& c_\rho + c_\eps', \notag \\
c_\rho \coloneqq& \paren{ P - \frac{\normtwo{\vz}^2}{4n} }\rho \notag \\
\le& \frac{P}{2}(1-\delta)\rho \notag \\
\le& \frac{P}{2}\rho , \notag \\
c_\eps' \coloneqq& \frac{\normtwo{\vz}\sqrt{\eps}}{2\sqrt{n}} - \frac{\eps}{4} \notag  \\
\le& \sqrt{\frac{P(1+\delta)}{8}\eps}-\frac{\eps}{4} \notag \\
\le& \frac{\sqrt{P\eps}}{2} . \notag
\end{align}
\end{proof}

\begin{lemma}\label{lem:xa_ip_zprime}
Fix $\vz\in\cC+\cC$ such that $\normtwo{\vz}\in\sqrt{2nP(1\pm\delta)}$. 
Fix $ \vs\in\cS^{n-1}\paren{\vzero, \sqrt{n N}} $. 
If $ \vbfx $ is uniformly distributed in $\cT\cap\Lf$, then it is approximately orthogonal to $ \vs_\perp  $ w.h.p., where $ \vs_\perp \coloneqq\proj_{\vz^\perp}(\vs) $,
\begin{align}
\probover{ \vbfx\sim\cT\cap\Lf }{ \abs{\inprod{\vbfx}{\vs_\perp }}\ge n\zeta }\le& \frac{2p(n)}{1-2^{-n(f_1-o(1))}}\paren{\frac{\frac{P}{2}(1+\delta)(1-\zeta'^2) + c_{3,\omega}}{\frac{P}{2} - \comega} }^{\frac{n-1}{2}} , \notag
\end{align}
where $ \zeta'\coloneqq\frac{\zeta}{PN } $. 
\end{lemma}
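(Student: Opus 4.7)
The plan is to reduce the statement to a variant of Lemma \ref{lem:lattice_pt_orthogonal} carried out in the $(n-1)$-dimensional hyperplane $\vz^\perp$, by exploiting the fact that $\vs_\perp \perp \vz$ while the strip $\cT$ essentially lives in an affine translate of this hyperplane. The structure of the argument then closely follows the proof of Lemma \ref{lem:lattice_pt_orthogonal}: the ball of radius $\sqrt{nP}$ is replaced by the $(n-1)$-dimensional disk (of radius at most $\sqrt{nP(1+\delta)/2}$) that approximates $\cT$, with an additional thickness factor $\sqrt{n\eps}+2\sqrt{n\omega}$ along the $\vz$-direction.

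First, I would decompose any $\vx\in\cT$ as $\vx = \vz/2 + \vu + \eta\vz/\normtwo{\vz}$ with $\vu\in\vz^\perp$ of norm at most $r\le\sqrt{nP(1+\delta)/2}$ and $\abs{\eta}\le\sqrt{n\eps}/2$. Since $\inprod{\vz}{\vs_\perp}=0$, the component along $\vz$ contributes nothing, and $\inprod{\vx}{\vs_\perp}=\inprod{\vu}{\vs_\perp}$. Using the norm bounds on $\vu$ and $\vs_\perp\le\sqrt{nN}$, the event $\abs{\inprod{\vbfx}{\vs_\perp}}\ge n\zeta$ then implies $\abs{\cos\angle_{\vu,\vs_\perp}}\ge \zeta'$ in the hyperplane $\vz^\perp$, reducing the lemma to a cone-vs-disk lattice-point comparison in dimension $n-1$.

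Following the cone argument of Lemma \ref{lem:lattice_pt_orthogonal} applied inside $\vz^\perp$, the bad set is contained in the union of two symmetric cones around $\pm\vs_\perp$, intersected with the $(n-1)$-disk of radius at most $\sqrt{nP(1+\delta)/2}$. After $\rcov$-inflation and thickening by $\sqrt{n\eps}+2\rcov$ along $\vz$, Lemma \ref{lem:lattice_pt_bd} gives
\begin{align*}
\card{\Lf\cap\cT_{\mathrm{bad}}}\vol(\Lf)
\le 2V_{n-1}\paren{\sqrt{nP(1+\delta)(1-\zeta'^2)/2}+\sqrt{n\omega}}^{n-1}\paren{\sqrt{n\eps}+2\sqrt{n\omega}}.
\end{align*}
For the denominator, combining Lemma \ref{lem:many_cw_in_strip} with the equatorial-disk lower bound on $\vol(\wc\ufo)$ from the proof of Lemma \ref{lem:many_confusing_pairs} yields $\card{\Lf\cap\cT}\ge (1-2^{-n(f_1-o(1))})V_{n-1}\sqrt{n(P/2-\comega)}^{n-1}/\vol(\Lf)$.

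Taking the ratio, the $V_{n-1}$ and $\vol(\Lf)$ factors cancel, and the surviving binomial cross-terms (the $\sqrt{\cdot\cdot\omega}$-type and $\omega$ corrections in the numerator) are exactly what gets absorbed into $c_{3,\omega}$, yielding the stated bound with leading thickness factor $2p(n)$ and denominator exponent $(n-1)/2$. The main difficulty is purely bookkeeping: tracking the $\rcov$-corrections consistently across numerator and denominator so that the cross-terms package cleanly into the existing constants $c_{3,\omega}$ and $\comega$, and verifying that the strip's annular (rather than full-disk) shape only shrinks $\cT_{\mathrm{bad}}$ and so can be safely upper bounded by the full-disk cone volume. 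No new geometric ideas beyond those in Lemmas \ref{lem:lattice_pt_orthogonal}, \ref{lem:many_confusing_pairs}, and \ref{lem:many_cw_in_strip} are needed.
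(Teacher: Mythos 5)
Your proposal is correct and matches the paper's own proof in all essentials: you compute the ratio $\card{\cT'\cap\Lf}/\card{\cT\cap\Lf}$, upper-bound the numerator by the $\rcov$-inflated volume of the two symmetric caps (thickened along $\vz$ by $\sqrt{n\eps}+2\rcov$), and lower-bound the denominator by chaining Lemma~\ref{lem:many_cw_in_strip}'s $(1-2^{-n(f_1-o(1))})$-fraction bound with the equatorial-disk lower bound on $\vol(\wc\ufo)$ from Lemma~\ref{lem:many_confusing_pairs}, which is exactly the paper's chain of estimates (Eqns.~\eqref{eqn:bound_pr_x_ip_zp}--\eqref{eqn:bound_ip_x_zp}). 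Your explicit decomposition $\vx=\vz/2+\vu+\eta\vz/\normtwo{\vz}$ with $\inprod{\vx}{\vs_\perp}=\inprod{\vu}{\vs_\perp}$ is slightly more careful than the paper's direct use of $\angle_{\vx,\vs_\perp}$ (it tightens $\zeta'$ by replacing $\normtwo{\vx}$ with $\normtwo{\vu}\le r$), and your observation that the annular shape of $\cT$ only shrinks the bad set is the same justification the paper uses implicitly; these are cosmetic rather than substantive differences.
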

\begin{remark}
For future convenience, take the largest constant $ f_2>0 $ such that
\begin{align}
2^{-n(f_2-o(1))}\ge& \frac{2p(n)}{1-2^{-n(f_1-o(1))}}\paren{\frac{\frac{P}{2}(1+\delta)(1-\zeta'^2) + c_{3,\omega}}{\frac{P}{2} - \comega} }^{\frac{n-1}{2}}. \notag
\end{align}
\end{remark}
\begin{proof}
\begin{align}
    \probover{\vbfx\sim\cT\cap\Lf}{\abs{\inprod{\vbfx}{\vs_\perp }}\ge n\zeta}=&\frac{\card{\curbrkt{\vx\in\Lf\cap\cT\colon \abs{\inprod{\vx}{\vs_\perp }}\ge n\zeta}}}{\card{\Lf\cap\cT}}. \label{eqn:bound_pr_x_ip_zp}
\end{align}
The set $\cT'\coloneqq\cT\cap\curbrkt{\vx\in\bR^n\colon \abs{\inprod{\vx}{\vs_\perp }}\ge n\zeta}$ is the intersection of two halfspaces (that are symmetric around $\vz$)  and $\cT$. (In Fig. \ref{fig:geom_xa_ip_zprime}, the {pink} region represents $ \cT $ and the {red} subset of $ \cT $ represents $ \cT' $.) 
The above probability (Expression \eqref{eqn:bound_pr_x_ip_zp}) can be written as 
\begin{align}
\frac{\card{\cT'\cap\Lf}}{\card{\cT\cap\Lf}} =& \frac{\card{\cT'\cap\Lf}}{\card{\ufo\cap\Lf}\paren{1 - \frac{\card{(\ufo\setminus\cT)\cap\Lf}}{\card{\ufo\cap\Lf}}}} \notag \\
\ge& \frac{\vol(\wh\cT')}{\vol(\wc\ufo)(1 - 2^{-n(f_1 - o_n(1))})}. \notag
\end{align}

We already have a lower bound on $\vol(\wc\ufo)$. We now upper bound $\vol(\wh\cT')$.
\begin{align}
    \vol(\wh\cT')\le&2\cdot\cB^{n-1}\paren{r'''+\rcov}\cdot\paren{\sqrt{n\eps}+2\rcov} \notag\\
    =& 2V_{n-1} (\sqrt{n\eps}+2\sqrt{n\omega})(r'''+\sqrt{n\omega})^{n-1} , \label{eqn:vol_tpenlarge}
\end{align}
where $r'''$ is the radius of $\cT'$ given by
\begin{align}
    \frac{r'''}{r}=&\sqrt{1-\cos^2\paren{\max_{\vx\in\cT}\angle_{\vx,\vs_\perp }}} \notag \\
    =&\sqrt{1-\paren{\frac{n\zeta}{\normtwo{\vx}\normtwo{\vs_\perp }}}^2} \notag \\
    \le&\sqrt{1-\zeta'^2}, \label{eqn:def_rppp}
\end{align}
where $ \zeta'\coloneqq\frac{\zeta}{PN} $. 
See Fig. \ref{fig:geom_xa_ip_zprime} for the geometry behind the calculations. 
\begin{figure}[htbp]
    \centering
    \includegraphics[width = 0.5\textwidth]{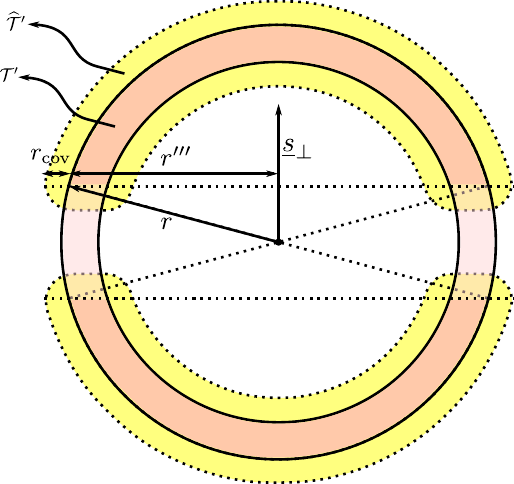}
    \caption{Most codewords in $ \cT $ are almost perpendicular to any $ \vsperp $ which is assumed to be orthogonal to $\vz$. }
    \label{fig:geom_xa_ip_zprime}
\end{figure}
Therefore, by Eqn. \eqref{eqn:lowerbound_ufo}, \eqref{eqn:def_rpp} and  \eqref{eqn:vol_tpenlarge}, \eqref{eqn:def_rppp},
\begin{align}
\frac{\vol(\wh\cT')}{\vol(\wc\ufo)(1 - 2^{-n(f_1 - o_n(1))})} \le& 
\frac{ 2V_{n-1}\paren{\sqrt{n\eps}+2\sqrt{n\omega}}\paren{r\sqrt{1-\zeta'^2}+\sqrt{n\omega}}^{n-1} }{V_{n-1}\sqrt{n\paren{P/2-\comega}}^{n-1}\paren{1-2^{-n(f_1-o(1))}}} \notag \\
=& \frac{2\paren{\sqrt{n\eps}+2\sqrt{n\omega}}}{1-2^{-(f_1-o(1))}}\paren{\frac{\frac{P}{2}(1+\delta)\paren{1-\zeta'^2}+2\sqrt{\frac{P}{2}(1+\delta)\paren{1-\zeta'^2}\omega}+\omega}{\frac{P}{2}-\comega}}^{\frac{n-1}{2}} 
\label{eqn:bound_ip_x_zp}  \\
\le& \frac{2(\sqrt{n\eps}+2\sqrt{n\omega})}{1-2^{-n(f_1-o(1))}}\paren{\frac{\frac{P}{2}(1+\delta)(1-\zeta'^2)+2\sqrt{P\omega}+\omega}{\frac{P}{2} - \comega}}^{\frac{n-1}{2}} \notag \\
=& \frac{2p(n)}{1-2^{-n(f_1-o(1))}}\paren{\frac{\frac{P}{2}(1+\delta)(1-\zeta'^2) + c_{3,\omega}}{\frac{P}{2} - \comega} }^{\frac{n-1}{2}}. \notag
\end{align}
Note that by taking $ \zeta\gg\delta,\omega $, the above bound is exponentially small. 
\end{proof}

\begin{lemma}\label{lem:estimate_alpha}
Fix $\vz\in\cC+\cC$ such that $\normtwo{\vz}\in\sqrt{2nP(1\pm\delta)}$. 
Fix $ \vs\in\cS^{n-1}\paren{\vzero, \sqrt{n N}} $. 
Assume $ \cE'^c $ holds.
Then $ \wh\alpha $ is a good estimate of $ \alpha $, i.e., $ \wh\alpha\in\alpha\pm\xi $, where  $ \xi\ge {\zeta+(1-\alpha)(P\theta+c_2)}/{P} $.
\end{lemma}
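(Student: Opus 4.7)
The plan is a direct algebraic computation exploiting the decomposition $\vbfs = -\alpha\vbfz + \vbfsperp$ together with the three sub-events that constitute $\cE'^c$. Since $\wh\alpha = 1 - \langle\vbfy_B,\vbfx_B\rangle/(nP)$ and $\vbfy_B = (1-\alpha)\vbfx_A + (1-\alpha)\vbfx_B + \vbfsperp$, I would begin by expanding
\begin{align}
\langle\vbfy_B,\vbfx_B\rangle = (1-\alpha)\langle\vbfx_A,\vbfx_B\rangle + (1-\alpha)\|\vbfx_B\|^2 + \langle\vbfsperp,\vbfx_B\rangle. \notag
\end{align}
Substituting this into the definition of $\wh\alpha$ gives, after rearrangement,
\begin{align}
\alpha - \wh\alpha \;=\; (1-\alpha)\,\frac{\|\vbfx_B\|^2 - nP}{nP} \;+\; (1-\alpha)\,\frac{\langle\vbfx_A,\vbfx_B\rangle}{nP} \;+\; \frac{\langle\vbfsperp,\vbfx_B\rangle}{nP}. \notag
\end{align}

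Next I would bound each of the three terms using the hypotheses encoded by $\cE'^c$. From $\cE_1^c$ we have $\|\vbfx_B\|^2 \in [n(P-c_2),\, nP]$, so the first term is at most $(1-\alpha)c_2/P$ in absolute value. From $\cE_2^c$ we have $|\langle\vbfx_A,\vbfx_B\rangle| \le nP\theta$, giving $(1-\alpha)\theta$ for the second term. From $\cE_3^c$ we have $|\langle\vbfsperp,\vbfx_B\rangle| \le n\zeta$, contributing $\zeta/P$. Summing the three bounds by the triangle inequality yields
\begin{align}
|\alpha - \wh\alpha| \;\le\; \frac{\zeta + (1-\alpha)(P\theta + c_2)}{P}, \notag
\end{align}
which is precisely the claimed $\xi$.

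There is no real obstacle here: the lemma is a purely deterministic consequence of the event $\cE'^c$ and the linearity of the inner product, and the only care needed is to make sure the sign conventions for $\vbfsperp$ are tracked correctly (in particular, that $\vbfsperp$ is orthogonal to $\vbfz$, so it has no automatic relation with $\vbfx_B$ individually, forcing us to invoke $\cE_3^c$ rather than orthogonality). The lemma is essentially a sensitivity bound for the natural projection-based estimator of $\alpha$, and the three error sources --- norm deficit of $\vbfx_B$, non-orthogonality of $(\vbfx_A,\vbfx_B)$, and residual correlation between $\vbfsperp$ and $\vbfx_B$ inside the strip $\cT$ --- map cleanly onto the three components of $\cE'$.
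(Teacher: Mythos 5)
Your proof is correct and follows essentially the same route as the paper: expand $\langle\vbfy_B,\vbfx_B\rangle$ using $\vbfy_B=(1-\alpha)\vbfx_A+(1-\alpha)\vbfx_B+\vbfsperp$, bound the three resulting terms by $\cE_1^c,\cE_2^c,\cE_3^c$ respectively, and conclude $|\wh\alpha-\alpha|\le\xi$. The paper frames the last step as an interval inclusion $\curbrkt{\wh\alpha\in\alpha\pm\xi}\supset\cE_1^c\cap\cE_2^c\cap\cE_3^c$ instead of your explicit triangle-inequality rearrangement of $\alpha-\wh\alpha$, but the content is identical. One small remark worth noting: as written in the paper, $\cE_3$ is defined via $\abs{\inprod{\vbfx_A}{\vbfsperp}}$, whereas both your argument and the paper's own proof actually use $\abs{\inprod{\vbfx_B}{\vbfsperp}}\le n\zeta$; the intended event should be read as controlling both $\vbfx_A$ and $\vbfx_B$ (both lie in $\cT$, so Lemma~\ref{lem:xa_ip_zprime} applies equally to either), so you are following the paper's implicit convention rather than introducing a new gap.
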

\begin{proof}
By definition of estimator $ \wh\alpha\coloneqq 1 - \frac{\inprod{\vbfy_B}{\vbfx_B}}{nP} $,
it suffices to show $ \inprod{\vbfy_B}{\vbfx_B}\approx(1-\alpha)nP $ w.h.p.
Note that
\begin{align}
    \inprod{\vbfy_B}{\vbfx_B}=&\inprod{(1-\alpha)\vbfx_A+(1-\alpha)\vbfx_B+\vbfsperp }{\vbfx_B} \notag \\
    =&(1-\alpha)\inprod{\vbfx_A}{\vbfx_B}+(1-\alpha)\normtwo{\vbfx_B}^2+\inprod{\vbfx_B}{\vbfsperp }. \label{eqn:ip_yb_xb}
\end{align}
By Corollary \ref{cor:bound_ip_in_strip}, if $ \vbfx_A $ and $ \vbfx_B $ that sum up to $ \vbfz $ fall into $ \cT $, then they are approximately orthogonal $ \abs{\inprod{\vbfx_A}{\vbfx_B}}\le nP\theta $. 
Moreover, by Corollary \ref{cor:conc_norm_cw_in_strip}, their norms are concentrated around $ \sqrt{nP} $, i.e.,  $ \normtwo{\vbfx_A}^2\in[n(P-c_2),nP] $ and $ \normtwo{\vbfx_B}^2\in[n(P-c_2),nP] $. 
Also, for any given $ \alpha $ such that $ \proj_{\vbfz}(\vs') = -\alpha\vbfz $ and the induced $ \vbfsperp  = \proj_{\vbfz^\perp}(\vs') $, we have $ \abs{\inprod{\vbfx_B}{\vbfsperp }}\le n\zeta $ w.h.p. by Lemma \ref{lem:xa_ip_zprime}. 


Now we are ready to bound the estimation error of $ \wh\alpha $.
\begin{align}
\curbrkt{ \wh\alpha\in\alpha\pm\xi } =& \curbrkt{ 1 - \frac{\inprod{\vbfy_B}{\vbfx_B}}{nP} \in \alpha\pm\xi } \notag \\
=& \curbrkt{ \inprod{\vbfy_B}{\vbfx_B} \in nP(1-\alpha\pm\xi) } \notag \\
=& \curbrkt{ (1-\alpha)\inprod{\vbfx_A}{\vbfx_B} + (1-\alpha)\normtwo{\vbfx_B}^2 + \inprod{\vbfx_B}{\vbfsperp } \in nP(1-\alpha\pm\xi) } \notag \\
\supset& \cE_1^c\cap\cE_2^c\cap\cE_3^c. \label{eqn:incl_e123} 
\end{align}
The last Inequality \eqref{eqn:incl_e123} follows since $ \cE_1^c\cap\cE_2^c\cap\cE_3^c $ implies 
\begin{align}
-(1-\alpha)nP\theta+(1-\alpha)n(P-c_2)-n\zeta\le&(1-\alpha)\inprod{\vbfx_A}{\vbfx_B} + (1-\alpha)\normtwo{\vbfx_B}^2 + \inprod{\vbfx_B}{\vbfsperp }\le (1-\alpha)nP\theta+(1-\alpha)nP+n\zeta. \notag 
\end{align}
By setting parameters properly, the above interval lies inside the interval $ nP(1-\alpha\pm\xi) $.
Indeed, set $\xi$ such that
\begin{align}
&\left\{\begin{array}{l}
-(1-\alpha)P\theta+(1-\alpha)(P-c_2)-\zeta \ge P(1-\alpha-\xi) \\
(1-\alpha)P\theta+(1-\alpha)P+\zeta \le P(1-\alpha+\xi)
\end{array},\right. \notag 
\end{align}
or
\begin{align}
&\left\{
\begin{array}{l}
\zeta\le P\xi-(1-\alpha)(P\theta+c_2) \\
\zeta\le P\xi-(1-\alpha)P\theta
\end{array}
.\right. \notag
\end{align}
It suffices to set
\begin{align}
\zeta\le& P\xi-(1-\alpha)(P\theta+c_2), \notag 
\end{align}
or 
\begin{align}
\xi\ge& \frac{\zeta+(1-\alpha)(P\theta+c_2)}{P}. \notag
\end{align}
\end{proof}

\subsection{Estimating effective decoding radius}\label{sec:estimating_effdecrad}
The analysis in the previous section implies that $\wh\alpha\coloneqq1-\frac{\inprod{\vbfy_B}{\vbfx_B}}{nP}$ is a good estimate to $\alpha$ used by James. 
It further implies that, from James' perspective, after Bob cancels his own (scaled) signal, Bob effectively receives
\begin{align*}
    \vbfy_B-(1-\wh\alpha)\vbfx_B=&\vbfy_B-\frac{\inprod{\vbfy_B}{\vbfx_B}}{nP}\vbfx_B,
\end{align*}
which is approximately equal to $(1-\alpha)\vbfx_A+\vbfsperp $ w.h.p., where $\wt\vbfx_A\coloneqq(1-\alpha)\vbfx_A$ is uniformly distributed in the strip scaled by $1-\alpha$. Such $\wt\vbfx_A$'s are translated by $\vbfsperp $ (which is perpendicular to $ \vbfz $) and Bob's effective received vector is $\wt\vbfy_B\coloneqq\wt\vbfx_A+\vbfsperp $. 
The geometry of the effective channel is shown in Fig. \ref{fig:eff_dec}.
In fact, the effective decoding radius $ \normtwo{\vbfsperp } $ can also be well estimated by Bob. Indeed, we have,
\begin{align}
    \normtwo{\vbfsperp }^2=&\normtwo{\paren{\vbfy_B-(1-\alpha)\vbfx_B} - (1-\alpha)\vbfx_A}^2 \notag \\
    =&\normtwo{\vbfy_B-(1-\alpha)\vbfx_B}^2 + (1-\alpha)^2\normtwo{\vbfx_A}^2 - 2\inprod{\vbfy_B-(1-\alpha)\vbfx_B}{(1-\alpha)\vbfx_A} \notag \\
    =&\normtwo{\vbfy_B}^2 + (1-\alpha)^2\normtwo{\vbfx_B}^2 - 2(1-\alpha)\inprod{\vbfy_B}{\vbfx_B} + (1-\alpha)^2\normtwo{\vbfx_A}^2 - 2\inprod{(1-\alpha)\vbfx_A+\vbfsperp }{(1-\alpha)\vbfx_A} \notag \\
    \approx&\normtwo{\vbfy_B}^2 + (1-\alpha)^2nP - 2(1-\alpha)\inprod{\vbfy_B}{\vbfx_B} + (1-\alpha)^2nP - 2(1-\alpha)^2nP + 0 \label{eqn:heuristic} \\
    =& \normtwo{\vbfy_B}^2 - 2(1-\alpha)\inprod{\vbfy_B}{\vbfx_B} ,  \notag 
\end{align}
where Eqn. \eqref{eqn:heuristic} heuristically holds w.h.p.
Hence we equip Bob with the following estimator for  $\normtwo{\vbfsperp }$,
\begin{align*}
    \normtwo{\wh\vbfsperp }^2=& \normtwo{\vbfy_B}^2 - 2(1-\wh\alpha)\inprod{\vbfy_B}{\vbfx_B} \\
    =&\normtwo{\vbfy_B}^2 - \frac{2\inprod{\vbfy_B}{\vbfx_B}^2}{nP}.
\end{align*}

\begin{figure}[htbp]
    \centering
    \subfloat[The effective channel to Bob is $ \wt\vbfy_B = \wt\vbfx_A+\vbfsperp $. The pink band denotes the scaled strip $ \wt\cT\coloneq(1-\alpha) \cT$.\label{fig:eff_dec}]{\includegraphics[height=0.45\textwidth]{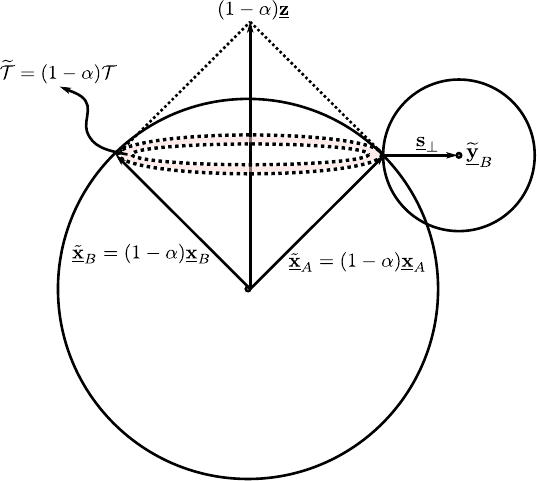}}
    \quad\quad
    \subfloat[The average (over uncertain codewords  $ \wt\vbfx_A\in \wt\cT$) effective decoding radius  of Bob can be computed from the geometry.\label{fig:geom_compute_avg_rad}]{\includegraphics[height=0.45\textwidth]{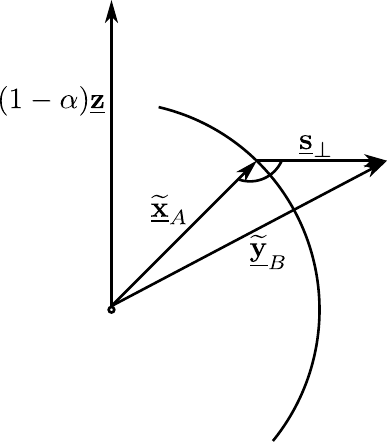}}
	\caption{The geometry of the effective channel and the geometry of computation of average effective decoding radius.}
	\label{fig:effdec_and_avgrad}
\end{figure}
Note that James does not have to use up all his power and thus $\normtwo{\vbfs}$ may be less than $\sqrt{nN}$. However, the worst case is when $\normtwo{\vbfs} = \sqrt{nN}$, which we assume is the case and suffices for upper bounding the decoding error probability.
We now bound the estimation error of $ \normtwo{\wh\vbfsperp } $. 
\begin{lemma}\label{lem:estimate_decrad}
Fix $\vz\in\cC+\cC$ such that $\normtwo{\vz}\in\sqrt{2nP(1\pm\delta)}$. 
Fix $ \vs\in\cS^{n-1}\paren{\vzero, \sqrt{n N}} $.  
Assume $ \cE'^c $ holds.
Then $ \normtwo{\wh\vbfsperp }^2\in\normtwo{\vbfsperp }^2\pm n\mu $, where
\begin{multline}
\mu\ge2\max\left\{ c_2(1-\alpha)^2+\frac{\zeta^2}{P} +2(1-\alpha)\zeta(1+\theta)+P(1-\alpha)^2\theta(3+\theta), \right.\notag \\
 \left. (1-\alpha)\paren{ -\frac{c_2^2(1-\alpha)}{P}+2c_2(1-\alpha)(1-\theta)+3P(1-\alpha)\theta+2\zeta(1+\theta) } \right\}. \notag
\end{multline}
\end{lemma}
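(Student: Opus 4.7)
The plan is to expand $\normtwo{\wh\vbfsperp}^2 - \normtwo{\vbfsperp}^2$ algebraically as a linear combination of quantities whose size is already controlled on $\cE'^c$ and by Lemma \ref{lem:estimate_alpha}, and then to tighten signs term-by-term so as to recover the two expressions inside the $\max$ in the statement.

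First, by definition of the estimator, $\normtwo{\wh\vbfsperp}^2 = \normtwo{\vbfy_B}^2 - 2(1-\wh\alpha)\inprod{\vbfy_B}{\vbfx_B}$. Writing $\vbfsperp = \vbfy_B - (1-\alpha)(\vbfx_A + \vbfx_B)$, using $\inprod{\vbfy_B}{\vbfx_A} = (1-\alpha)\normtwo{\vbfx_A}^2 + (1-\alpha)\inprod{\vbfx_A}{\vbfx_B} + \inprod{\vbfsperp}{\vbfx_A}$, and expanding the square yields the exact identity
\[ \normtwo{\vbfsperp}^2 = \normtwo{\vbfy_B}^2 - 2(1-\alpha)\inprod{\vbfy_B}{\vbfx_B} + (1-\alpha)^2\normtwo{\vbfx_B}^2 - (1-\alpha)^2\normtwo{\vbfx_A}^2 - 2(1-\alpha)\inprod{\vbfsperp}{\vbfx_A}. \]
Subtracting this from the definition of $\normtwo{\wh\vbfsperp}^2$ cancels $\normtwo{\vbfy_B}^2$ and gives the master identity
\[ \normtwo{\wh\vbfsperp}^2 - \normtwo{\vbfsperp}^2 = 2(\wh\alpha - \alpha)\inprod{\vbfy_B}{\vbfx_B} + (1-\alpha)^2\paren{\normtwo{\vbfx_A}^2 - \normtwo{\vbfx_B}^2} + 2(1-\alpha)\inprod{\vbfsperp}{\vbfx_A}, \]
on which the remainder of the argument operates.

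Next I would bound each summand using $\cE'^c = \cE_1^c \cap \cE_2^c \cap \cE_3^c$. Corollary \ref{cor:conc_norm_cw_in_strip} gives $\normtwo{\vbfx_A}^2,\normtwo{\vbfx_B}^2 \in [n(P-c_2), nP]$, so $\abs{\normtwo{\vbfx_A}^2 - \normtwo{\vbfx_B}^2} \le nc_2$; Corollary \ref{cor:bound_ip_in_strip} gives $\abs{\inprod{\vbfx_A}{\vbfx_B}} \le nP\theta$; and $\cE_3^c$ gives $\abs{\inprod{\vbfsperp}{\vbfx_A}}, \abs{\inprod{\vbfsperp}{\vbfx_B}} \le n\zeta$. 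Lemma \ref{lem:estimate_alpha} gives $\abs{\wh\alpha - \alpha} \le \xi = [\zeta + (1-\alpha)(P\theta + c_2)]/P$. Plugging these into the expansion $\inprod{\vbfy_B}{\vbfx_B} = (1-\alpha)\inprod{\vbfx_A}{\vbfx_B} + (1-\alpha)\normtwo{\vbfx_B}^2 + \inprod{\vbfx_B}{\vbfsperp}$ sandwiches
\[ \inprod{\vbfy_B}{\vbfx_B} \in \sqrbrkt{(1-\alpha)nP(1-\theta) - (1-\alpha)nc_2 - n\zeta,\; (1-\alpha)nP(1+\theta) + n\zeta}. \]

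The final step is to bound the master identity from above and from below separately. Since $\wh\alpha - \alpha$ may take either sign, the first summand $2(\wh\alpha-\alpha)\inprod{\vbfy_B}{\vbfx_B}$ is maximized either by $(\wh\alpha-\alpha)=+\xi$ against the upper bound on $\inprod{\vbfy_B}{\vbfx_B}$, or by $(\wh\alpha-\alpha)=-\xi$ against the lower bound (and minimized by the opposite pairings). Pairing these with the two-sided bounds $(1-\alpha)^2 nc_2$ and $2(1-\alpha)n\zeta$ on the remaining two summands, and then distributing $2\xi \cdot \inprod{\vbfy_B}{\vbfx_B}$ into monomials in $\zeta^2/P$, $(1-\alpha)\zeta(1+\theta)$, $c_2(1-\alpha)^2$, and $P(1-\alpha)^2\theta(1+\theta)$, produces one polynomial bound governing the positive deviation and another governing the negative deviation. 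Taking the maximum of the two and multiplying by the factor of $2$ absorbed from the master identity recovers the expression for $\mu$ in the statement. The main obstacle is this final bookkeeping: in particular, the $-c_2^2(1-\alpha)^2/P$ term on the negative-deviation side of $\max$ arises from the product $(-\xi)\cdot\sqrbrkt{-(1-\alpha)nc_2}$ inside $2(\wh\alpha-\alpha)\inprod{\vbfy_B}{\vbfx_B}$, and the two sides differ because the two ends of the range for $\inprod{\vbfy_B}{\vbfx_B}$ have unequal widths, so sign tracking must be meticulous even though no new idea beyond straightforward algebra is required.
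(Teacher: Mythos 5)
Your master identity is correct and is a genuinely cleaner reformulation than the paper's direct expansion: starting from $\normtwo{\wh\vbfsperp}^2 = \normtwo{\vbfy_B}^2 - 2(1-\wh\alpha)\inprod{\vbfy_B}{\vbfx_B}$ and the identity for $\normtwo{\vbfsperp}^2$, and using $\inprod{\vbfsperp}{\vbfx_A + \vbfx_B} = 0$, one checks that indeed
\[ \normtwo{\wh\vbfsperp}^2 - \normtwo{\vbfsperp}^2 = 2(\wh\alpha - \alpha)\inprod{\vbfy_B}{\vbfx_B} + (1-\alpha)^2\paren{\normtwo{\vbfx_A}^2 - \normtwo{\vbfx_B}^2} + 2(1-\alpha)\inprod{\vbfsperp}{\vbfx_A}, \]
which agrees with the paper's expansion of $\normtwo{\vbfy_B}^2 - \tfrac{2}{nP}\inprod{\vbfy_B}{\vbfx_B}^2 - \normtwo{\vbfsperp}^2$.

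However, the bounding step as you describe it does not produce the statement's $\mu$, because you treat $\wh\alpha - \alpha$ and $\inprod{\vbfy_B}{\vbfx_B}$ as two \emph{free} variables, one in $[-\xi,\xi]$ and one in $[L,U]$ with $0<L<U$. This makes the first summand's range symmetric, namely $\pm 2\xi U$, so the total bound is symmetric and has the form
\[ \mu_S = 2\xi\bigl[(1-\alpha)P(1+\theta)+\zeta\bigr] + (1-\alpha)^2 c_2 + 2(1-\alpha)\zeta, \]
which after expanding $\xi = [\zeta + (1-\alpha)(P\theta + c_2)]/P$ carries an extra $(1-\alpha)^2 c_2(1/2+\theta) + (1-\alpha)c_2\zeta/P$ and is short by $2(1-\alpha)^2 P\theta$ compared to the statement's first $\max$ argument, and never produces the $-c_2^2(1-\alpha)^2/P$ term of the second. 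The two quantities you decouple are in fact rigidly coupled: $\wh\alpha - \alpha = (1-\alpha) - \inprod{\vbfy_B}{\vbfx_B}/(nP)$ exactly, so the first summand is the \emph{concave quadratic} $2nP[(1-\alpha)t - t^2]$ in $t = \inprod{\vbfy_B}{\vbfx_B}/(nP)$, whose max and min over $t\in[L/(nP),U/(nP)]$ are at the two endpoints and are unequal. That is where the asymmetric $\max$ and the $-c_2^2(1-\alpha)^2/P$ term originate. Your specific attribution of that term to the product $(-\xi)\cdot[-(1-\alpha)nc_2]$ is also inconsistent: that product is positive, whereas the needed cross term $-2(1-\alpha)^2 c_2^2/P$ can only come from $(+\xi)\cdot[-(1-\alpha)nc_2]$, i.e.\ from the corner $\wh\alpha - \alpha = +\xi$, $\inprod{\vbfy_B}{\vbfx_B} = L$; but that corner is \emph{not} the worst case under your decoupled bound (the decoupled worst case is $\wh\alpha-\alpha=+\xi$, $\inprod{\vbfy_B}{\vbfx_B}=U$), so it never enters. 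To rescue the proof via the master identity, discard $\abs{\wh\alpha-\alpha}\le\xi$ for the first summand, substitute $\wh\alpha-\alpha = (1-\alpha) - \inprod{\vbfy_B}{\vbfx_B}/(nP)$, and bound the resulting quadratic in $\inprod{\vbfy_B}{\vbfx_B}$ over its range; that recovers exactly the paper's two one-sided bounds, and the proof then goes through.
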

\begin{proof}
By definition of the estimator, we have 
\begin{align}
\normtwo{\wh\vbfsperp }^2 =& \normtwo{\vbfy_B}^2 - \frac{2}{nP}\inprod{\vbfy_B}{\vbfx_B}^2 \notag \\
=&\normtwo{(1-\alpha)\vbfx_A+(1-\alpha)\vbfx_B+\vbfsperp }^2 - \frac{2}{nP}\inprod{(1-\alpha)\vbfx_A+(1-\alpha)\vbfx_B+\vbfsperp }{\vbfx_B}^2 \notag \\
=& \normtwo{\vbfsperp }^2 +  (1-\alpha)^2\paren{\normtwo{\vbfx_A}^2 + \normtwo{\vbfx_B}^2} + 2(1-\alpha)^2\inprod{\vbfx_A}{\vbfx_B} \notag \\
& -\frac{2}{nP} \left( (1-\alpha)^2\paren{\inprod{\vbfx_A}{\vbfx_B}^2 + \normtwo{\vbfx_B}^4 +2\inprod{\vbfx_A}{\vbfx_B}\normtwo{\vbfx_B}^2}  +2(1-\alpha)\paren{\inprod{\vbfx_A}{\vbfx_B}+\normtwo{\vbfx_B}^2}\inprod{\vbfx_B}{\vbfsperp } + \inprod{\vbfx_B}{\vbfsperp }^2 \right). \notag
\end{align}
To simplify notation in the following calculations, define $ B\coloneqq \inprod{\vbfx_B}{\vbfsperp }/n $. 
Then
we claim that
\begin{align}
\cE_1^c\cap\cE_2^c\cap\cE_3^c\subset\curbrkt{ \normtwo{\wh\vbfsperp }^2\in\normtwo{\vbfsperp }^2\pm n\mu }. \label{eqn:hatzz_close_to_zz}
\end{align}
Indeed, $ \cE_1^c\cap\cE_2^c\cap\cE_3^c $ implies 
\begin{align}
&\normtwo{\wh\vbfsperp }^2 - \normtwo{\vbfsperp }^2 \notag \\
\in& \left[ 2(1-\alpha)^2n(P-c_2)-2(1-\alpha)^2nP\theta-\frac{2}{nP}\paren{ (1-\alpha)^2\paren{(nP\theta)^2+(nP)^2+2nP\theta\cdot nP} + 2(1-\alpha)(nP\theta+nP)n\zeta+(n\zeta)^2 } , \right. \notag \\
&\left. 2(1-\alpha)^2nP+2(1-\alpha)^2nP\theta-\frac{2}{nP}\paren{ (1-\alpha)^2\paren{(n(P-c_2))^2-2nP\theta\cdot n(P-c_2)} - 2(1-\alpha)(nP\theta+nP)n\zeta } \right]. \notag
\end{align}
By taking proper values of parameters, the above interval is inside the interval $ [-n\mu,n\mu] $.
Indeed, we take $\mu$ such that
\begin{align}
\left\{
\begin{array}{l}
2(1-\alpha)^2(P-c_2)-2(1-\alpha)^2P\theta-\frac{2}{P}\paren{ (1-\alpha)^2(P^2\theta^2+P^2+2P^2\theta)+2(1-\alpha)(P\theta+P)\zeta+\zeta^2 }\ge-\mu \\
2(1-\alpha)^2P+2(1-\alpha)^2P\theta-\frac{2}{P}\paren{ (1-\alpha)^2((P-c_2)^2 - 2P\theta(P-c_2))-2(1-\alpha)(P\theta+P)\zeta} \le\mu
\end{array},
\right.
\end{align}
or
\begin{multline}
\mu\ge2\max\left\{ c_2(1-\alpha)^2+\frac{\zeta^2}{P} +2(1-\alpha)\zeta(1+\theta)+P(1-\alpha)^2\theta(3+\theta), \right.\notag \\
 \left. (1-\alpha)\paren{ -\frac{c_2^2(1-\alpha)}{P}+2c_2(1-\alpha)(1-\theta)+3P(1-\alpha)\theta+2\zeta(1+\theta) } \right\}. \notag
\end{multline}
\end{proof}
Note that 
\begin{align}
\ezz^c\cap\curbrkt{ \normtwo{\wh\vbfsperp }^2\in\normtwo{\vbfsperp }^2\pm n\mu }\subset& \curbrkt{ \normtwo{\wh\vbfsperp }^2\in n\paren{ N-2\alpha^2P(1\mp\delta)\pm\mu } }.\notag
\end{align}
Hence by Corollary \ref{cor:bound_zprime} and Lemma \ref{lem:estimate_decrad}, we immediately have the following corollary.
\begin{corollary}\label{cor:estimate_decrad}
Fix $\vz\in\cC+\cC$ such that $\normtwo{\vz}\in\sqrt{2nP(1\pm\delta)}$. 
Fix $ \vs\in\cS^{n-1}\paren{\vzero, \sqrt{n N}} $. 
Assume $ \cE'^c $ and $ \ezz^c $ hold.
Then James' estimate $ \normtwo{\wh\vbfsperp } $ is concentrated around the typical value $ \sqrt{n\wt N} $ of the correct decoding radius $ \normtwo{\vbfsperp } $, i.e., 
$
\normtwo{\wh\vbfsperp }\in\sqrt{n(N - 2\alpha^2P(1\mp\delta)\pm\mu)}
$.
\end{corollary}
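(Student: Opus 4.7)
The plan is to simply chain the two preceding results. The event $\ezz^c$ is defined (Eqn.~\eqref{eqn:ezz_def}) to say that $\normtwo{\vbfsperp}^2 \in n(N - 2\alpha^2 P(1\pm\delta))$, and Lemma~\ref{lem:estimate_decrad} under $\cE'^c$ tells us that $\normtwo{\wh\vbfsperp}^2 \in \normtwo{\vbfsperp}^2 \pm n\mu$. Combining these via the set inclusion already displayed in the text,
\[
\ezz^c \cap \{\normtwo{\wh\vbfsperp}^2 \in \normtwo{\vbfsperp}^2 \pm n\mu\} \subseteq \{\normtwo{\wh\vbfsperp}^2 \in n(N - 2\alpha^2 P(1\mp\delta) \pm \mu)\},
\]
immediately yields the bound at the level of squared norms.

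To finish, I would simply take square roots on both sides of the inclusion, yielding $\normtwo{\wh\vbfsperp} \in \sqrt{n(N - 2\alpha^2 P(1\mp\delta) \pm \mu)}$, which is precisely the statement of the corollary. The sign flip between $\pm\delta$ inside $\normtwo{\vbfsperp}^2$ and $\mp\delta$ inside the bound for $\normtwo{\wh\vbfsperp}^2$ comes from the fact that the upper endpoint of the first interval (where $\normtwo{\vbfsperp}^2$ uses $+\delta$, meaning $N - 2\alpha^2 P(1-\delta)$ in its largest form after expansion) combines additively with $+n\mu$, and similarly for the lower endpoint.

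The entire argument is purely a transcription of set inclusions, so there is no real obstacle: all of the geometric and probabilistic heavy lifting is already done in Corollary~\ref{cor:bound_zprime} (which controls $\normtwo{\vbfsperp}$ via $\normtwo{\vbfz}$) and in Lemma~\ref{lem:estimate_decrad} (which controls the estimator error $\normtwo{\wh\vbfsperp}^2 - \normtwo{\vbfsperp}^2$ in terms of $c_2$, $\theta$, $\zeta$, and $\alpha$). The only bookkeeping task is to verify that the signs of $\delta$ line up correctly when adding the $\pm n\mu$ slack to the interval $n(N - 2\alpha^2 P(1\pm\delta))$; this is routine.
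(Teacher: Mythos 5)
Your proof is correct and follows exactly the same route as the paper: chain $\ezz^c$ (which gives $\normtwo{\vbfsperp}^2\in n(N-2\alpha^2P(1\pm\delta))$) with Lemma~\ref{lem:estimate_decrad} under $\cE'^c$ (which gives $\normtwo{\wh\vbfsperp}^2\in\normtwo{\vbfsperp}^2\pm n\mu$), then take square roots. Note only that your parenthetical sign explanation briefly mislabels the upper endpoint of $n(N-2\alpha^2P(1\pm\delta))$ as the $+\delta$ choice---since the term is subtracted, the upper endpoint comes from $-\delta$---but the interval you actually write down, $n(N-2\alpha^2P(1\mp\delta)\pm\mu)$, is correct.
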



\subsection{Setting the rate}\label{sec:settingrate}
Note that $ \vbfy_B - (1-\alpha)\vbfx_B = (1-\alpha)\vbfx_A + \vbfsperp $. Hence
the effective channel is essentially $ \wt \vbfy_B = \wt\vbfx_A+\vbfsperp  $, where $ \wt\vbfx_A\coloneqq(1-\alpha)\vbfx_A $. 
To decode, Bob computes 
\begin{align}
\vbfy_B-(1-\wh\alpha)\vbfx_B=&\vbfy_B-\frac{\inprod{\vbfy_B}{\vbfx_B}}{nP}\vbfx_B \notag \\
=& (1-\alpha)\vbfx_A+(1-\alpha)\vbfx_B+\vbfsperp  - \frac{1}{nP} \inprod{(1-\alpha)\vbfx_A+(1-\alpha)\vbfx_B+\vbfsperp }{\vbfx_B}\vbfx_B \notag \\
=& (1-\alpha)\vbfx_A+\vbfsperp  - \paren{ \frac{1-\alpha}{nP}\inprod{\vbfx_A}{\vbfx_B} + (1-\alpha)\paren{\frac{\normtwo{\vbfx_B}^2}{nP} - 1} + \frac{\inprod{\vbfsperp }{\vbfx_B}}{nP} }\vbfx_B. \notag
\end{align}
The error terms are bounded as follows.
\begin{align}
\normtwo{\frac{1-\alpha}{nP}\inprod{\vbfx_A}{\vbfx_B}\vbfx_B} \le& \frac{1-\alpha}{P}P\theta\sqrt{nP} \label{eqn:othererror1} \\
=& (1-\alpha)\theta\sqrt{P}\sqrt{n}, \notag \\
\normtwo{(1-\alpha)\paren{\frac{\normtwo{\vbfx_B}^2}{nP} - 1}\vbfx_B}\le& (1-\alpha)(c_2/P)\sqrt{nP} \label{eqn:othererror2} \\
=& \frac{(1-\alpha)c_2}{\sqrt{P}}\sqrt{n}, \notag \\
\normtwo{\frac{\inprod{\vbfsperp }{\vbfx_B}}{nP}\vbfx_B} \le& \frac{\zeta}{P}\sqrt{nP} \label{eqn:othererror3} \\
=& \frac{\zeta}{\sqrt{P}}\sqrt{n}. \notag
\end{align}
Bob scales $ \cC_A $ by $1-\wh\alpha$ and sets his (normalized) decoding radius to  
\begin{align}
{\wt N}' \coloneqq \paren{ \sqrt{N  - 2\alpha^2P(1-\delta)+\mu} + (1-\alpha)\theta\sqrt{P} + \frac{(1-\alpha)c_2}{\sqrt{P}} + \frac{\zeta}{\sqrt{P}} }^2. \label{eqn:listdec_n}
\end{align}
The power of $\wt\vbfx_A'\coloneqq(1-\wh\alpha)\vbfx_A $ is at least 
\begin{align}
 (1-\wh\alpha)^2(P-c_2)\ge (1-\alpha-\xi)^2(P-c_2) \eqcolon \wt P' . \label{eqn:listdec_p}
\end{align}

Define
\begin{align}
\beta_1\coloneqq& (1-\alpha)\theta\sqrt{P} + \frac{(1-\alpha)c_2}{\sqrt{P}} + \frac{\zeta}{\sqrt{P}} \notag \\
\le& \sqrt{P}\theta + \frac{c_2+\zeta}{\sqrt{P}}. \notag 
\end{align}
Note that since $ c_2\to0 $ as $ \rho,\eps \to0$, we get that $ \beta_1 $ vanishes as $ \theta,\rho,\eps $ and $ \zeta $ all approach 0. 
Let $ \cC_A $ operate at rate 
\begin{align}
R_A =& \frac{1}{2}\log\frac{\wt P'}{\wt N'} - \beta \notag \\
=& \frac{1}{2}\log\frac{(1-\alpha)^2P - \xi(2(1-\alpha)-\xi)P - c_2(1-\alpha-\xi)^2}{(N-2\alpha^2P)+2\alpha^2P\delta+\mu + \beta_1^2 + 2\sqrt{N  - 2\alpha^2P(1-\delta)+\mu}\beta_1 } - \beta \notag \\
\ge& \frac{1}{2}\log\frac{(1-\alpha)^2P}{N-2\alpha^2P} - \beta_2 - \beta_3 - \beta, \label{eqn:apply_log_ratio_ineq}
\end{align}
where Inequality \eqref{eqn:apply_log_ratio_ineq} follows from Corollary \ref{cor:log_ratio_ineq} by setting $ \eps $ and $ \delta $ in the corollary to
\begin{align}
\eps\leftarrow& \xi(2(1-\alpha)-\xi)P + c_2(1-\alpha-\xi)^2, \notag \\
\delta\leftarrow& 2\alpha^2P\delta+\mu + \beta_1^2 + 2\sqrt{N  - 2\alpha^2P(1-\delta)+\mu}\beta_1. \notag
\end{align}
In Inequality \eqref{eqn:apply_log_ratio_ineq}, we also defined 
\begin{align}
\beta_2 \coloneqq& 2\frac{\xi(2(1-\alpha)-\xi)P + c_2(1-\alpha-\xi)^2}{(1-\alpha)^2P} \notag \\
\le& \frac{2P\xi + 2c_2}{(1-\alpha)^2 P}  , \label{eqn:def_beta2} \\
\beta_3 \coloneqq& 2\frac{2\alpha^2P\delta+\mu + \beta_1^2 + 2\sqrt{N  - 2\alpha^2P(1-\delta)+\mu}\beta_1}{N-2\alpha^2P} \notag  \\
\le& 2\frac{2P\delta+\mu+\beta_1^2+2\sqrt{N +\mu}\beta_1}{N-2\alpha^2P}. \label{eqn:def_beta3}
\end{align}
Since $ c_2\xrightarrow{\rho,\eps\to0}0 $ and $ \beta_1\xrightarrow{\theta,\rho,\eps,\zeta}0 $, we have that $ \beta_2 $ vanishes as $ \xi,\rho $ and $ \eps $ approach 0, and $ \beta_3 $ vanishes as $ \delta,\mu,\theta,\rho,\eps $ and $\zeta$ all approach 0. 

For future convenience, let 
\begin{align}
C_\alpha\coloneqq& \frac{1}{2}\log\frac{(1-\alpha)^2P}{N-2\alpha^2P}. \notag
\end{align}

By the above configuration of parameters and by the choice of $\Lf$ in Sec. \ref{sec:code_design},  $ \Lf $, thereby $ \cC $, is $ \paren{ \wt P',\wt N', L } $-list decodable, where $ L = {2^{\cO\paren{\frac{1}{\beta}\log^2\frac{1}{\beta}}}} $.

\subsection{Computing average effective decoding radius}\label{sec:compute_avgdecrad}
In this section, we argue that, for any   $\vs\in\cS^{n-1}(\vzero,\sqrt{nN }) $, the radius  of the decoding region is concentrated around its typical value w.h.p. over James' uncertainty in the strip $ \cT $. 

Define random variable $ \bfr $ such that 
\begin{align}
\text{radius}\paren{ \cB^n\paren{ \wt\vbfy_B, \normtwo{\vbfsperp} }\cap\cB^n\paren{\vzero,\sqrt{n\wt P}} } = \sqrt{n\bfr}. \notag 
\end{align}
As shown in Fig. \ref{fig:geom_compute_avg_rad}, 
from the geometry, we have, on the one hand,
\begin{align*}
     \cos\paren{\angle_{\wt\vbfx_A,\wt\vbfy_B}} = &\frac{\inprod{\wt\vbfy_B}{\wt\vbfx_A}}{\normtwo{\wt\vbfy_B}\normtwo{\wt\vbfx_A}}\\
     =&\frac{\inprod{\wt\vbfx_A+\vbfsperp }{\wt\vbfx_A}}{\normtwo{\wt\vbfx_A+\vbfsperp }\normtwo{\wt\vbfx_A}};
\end{align*}
on the other hand,
\begin{align*}
    \sin\paren{\angle_{\wt\vbfx_A,\wt\vbfy_B}} =& \sqrt{n\bfr}/\normtwo{\wt\vbfx_A}.
\end{align*}
By $(\cos\theta)^2+(\sin\theta)^2 = 1$, we obtain
\begin{align}
    \bfr =& \frac{\normtwo{\wt\vbfx_A}^2}{n}\paren{ 1 - \frac{\paren{\normtwo{\wt\vbfx_A}^2+\inprod{\wt\vbfx_A}{\vbfsperp }}^2}{\paren{\normtwo{\wt\vbfx_A}^2+\normtwo{\vbfsperp }^2+2\inprod{\wt\vbfx_A}{\vbfsperp }}\normtwo{\wt\vbfx_A}^2} }. \notag
\end{align}
Note that heuristically, w.h.p. $ \bfr $ approximately equals
\begin{align}
\bfr \approx& {\wt P}\paren{ 1 - \frac{\paren{n\wt P+0}^2}{\paren{n\wt P+n\wt N+0}n\wt P} } \notag \\
=& {\wt P}\paren{1 - \frac{\wt P^2}{\paren{\wt P+\wt N}\wt P}}\notag \\
=& \frac{\wt P\wt N}{\wt P+\wt N}. \notag 
\end{align}

However,in reality, Bob does not have direct access to the parameters of the effective channel. 
From Bob's perspective, the input of the effective channel is $ \wt\vbfx_A' = (1-\wh\alpha)\vbfx_A $ of power $ \wh P' $ and the effective channel noise is $ \wh\vbfsperp' $ which is perpendicular to $ \vbfz $ of power $ \wt N' $. 
Let $ \wt\vbfy_B'\coloneqq\wt\vbfx_A'+\wh\vbfsperp' $. 
Define random variable $ \wh\bfr $ such that
\begin{align}
\text{radius}\paren{ \cB^n\paren{ \wt\vbfy_B', \normtwo{\wh\vbfsperp'} }\cap\cB^n\paren{\vzero,\sqrt{n\wt P'}} } = \sqrt{n\wh\bfr}, \notag 
\end{align}
which is a robust version of $ \bfr $ that takes estimation errors into account. 
We then argue that the above  channel parameters are close to the underlying typical values w.h.p. 

\begin{lemma}\label{lem:compute_avgdecrad}
Fix $\vz\in\cC+\cC$ such that $\normtwo{\vz}\in\sqrt{2nP(1\pm\delta)}$. 
Fix $ \vs\in\cS^{n-1}\paren{\vzero, \sqrt{n N}} $. 
Assume $ \cE'^c $ holds.
Then Bob's estimate of the (normalized) average effective decoding radius $ \wh\bfr $ is concentrated around the underlying typical value $ \frac{\wt P\wt N}{\wt P+\wt N} $, i.e., $ \wh\bfr\in \frac{\wt P\wt N}{\wt P+\wt N}\pm\nu $, where 
\begin{align}
\nu \ge \max\left\{  e_x + \frac{(\wt P+e'-e_x)^2}{\wt P+\wt N - \cdelta - e' + e_s - e_x - 2\mu} - \frac{\wt P^2}{\wt P+\wt N}, \frac{(\wt N+\cdelta+e_s)(\wt P - e_x)+3(\wt P-e_x)e' - e'^2}{\wt P+\wt N + \cdelta+e'+e_s-e_x} - \frac{\wt P\wt N}{\wt P+\wt N} \right\}.\notag
\end{align}
Here 
\begin{align}
\begin{array}{rlrl}
\cdelta\coloneqq&2\alpha^2P\delta+\mu\to0, & \text{as}&\delta,\mu\to0, \notag \\
e_x =& e_x(\rho,\eps,\xi)\to0, & \text{as}&\rho,\eps,\xi\to0, \notag \\
e_s =& e_s(\theta,\rho,\eps,\zeta,\mu)\to0, & \text{as}&\theta,\rho,\eps,\zeta,\mu\to0, \notag \\
e' =& e'(\zeta,\theta,\rho,\eps,\xi)\to0, & \text{as}&\zeta,\theta,\rho,\eps,\xi\to0. 
\end{array}
\notag 
\end{align}
\end{lemma}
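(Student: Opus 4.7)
The plan is to mimic the geometric derivation given for $\bfr$ but with Bob's estimated vectors $\wt\vbfx_A'$ and $\wh\vbfsperp'$ in place of $\wt\vbfx_A$ and $\vbfsperp$, and then control each input to the resulting formula using the bounds already available under $\cE'^c$. First I would run the same intersection-of-two-balls computation that produced $\bfr$ to write
\[
\wh\bfr \;=\; \frac{1}{n}\cdot\frac{\normtwo{\wt\vbfx_A'}^2\,\normtwo{\wh\vbfsperp'}^2-\inprod{\wt\vbfx_A'}{\wh\vbfsperp'}^2}{\normtwo{\wt\vbfx_A'}^2+\normtwo{\wh\vbfsperp'}^2+2\inprod{\wt\vbfx_A'}{\wh\vbfsperp'}},
\]
which reduces to $\frac{\wt P\wt N}{\wt P+\wt N}$ at the typical values $\normtwo{\wt\vbfx_A'}^2=n\wt P$, $\normtwo{\wh\vbfsperp'}^2=n\wt N$ and $\inprod{\wt\vbfx_A'}{\wh\vbfsperp'}=0$. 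This reduces the problem to controlling three scalars.

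Next, under $\cE'^c=\cE_1^c\cap\cE_2^c\cap\cE_3^c$, I would bound each scalar separately. The norm $\normtwo{\wt\vbfx_A'}^2=(1-\wh\alpha)^2\normtwo{\vbfx_A}^2$ lies in $n\wt P\pm ne_x$ by combining $\cE_1^c$ (which pins $\normtwo{\vbfx_A}^2$ near $nP$) with Lemma~\ref{lem:estimate_alpha} ($\wh\alpha\in\alpha\pm\xi$). The norm $\normtwo{\wh\vbfsperp'}^2$ lies in $n\wt N\pm n(\cdelta+e_s)$: unwinding the definitions one finds $\wh\vbfsperp'=\wt\vbfy_B'-\wt\vbfx_A'=\vbfsperp+(\wh\alpha-\alpha)\vbfz$, so using $\vbfsperp\perp\vbfz$ and $\normtwo{\vbfz}^2\in 2nP(1\pm\delta)$ to bound the cross term, then invoking Lemma~\ref{lem:estimate_decrad} to bound $\normtwo{\vbfsperp}^2$, absorbs the residual into the parameter $e_s$. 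Finally, expanding $\inprod{\wt\vbfx_A'}{\wh\vbfsperp'}=(1-\wh\alpha)\bigl(\inprod{\vbfx_A}{\vbfsperp}+(\wh\alpha-\alpha)(\normtwo{\vbfx_A}^2+\inprod{\vbfx_A}{\vbfx_B})\bigr)$ and applying $\cE_3^c$ to the first summand and $\cE_1^c,\cE_2^c$ together with $|\wh\alpha-\alpha|\le\xi$ to the second gives $|\inprod{\wt\vbfx_A'}{\wh\vbfsperp'}|\le ne'$. Inspection of the resulting expressions verifies $e_x,e_s,e'\to 0$ as the relevant small parameters vanish.

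Finally I would substitute the three intervals into the closed form above. It is convenient to rewrite $\wh\bfr=\frac{A}{n}-\frac{(A+C)^2}{n(A+B+2C)}$, which separates the dependence on $A$ from a homogeneous remainder. A direct differentiation shows that the remainder is monotone decreasing in $C$ for $C>-\min(A,B)$ and monotone increasing in $B$, while monotonicity in $A$ changes sign only at $B+C=0$, which never occurs in the regime of interest. Consequently the supremum and infimum of $\wh\bfr$ are attained at endpoints of the three intervals; choosing signs carefully so that both occurrences of $A$ cooperate rather than conflict, and comparing the resulting extrema with $\frac{\wt P\wt N}{\wt P+\wt N}$, produces exactly the two candidate bounds inside the $\max$ in the statement of $\nu$, the first arising from minimizing $\wh\bfr$ and the second from maximizing it.

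The hard part will be the bookkeeping for $C=\inprod{\wt\vbfx_A'}{\wh\vbfsperp'}$: because it is a sum of small quantities drawn from three different good events and from the $\xi$-accuracy of $\wh\alpha$, and because it appears both linearly in the denominator and quadratically in the numerator of the formula for $\wh\bfr$, matching the exact polynomial coefficients that appear in $\nu$ (including the coefficient $3$ multiplying $(\wt P-e_x)e'$ and the $-e'^2$ correction) requires careful algebra, though no idea beyond what already drives Lemmas~\ref{lem:estimate_alpha} and~\ref{lem:estimate_decrad}.
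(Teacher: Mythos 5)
Your proposal is essentially correct and proves the same qualitative statement, but it takes a genuinely cleaner route than the paper. The paper introduces abstract error vectors $\es$ and $\ex$ with $\wh\vbfsperp' = \vbfsperp + \es$ and $\wt\vbfx_A' = \wt\vbfx_A + \ex$, bounds $\normtwo{\es}$ indirectly via a triangle inequality through $\normtwo{\wh\vbfsperp}$ and Lemma~\ref{lem:estimate_decrad}, and then expands $\inprod{\wt\vbfx_A'}{\wh\vbfsperp'}$ into four cross terms controlled by Cauchy--Schwarz. You instead use the exact identity $\wh\vbfsperp' = \wt\vbfy_B' - \wt\vbfx_A' = \vbfsperp + (\wh\alpha-\alpha)\vbfz$, which together with $\vbfsperp\perp\vbfz$ makes the cross term in $\normtwo{\wh\vbfsperp'}^2$ vanish \emph{identically} (you say ``bound the cross term,'' but it is exactly zero), yielding the slack-free relation $\normtwo{\wh\vbfsperp'}^2 - \normtwo{\vbfsperp}^2 = (\wh\alpha-\alpha)^2\normtwo{\vbfz}^2$; you expand $\inprod{\wt\vbfx_A'}{\wh\vbfsperp'}$ by the same identity and control it directly with $\cE_1^c,\cE_2^c,\cE_3^c$ and Lemma~\ref{lem:estimate_alpha}, avoiding Cauchy--Schwarz entirely. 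You also supply an explicit monotonicity argument for $A\mapsto A-(A+C)^2/(A+B+2C)$ to justify evaluating at interval endpoints, which the paper asserts without proof; note that $\frac{\partial}{\partial A}\bigl[A-(A+C)^2/(A+B+2C)\bigr] = \bigl(\tfrac{B+C}{A+B+2C}\bigr)^2 \ge 0$, so the function is always nondecreasing in $A$ and the sign never flips (it merely degenerates at $B+C=0$) --- a small slip in your wording but not in the plan. Two minor corrections: the interval for $\normtwo{\vbfsperp}^2$ does not come from Lemma~\ref{lem:estimate_decrad} (which controls $\normtwo{\wh\vbfsperp}^2 - \normtwo{\vbfsperp}^2$) but from the hypothesis $\normtwo{\vz}\in\sqrt{2nP(1\pm\delta)}$ via $\normtwo{\vbfsperp}^2 = nN - \alpha^2\normtwo{\vz}^2$, i.e.\ it is exactly $\ezz^c$; and because your decomposition gives quantities $e_s,e'$ that are numerically different from the paper's (your bound on the inner product is of order $\zeta+\xi P(1+\theta)$ rather than $\zeta+\sqrt{Pe_s}+\sqrt{Ne_x}+\sqrt{e_se_x}$), you would prove the lemma with a different, but equally vanishing, $\nu$ rather than literally reproducing the polynomial coefficients printed in the statement.
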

\begin{proof}
Let $ \wh\vbfsperp'  = \vbfsperp+\es $ where $ \es $ is an estimation error vector. 
To bound the norm of $ \es $, note that 
on the one hand 
\begin{align}
\frac{1}{n}\abs{ \normtwo{\wh\vbfsperp'}^2 - \normtwo{\vbfsperp}^2 }\le& \frac{1}{n}\abs{ \normtwo{\wh\vbfsperp'}^2 - \normtwo{\wh\vbfsperp}^2 } + \frac{1}{n}\abs{ \normtwo{\wh\vbfsperp}^2 - \normtwo{\vbfsperp}^2 }\notag \\
\le& \paren{\beta_1^2 + 2\sqrt{N  - 2\alpha^2P(1-\delta)+\mu}\beta_1} + \mu \notag \\
\le& \beta_1^2 + 2\sqrt{N +\mu}\beta_1+\mu \notag \\
\eqcolon& e_s. \notag 
\end{align}
On the other hand, the largest possible difference between $ \normtwo{\vbfsperp+\es}^2 $ and $ \normtwo{\vbfsperp}^2 $ is
\begin{align}
\abs{ \normtwo{\vbfsperp + \es}^2 - \normtwo{\vbfsperp}^2 }=& \abs{\normtwo{ \vbfsperp }^2 + \normtwo{\es}^2 + 2\normtwo{\vbfsperp}\normtwo{\es} - \normtwo{\vbfsperp}^2} \notag \\
=& \normtwo{\es}^2+2\normtwo{\vbfsperp}\normtwo{\es}. \notag 
\end{align}
Therefore we have
\begin{align}
\normtwo{\es}^2+2\normtwo{\vbfsperp}\normtwo{\es}\le& e_s, \notag 
\end{align}
or $ \normtwo{\es}\le\sqrt{ne_s} $. 
Similarly, since 
\begin{align}
\frac{1}{n}\paren{ \normtwo{\wt\vbfx_A}^2 - \normtwo{\wt\vbfx_A'}^2 } =& {\wt P} - {\wt P'} \notag \\
=& (1-\alpha)^2P -  (1-\alpha-\xi)^2(P-c_2) \notag \\
=& c_2(1-\alpha-\xi)^2 + P\xi(2-2\alpha-\xi) \notag \\
\le& c_2 + 2P\xi \notag \\
\eqcolon& e_x, \notag 
\end{align}
if we write $ \wt\vbfx_A' = \wt\vbfx_A + \ex $, then $ \normtwo{\ex}\le\sqrt{ne_x} $. 

The average decoding radius computed w.r.t. Bob's estimated channel parameters is 
\begin{align}
{\wh\bfr} =& \frac{\normtwo{\wt\vbfx_A'}^2}{n}\paren{ 1 - \frac{\paren{\normtwo{\wt\vbfx_A'}^2+\inprod{\wt\vbfx_A'}{\wh\vbfsperp' }}^2}{\paren{\normtwo{\wt\vbfx_A'}^2+\normtwo{\wh\vbfsperp' }^2+2\inprod{\wt\vbfx_A'}{\wh\vbfsperp' }}\normtwo{\wt\vbfx_A'}^2} } \notag \\
=& \frac{\normtwo{\wt\vbfx_A ' }^2}{n}\paren{ 1 - \frac{\paren{\normtwo{\wt\vbfx_A' }^2+\inprod{\wt\vbfx_A }{\vbfsperp } + \inprod{\wt\vbfx_A }{\es} + \inprod{\ex}{\vbfsperp}+\inprod{\ex}{\es} }^2}{\paren{\normtwo{\wt\vbfx_A'}^2+\normtwo{\wh\vbfsperp '}^2+2\inprod{\wt\vbfx_A }{\vbfsperp }+2\inprod{\wt\vbfx_A }{\es}  + 2\inprod{\ex}{\vbfsperp}+2\inprod{\ex}{\es} }\normtwo{\wt\vbfx_A '}^2} }. \notag
\end{align}
By Cauchy--Schwarz inequality, 
\begin{align}
\abs{\inprod{\wt\vbfx_A}{\es}}\le& n\sqrt{(1-\alpha)Pe_s}\le n\sqrt{Pe_s}, \notag \\
\abs{\inprod{\ex}{\vbfsperp}}\le& n\sqrt{(N -2\alpha^2P(1-\delta)+\mu)e_x}\le n\sqrt{N e_x}, \notag \\
\abs{\inprod{\ex}{\es}}\le& n\sqrt{e_xe_s}. \notag
\end{align}
Therefore 
\begin{align}
{\wh\bfr}\in& {\wt P'}\paren{ 1 - \frac{\paren{n\wt P'\pm n\zeta \pm n \sqrt{P e_s}\pm n \sqrt{N  e_x} \pm n\sqrt{e_se_x}}^2}{ \paren{n\wt P' + \normtwo{\wh\vbfsperp}^2+n(e_s-\mu) \mp n\zeta \mp 2 n \sqrt{P e_s}\mp n \sqrt{N  e_x} \mp n\sqrt{e_se_x} } n\wt P' }  } \label{eqn:sphatp_vs_sphat} \\
\subseteq& {\wt P'}\paren{ 1 - \frac{\paren{n\wt P'\pm n\zeta \pm n \sqrt{P e_s}\pm n \sqrt{N  e_x} \pm n\sqrt{e_se_x}}^2}{ \paren{n\wt P' + \normtwo{\vbfsperp}^2+n(e_s-\mu\mp\mu) \mp n\zeta \mp 2 n \sqrt{P e_s}\mp n \sqrt{N  e_x} \mp n\sqrt{e_se_x} } n\wt P' }  } \label{eqn:sphat_vs_sp} \\
\subseteq& {\wt P'}\paren{ 1 - \frac{\paren{n\wt P'\pm n\zeta \pm n \sqrt{P e_s}\pm n \sqrt{N  e_x} \pm n\sqrt{e_se_x}}^2}{ \paren{n\wt P' + n\wt N+n(e_s-\mu\mp\mu\mp\cdelta) \mp n\zeta \mp 2 n \sqrt{P e_s}\mp n \sqrt{N  e_x} \mp n\sqrt{e_se_x} } n\wt P' }  } \label{eqn:sp_vs_nt} \\
=& \left[ \paren{\wt P-e_x}\paren{ 1 - \frac{\paren{\wt P - e_x + e'}^2}{ \paren{ \wt P + \wt N - e_x + e_s - 2\mu-\cdelta-e'}\paren{\wt P - e_x} } } , \right. \label{eqn:def_ep} \\
&\left. \paren{\wt P-e_x}\paren{ 1 - \frac{\paren{\wt P - e_x - e'}^2}{ \paren{ \wt P + \wt N - e_x + e_s +\cdelta+e'}\paren{\wt P - e_x} } } \right]. \notag
\end{align}
Eqn. \eqref{eqn:sphatp_vs_sphat} follows since 
\begin{align}
\frac{1}{n}\paren{\normtwo{\wh\vbfsperp'}^2 - \normtwo{\wh\vbfsperp}^2}\le& \beta_1^2 + 2\sqrt{N  - 2\alpha^2P(1-\delta)+\mu}\beta_1 \notag \\
=& e_s - \mu. \notag
\end{align}
Eqn. \eqref{eqn:sphat_vs_sp} follows since $ \normtwo{\wh\vbfsperp}^2 = \normtwo{\vbfsperp}^2\pm n\mu $. 
Eqn. \eqref{eqn:sp_vs_nt} follows since $ \normtwo{\vbfsperp} \in \sqrt{n\paren{\wt N\pm\cdelta}} $. 
In Eqn. \eqref{eqn:def_ep}, we defined $ e'\coloneqq\zeta+\sqrt{Pe_s}+\sqrt{N e_x}+\sqrt{e_se_x} $. 

We set $ \nu $ such that the above interval is a subinterval of $ \sqrbrkt{ \frac{\wt P\wt N}{\wt P+\wt N}-\nu,\frac{\wt P\wt N}{\wt P+\wt N}+\nu } $. 
Indeed it suffices to take
\begin{align}
\left\{
\begin{array}{l}
\frac{\wt P\wt N}{\wt P+\wt N}-\nu\le \paren{\wt P-e_x}\paren{ 1 - \frac{\paren{\wt P - e_x + e'}^2}{ \paren{ \wt P + \wt N - e_x + e_s - 2\mu-\cdelta-e'}\paren{\wt P - e_x} } } \\
\frac{\wt P\wt N}{\wt P+\wt N}+\nu \ge \paren{\wt P-e_x}\paren{ 1 - \frac{\paren{\wt P - e_x - e'}^2}{ \paren{ \wt P + \wt N - e_x + e_s +\cdelta+e'}\paren{\wt P - e_x} } }
\end{array},
\right.
\end{align}
or
\begin{align}
\nu \ge \max\left\{  e_x + \frac{(\wt P+e'-e_x)^2}{\wt P+\wt N - \cdelta - e' + e_s - e_x - 2\mu} - \frac{\wt P^2}{\wt P+\wt N}, \frac{(\wt N+\cdelta+e_s)(\wt P - e_x)+3(\wt P-e_x)e' - e'^2}{\wt P+\wt N + \cdelta+e'+e_s-e_x} - \frac{\wt P\wt N}{\wt P+\wt N} \right\}.\notag
\end{align}
\end{proof}

\subsection{Expurgation}
By now, all lemmas are proved w.r.t. $ \cC $ without expurgation.
All bounds are only over the randomness of message selection.
However, Lemma \ref{lem:expurgation} shows that, if the expurgation parameter $ \gamma $ and the packing/covering radius parameters $ \tau $ and $ \omega $ are sufficiently small, properties shown in previous sections continue to hold with probability doubly exponentially close to 1 over the expurgation process of $ \cC_A $ and $ \cC_B $.
Specifically, invoking Lemma \ref{lem:expurgation},  we have the following post-expurgation versions of the lemmas we have proved so far.
We state them without proof.
\begin{lemma}\label{eqn:postexpurgation}
Suppose that $ \gamma,\tau,\omega $ are all sufficiently small.
Then the following bounds hold. They are post-expurgation analogs of (pre-expurgation) bounds on probability (over message selection) of $ \elen,\einprod,\ez$, and $ \ezz $.
\begin{align}
&\probover{\cC_A}{\probover{\vbfx\sim\cC_A}{ \normtwo{\vbfx}\le\sqrt{nP(1-\zeta)} }>3\paren{\frac{ \sqrt{P(1-\zeta)}+\sqrt{\omega} }{\sqrt{P} - \sqrt{\omega}}}^n} 
\le 2^{-2^{\Omega(n)}} , \notag \\
&\probover{\cC_A,\cC_B}{\probover{\substack{\vbfx_A\sim\cC_A\\\vbfx_B\sim\cC_B}}{\abs{\inprod{\vbfx_A}{\vbfx_B}}\ge nP\zeta} >6\paren{\frac{\sqrt{nP(1-\zeta^2)} + \rcov}{\sqrt{nP} - \rcov}}^n } 
\le 2^{-2^{\Omega(n)}} , \notag \\
&\probover{\cC_A,\cC_B}{ \probover{\substack{\vbfx_A\sim\cC_A\\\vbfx_B\sim\cC_B}}{ \normtwo{\vbfz}\notin\sqrt{2nP(1\pm\delta)} }> 6\paren{\frac{ \sqrt{P(1-\lambda)}+\sqrt{\omega} }{\sqrt{P} - \sqrt{\omega}}}^n + 6\paren{\frac{\sqrt{P(1-\zeta^2)} + \sqrt{\omega}}{\sqrt{P} - \sqrt{\omega}}}^n  }
\le 2^{-2^{\Omega(n)}} , \notag \\
&\probover{\cC_A,\cC_B}{ \probover{\substack{\vbfx_A\sim\cC_A\\\vbfx_B\sim\cC_B}}{ \normtwo{\vbfsperp }\notin\sqrt{ n(N - 2\alpha^2P(1\pm \delta)) }}>
6\paren{\frac{ \sqrt{P(1-\lambda)}+\sqrt{\omega} }{\sqrt{P} - \sqrt{\omega}}}^n + 6\paren{\frac{\sqrt{P(1-\zeta^2)} + \sqrt{\omega}}{\sqrt{P} - \sqrt{\omega}}}^n  }
\le 2^{-2^{\Omega(n)}}. \notag 
\end{align}
Fix $\vz\in\cC_A+\cC_B $ such that $\normtwo{\vz}\in\sqrt{2nP(1\pm\delta)}$. 
Fix $ \vs\in\cS^{n-1}\paren{\vzero, \sqrt{n N}} $. 
Then the following bounds hold. They are post-expurgation analogs of bounds on $ \esumset,\cE_\cT $ and $ \cE_3 $. 
Events $ \cE_1^c$ and $\cE_2^c $ are geometric consequences of the  construction of the strip $\cT$ and will not be affected by expurgation. 
\begin{align}
&\probover{ \cC_A,\cC_B }{ \card{\curbrkt{(\vbfx_A,\vbfx_B)\in\cC_A\times\cC_B\colon \vbfx_A+\vbfx_B = \vz}}<\frac{1}{2}\cdot 2^{n(F_1-2\gamma-o(1))} }
\le 2^{-2^{\Omega(n)}} , \notag \\
&\probover{\cC_A,\cC_B}{ { \frac{\card{\curbrkt{(\vbfx_A,\vbfx_B)\in\cC_A\times\cC_B\colon\vbfx_A+\vbfx_B = \vz,\;\vbfx_A\in\cT,\;\vbfx_B\in\cT}}}{\card{\curbrkt{(\vbfx_A,\vbfx_B)\in\cC_A\times\cC_B\colon\vbfx_A+\vbfx_B = \vz}}} } > 3\cdot 2^{-n(f_1 - o(1))} }
\le 2^{-2^{\Omega(n)}} , \notag \\
&\probover{\cC_A}{ \probover{ \vbfx_A\sim\cC_A\cap\cT }{ \abs{\inprod{\vbfx_A}{\vs_\perp }}\ge n\zeta }>3\cdot2^{-n(f_2-o(1))} }
\le 2^{-2^{\Omega(n)}} . \notag 
\end{align}
Events $\ealpha^c,\edecrad^c $ and $ \eavgrad^c $ follow from $ \cE'^c $ and $ \ezz^c $ and will be not affected by expurgation as long as $ \cE'^c $ and $ \ezz^c $ hold after expurgation. 
\end{lemma}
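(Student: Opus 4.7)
The plan is to lift each pre-expurgation bound (Lemmas \ref{lem:lattice_pt_norm}, \ref{lem:xa_ip_xb}, \ref{lem:length_z}, Corollary \ref{cor:bound_zprime}, Lemmas \ref{lem:many_confusing_pairs}, \ref{lem:many_cw_in_strip}, and \ref{lem:xa_ip_zprime}) to its post-expurgation counterpart by invoking Lemma \ref{lem:expurgation} together with the Chernoff bound of Corollary \ref{cor:expurgation}. The independence of the expurgations of $\cC_A$ and $\cC_B$ lets us treat each codebook separately where needed, and the only probabilistic ingredient is concentration of the number of lattice points surviving an i.i.d.\ $\mathrm{Bernoulli}(2^{-n\gamma})$ thinning.

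For the single-codeword event $\elen$, one takes $\cV = \cB^n(\vzero,\sqrt{nP})$ as the host region and $\cW = \cB^n(\vzero,\sqrt{nP(1-\zeta)})$ as the convex ``bad'' subset; Lemma \ref{lem:expurgation} immediately yields the factor-$3$ blow-up in the stated bound, with doubly exponentially small failure probability provided $\gamma$ is small enough that $2^{-n\gamma}\vol(\wc{\cW})/\vol(\Lf)$ grows exponentially in $n$. For the two-codeword events $\einprod$, $\ez$, and $\ezz$, one invokes Lemma \ref{lem:expurgation} separately for $\cC_A$ and $\cC_B$: fixing $\vbfx_A$, the bad set for $\vbfx_B$ (a symmetric slab for $\einprod$, an annular pre-image for $\ez$, and the same after projection for $\ezz$) is a finite union of convex sets, so one bounds $|\cC_B\cap\cW_{\vbfx_A}|/|\cC_B\cap\cV|$ via Lemma \ref{lem:expurgation} with factor $3$, then averages the resulting pre-expurgation bound over $\vbfx_A\in\cC_A$ using a second factor-$2$ Chernoff lower bound on $|\cC_A\cap\cV|$. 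Composing gives the stated factor of $6$.

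For $\esumset$, one applies Chernoff directly to the indicator sum over the pre-expurgation set of pairs $(\vx_A,\vx_B)\in\cC\times\cC$ with $\vx_A+\vx_B=\vz$. By Lemma \ref{lem:many_confusing_pairs} this count is at least $2^{n(F_1-o(1))}$; each pair survives the independent expurgations of $\cC_A$ and $\cC_B$ with probability $2^{-2n\gamma}$, so the expected post-expurgation count is at least $2^{n(F_1-2\gamma-o(1))}$, and Lemma \ref{lem:chernoff} provides the factor $\frac{1}{2}$ with doubly exponentially small failure probability. The last two bounds ($\cE_\cT$ and $\cE_3$) are ratios of lattice-point counts within the strip $\cT$: Lemma \ref{lem:expurgation} applies directly, with host set taken as $\ufo$ (for $\cE_\cT$) or $\cT$ (for $\cE_3$), and bad set as $\ufo\setminus\cT$ (for $\cE_\cT$) or the symmetric slab intersected with $\cT$ (for $\cE_3$), yielding the stated factor-$3$ blow-up.

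The main obstacle is ensuring that each ``good'' host set $\cV$ used in Lemma \ref{lem:expurgation} has $2^{-n\gamma}\vol(\wc{\cV})/\vol(\Lf)$ exponentially large, so that the doubly exponentially small failure bound activates. This is immediate for $\cB^n(\vzero,\sqrt{nP})$ from the rate choice of Section \ref{sec:code_design}; for the strip $\cT$ it follows from the pre-expurgation lower bound $|\Lf\cap\cT|\ge(1-2^{-n(f_1-o(1))})|\Lf\cap\ufo|$ already established inside the proof of Lemma \ref{lem:many_cw_in_strip}, which is exponentially large in $n$. Because $\tau,\omega,\gamma$ are sufficiently small by assumption, all the relevant host-set sizes survive the $2^{-n\gamma}$ thinning with exponentially large expectation, and Corollary \ref{cor:expurgation} delivers the doubly exponential concentration. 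Finally, the events $\ealpha^c$, $\edecrad^c$, and $\eavgrad^c$ are deterministic geometric consequences of $\cE'^c$ and $\ezz^c$ via Lemmas \ref{lem:estimate_alpha}, \ref{lem:estimate_decrad}, and \ref{lem:compute_avgdecrad}, so they inherit their post-expurgation guarantees for free from the preceding bounds.
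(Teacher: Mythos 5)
The paper states this lemma without proof, so there is no canonical proof to compare against; your overall plan — lift each pre-expurgation bound via Lemma \ref{lem:expurgation} and the Chernoff bound of Corollary \ref{cor:expurgation}, with the factor-$3$ blow-up and doubly-exponential failure — matches the authors' clear intent. However, two of your execution steps do not quite go through as written.

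First, for $\ez$ and $\ezz$ you propose applying Lemma \ref{lem:expurgation} directly with a bad set that is ``an annular pre-image,'' claiming it is a finite union of convex sets. It is not: for fixed $\vx_A$ the bad set $\{\vx_B:\normtwo{\vx_A+\vx_B}\notin\sqrt{2nP(1\pm\delta)}\}\cap\cB$ is a (shifted) ball together with a ball-\emph{complement}, and the latter is not convex, so Lemma \ref{lem:expurgation} (which requires a convex $\cW$) does not apply. The correct route — and the one forced by the shape of the stated threshold, which is a sum of two terms — is to follow the structure of Lemma \ref{lem:length_z} and Corollary \ref{cor:bound_zprime}: deduce the post-expurgation $\ez$ and $\ezz$ bounds from the post-expurgation $\elen$ and $\einprod$ bounds (each applied once for $\cC_A$ and once for $\cC_B$ where needed) by the exact same case split and union bound, which produces exactly the advertised $6(\cdots)^n+6(\cdots)^n$.

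Second, your accounting of the factor $6$ for the two-codebook events via ``a second factor-$2$ Chernoff lower bound on $|\cC_A\cap\cV|$'' is off. Once one shows (after a union bound over all $\vx_A\in\cC$, whose exponential multiplicity is absorbed by the doubly-exponential failure) that $\probover{\vbfx_B\sim\cC_B}{\cdot}\le 3\cdot 2(\cdots)^n$ uniformly in $\vx_A$, averaging over $\vbfx_A\sim\cC_A$ is free — the average of quantities each at most $X$ is at most $X$, no further Chernoff bound on $|\cC_A\cap\cV|$ is needed. The $6$ is simply the factor-$3$ from Lemma \ref{lem:expurgation} times the factor-$2$ already present in the pre-expurgation bound (the union of two antipodal cones in Lemma \ref{lem:lattice_pt_orthogonal}). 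Two further small points worth flagging: for the $\cE_\cT$ ratio, both numerator and denominator involve the product indicator $\indicator{\vx_A\in\cC_A}\indicator{\vz-\vx_A\in\cC_B}$, so one should view the joint thinning as a single $\mathrm{Bernoulli}(2^{-2n\gamma})$ thinning of $\Lf\cap\ufo$ and invoke the lemma with $\gamma$ replaced by $2\gamma$; and for $\cE_3$ the host $\cT$ is an annular strip, hence not convex, so one either applies Corollary \ref{cor:expurgation} directly to $|\cC_A\cap\cT'|$ and $|\cC_A\cap\cT|$ separately or notes that the convexity hypothesis in Lemma \ref{lem:expurgation} is only used through the volume bounds of Lemma \ref{lem:lattice_pt_bd}, which do not actually require convexity.
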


\subsection{Bounding  probability of error}\label{sec:bouding_pe}
Let $\Tgood$ ($\Tbad \coloneqq \cT\setminus\Tgood $) denote the subset of $\cT$ in which codewords induce typical (atypical) radii of decoding regions under $\wh\vbfsperp '$ assuming these codewords were transmitted.
The probability that the transmitted $\vbfx_A$ falls into $\Tbad$ is exponentially small. For those $\vbfx_A$ in $\Tgood$, by list decodability, the number of codewords i n  balls centered around any $\vbfx_A\in\Tgood$ of radius $\sqrt{n\wt N'}$ is at most $L$. 
After expurgation with probability $1 - 2^{-\gamma n}$, in expectation, the number of codewords in the decoding ball is at most $L2^{-\gamma n}$. 
To get doubly exponential concentration (which admits a union bound over $ \vs'\in\cS $), we invoke McDiarmid's inequality and show that with  probability $ 1-2^{-\Omega(2^n)} $ over expurgation, the fraction of codewords in $ \Tgood $ that suffer decoding errors (i.e., there exists another codeword in the decoding ball) is exponentially small, or, in $ 1-2^{-\Omega(n)} $ fraction of decoding balls induced by codewords in $ \Tgood $, there will be no codeword other than the transmitted one  that survived the expurgation. 
The analysis is similar to that in \cite{jaggi-langberg-2017-two-way}.

For any vector $ \vx $, define $ \wt\vx\coloneqq(1-\wh\alpha)\vx $. 
For any set $ \cV $, let $ \wt\cV\coloneqq(1-\wh\alpha)\cV $.
Let $ \curbrkt{\vx_i}_{i = 1}^M $  denote $ \cC $. 

\begin{lemma}\label{lem:bounding_pe}
Fix $\vz\in\cC+\cC$ such that $\normtwo{\vz}\in\sqrt{2nP(1\pm\delta)}$. 
Fix $ \vs\in\cS^{n-1}\paren{\vzero, \sqrt{n N}} $. 
Then the fraction of codewords codewords in $ \Tgood $ that may suffer decoding errors is exponentially small with probability doubly exponentially close to 1 over expurgation,
\begin{align}
&\probover{\cC_A,\cC_B}{ \probover{ \wt\vbfx_A'\sim\wt\Tgood\cap\wt\cC_A }{ \wt\vbfx_A'\text{ suffers a decoding error} } > 3L2^{-2n\gamma} } \notag \\
\le& \exp\paren{ -\frac{(1-1/L)^2}{4(1+(L-1)2^{-n\gamma})^2}2^{n(2F_1 - C_{\alpha} - 6\gamma +\beta_2 + \beta_3 + \beta - o(1))} } + \exp\paren{-\frac{1}{12}2^{n(F_1-\gamma-o(1))}}, \notag
\end{align}
where the outer expectation is taken over expurgation and the inner one is taken over uniform distribution on $ \wt\Tgood\cap\wt\cC_A $. 
\end{lemma}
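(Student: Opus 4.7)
The plan is to split the event that the error fraction exceeds the stated threshold into a numerator failure (too many confused codewords) and a denominator failure (too few codewords surviving in $\Tgood$), and to control the former by McDiarmid's inequality (Lemma~\ref{lem:mcdiarmid_ineq}) and the latter by Chernoff (Corollary~\ref{cor:expurgation}).

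For the numerator, I would enumerate $\Tgood\cap\cC=\curbrkt{\vx_i}_{i=1}^{N_0}$ and let $X_1,\ldots,X_{N_0}$ be the independent Bernoulli$(2^{-n\gamma})$ indicators of survival in $\cC_A$. Define $B_i\in\curbrkt{0,1}$ to be $1$ iff $X_i=1$ and some other $\vx_j$ with $X_j=1$ lies in the decoding ball $\cB^n(\wt\vx_i+\vbfsperp,\sqrt{n\wt N'})$, and set $f\coloneqq\sum_i B_i$, the total number of error-suffering codewords in $\Tgood\cap\cC_A$. Since $\cC$ is $(\wt P',\wt N',L)$-list-decodable by construction (Section~\ref{sec:code_design}), each decoding ball contains at most $L$ codewords of $\cC$, and Fact~\ref{fact:exp_ineq} yields
\begin{align*}
\expt{B_i}=2^{-n\gamma}\paren{1-(1-2^{-n\gamma})^{L-1}}\in\sqrbrkt{\frac{(L-1)2^{-2n\gamma}}{1+(L-1)2^{-n\gamma}},\,(L-1)2^{-2n\gamma}}.
\end{align*}
Combined with $N_0\ge 2^{n(F_1-o(1))}$ from Lemma~\ref{lem:many_confusing_pairs} and Lemma~\ref{lem:many_cw_in_strip}, this gives tight two-sided bounds on $\expt{f}$.

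For the Lipschitz constant, flipping a single $X_i$ changes $B_i$ and every $B_j$ whose decoding ball contains $\vx_i$; by $(\wt P',\wt N',L)$-list-decodability applied symmetrically, the number of such $j$ is at most $L$, so $\lip(f)\le L$. McDiarmid then yields an upper-tail bound of the form $\exp\paren{-2\delta^2\expt{f}^2/(|\cC|L^2)}$. Plugging in $|\cC|\le 2^{n(C_\alpha-\beta_2-\beta_3-\beta+\gamma)}$ (the pre-expurgation cardinality demanded by Eqn.~\eqref{eqn:rate_requirement}), the lower bound on $\expt{f}$ above, and choosing $\delta$ so that $(1+\delta)\expt{f}$ matches the threshold $3L\cdot 2^{-2n\gamma}\cdot\tfrac{1}{2}N_0\cdot 2^{-n\gamma}$ reproduces both the prefactor $(1-1/L)^2/\paren{4(1+(L-1)2^{-n\gamma})^2}$ and the exponent $2F_1-C_\alpha-6\gamma+\beta_2+\beta_3+\beta-o(1)$.

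The denominator is handled directly: Corollary~\ref{cor:expurgation} applied to the binomial $|\Tgood\cap\cC_A|=\sum_i X_i$ shows that the event $|\Tgood\cap\cC_A|<\tfrac{1}{2}\expt{|\Tgood\cap\cC_A|}$ has probability at most $\exp\paren{-\tfrac{1}{12}N_0\cdot 2^{-n\gamma}}\le\exp\paren{-\tfrac{1}{12}2^{n(F_1-\gamma-o(1))}}$, which is precisely the second term of the claim. A union bound over the two failure modes finishes the argument. The main obstacle I anticipate is confirming carefully that $\lip(f)\le L$ rather than a larger polynomial in $L$—a slack of $L^2$ in the Lipschitz bound would dilute the McDiarmid exponent enough to break the claimed rate—and cleanly tracking the $\gamma$-bookkeeping so that the exponent settles at exactly $-6\gamma$ as stated.
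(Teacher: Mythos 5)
Your overall architecture matches the paper's: split the error fraction into a numerator (count of error-suffering codewords) and a denominator ($\card{\Tgood\cap\cC_A}$), bound the denominator by Chernoff (Corollary~\ref{cor:expurgation}), bound the numerator via a Lipschitz analysis and McDiarmid, using $(\wt P',\wt N',L)$-list-decodability to get $\lip\le L$. However, there is a genuine gap in the definition of $B_i$ that makes the numerology fail to match the lemma's claimed bound of $3L2^{-2n\gamma}$.

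Your $B_i$ is defined as $1$ iff $X_i=1$ and some other $\vx_j$ with $X_j=1$ lies in the decoding ball, and you only introduce Bernoulli variables $X_i$ for survival in $\cC_A$. What is missing is the second constraint in the paper's graph $\cG$: that $\vx_B\coloneqq\vz-\vx_i$ survives expurgation in $\cC_B$. The paper's $f(\cG)$ carries this as the indicator $\indicator{Y_{j_i}=1}$ alongside $\indicator{X_i=1}$ and the confusability indicator. Without it, you are counting codewords $\vx_i\in\Tgood\cap\cC_A$ with a confusable neighbor, irrespective of whether the pair $(\vx_i,\vz-\vx_i)$ is even a valid transmission given Bob's codebook $\cC_B$. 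This drops a factor of $\probover{\cC_B}{\vz-\vx_i\in\cC_B}=2^{-n\gamma}$ from each term: your $\expt{B_i}=2^{-n\gamma}\paren{1-(1-2^{-n\gamma})^{L-1}}\approx(L-1)2^{-2n\gamma}$, whereas the paper has $2^{-n\gamma}\cdot2^{-n\gamma}\cdot\paren{1-(1-2^{-n\gamma})^{L-1}}\approx(L-1)2^{-3n\gamma}$.

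This single lost factor breaks the McDiarmid step. To land on the stated $3L2^{-2n\gamma}$ bound on the error fraction, given $\card{\Tgood\cap\cC_A}\ge\tfrac{1}{2}N_0 2^{-n\gamma}$ from Chernoff, you need $f\le\tfrac{3}{2}LN_0 2^{-3n\gamma}$, which under the paper's $\expt{f(\cG)}\approx LN_0 2^{-3n\gamma}$ is exactly $\tfrac{3}{2}\expt{f(\cG)}$, i.e.\ $\delta=1/2$. Under your $\expt{f}\approx LN_0 2^{-2n\gamma}$, that target is $\tfrac{3}{2}2^{-n\gamma}\expt{f}<\expt{f}$ for all large $n$, so no $\delta>0$ exists and the upper-tail form of Lemma~\ref{lem:mcdiarmid_ineq} is inapplicable. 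Running your argument honestly would yield an error-fraction bound of order $L2^{-n\gamma}$, not $L2^{-2n\gamma}$, and the squared expectation in the McDiarmid exponent would produce $-4\gamma$ rather than the $-6\gamma$ you correctly flag as the target. (There is also a harmless factor-of-two discrepancy: with the $Y$ variables present the domain of $f$ has $2M$ coordinates rather than $M$, so the McDiarmid denominator is $2ML^2$.) The fix is to introduce independent Bernoulli indicators $Y_j$ for survival in $\cC_B$, include $\indicator{Y_{j_i}=1}$ in the definition of $B_i$ where $\vx_{j_i}=\vz-\vx_i$, and note that flipping any single $Y_j$ changes $f$ by at most $1$, so $\lip(f)\le L$ is preserved.
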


\begin{proof}
For $\vz\in\cC+\cC$ such that $ \normtwo{\vz}\in\sqrt{2nP(1\pm\delta)} $ and $\wh\vsperp'\in\cB^n\paren{\vzero,\sqrt{n\wt N'}} $, consider a directed graph $\cG(\cC,\cC_A,\cC_B,\vz,\wh\vsperp')$ with vertices $\cV=\cC$. There is an edge $\vx_A\to\vx_A'$ for $\vx_A\ne\vx_A'$ iff 
\begin{enumerate}
    \item $\vx_A\in\Tgood\cap\cC_A$ ($\vx_A$ is not expurgated in the construction of $\cC_A$);
    \item 
    $\vx_B \coloneqq \vz - \vx_A\in\cC_B$ ($\vx_B$ is not expurgated in the construction of $\cC_B$); (Note that $ \vx_B $ is guaranteed to be inside $ \Tgood $ if $ \vx_A\in\Tgood $.)
    \item $\vx_A'\in\cC_A$ and $ \normtwo{\wt\vy_B - \wt\vx_A'} = \normtwo{\wt\vx_A+\wh\vsperp'-\wt\vx_A'}\le\sqrt{n\wt N'}$ (there is an  $\wt\vx_A'\in\wt\cC_A$ confusable with $\wt\vx_A$).
\end{enumerate}
For $ \vbfx_A $ uniformly distributed in $ \Tgood\cap\cC_A $, the probability that it incurs a decoding error is given by the following ratio 
\begin{align*}
    \frac{\card{\curbrkt{\vx_A\in\Tgood\cap\cC_A\text{ that suffers a decoding error}}}}{\card{\Tgood\cap\cC_A}}\le&\frac{f(\cG)}{\card{\Tgood\cap\cC_A}},
\end{align*}
where 
\[f(\cG)\coloneqq\card{\curbrkt{\vx_A\in \cG\colon \outdeg(\vx_A)>0}}.\]
We first bound the denominator. Before expurgation, by Lemma 
\ref{lem:many_confusing_pairs}, \ref{lem:many_cw_in_strip} and 
\ref{lem:xa_ip_zprime}, we have
\begin{align}
\card{\Tgood\cap\cC} =& \card{ \cT\cap\cC }\paren{ 1 - 2^{-n(f_2-o(1))} } \notag  \\
\ge& \card{\ufo\cap\cC}\paren{ 1 - 2^{-n(f_1-o(1))} }\paren{ 1 - 2^{-n(f_2-o(1))} } \notag \\
\ge& 2^{n(F_1-o(1))} \paren{ 1 - 2^{-n(f_1-o(1))} }\paren{ 1 - 2^{-n(f_2-o(1))} } \label{eqn:size_tgood_intersection_c}  \\
\doteq& 2^{nF_1}. \notag
\end{align}
Since 
\begin{align}
\exptover{\cC_A}{\card{\Tgood\cap\cC_A}} =& \card{\Tgood\cap\cC}2^{-n\gamma}, \notag
\end{align}
by Chernoff bound (Corollary \ref{cor:expurgation}), we have
\begin{align}
\probover{\cC_A}{ \card{ \Tgood\cap\cC_A }\le\frac{1}{2}\card{\Tgood\cap\cC}2^{-n\gamma} }\le&  \exp\paren{ - \frac{1}{12}\card{\Tgood\cap\cC}2^{-n\gamma} } \notag \\
\le& \exp\paren{-\frac{1}{12}2^{n(F_1-\gamma-o(1))}}. \label{eqn:conc_tgood_intersection_c}
\end{align}

We then bound $ f(\cG) $. To this end, let us compute the expected value of $ f(\cG) $.
Note that 
\begin{align}
f(\cG) =& \sum_{\vx_A\in\Tgood\cap\cC}\indicator{\vx_A\in\cC_A} \indicator{\vz-\vx_A\in\cC_B} \indicator{ \exists\wt\vx_A'\ne\wt\vx_A,\;\wt\vx_A'\in\cB^n\paren{ \wt\vx_A + \wh\vsperp' ,\sqrt{n\wt N'} }\cap\wt\cC_A }. \notag
\end{align}
Now
\begin{align}
\expt{f(\cG)} =& \sum_{\vx_A\in\Tgood\cap\cC} \probover{\cC_A,\cC_B}{ \curbrkt{\vx_A\in\cC_A}\cap\curbrkt{\vz - \vx_A\in\cC_B}\cap\curbrkt{\exists\wt\vx_A'\ne\wt\vx_A,\;\wt\vx_A'\in\cB^n\paren{ \wt\vx_A + \wh\vsperp',\sqrt{n\wt N'} }\cap\wt\cC_A} } \label{eqn:lin_expt} \\
=& \sum_{\vx_A\in\Tgood\cap\cC}\probover{\cC_A}{\vx_A\in\cC_A} \probover{\cC_B}{\vz - \vx_A\in\cC_B} \probover{\cC_A}{ \exists\wt\vx_A'\ne\wt\vx_A,\;\wt\vx_A'\in\cB^n\paren{ \wt\vx_A + \wh\vsperp',\sqrt{n\wt N'} }\cap\wt\cC_A } \label{eqn:indep} \\
=& \card{ \Tgood\cap\cC }2^{-n\gamma}2^{-n\gamma}\paren{ 1 - \paren{1 - 2^{-n\gamma}}^{L-1} } \notag \\
\le& \card{\Tgood\cap\cC}2^{-2n\gamma}\paren{1 - \paren{1 - L2^{-n\gamma}}} \label{eqn:apply_exp_ineq} \\
=& L\card{\Tgood\cap\cC}2^{-3n\gamma}, \label{eqn:ub_on_exptf} 
\end{align}
where Equality \eqref{eqn:lin_expt} is by linearity of expectation, Equality \eqref{eqn:indep} follows since $ \cC_A $ and $ \cC_B $ are obtained by independent expurgation and each codeword is expurgated independently. 
Inequality \eqref{eqn:apply_exp_ineq} is by Fact \ref{fact:exp_ineq}. 

Using Fact \ref{fact:exp_ineq}, we can also get a lower bound on $ \expt{f(\cG)} $,
\begin{align}
\expt{f(\cG)}=& \card{ \Tgood\cap\cC }2^{-2n\gamma}\paren{ 1 - \paren{1 - 2^{-n\gamma}}^{L-1} } \notag \\
\ge& \card{ \Tgood\cap\cC }2^{-2n\gamma} \paren{ 1 - \frac{1}{1+(L - 1)2^{-n\gamma}} } \notag \\
=& (L-1)\card{ \Tgood\cap\cC } \frac{2^{-3n\gamma}}{1+(L-1)2^{-n\gamma}}. \label{eqn:lb_on_exptf}
\end{align}

We next  argue that $f$ is actually Lipschitz. Think the expurgation process as picking each codeword in $ \cC $ independently into $\cC_A$ and $\cC_B$ with probability $ 2^{-n\gamma} $. 
Define, for $i\in[M]$,
\begin{align}
X_i\coloneqq&\indicator{\vx_i\in\cC_A}\sim\bern\paren{2^{-n\gamma}}, \notag \\
Y_i\coloneqq&\indicator{\vx_i\in\cC_B}\sim\bern\paren{2^{-n\gamma}}. \notag 
\end{align}
Note that all $X_i$'s and $Y_i$'s are independent. Now $f$ can be written as
\begin{align}
f(X_1,\cdots,X_{M}, Y_1,\cdots,Y_M) =& \sum_{\substack{i\in[M]\\{\vx}_i\in\cT}}\indicator{X_i = 1} \indicator{Y_{j_i} = 1} \indicator{\exists i'\ne i,\;\wt\vx_{i'}\in\cB^n\paren{\wt\vx_i+ \wh\vsperp',\sqrt{n\wt N'}},\; X_{i'} = 1} \label{eqn:def_ji} \\
=& \sum_{\substack{i\in[M]\\{\vx}_i\in\cT}}\indicator{X_i = 1} \indicator{Y_{j_i} = 1} \indicator{\bigcup_{\substack{i'\ne i\\ {\wt\vx}_{i'}\in\cB^n\paren{\wt\vx_i + \wh\vsperp',\sqrt{n\wt N'}} }} \curbrkt{X_{i'} = 1}} \notag \\
=& \sum_{\substack{i\in[M]\\{\vx}_i\in\cT}} X_i\AND Y_{j_i}\AND \paren{\mathop{\OR}_{\substack{i'\ne i\\ {\wt\vx}_{i'}\in\cB^n\paren{\wt\vx_i,\sqrt{n\wt N'}}}} X_{i'}}, \label{eqn:or_mod2} 
\end{align}
where in Equality \eqref{eqn:def_ji},  $j_i\in[M]$ is such that $\vx_{j_i}=\vz-\vx_i$, and in Equality \eqref{eqn:or_mod2}, $\AND$ and $ \OR $  are taken over $\bF_2$, but the summation is still taken over $ \bZ $ as usual. For any $i$, if we flip $X_i$, $f$ can change by at most
\begin{align}
    \abs{f(X_1,\cdots,X_i = 0,\cdots,X_M, Y_1,\cdots,Y_M)-f(X_1,\cdots,X_i = 1,\cdots,X_M, Y_1,\cdots,Y_M)}\le&L, \notag
\end{align}
since $\vx_i$ can lie in the lists of radius $ \sqrt{n\wt N'} $ of at most $L$ codewords, corresponding to the third factor of the summand of Eqn. \eqref{eqn:or_mod2}. 
For any $i$, if we flip $Y_i$, $f$ can change by at most
\begin{align*}
    \abs{f(X_1,\cdots,X_M,Y_1,\cdots,Y_i = 0,\cdots,Y_M)-f(X_1,\cdots,X_M,Y_1,\cdots,Y_i = 1,\cdots, Y_M)}\le&1
\end{align*}
since it only appears as the second factor in the summand of Eqn. \eqref{eqn:or_mod2}.
Therefore, $ \lip(f)\le L $. 

Now we can apply McDiarmid's inequality (Lemma \ref{lem:mcdiarmid_ineq}) to get a doubly exponential concentration bound on  $f$.
\begin{align}
	\probover{\cC_A,\cC_B}{ f(\cG)\ge \frac{3}{2}L\card{\Tgood\cap\cC}2^{-3n\gamma} } 
	\le& \probover{\cC_A,\cC_B}{ f(\cG)\ge\frac{3}{2}\expt{f(\cG)} } \label{eqn:apply_ub_on_exptf} \\
    \le&\exp\paren{ -\frac{ 2(1/2)^2 \expt{f}^2}{2ML^2} } \notag \\
    \le& \exp\paren{ -\frac{\paren{(L-1)\card{ \Tgood\cap\cC } \frac{2^{-3n\gamma}}{1+(L-1)2^{-n\gamma}}}^2}{4\cdot2^{n(C_{\alpha} - \beta_2 - \beta_3 - \beta)}L^2} } \label{eqn:apply_lb_on_exptf} \\
    =& \exp\paren{ -\frac{(L-1)^2\card{\Tgood\cap\cC}^22^{-6n\gamma}}{4L^2(1+(L-1)2^{-n\gamma})^22^{n(C_{\alpha} - \beta_2 - \beta_3 - \beta)}} } \notag \\
    \le&  \exp\paren{ -\frac{(1-1/L)^2}{4(1+(L-1)2^{-n\gamma})^2}2^{n(2F_1 - C_{\alpha} - 6\gamma +\beta_2 + \beta_3 + \beta - o(1))} } , \label{eqn:apply_size_of_tgood_intersection_c} 
\end{align}
where Inequalities \eqref{eqn:apply_ub_on_exptf} and  \eqref{eqn:apply_lb_on_exptf} are by  Inequalities \eqref{eqn:ub_on_exptf} and \eqref{eqn:lb_on_exptf}, respectively; 
Inequality \eqref{eqn:apply_size_of_tgood_intersection_c} is by Inequality \eqref{eqn:size_tgood_intersection_c}. 
The exponent of bound \eqref{eqn:apply_size_of_tgood_intersection_c} can be made exponentially large by taking sufficiently small $ \tau $. 
Indeed, observe that the exponent is at least
\begin{align}
2F_1 - C_{\alpha} - 6\gamma +\beta_2 + \beta_3 + \beta \ge& 2F_1 - C_{\alpha} - 6 \notag \\
=& \log\paren{\frac{P}{2} - \comega} + \log\frac{1}{\tau} - C_{\alpha} - 6.  \notag 
\end{align}
To make the exponent $ 2F_1 - C_{\alpha} - 6\gamma +\beta_2 + \beta_3 + \beta $ positive, it suffices to take 
\begin{align}
\tau < 2^{-\paren{ C_{\alpha} + 6 - \log\paren{\frac{P}{2} - \comega} }}. \notag
\end{align}

Finally, combining Inequalities \eqref{eqn:conc_tgood_intersection_c} and \eqref{eqn:apply_size_of_tgood_intersection_c}, we have
\begin{align}
\probover{\cC_A,\cC_B}{ \frac{f(\cG)}{\card{\Tgood\cap\cC_A}} \ge 3L2^{-2n\gamma} }\le& \probover{\cC_A,\cC_B}{ f(\cG)\ge \frac{3}{2}L\card{\Tgood\cap\cC}2^{-3n\gamma} }  + \probover{\cC_A}{ \card{ \Tgood\cap\cC_A }\le\frac{1}{2}\card{\Tgood\cap\cC}2^{-n\gamma} } \notag \\
\le& \exp\paren{ -\frac{(1-1/L)^2}{4(1+(L-1)2^{-n\gamma})^2}2^{n(2F_1 - C_{\alpha} - 6\gamma +\beta_2 + \beta_3 + \beta - o(1))} } + \exp\paren{-\frac{1}{12}2^{n(F_1-\gamma-o(1))}}. \notag
\end{align}
\end{proof}
The proof of achievability can be finished by taking  a union bound over $ \vs'\in\cS $ where $ |\cS| = 2^{\cO(n)} $.




\subsection{Improved analysis for  sumset property}\label{sec:improved_sumset}
In this section, we show that one can get rid of the technical condition for sumset property that the covering radius of the underlying lattice is small. 
We prove high-probability bounds over random lattice construction and message selection. 

We use a random nested Construction-A lattice pair with fine lattice $\Lf$ lifted from a $q$-ary $k$-dimensional random linear code and a coarse lattice $\Lc$ that is good for covering. 
Specifically, fix a coarse lattice $\Lc$ with $ \rcov(\Lc) = \sqrt{nP} $ such that 
\begin{align}
\frac{\rcov(\Lc)}{\reff(\Lc)} = \frac{\sqrt{nP}}{\reff(\Lc)} = 1+ \eps_n',\notag
\end{align}
for some $ \eps_n'\xrightarrow{n\to\infty}0 $.
Choose $k$ such that $q^k = 2^{nR}$.
Let $ \bfG' $ be a random matrix uniformly distributed in $ \bF_q^{n\times k} $. 
Define linear code $ \cC' $ generated by $ \bfG' $ as $ \cC'\coloneq\bfG'\bF_q^{k} $.
Define the lattice $ \Lf' $ lifted from $ \cC' $ via Construction-A as $ \Lf'\coloneq\frac{1}{q}\Phi(\cC') + \bZ^n $ where $\Phi$ denotes the natural embedding from $ \bF_q $ to $\bZ$.
Rotate $ \Lf' $ using any generator matrix  $ \bfG_0 $ of $ \Lc $ and obtain the fine lattice $ \Lf \coloneq \bfG_0\Lf' $. 
Finally, define the nested Construction-A lattice code $ \cC $ as $\cC \coloneq \Lf\cap\cV(\Lc)$. 
Let $\phi$ denote the encoding map associated to $\cC$.
\begin{lemma}\label{lem:lattice_pt_norm_improved}
For any $\zeta \in(0,3/4) $ and $m\in\curbrkt{1,2,\cdots,2^{nR}}$, 
\begin{align}
\probover{\cC}{\normtwo{\phi(m)}\le\sqrt{nP(1-\zeta)}}\le& 2^{-n\paren{ \zeta/2 - 4/q - \eps_n' }} .\notag
\end{align}
\end{lemma}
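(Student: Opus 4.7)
The plan is to reduce this to a lattice-point-counting problem via Lemma~\ref{lem:improved_lattice_pt_bd}. The key observation is that although the statement looks like a probability over the random code $\cC$, for a fixed message the distribution of the codeword is completely determined by the random linear code, and we can describe it explicitly.

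First I would identify $m$ with a nonzero $\vw \in \bF_q^k$ and note that since $\bfG'$ is uniform over $\bF_q^{n\times k}$, the vector $\bfG'\vw$ is uniform on $\bF_q^n$. Hence $\frac{1}{q}\Phi(\bfG'\vw)$ is uniform on the $q^n$-point grid $\frac{1}{q}\{0,1,\ldots,q-1\}^n$, and after applying $\bfG_0$ and reducing modulo $\Lc$, the codeword $\phi(m)$ is uniformly distributed on a set $S\subseteq\cV(\Lc)$ consisting of exactly one representative from each of the $q^n$ cosets of $\Lc$ in $\tfrac{1}{q}\Lc$. With a fixed tiebreaking rule on $\partial\cV(\Lc)$ we may take $S = \tfrac{1}{q}\Lc \cap \cV(\Lc)$, so in particular $S \subseteq \tfrac{1}{q}\Lc$.

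Next I would write
\[
\probover{\cC}{\normtwo{\phi(m)}\le\sqrt{nP(1-\zeta)}}
= \frac{\card{S\cap \cB^n(\vzero,\sqrt{nP(1-\zeta)})}}{q^n}
\le \frac{1}{q^n}\card{\tfrac{1}{q}\Lc\cap\cB^n(\vzero,\sqrt{nP(1-\zeta)})}.
\]
Applying Lemma~\ref{lem:improved_lattice_pt_bd} with $r=\sqrt{nP(1-\zeta)}$ and plugging in $\rcov(\Lc)=\sqrt{nP}$, $\reff(\Lc)=\sqrt{nP}/(1+\eps_n')$ gives
\[
\frac{1}{q^n}\card{\tfrac{1}{q}\Lc\cap\cB^n(\vzero,\sqrt{nP(1-\zeta)})}
\le \left((1+\eps_n')\sqrt{1-\zeta}+\frac{1+\eps_n'}{q}\right)^{\!n}
= (1+\eps_n')^n\!\left(\sqrt{1-\zeta}+\tfrac{1}{q}\right)^{\!n}.
\]

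Finally I would convert this into the desired exponential form by using $\sqrt{1-\zeta}\le 1-\zeta/2$, and then Fact~\ref{fact:log_ineq}: $\log(1+\eps_n')\le 2\eps_n'$ and $\log(1-x)\le -x$ applied to $x = \zeta/2 - 1/q$ (which is positive for the relevant regime, since we may assume $1/q$ is much smaller than $\zeta/2$; if not, the claimed bound is vacuous). The resulting exponent is $-n(\zeta/2 - 1/q - 2\eps_n')$, which is $\le -n(\zeta/2 - 4/q - \eps_n')$ after absorbing slack into the $4/q$ term, matching the claim.

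The only delicate point is Step~1: making precise that the distribution of $\phi(m)$ over random $\bfG'$ is exactly uniform on a set of coset representatives chosen inside $\cV(\Lc)$. This is a routine but slightly fiddly consequence of the linearity of Construction-A and the uniformity of $\bfG'\vw$ for nonzero $\vw$; the ball-covering bound from Lemma~\ref{lem:improved_lattice_pt_bd} then does all the real work, and the covering-goodness hypothesis on $\Lc$ is needed precisely to keep the $(1+\eps_n')^n$ factor under control.
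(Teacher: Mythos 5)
Your proof is correct and takes essentially the same route as the paper: observe that $\phi(m)$ is uniform on $\frac{1}{q}\Lc\cap\cV(\Lc)$, drop the $\cV(\Lc)$ constraint, apply Lemma~\ref{lem:improved_lattice_pt_bd} to bound the lattice-point count, and convert the resulting $(1+\eps_n')^n\bigl(\sqrt{1-\zeta}+1/q\bigr)^n$ into exponential form via Fact~\ref{fact:log_ineq}. The only cosmetic difference is in the final algebra, where the paper factors out $\sqrt{1-\zeta}$ and invokes $\zeta<3/4$ to obtain the $4/q$ slack directly, whereas you linearize $\sqrt{1-\zeta}\le 1-\zeta/2$ and absorb the remaining slack into the $4/q$ term; both steps are equally loose about the constant in front of $\eps_n'$, and both land in the same place.
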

\begin{proof}
Let $\vbfx \coloneq \phi(m)$. By code construction, $\vbfx$ is uniformly distributed in $\frac{1}{q}\Lc\cap\cV(\Lc)$. Therefore,
\begin{align}
\probover{\cC}{\normtwo{\vbfx}\le\sqrt{nP(1-\zeta)}}=&\frac{\card{\frac{1}{q}\Lc\cap\cB^n\paren{\vzero,\sqrt{nP(1-\zeta)}}\cap\cV(\Lc)}}{q^n}\notag\\
\le&\frac{\card{\frac{1}{q}\Lc\cap\cB^n\paren{\vzero,\sqrt{nP(1-\zeta)}}}}{q^n}\notag\\
\le&\frac{V_n}{\vol(\Lc)}\paren{\sqrt{nP(1-\zeta)}+\frac{\rcov(\Lc)}{q}}^n\label{eqn:apply_counting_bd}\\
=&\frac{V_n\rcov(\Lc)^n}{\vol(\Lc)}\paren{\sqrt{1-\zeta}+\frac{1}{q}}^n\notag\\
=&\frac{\vol\paren{\cB^n\paren{\rcov}}}{\vol\paren{\cB^n\paren{\reff}}}\paren{\sqrt{1-\zeta}+\frac{1}{q}}^n\notag\\
=&\paren{\frac{\rcov}{\reff}}^n\paren{\sqrt{1-\zeta}+\frac{1}{q}}^n\notag\\
=&(1+\eps_n')^n\paren{\sqrt{1-\zeta}+\frac{1}{q}}^n\notag\\
=&2^{-n\paren{\log\frac{1}{\sqrt{1-\zeta}+1/q}-\log(1+\eps_n')}}\notag \\
\le& 2^{-n\paren{ \zeta/2 - 4/q - \eps_n' }} ,\label{eqn:apply_alg_ineq}
\end{align}
where Inequality \eqref{eqn:apply_counting_bd} is by Lemma \ref{lem:improved_lattice_pt_bd}.
The last inequality \eqref{eqn:apply_alg_ineq} follows since
\begin{align}
-\log\frac{1}{\sqrt{1-\zeta}+1/q}+\log(1+\eps_n') =& \log\paren{ \sqrt{1-\zeta} + 1/q } + \log(1+\eps_n') \notag \\
\le& \frac{1}{2}\log\paren{1-\zeta} + \log\paren{ 1+\frac{1}{q\sqrt{1-\zeta}} } + 2\eps_n' \label{eqn:apply_alg_bd1} \\
\le& -\zeta/2 + \frac{2}{q\sqrt{1-\zeta}} + \eps_n' \label{eqn:apply_alg_bd2} \\
\le& -\zeta/2 + 4/q + \eps_n', \label{eqn:choose_zeta}
\end{align}
where Inequalities \eqref{eqn:apply_alg_bd1} and \eqref{eqn:apply_alg_bd2} follows from Fact \ref{fact:log_ineq} and Inequality \eqref{eqn:choose_zeta} is by $ \zeta<3/4 $.
\end{proof}

\begin{lemma}\label{lem:lattice_pt_orthogonal_improved}
Fix any $\zeta \in(0,3/4) $. If $\bfm_1,\bfm_2$ are two uniform messages from $\curbrkt{1,2,\cdots,2^{nR}}$, then
\begin{align}
\probover{\bfm_1,\bfm_2,\cC}{\abs{\cos\angle_{\phi(\bfm_1),\phi(\bfm_2)}}\ge \zeta}\le&2\cdot2^{-n\paren{ \zeta^2/2 - 4/q - \eps_n' }}.\notag
\end{align}
\end{lemma}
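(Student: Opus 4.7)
The plan is to adapt the proof of Lemma \ref{lem:lattice_pt_norm_improved} to the pairwise setting. The essential new ingredient is the pairwise independence inherent in random Construction-A lattice codes: when the messages $\bfm_1,\bfm_2$, identified with vectors in $\bF_q^k$, are linearly independent over $\bF_q$, the pair $(\bfG'\bfm_1,\bfG'\bfm_2)$ is uniformly distributed on $\bF_q^n\times\bF_q^n$ over the random choice of generator $\bfG'$. This implies that $(\phi(\bfm_1),\phi(\bfm_2))$ is independent with each marginal uniform on $\frac{1}{q}\Lc\cap\cV(\Lc)$. For two uniform messages in $[q^k]$, the probability of $\bF_q$-linear dependence is at most $q^{-(k-1)}$, which is doubly exponentially small and can be absorbed into the final bound.

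Conditioning on the pairwise independence event and on $\phi(\bfm_1)=\vx_1$, the probability reduces to
\begin{align*}
\frac{\card{\curbrkt{\vx\in\frac{1}{q}\Lc\cap\cV(\Lc) \colon \abs{\cos\angle_{\vx_1,\vx}}\ge\zeta}}}{q^n}.
\end{align*}
Since $\cV(\Lc)\subseteq\cB^n(\vzero,\rcov(\Lc))=\cB^n(\vzero,\sqrt{nP})$, the numerator is upper bounded by the count of points of $\frac{1}{q}\Lc$ in the union of the two spherical caps of $\cB^n(\vzero,\sqrt{nP})$ about $\pm\vx_1$ with half-angle $\arccos\zeta$. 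The diameter argument already used in the proof of Lemma \ref{lem:lattice_pt_orthogonal} shows that each such cap is contained in a Euclidean ball of radius $\sqrt{nP(1-\zeta^2)}$.

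The endgame is to apply Lemma \ref{lem:improved_lattice_pt_bd} with $r=\sqrt{nP(1-\zeta^2)}$ to each of the two containing balls. Using $\rcov(\Lc)/\reff(\Lc)=1+\eps_n'$, the count of points of $\frac{1}{q}\Lc$ in each ball is at most $q^n(1+\eps_n')^n(\sqrt{1-\zeta^2}+1/q)^n$; doubling for the two caps and dividing by $q^n$ yields a conditional probability at most $2(1+\eps_n')^n(\sqrt{1-\zeta^2}+1/q)^n$. The target bound $2\cdot 2^{-n(\zeta^2/2-4/q-\eps_n')}$ then follows from the same algebraic manipulation as in the proof of Lemma \ref{lem:lattice_pt_norm_improved}, invoking Fact \ref{fact:log_ineq} with the parameter $\zeta$ of that argument replaced by $\zeta^2$ (which lies in $(0,9/16)\subset(0,3/4)$ by the hypothesis $\zeta<3/4$). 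The step I expect to be the main obstacle is formalizing the pairwise independence claim: it requires carefully pinning down the identification of $[2^{nR}]$ with $\bF_q^k$ and tracking the contribution of linearly dependent message pairs. The geometric and lattice-counting steps are direct analogs of those already carried out in Lemma \ref{lem:lattice_pt_norm_improved}.
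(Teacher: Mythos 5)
Your proposal follows essentially the same route as the paper: reduce, via pairwise independence of $\phi(\bfm_1)$ and $\phi(\bfm_2)$, to counting points of $\frac{1}{q}\Lc$ in a double cone inside $\cB^n(\vzero,\sqrt{nP})$, contain each cap in a ball of radius $\sqrt{nP(1-\zeta^2)}$, and apply Lemma~\ref{lem:improved_lattice_pt_bd} followed by the algebra of Lemma~\ref{lem:lattice_pt_norm_improved} with $\zeta^2$ in place of $\zeta$. Your explicit treatment of the $\bF_q$-linearly dependent message pairs --- observing that $\bfG'\mapsto(\bfG'\bfm_1,\bfG'\bfm_2)$ is a surjective linear map precisely when $\bfm_1,\bfm_2$ are independent --- is a genuine refinement: the paper merely asserts that $\vbfx_1,\vbfx_2$ are independent, which is not literally true (e.g.\ when $\bfm_1=\bfm_2$). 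One correction, though: $q^{-(k-1)} = q\cdot q^{-k} = q\cdot 2^{-nR}$ is \emph{singly} exponentially small in $n$, not doubly exponentially small, so absorbing it into the stated bound $2\cdot 2^{-n(\zeta^2/2-4/q-\eps_n')}$ is not automatic. To make the absorption rigorous you would need to exhibit slack in the exponent (for instance, the step $\log(1+\eps_n')\le 2\eps_n'$ from Fact~\ref{fact:log_ineq} leaves about $n\eps_n'$ of headroom relative to the $\eps_n'$ appearing in the stated bound) or impose a mild lower bound on $R$ relative to $\zeta^2/2$, which holds in the regime where this lemma is invoked but is not part of the lemma's hypotheses.
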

\begin{proof}
Let $\vbfx_1 \coloneq \phi(\bfm_1),\vbfx_2 \coloneq \phi(\bfm_2)$. By the choice of $\bfm_1,\bfm_2$ and the code design, $\vbfx_1$ and $\vbfx_2$ are independent and uniformly distributed in $\frac{1}{q}\Lc\cap\cV(\Lc)$. For any $\vx\in\cV$, define a cone $\cT_{\vx}$ as
\begin{align}
\cT_{\vx}\coloneqq&\curbrkt{\vv\in\bR^n\colon \abs{\angle_{\vx,\vv}}\ge \zeta}.\notag
\end{align}
Now,
\begin{align}
\probover{\vbfx_1,\vbfx_2,\cC}{\abs{ \cos\paren{\angle_{\vbfx_1,\vbfx_2}} }\ge \zeta}=&\exptover{\vbfx_1}{\frac{\card{\frac{1}{q}\Lc\cap\cT_{\vbfx_1}\cap\cV(\Lc)}}{q^n}}\notag\\
\le&{\frac{\card{\frac{1}{q}\Lc\cap\cT_{\vx_1}\cap\cB^n\paren{\vzero,\sqrt{nP}}}}{q^n}}\label{eqn:improved_fig1}\\
\le&\frac{2\card{\frac{1}{q}\Lc\cap\cB^n\paren{\sqrt{nP(1-\zeta^2)}}}}{q^n}\label{eqn:improved_fig2}\\
\le&\frac{2V_n}{\vol(\Lc)}\paren{\sqrt{nP(1-\zeta^2)}+\frac{\rcov(\Lc)}{q}}^n\notag\\
\le&2\cdot2^{-n\paren{ \zeta^2/2 - 4/q - \eps_n' }},\notag
\end{align}
where Inequalities \eqref{eqn:improved_fig1} and \eqref{eqn:improved_fig2} are illustrated in Fig. \ref{fig:improved_sumset} and $ \vx_1 $ in Inequality \eqref{eqn:improved_fig1} can be taken to be any vector in $ \frac{1}{q}\Lc \cap \cV(\Lc) $. 
\begin{figure}[htbp]
	\centering
	\includegraphics[width=0.5\textwidth]{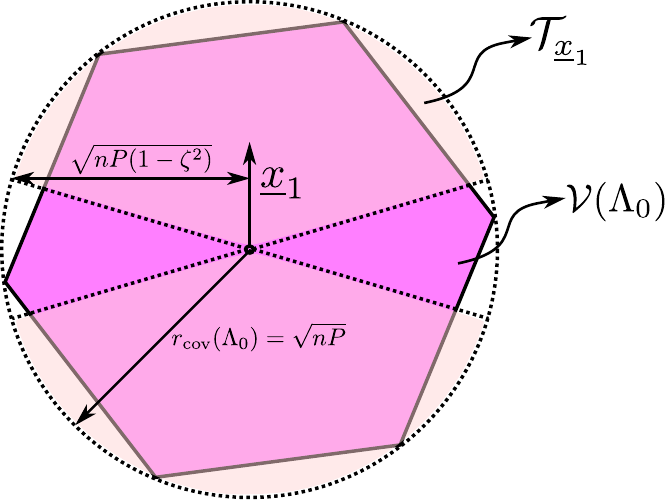}
	\caption{In a random nested Construction-A lattice code, two random codewords are approximately orthogonal to each other w.h.p. over lattice construction and message selection.}
	\label{fig:improved_sumset}
\end{figure}
\end{proof}

Similar to Lemma \ref{lem:xa_ip_xb}, we immediately get the following corollary.
\begin{corollary}\label{cor:xa_ip_xb_improved}
Fix any $\zeta \in(0,3/4) $. If $\bfm_1,\bfm_2$ are two uniform messages from $\curbrkt{1,2,\cdots,2^{nR}}$, then
\begin{align}
\probover{\bfm_1,\bfm_2,\cC}{\abs{\inprod{\phi(\bfm_1)}{\phi(\bfm_2)}}\ge nP\zeta}\le&2\cdot2^{-n\paren{ \zeta^2/2 - 4/q - \eps_n' }}.\notag
\end{align}
\end{corollary}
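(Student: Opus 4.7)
The plan is to reduce the inner-product bound to the cosine bound already established in Lemma \ref{lem:lattice_pt_orthogonal_improved}, exactly in the spirit of the earlier reduction from Lemma \ref{lem:lattice_pt_orthogonal} to Lemma \ref{lem:xa_ip_xb}. First, I would observe that by the nested Construction-A code design, the codewords $\phi(\bfm_1),\phi(\bfm_2)$ lie in $\frac{1}{q}\Lc\cap\cV(\Lc)$, hence in particular inside $\cB^n(\vzero,\rcov(\Lc))=\cB^n(\vzero,\sqrt{nP})$. So $\normtwo{\phi(\bfm_j)}\le\sqrt{nP}$ for $j=1,2$ deterministically.

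Next, writing $\inprod{\phi(\bfm_1)}{\phi(\bfm_2)}=\normtwo{\phi(\bfm_1)}\cdot\normtwo{\phi(\bfm_2)}\cdot\cos\angle_{\phi(\bfm_1),\phi(\bfm_2)}$ and applying the deterministic norm bound yields
\begin{align*}
\abs{\inprod{\phi(\bfm_1)}{\phi(\bfm_2)}}\le nP\cdot\abs{\cos\angle_{\phi(\bfm_1),\phi(\bfm_2)}}.
\end{align*}
Therefore the event $\{\abs{\inprod{\phi(\bfm_1)}{\phi(\bfm_2)}}\ge nP\zeta\}$ is contained in $\{\abs{\cos\angle_{\phi(\bfm_1),\phi(\bfm_2)}}\ge\zeta\}$. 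Invoking Lemma \ref{lem:lattice_pt_orthogonal_improved} immediately yields the claimed bound $2\cdot2^{-n(\zeta^2/2-4/q-\eps_n')}$.

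There is essentially no obstacle: the key geometric content (approximate orthogonality of random codewords from the nested Construction-A ensemble) was already extracted in Lemma \ref{lem:lattice_pt_orthogonal_improved}, and the passage from cosine to inner product is a one-line application of Cauchy--Schwarz using the coarse lattice's covering-radius-based norm bound $\normtwo{\phi(\bfm_j)}\le\sqrt{nP}$. The only minor technical point to be careful about is that this norm bound comes from the fact that $\cV(\Lc)\subseteq\cB^n(\vzero,\rcov(\Lc))$ by definition of the covering radius, which is a property of the fixed coarse lattice $\Lc$ rather than something requiring the random linear code ensemble.
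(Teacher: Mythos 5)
Your proposal is correct and matches the paper's intended argument: the paper simply states ``Similar to Lemma \ref{lem:xa_ip_xb}, we immediately get the following corollary,'' and your reduction from inner product to cosine via the deterministic norm bound $\normtwo{\phi(\bfm_j)}\le\rcov(\Lc)=\sqrt{nP}$, followed by an invocation of Lemma \ref{lem:lattice_pt_orthogonal_improved}, is exactly that argument. The only minor quibble is your description of the step as Cauchy--Schwarz; it is really the identity $\inprod{\vx}{\vy}=\normtwo{\vx}\normtwo{\vy}\cos\angle_{\vx,\vy}$ together with the norm bound, which is what the displayed computation correctly uses.
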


Similar to Lemma \ref{lem:length_z}, we get sumset property using the Lemma \ref{lem:lattice_pt_norm_improved} and \ref{cor:xa_ip_xb_improved}.
\begin{lemma}\label{lem:length_z_improved}
Fix any $ \delta\in(0,3/4) $, $ \lambda\in(0,\delta) $, let $ \zeta\coloneqq\delta-\lambda $.
If $\bfm_1,\bfm_2$ are two uniform messages from $\curbrkt{1,2,\cdots,2^{nR}}$, then
\begin{align}
\probover{\bfm_1,\bfm_2,\cC}{\normtwo{\phi(\bfm_1)+\phi(\bfm_2)}\notin\sqrt{2nP(1\pm\delta)}}\le& 2\cdot2^{-n\paren{ \lambda/2 - 4/q - \eps_n' }} + 2\cdot2^{-n\paren{ \zeta^2/2 - 4/q - \eps_n' }}.\notag
\end{align}
\end{lemma}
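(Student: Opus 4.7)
The plan is to mimic the structure of the proof of Lemma \ref{lem:length_z} verbatim, but substituting in the sharper single-codeword and inner-product tail bounds developed just above (Lemma \ref{lem:lattice_pt_norm_improved} and Corollary \ref{cor:xa_ip_xb_improved}). The covering-radius dependence that plagued the original analysis is absorbed into the $ 4/q + \eps_n' $ slack in those lemmas, so the argument goes through without any genuinely new ideas.

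First, I would split the event $ \{\normtwo{\phi(\bfm_1)+\phi(\bfm_2)}\notin\sqrt{2nP(1\pm\delta)}\} $ according to whether or not at least one codeword is noticeably short. Writing $ \vbfx_1 \coloneq \phi(\bfm_1) $, $ \vbfx_2 \coloneq \phi(\bfm_2) $, and introducing the auxiliary event
\[
\cA \coloneq \curbrkt{\normtwo{\vbfx_1}\le\sqrt{nP(1-\lambda)}}\cup\curbrkt{\normtwo{\vbfx_2}\le\sqrt{nP(1-\lambda)}},
\]
the union bound gives $ \prob{\cdot} \le \prob{\cA} + \prob{\{\cdot\}\cap\cA^c} $. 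By a union bound over the two codewords and Lemma \ref{lem:lattice_pt_norm_improved}, $ \prob{\cA} \le 2\cdot 2^{-n(\lambda/2 - 4/q - \eps_n')} $, which is the first term in the claimed bound.

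On the complementary event $ \cA^c $ we have $ \normtwo{\vbfx_i}^2\in (nP(1-\lambda),nP] $ for both $ i\in\{1,2\} $. Expanding $ \normtwo{\vbfx_1+\vbfx_2}^2=\normtwo{\vbfx_1}^2+\normtwo{\vbfx_2}^2+2\inprod{\vbfx_1}{\vbfx_2} $, the requirement $ \normtwo{\vbfx_1+\vbfx_2}^2 > 2nP(1+\delta) $ forces $ \inprod{\vbfx_1}{\vbfx_2} > nP\delta $, while $ \normtwo{\vbfx_1+\vbfx_2}^2 < 2nP(1-\delta) $ forces $ \inprod{\vbfx_1}{\vbfx_2} < -nP(\delta-\lambda) = -nP\zeta $. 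Hence $ \{\cdot\}\cap\cA^c \subseteq \{|\inprod{\vbfx_1}{\vbfx_2}|> nP\zeta\} $ (using $ \zeta \le \delta $), and Corollary \ref{cor:xa_ip_xb_improved} gives $ 2\cdot 2^{-n(\zeta^2/2 - 4/q - \eps_n')} $, which is the second term. Summing the two contributions yields the claim.

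No step looks like a genuine obstacle: the only small care needed is to verify that $ \zeta = \delta-\lambda $ does indeed fall in the regime $ (0, 3/4) $ required by Corollary \ref{cor:xa_ip_xb_improved}, and that $ \lambda\in(0,3/4) $ for Lemma \ref{lem:lattice_pt_norm_improved}; both follow from the hypothesis $ \delta\in(0,3/4) $ and $ \lambda\in(0,\delta) $. Everything else is a direct translation of the calculation in Lemma \ref{lem:length_z} with the improved tail estimates plugged in place of the $ \rcov $-dependent ratios.
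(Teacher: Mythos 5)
Your proof is correct and follows exactly the route the paper intends: the paper gives no explicit proof of Lemma~\ref{lem:length_z_improved}, saying only that it follows ``similar to Lemma~\ref{lem:length_z}'' with the improved bounds plugged in, and your decomposition into the short-codeword event $\cA$ and the inner-product overshoot on $\cA^c$ is precisely the structure of that earlier proof. The parameter checks $\lambda,\zeta\in(0,3/4)$ are the only nontrivial hypotheses of Lemma~\ref{lem:lattice_pt_norm_improved} and Corollary~\ref{cor:xa_ip_xb_improved}, and you verify them correctly.
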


\section{Converse}\label{sec:converse}

\subsection{Scale-and-babble strategy}\label{sec:scale_babble_strategy}
Our converse  works even against stochastic codes which are defined as follows.
\begin{definition}
A stochastic code $ \cC $ is a code which can map a message to different codewords with certain probability. 
Formally, the (stochastic) encoder of $ \cC $ is identified with a conditional distribution: for any $ m\in\cM $ and any $ \vx\in\cB^n\paren{\vzero,\sqrt{nP}} $, 
\begin{align}
\prob{ \enc(m) = \vx } =& P_{\vbfx|\bfm}(\vx|m). \notag
\end{align}
\end{definition}
It suffices to design a jamming strategy for James under which  no rate larger than $ C_A $ is achievable. 
As we shall see, the strategy we are going to design and analyze will turn the adversarial channel into an AWGN channel of certain $\snr$. AWGN channels are defined below.
\begin{definition}
An AWGN$(P,N)$ channel is a channel in which the channel input $\vx\in\bR_n $ satisfies $\normtwo{\vx}\le\sqrt{nP} $ and the channel output is  $\vbfy = \vx + \vbfg$ where $\vbfg\sim\cN\paren{\vzero,N\bfI_n}$. 
\end{definition}

Given any stochastic codebook pair $ (\cC_A,\cC_B) $ for a $ (P,N) $ quadratically constrained two-way adversarial  channel with vanishing probability of error 
$
P_{\e,\avg}(\cC_A) \le \delta_n 
$ and $ P_{\e,\avg}(\cC_B)\le \delta_n $ where $ \delta_n = o_n(1) $, 
we equip James with the following jamming strategy which we call the \emph{scale-and-babble} strategy.

For notational brevity, we write $ \vx_A\coloneqq\vx_{m_A} $ and $ \vx_B \coloneqq\vx_{m_B} $.
Let $ \cB\coloneqq\cB^n\paren{\vzero,\sqrt{nP}} $. 

Given James' received vector $ \vbfz $, define $\wt\vbfs\coloneqq-\alpha\vbfz+\vbfg=-\alpha(\vbfx_A+\vbfx_B)+\vbfg$ for some $ \alpha $ to be optimized later and
\begin{align*}
    \vbfs=&\begin{cases}
    {\wt\vbfs},&\text{if }\normtwo{\wt\vbfs}\le\sqrt{nN}\\
    \sqrt{nN}\frac{\wt\vbfs}{\normtwo{\wt\vbfs}},&\text{otherwise}
    \end{cases},
\end{align*}
where $\vbfg\sim\cN(\vzero,\gamma^2\bfI_n)$ and $\gamma^2=N-2\alpha^2P(1+2\eps)$ for some small constant $\eps>0$. Further define $\cE\coloneqq\curbrkt{\normtwo{\wt\vbfs}>\sqrt{nN}}$,  $Q\coloneqq\prob{\cE}$ and $\bfe\coloneqq\one{\cE}$.
We will reveal the value of $ \alpha $ to Bob and argue that even with such extra information available at decoder, any $ (\cC_A,\cC_B) $ (possibly stochastic) is not able to achieve rate larger than $ \frac{1}{2}\log\paren{\frac{1}{2} + \frac{P}{N}} $.

Under the above jamming strategy, when $\bfe=0$, the channel to Bob is 
\begin{align*}
    \vbfy=&\vbfx_A+\vbfx_B+\vbfs\\
    =&(1-\alpha)(\vbfx_A+\vbfx_B)+\vbfg.
\end{align*}
Since Bob is assumed to know $ \alpha $, he scales and cancels out his signal $ (1-\alpha)\vbfx_B $, and gets effectively $\wt\vbfy=(1-\alpha)\vbfx_A+\vbfg$.

Note that we could assume that 
\begin{align}
\exptover{\vbfx_A\sim\cC_A}{\vbfx_A} =& \frac{1}{M}\sum_{m_A\in\cM}\intgover{\cB}P_{\vbfx_A|\bfm_A}(\vx_A|m_A)\vx_A\diff\vx_A = \vzero, \notag \\
\exptover{\vbfx_B\sim\cC_B}{\vbfx_B} =& \frac{1}{W}\sum_{m_B\in\cW}\intgover{\cB}P_{\vbfx_B|\bfm_B}(\vx_B|m_B)\vx_B\diff\vx_B = \vzero, \notag 
\end{align}
where the expectations are taken over distribution $ \unif(\cM)\times P_{\vbfx_A|\bfm_A} $ and $ \unif(\cW)\times P_{\vbfx_B|\bfm_B} $, respectively. 
Otherwise, assume $\expt{\vbfx_A}=\va\ne\vzero$ and $\expt{\vbfx_B}=\vb\ne\vzero$. Hence every codeword can be decomposed as
\[\vbfx_A=\vbfx_A'+\va,\quad\vbfx_B=\vbfx_B'+\vb.\]
Note that $ \expt{\vbfx_A'} = \expt{\vbfx_B} = \vzero $.
Since $\cC_A$ and $\cC_B$, in particular $\va$ and $\vb$, are known to every party, James could  set $\wt\vbfs\coloneqq -\alpha(\vbfz - \va - \vb) + \vbfg  = -\alpha(\vbfx_A'+\vbfx_B')+\vbfg$. Conditioned on $\cE^c$, Bob receives
\begin{align*}
    \vbfy=&\vbfx_A+\vbfx_B+\vbfs\\
    =&(1-\alpha)(\vbfx_A'+\vbfx_B')+\vbfg+\va+\vb.
\end{align*}
He cancels out $\va$, $\vb$ and $(1-\alpha)\vbfx_B'$ and the effective channel becomes $\wt\vbfy=(1-\alpha)\vbfx_A'+\vbfg$, where $\expt{\vbfx_A'}=\vzero$ and $\vbfg$ is a Gaussian, which is identical to the previous case. 


\subsection{Analysis}
\begin{lemma}
Under the scale-and-babble strategy defined in Sec. \ref{sec:scale_babble_strategy}, no code $ (\cC_A,\cC_B) $ (possibly stochastic) with vanishing average probability of error for a $ (P,N) $ quadratically constrained two-way adversarial  channel can have rate larger than $ \frac{1}{2}\log\paren{1 + \frac{P}{N}} $. That is $ C_A\le\frac{1}{2}\log\paren{1 + \frac{P}{N}} $ and $ C_B\le \frac{1}{2}\log\paren{1 + \frac{P}{N}} $.
\end{lemma}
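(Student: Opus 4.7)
The plan is a standard Fano-style converse, exploiting the fact that on the high-probability event $\bfe^c$ the scale-and-babble strategy turns the adversarial channel, from Bob's viewpoint, into a genuine $\mathrm{AWGN}\paren{(1-\alpha)^2 P,\gamma^2}$ channel once he subtracts $(1-\alpha)\vbfx_B$ using his own codeword and the side information $\alpha$. The capacity of that channel is $\frac{1}{2}\log\paren{1+\frac{(1-\alpha)^2P}{\gamma^2}}$, and by choice of $\gamma^2=N-2\alpha^2P(1+2\eps)$ this is upper bounded by $\frac{1}{2}\log\paren{1+\frac{P}{N}}$ (the bound becomes equality at $\alpha=0$; a later statement will optimize over $\alpha$ to produce the tighter $\frac{1}{2}\log\paren{\frac{1}{2}+\frac{P}{N}}$ bound). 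The converse must therefore control (i) the AWGN-regime mutual information, (ii) the bad event $\bfe$ using the bound on $Q$ that will be proved in Sec.~\ref{sec:bound_q}, and (iii) a genuinely stochastic encoder pair.

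First, I would write $nR_A=H(\bfm_A)$ and invoke Fano's inequality (Lemma~\ref{lem:fano_ineq}) against the triple $(\vbfy_A,\vbfx_B,\alpha)$ available to Bob, yielding
\[
nR_A\le I(\bfm_A;\vbfy_A\mid \vbfx_B,\alpha)+1+\delta_n nR_A.
\]
Next I would split along $\bfe$: since $H(\bfe)\le 1$,
\[
I(\bfm_A;\vbfy_A\mid \vbfx_B,\alpha)\le 1+I(\bfm_A;\vbfy_A\mid \vbfx_B,\alpha,\bfe)=1+(1-Q)\,I_0+Q\,I_1,
\]
where $I_0$ and $I_1$ denote the conditional mutual information on $\curbrkt{\bfe=0}$ and $\curbrkt{\bfe=1}$ respectively. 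On $\bfe^c$, Bob's effective observation is $\wt\vbfy=(1-\alpha)\vbfx_A+\vbfg$ with $\vbfg\sim\cN(\vzero,\gamma^2\bfI_n)$ independent of $\vbfx_A$, so data processing plus the standard entropy bounds (Lemma~\ref{lem:entropy_vs_var}) give
\[
I_0\le I(\vbfx_A;\wt\vbfy\mid \alpha,\bfe=0)\le \tfrac{n}{2}\log\!\paren{1+\tfrac{(1-\alpha)^2P}{\gamma^2}}\le \tfrac{n}{2}\log\!\paren{1+\tfrac{P}{N}},
\]
where for the second inequality I would use Gaussianity of $\vbfg$, independence of $\vbfg$ from $(\vbfx_A,\vbfx_B,\bfm_A)$ conditional on $\bfe=0$, and the power constraint $\normtwo{\vbfx_A}^2\le nP$ (after the zero-mean reduction already made in Sec.~\ref{sec:scale_babble_strategy}). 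For $I_1$ a crude bound suffices: $\vbfy_A$ lies in $\cB^n(\vzero,\sqrt{nP}+\sqrt{nP}+\sqrt{nN})$, so $h(\vbfy_A\mid \vbfx_B,\alpha,\bfe=1)\le\frac{n}{2}\log\paren{2\pi e(2\sqrt{P}+\sqrt{N})^2}$, and the negative differential entropy term can be dropped or handled by adding a tiny dithering noise to ensure a uniform lower bound. Finally, dividing by $n$, sending $n\to\infty$ and using the forthcoming estimate $Q\to 0$ from Sec.~\ref{sec:bound_q} yields $R_A\le \frac{1}{2}\log(1+P/N)$; the argument for Bob is identical by symmetry.

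The main obstacle I anticipate is rigorously justifying that on the event $\bfe^c$ the pair $(\vbfx_A,\vbfx_B)$ and the Gaussian dither $\vbfg$ are sufficiently decoupled to give a clean AWGN mutual information bound even though $\vbfs$ depends non-causally on $\vbfz=\vbfx_A+\vbfx_B$. Concretely, although $\vbfg$ is drawn independently of everything, the truncation event $\curbrkt{\normtwo{\wt\vbfs}\le\sqrt{nN}}$ couples $\vbfg$ with $\vbfz$, so one cannot directly claim that the effective noise is $\cN(\vzero,\gamma^2\bfI_n)$ under conditioning on $\bfe=0$. The cleanest fix is to replace the hard truncation by an entropy inequality: bound $h(\vbfy_A\mid\vbfx_A,\vbfx_B,\alpha,\bfe=0)$ from below by $h(\vbfg\mid\bfe=0)$ and note that conditioning on an event of probability $1-Q$ can reduce differential entropy by at most $H(\bfe)$; combined with the max-entropy upper bound for $h(\vbfy_A\mid\vbfx_B,\alpha,\bfe=0)$, this gives the desired inequality up to an additive $o(n)$ correction controlled by $Q\log(1/Q)$ and the bounded support of $\vbfy_A$. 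Handling this cleanly (and transferring through the stochastic encoder) is the subtlest piece, but it is routine once $Q=o_n(1)$ is in hand.
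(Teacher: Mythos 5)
Your proposal is correct and is essentially the paper's argument: Fano on $\bfm_A$, condition on the truncation indicator $\bfe$, reduce the $\bfe=0$ branch to an AWGN mutual-information bound, and let $Q=o_n(1)$ absorb the rest. One place where you make things harder than necessary is the $\bfe=1$ branch: you propose bounding differential entropies of $\vbfy_A$ via bounded support and then worrying about a possibly $-\infty$ conditional entropy and dithering, whereas the paper simply bounds $H(\bfm_A\mid\cE)\le\log|\cM|=nR_A$ by the cardinality of the (finite) message set, so the $Q\,nR_A$ term falls out with no differential-entropy bookkeeping at all. Your diagnosis of the truncation coupling (that conditioning on $\bfe=0$ makes $\vbfg$ and $\vbfx_A$ dependent) is the right thing to worry about, and the entropy-splitting fix you suggest — lower-bounding $\overline{Q}\,H(\vbfg\mid\cE^c)$ via $H(\vbfg)\le H(\bfe)+Q\,H(\vbfg\mid\cE)+\overline{Q}\,H(\vbfg\mid\cE^c)$ and absorbing the $H(\bfe)\le 1$ bit — is precisely what the paper's final combination relies on for its lower bound on $H(\vbfg\mid\cE^c)$, even though the paper's displayed intermediate inequality for that quantity is written in the opposite direction.
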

\begin{proof}
To get an upper bound on $R_A$, we decompose $nR_A$ using standard information (in)equalities.
\begin{align}
    nR_A=&H(\bfm_A) \label{eqn:msg_unif} \\
    =&H(\bfm_A|\bfe) + I(\bfm_A;\bfe) \notag \\
    \le&\prob{\cE^c}H(\bfm_A|\cE^c)+\prob{\cE}H(\bfm_A|\cE) + 1\label{eqn:e_binary}\\
    =&\overline{Q}\paren{H(\bfm_A|\wt\bfy,\cE^c) + I(\bfm_A;\wt\bfy|\cE^c)}+Q H(\bfm_A|\cE) + 1\label{eqn:ytilde_def}\\
    \le&\overline{Q}\paren{n\eps_n + I(\vbfx_A;\wt\vbfy|\cE^c)  } + Q nR_A + 1\label{eqn:fano_dpi}\\
    =&\overline{Q}\paren{n\eps_n + H(\wt\vbfy|\cE^c) - H(\wt\vbfy|\vbfx_A,\cE^c)  } + Q nR_A + 1\notag\\
    =&\overline{Q}\paren{n\eps_n + H(\wt\vbfy|\cE^c) - H(\vbfg|\cE^c)  } + Q nR_A + 1,\label{eqn:entropy_y_equals_entropy_g}
\end{align}
In the above chain of (in)equalities,
\begin{enumerate}
	\item Equality \eqref{eqn:msg_unif} follows since $ \bfm_A $ is uniformly distributed on $ [2^{nR_A}] $. 
	\item Inequality \eqref{eqn:e_binary} follows since $ \bfe $ is a binary random variable and $ I(\bfm_A;\bfe)\le H(\bfe)\le1 $.
	\item In Eqn. \eqref{eqn:ytilde_def}, $ \wt\vbfy $ denotes $ (1-\alpha)\vbfx_A + \vbfg $.
	\item Inequality \eqref{eqn:fano_dpi} is by   Fano's inequality and data processing inequality, since conditioned on $\cE^c$ the effective channel to Bob is an AWGN channel.
	We can take $ \eps_n \coloneqq R_A\delta_n+1/n = o_n(1) $.
	\item Equality \eqref{eqn:entropy_y_equals_entropy_g} is justified below, 
	\begin{align}
	H(\wt\vbfy|\vbfx_A,\cE^c) =& H( (1-\alpha)\vbfx_A + \vbfg|\vbfx_A,\cE^c ) \notag \\
	=& H(\vbfg|\vbfx_A,\cE^c) \notag \\
	=& H(\vbfg|\cE^c), \label{eqn:g_indep}
	\end{align}
	where Equality \eqref{eqn:g_indep} follows since $ \vbfg $ is a white Gaussian noise independent of everything else. 
\end{enumerate}

In what follows, we upper bound $ H(\wt\vbfy|\cE^c) $ and lower bound $ H(\vbfg|\cE^c) $ separately.
To bound $H(\wt\vbfy|\cE^c)$, note that, by subadditivity of entropy,
\begin{align}
H(\wt\vbfy|\cE^c) \le& \sum_{i = 1}^nH(\wt\vbfy(i)|\cE^c). \notag
\end{align}
Each $H\paren{\wt\vbfy(i)|\cE^c}$ can be bounded using the principle of maximum entropy. 
Observe that  $ \cE^c= \curbrkt{ \normtwo{\wt\vbfs}\le\sqrt{nN} }  $ truncates $ \wt\vbfs $ at the boundary of the ball $ \cB^n\paren{\vzero,\sqrt{nN}} $, hence conditioning on $ \cE^c $ will not increase the variance of $ \vbfx_A+\vbfx_B+\wt\vbfs $. Since $ \vbfy = (1-\alpha)(\vbfx_A+\vbfx_B)+\vbfg $ and $ \wt\vbfy = (1-\alpha)\vbfx_A+\vbfg $ can be computed by Bob only when $ \cE^c $ happens, we have
\begin{align}
\expt{\left.\wt\vbfy(i)^2\right|\cE^c}\le& \expt{\wt\vbfy(i)^2} . \notag
\end{align} 
Now we compute 
\begin{align}
    \expt{\wt\vbfy(i)^2}=&\expt{\paren{(1-\alpha)\vbfx_A(i)+\vbfg(i)}^2} \notag \\
    =&(1-\alpha)^2\expt{\vbfx_A(i)^2}+\gamma^2. \label{eqn:second_mmt_of_y}
\end{align}
Equality \eqref{eqn:second_mmt_of_y} follows since $ \vbfg(i) $ is independent of $ \vbfx_A(i) $ and has mean 0, variance $ \gamma^2 $. 
Now, by the entropy vs. variance bound (Lemma \ref{lem:entropy_vs_var}),
\begin{align}
    H\paren{\wt\vbfy(i)}\le&\frac{1}{2}\log\paren{2\pi e\var{\wt\vbfy(i)}} \notag \\
    =&\frac{1}{2}\log\paren{2\pi e\expt{\wt\vbfy(i)^2}} \label{eqn:y_zero_mean} \\
    =&\frac{1}{2}\log\paren{2\pi e\paren{(1-\alpha)^2\expt{\vbfx_A(i)^2}+\gamma^2}}.  \notag 
\end{align}
Equality \eqref{eqn:y_zero_mean} follows since 
$
\expt{\wt\vbfy(i)} = (1-\alpha)\expt{ \vbfx_A(i)   }+ \expt{\vbfg(i)}
= 0
$. 
Therefore,
\begin{align*}
    H\paren{\wt\vbfy}\le&\sum_{i=1}^n\frac{1}{2}\log\paren{2\pi e\paren{(1-\alpha)^2\expt{\vbfx_A(i)^2}+\gamma^2}}\\
    =&\frac{1}{2}\log\paren{\prod_{i=1}^n2\pi e\paren{(1-\alpha)^2\expt{\vbfx_A(i)^2}+\gamma^2}}.
\end{align*}
Since $\sum_{i=1}^n\vbfx_A(i)^2\le nP$ with probability 1, the above bound is maximized when each $\vbfx_A(i)^2$ is equal to $P$. We have
\begin{align}
    H\paren{\wt\vbfy|\cE^c}\le & \frac{1}{\overline{Q}} H\paren{\wt\vbfy}\le \frac{1}{\overline{Q}}\frac{n}{2}\log\paren{2\pi e\paren{(1-\alpha)^2P+\gamma^2}}. \label{eqn:bound_ent_y_given_ec}
\end{align}

The term $H(\vbfg|\cE^c)$ can be bounded in a similar manner. 
\begin{align}
H(\vbfg|\cE^c) \le& \sum_{i = 1}^n H(\vbfg(i)|\cE^c) \notag \\
\le& \sum_{i = 1}^n\frac{1}{2}\log\paren{ 2\pi e\expt{\vbfg(i)^2|\cE^c} } \notag \\
\le& \sum_{i = 1}^n\frac{1}{2}\log\paren{ 2\pi e\expt{\vbfg(i)^2} } \label{eqn:conditioning_reduces_var} \\
=& \frac{n}{2}\log(2\pi e\gamma^2), \label{eqn:bound_ent_g_given_ec}
\end{align}
where Inequality \eqref{eqn:conditioning_reduces_var} follows by noting 
\begin{align}
\cE^c =& \curbrkt{ \normtwo{{\wt\vbfs}}\le\sqrt{nN} } \notag \\
=& \curbrkt{ \normtwo{-\alpha(\vbfx_A+\vbfx_B) + \vbfg}\le\sqrt{nN} } \notag \\
=& \curbrkt{ \vbfg\in\cB^n\paren{ \alpha(\vbfx_A+\vbfx_B),\sqrt{nN} } }, \notag
\end{align}
and hence $ \cE^c $ restricts $ \vbfg $ to a (random) ball $ \cB^n\paren{ \alpha(\vbfx_A+\vbfx_B),\sqrt{nN} }  $ in which the variance of $ \vbfg $ can only be no larger. 

Finally, combining the bounds \eqref{eqn:entropy_y_equals_entropy_g}, \eqref{eqn:bound_ent_y_given_ec} and \eqref{eqn:bound_ent_g_given_ec}, we have
\begin{align*}
    R_A\le & \overline{Q}\eps_n + \frac{1}{2}\log(2\pi e((1-\alpha)^2P + \gamma^2)) - \overline{Q}\paren{\frac{1}{2}\log(2\pi e\gamma^2) - \frac{1}{n\overline{Q}}} + Q  R_A + \frac{1}{n}\\
    =&\frac{1}{2}\log\paren{1+\frac{(1-\alpha)^2P}{\gamma^2}} + \frac{Q}{2}\log(2\pi e\gamma^2) + \overline{Q}\eps_n  + Q R_A + \frac{2}{n}.
\end{align*}
Rearranging terms, we have
\begin{align}
R_A\le&\frac{1}{\overline{Q}}\frac{1}{2}\log\paren{1+\frac{(1-\alpha)^2P}{\gamma^2}}+ \frac{Q}{\overline{Q}} \frac{1}{2}\log(2\pi e\gamma^2) + \eps_n+\frac{2}{n\overline{Q}}. \notag
\end{align}
As shown in Sec. \ref{sec:bound_q}, 
$Q=o_n(1)$.
Substituting it back, we get
\begin{align}
R_A\le&\frac{1}{2(1-o_n(1))}\log\paren{1+\frac{(1-\alpha)^2P}{\gamma^2}}+\frac{o_n(1)}{1-o_n(1)}\frac{1}{2}\log(2\pi e\gamma^2)+\eps_n+\frac{2}{n(1-o_n(1))}. \notag
\end{align}
Taking the limit as $n\to\infty$, we have
\begin{align}
R_A\asymp& \log\paren{ 1+\frac{(1-\alpha)^2P}{\gamma^2} } \notag \\
=& \frac{1}{2}\log\paren{ 1+ \frac{(1-\alpha)^2P}{N - 2\alpha^2P(1+2\eps)} } \notag \\
=& \frac{1}{2}\log\paren{ 1 + \frac{(1-\alpha)^2P}{N - 2\alpha^2P} + \frac{(2\alpha(1-\alpha)P)^2\eps}{(N-2\alpha^2P)(N-2\alpha^2P(1+2\eps))} }. \notag
\end{align}
Optimizing over admissible $\alpha$ and sending $\eps$ to 0 finishes the proof. 
\end{proof}

\subsection{Bounding $Q$}\label{sec:bound_q}
\begin{lemma}
$ Q = o_n(1) $.
\end{lemma}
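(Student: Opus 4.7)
The plan is to decompose $\cE$ according to whether $\normtwo{\vbfz}^2$ is atypically large. Fix any constant $\eps' \in (0, 2\eps)$. By Fact \ref{fact:error_decomp},
\[
Q \;\le\; \prob{\normtwo{\vbfz}^2 > 2nP(1+\eps')} + \prob{\cE,\; \normtwo{\vbfz}^2 \le 2nP(1+\eps')}.
\]
For the second term I would condition on $\vbfz$: given $\vbfz$, $\wt\vbfs = -\alpha\vbfz + \vbfg$ is Gaussian with mean $-\alpha\vbfz$ and covariance $\gamma^2 \bfI_n$, and
\[
\expt{\normtwo{\wt\vbfs}^2 \mid \vbfz} \;=\; \alpha^2\normtwo{\vbfz}^2 + n\gamma^2 \;\le\; 2\alpha^2 nP(1+\eps') + n\gamma^2 \;=\; nN - 2\alpha^2 nP(2\eps - \eps'),
\]
which is below $nN$ by $\Omega(n)$. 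Expanding $\normtwo{\wt\vbfs}^2 = \alpha^2\normtwo{\vbfz}^2 - 2\alpha\inprod{\vbfz}{\vbfg} + \normtwo{\vbfg}^2$ and splitting this $\Omega(n)$ gap equally between the linear-in-$\vbfg$ term $\inprod{\vbfz}{\vbfg}$ (a scalar Gaussian of conditional variance $\gamma^2\normtwo{\vbfz}^2 = O(n)$, to which Lemma \ref{lem:gaussian_tail} applies) and the quadratic term $\normtwo{\vbfg}^2$ (to which the $\chi^2$ tail Lemma \ref{lem:chisquared_tail} applies), this conditional probability is $\exp(-\Omega(n))$ uniformly on the truncation event.

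For the first term I would reduce to an orthogonality statement about the code. Since $\normtwo{\vbfx_A}^2, \normtwo{\vbfx_B}^2 \le nP$ deterministically,
\[
\normtwo{\vbfz}^2 \;=\; \normtwo{\vbfx_A}^2 + \normtwo{\vbfx_B}^2 + 2\inprod{\vbfx_A}{\vbfx_B} \;\le\; 2nP + 2\inprod{\vbfx_A}{\vbfx_B},
\]
so it suffices to show $\prob{\inprod{\vbfx_A}{\vbfx_B} > nP\eps'} = o_n(1)$, where the randomness is over the uniform and independent $\bfm_A, \bfm_B$. This is exactly the empirical orthogonality property of AWGN-good codes whose proof is deferred to Sec.~\ref{sec:empirical_properties_awgn}: independently chosen codewords from any code that achieves (near-)capacity on an AWGN channel are approximately orthogonal with high probability.

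The main obstacle is the apparent circularity --- the rate bound being derived already uses $Q = o_n(1)$. The resolution is a bootstrap. Before invoking scale-and-babble, note that James can simply transmit i.i.d.\ $\cN(0, N)$ noise (a valid power-constrained strategy, under which the channel reduces to the AWGN$(P,N)$ channel), which forces $R_A \le \tfrac{1}{2}\log(1+\snr)$ unconditionally. Hence any code achieving a rate near this bound is automatically AWGN-good for AWGN$(P,N)$; since the effective AWGN channel that arises under $\cE^c$ has noise variance $\gamma^2 \le N$, AWGN-goodness transfers, and the empirical orthogonality result of Sec.~\ref{sec:empirical_properties_awgn} applies to deliver the desired bound on the first term.
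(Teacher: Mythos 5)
Your proof takes essentially the same approach as the paper's: both reduce $Q$ to (i) Gaussian and $\chi^2$ tail bounds on $\vbfg$ (Lemmas \ref{lem:gaussian_tail} and \ref{lem:chisquared_tail}) and (ii) a bound on $\prob{\inprod{\vbfx_A}{\vbfx_B} > nP\eps}$ via the empirical orthogonality of AWGN-good codes (Lemma \ref{lem:empirical_indep_awgn}). The paper introduces auxiliary events $\cE_1, \cE_2$ directly on $\vbfg$ and lets the residual reduce to a condition on $\normtwo{\vbfz}$; you split on $\normtwo{\vbfz}$ first and apply the tail bounds conditionally. These two orderings are interchangeable and give the same estimates.

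Your ``bootstrap'' is a genuine addition worth flagging: it makes explicit that invoking Lemma \ref{lem:empirical_indep_awgn} requires $\cC_A,\cC_B$ to be AWGN$(P,N)$-good, a premise the paper uses without comment. But as stated it only certifies this when the code's rate is within $\delta$ of $\frac{1}{2}\log(1+P/N)$. The regime the converse must rule out also includes rates strictly between $\frac{1}{2}\log\paren{\frac{1}{2}+P/N}$ and $\frac{1}{2}\log(1+P/N)-\delta$; such codes are \emph{not} AWGN$(P,N)$-good per the paper's definition, and the cap-extraction argument of Sec.~\ref{sec:empirical_properties_awgn} then only yields a correlation bound at a scale $\eta$ commensurate with the rate gap, not at $\eta = P\eps$ with $\eps\to 0$ as the scale-and-babble analysis ultimately requires. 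This subtlety is present in the paper as well, so you have not introduced an error --- you have exposed one. Separately, the clause claiming ``AWGN-goodness transfers'' because the effective channel has noise variance $\gamma^2 \le N$ is unnecessary and slightly off target: Lemma \ref{lem:empirical_indep_awgn} is invoked for $\cC_A,\cC_B$ against AWGN$(P,N)$ directly, not against the effective channel, so no transfer is needed or used.
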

\begin{proof}
By definition of $Q$,
\begin{align*}
{Q} =&\prob{\cE}\\
= &\prob{\normtwo{\wt\vbfs}>\sqrt{nN}}\\
    =&\prob{\normtwo{-\alpha(\vbfx_A+\vbfx_B)+\vbfg}>\sqrt{nN}}\\
    \le&\prob{\cE_1}+\prob{\cE_2}+\prob{\curbrkt{\alpha^2\normtwo{\vbfx_A+\vbfx_B}^2+\normtwo{\vbfg}^2-2\inprod{\alpha(\vbfx_A+\vbfx_B)}{\vbfg}>{nN}}\cap\cE_1^c\cap\cE_2^c},
\end{align*}
where 
\begin{align*}
    \cE_1\coloneqq&\curbrkt{\inprod{-\alpha(\vbfx_A+\vbfx_B)}{\vbfg}>n\eta_1},\\
    \cE_2\coloneqq&\curbrkt{\normtwo{\vbfg}^2> n\gamma^2(1+\eta_2)}.
\end{align*}

The first two terms are easy to bound. By Gaussian tail bound (Lemma \ref{lem:gaussian_tail}), the first one is at most
\begin{align*}
    \prob{\cE_1}=&\prob{\bfg'>n\eta_1}\\
    \le&\exp\paren{-\frac{(n\eta_1)^2}{2\alpha^2\normtwo{\vbfx_A+\vbfx_B}^2\gamma^2}}\\
    \le&\exp\paren{-\frac{n^2\eta_1^2}{2\alpha^2\cdot4nP\cdot\gamma^2}}\\
    =&\exp\paren{-\frac{n\eta_1^2}{8\alpha^2P\gamma^2}},
\end{align*}
where $\bfg'\coloneqq\inprod{-\alpha(\vbfx_A+\vbfx_B)}{\vbfg}\sim\cN\paren{0,\alpha^2\normtwo{\vbfx_A+\vbfx_B}^2\gamma^2}$. 
As to the second term, by the standard tail bound of $\chi^2$-distributions (Lemma \ref{lem:chisquared_tail}), 
\begin{align*}
    \prob{\cE_2}\le&\exp\paren{-\frac{\eta_2^2}{4}n}.
\end{align*}

The last term is at most
\begin{align}
    &\prob{\normtwo{\vbfx_A+\vbfx_B}^2>n\paren{\frac{N-\gamma^2(1+\eta_2)-2\eta_1}{\alpha^2}}} \notag \\
    =& \prob{\normtwo{\vbfx_A+\vbfx_B}^2>2nP(1+\eps)} \label{eqn:set_param_for_eps} \\
    =& \prob{ \normtwo{\vbfx_A}^2 + \normtwo{\vbfx_B}^2 + 2\inprod{\vbfx_A}{\vbfx_B} > 2nP(1+\eps) } \notag \\
    \le& \prob{ {\inprod{\vbfx_A}{\vbfx_B}} > nP\eps }, \label{eqn:prob_from_empirical} 
\end{align}
where in Eqn. \eqref{eqn:set_param_for_eps} we take $\eta_1 = \alpha^2P\eps/2$, $\eta_2 = \alpha^2P\eps/\gamma^2 = \frac{\alpha^2P\eps}{N-2\alpha^2P(1+2\eps)}$.
The probability in Eqn. \ref{eqn:prob_from_empirical} is $o_n(1) $ by setting $\eta = P\eps$ in Lemma \ref{lem:empirical_indep_awgn} as shown in Sec. \ref{sec:empirical_properties_awgn}.

All in all, we have
\begin{align*}
    \prob{\cE}\le&\exp\paren{-\frac{n\eta_1^2}{8\alpha^2P\gamma^2}} + \exp\paren{-\frac{\eta_2^2}{4}n} + o_n(1)\\
    =&\exp\paren{-\frac{n\alpha^2P\eps^2}{32\gamma^2}}+\exp\paren{-\frac{\alpha^4P^2\eps^2n}{4\gamma^4}}+o_n(1)\\
    =&o_n(1) .
\end{align*}
That is, $Q=o_n(1) $ as promised. 
\end{proof}

\subsection{Empirical properties of AWGN-good codes} \label{sec:empirical_properties_awgn}
To bound the probability \eqref{eqn:prob_from_empirical}, we will prove certain empirical property that is universal for any capacity-achieving code for an AWGN channel. To this end, we first define  AWGN-goodness.

\begin{definition}
An infinite sequence of (possibly stochastic) codes $\{\cC_n\}_n$, where $\cC_n\subset\cB^n\paren{\vzero,\sqrt{nP}} $ is equipped with encoder $ \enc_n $ and decoder $\dec_n $, is said to be good for  AWGN$(P,N)$ channels  if 
\begin{itemize}
\item  for an arbitrarily small constant $\delta>0$ and for all $n$, $R(\cC_n)\ge \frac{1}{2}\log(1+P/N)-\delta$;  and
\item $P_{\e,\avg}(\cC_n) = o_n(1)$.
\end{itemize}
\end{definition}

We then prove the following lemma which provides an $o_n(1)$ bound on the probability \eqref{eqn:prob_from_empirical}.
\begin{lemma}\label{lem:empirical_indep_awgn}
Given any two (possibly stochastic) codes $\cC_1$ and $\cC_2$ that are good for   AWGN$(P,N)$ channels,  for any constant $\eta\in(0,1)$, it holds that
\[\limsup_{n\to\infty}\probover{\substack{\vbfx_1\sim\cC_1\\\vbfx_2\sim\cC_2}}{{\inprod{\vbfx_1}{\vbfx_2}}>n\eta} = 0,\]
where the probability is taken over $\vbfx_1$ and $\vbfx_2$ that are chosen according to the encoders of $\cC_1$ and $\cC_2$, respectively.
\end{lemma}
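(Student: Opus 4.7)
The strategy is to show that the second-moment matrices $M_k \coloneqq \expt{\vbfx_k \vbfx_k^T}$ of AWGN-good codes have operator norm $o(n)$, so that Markov's inequality forces independent codewords to be nearly orthogonal. First I would establish $\|M_k\|_{\mathrm{op}} = o(n)$ via an information-theoretic argument exploiting capacity-achievability (Step 1), and then a second-moment/Markov calculation closes the proof (Step 2).

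\emph{Step 1 (Near-isotropy from capacity-achievability).} Consider the auxiliary AWGN channel where $\vbfx_k$ (uniform over $\cC_k$) is transmitted and $\vbfg \sim \cN(\vzero, N I_n)$ is added. By Fano's inequality combined with the vanishing average error probability of $\cC_k$,
\begin{align*}
I(\vbfx_k; \vbfx_k + \vbfg) \geq n R(\cC_k) - o(n) = \frac{n}{2}\log\paren{1+\frac{P}{N}} - o(n).
\end{align*}
The Gaussian maximum-entropy bound on the other side yields
\begin{align*}
I(\vbfx_k; \vbfx_k + \vbfg) \leq \frac{1}{2}\log\det\paren{I_n + \Sigma_k/N},
\end{align*}
where $\Sigma_k \coloneqq M_k - \expt{\vbfx_k}\expt{\vbfx_k}^T$. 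Using $\trace(M_k) = \expt{\normtwo{\vbfx_k}^2} \leq nP$, I would first deduce $\normtwo{\expt{\vbfx_k}}^2 = o(n)$, since otherwise $\trace(\Sigma_k) \leq nP - \Omega(n)$ creates an $\Omega(n)$ rate deficit via Jensen's inequality. Then, letting $\lambda_1 \geq \cdots \geq \lambda_n \geq 0$ be the eigenvalues of $\Sigma_k$, the strict concavity of $\lambda \mapsto \log(1+\lambda/N)$ together with a quantitative Jensen-gap computation shows that the $o(n)$ deficit in $\sum_i \log(1+\lambda_i/N)$ relative to its maximum $n\log(1+P/N)$ (attained at $\lambda_i = P$) forces $\lambda_1 = o(n)$: any single outlier of size $cn$ for constant $c > 0$ would incur an $\Omega(n)$ deficit. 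By the triangle inequality for the operator norm applied to the rank-one perturbation, $\|M_k\|_{\mathrm{op}} \leq \|\Sigma_k\|_{\mathrm{op}} + \normtwo{\expt{\vbfx_k}}^2 = o(n)$.

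\emph{Step 2 (Moment calculation and Markov).} Using independence of $\vbfx_1$ and $\vbfx_2$,
\begin{align*}
\expt{\inprod{\vbfx_1}{\vbfx_2}^2} = \trace(M_1 M_2) \leq \|M_1\|_F \|M_2\|_F \leq nP \sqrt{\|M_1\|_{\mathrm{op}} \|M_2\|_{\mathrm{op}}} = o(n^2),
\end{align*}
where the first inequality is Cauchy--Schwarz in the Frobenius inner product and the second uses $\|M_k\|_F^2 \leq \|M_k\|_{\mathrm{op}} \trace(M_k) \leq nP \|M_k\|_{\mathrm{op}}$. Markov's inequality (Lemma~\ref{lem:markov}) applied to the nonnegative random variable $\inprod{\vbfx_1}{\vbfx_2}^2$ then gives
\begin{align*}
\probover{\vbfx_1,\vbfx_2}{\inprod{\vbfx_1}{\vbfx_2} > n\eta} \leq \frac{\expt{\inprod{\vbfx_1}{\vbfx_2}^2}}{n^2 \eta^2} = \frac{o(n^2)}{n^2 \eta^2} \xrightarrow{n \to \infty} 0,
\end{align*}
which is the claim.

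The main obstacle is Step 1: transforming the $o(n)$ mutual-information deficit into the operator-norm bound $\|\Sigma_k\|_{\mathrm{op}} = o(n)$ requires a careful quantitative analysis of when $\sum_i \log(1+\lambda_i/N)$ is close to its maximum under $\sum_i \lambda_i \leq nP$, ruling out any eigenvalue that grows linearly in $n$. Once this spectral concentration is established, the Cauchy--Schwarz and Markov steps that follow are essentially routine.
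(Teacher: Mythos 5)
Your proof is correct but takes a genuinely different route from the paper's. The paper argues by contradiction and extraction: if the inner product were large with constant probability, then Markov together with the first mean value theorem (Lemma~\ref{lem:mean_value_thm}) produces a halfspace $\cH_{\vx_{i_0},\eta}$ whose intersection with the sphere is a cap of radius $\sqrt{nP'}$, $P' < P - \eta^2/P$, containing a constant fraction of $\cC_2$; translating that cap to the origin yields a code of essentially the same rate but strictly smaller power, contradicting the converse to Shannon's coding theorem for AWGN$(P',N)$. Your proof is spectral and moment-based: you turn the Fano lower bound and the $\frac{1}{2}\log\det(I+\Sigma_k/N)$ upper bound on $I(\vbfx_k;\vbfx_k+\vbfg)$ into the eigenvalue constraint $\|\Sigma_k\|_{\mathrm{op}} = o(n)$ (so no single direction carries a constant fraction of the energy budget), then close via $\expt{\inprod{\vbfx_1}{\vbfx_2}^2} = \trace(M_1 M_2)$, Cauchy--Schwarz in Frobenius norm, and Markov. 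The underlying intuition — capacity-achieving codes cannot be concentrated along a direction — is the same, but the mechanisms differ: the paper localizes a single offending cap and contradicts the coding theorem for a subcode, whereas you globally control the energy spectrum of the whole code. Your version is more quantitative and, since it yields a bound on $\expt{\inprod{\vbfx_1}{\vbfx_2}^2}$, directly gives the two-sided and multi-codeword corollaries the paper states afterward without separate union-bound arguments. One remark on Step~1: carried through carefully, the Jensen-gap computation only gives $\|\Sigma_k\|_{\mathrm{op}} \le c(\delta)n + o(n)$ with $c(\delta)\to 0$ as the gap-to-capacity $\delta\to 0$ (a single eigenvalue of size $cn$ contributes only $O(\log n)$ to $\log\det$ while consuming $cn$ of the trace budget, so $\delta$ must be small relative to the target $\eta$ to force $c<\eta^2/P$); the paper's proof has the same hidden dependence (it needs $\delta_2 < \frac{1}{2}\log\frac{N+P}{N+P'}$), so this is an imprecision inherited from the lemma statement rather than a flaw specific to your argument.
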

\begin{proof}
Suppose $R(\cC_1) = \frac{1}{2}\log(1+P/N) - \delta_1 $ and $R(\cC_2) =\frac{1}{2}\log(1+P/N) - \delta_2 $ for arbitrarily small constants $\delta_1>0 $ and $\delta_2>0 $. 
Suppose $\cC_1 = \curbrkt{\vx_i}_{i\in[M_1]}$ and $\cC_2=\curbrkt{\vx_j}_{j\in[M_2]}$ have probability of error $\eps_n$ and $\delta_n$ under their decoders $\dec_1 $ and $\dec_2 $, respectively, when used over an AWGN$(P,N)$ channel. 
Since $\cC_1$ and $\cC_2$ are good, $\eps_n\xrightarrow{n\to\infty}0$ and $\delta_n\xrightarrow{n\to\infty}0$.
Let $M_1$ and $M_2$ denote $|\cC_1|$ and $|\cC_2|$, respectively.

Assume, towards a contradiction, that there exists some constant $\eps>0$ such that
\begin{align*}
    \limsup_{n\to\infty}\probover{\substack{\vbfx_1\sim\cC_1\\\vbfx_2\sim\cC_2}}{{\inprod{\vbfx_1}{\vbfx_2}}>n\eta}=2\eps.
\end{align*}
Hence for infinitely many $n$ that are sufficiently large, we have
\begin{align*}
 \eps \le& \probover{\substack{\vbfx_1\sim\cC_1\\\vbfx_2\sim\cC_2}}{{\inprod{\vbfx_1}{\vbfx_2}}>n\eta} \notag \\
 =& \exptover{\vbfx_1\sim\cC_1}{\probover{\vbfx_2\sim\cC_2}{{\inprod{\vbfx_1}{\vbfx_2}}>n\eta}} \notag \\
 =& \exptover{ \bfm_1\sim[M_1] }{ \exptover{\vbfx_1\sim P_{\vbfx_1|\bfm_1}}{ \probover{ \vbfx_1\sim\cC_2 }{ \inprod{\vbfx_1}{\vbfx_2}>n\eta } } } \notag \\
 =& \frac{1}{M_1}\sum_{i\in[M_1]} \intgover{\cB^n\paren{\vzero,\sqrt{nP}}} P_{\vbfx_1|\bfm_1}(\vzeta_i|i) {\probover{\vbfx_2\sim\cC_2}{{\inprod{\vzeta_i}{\vbfx_2}}>n\eta}} \diff\vzeta_i . \notag 
\end{align*}

By Markov's inequality, there exists an $i_0\in[M_1]$ such that 
\begin{align}
\intgover{\cB^n\paren{\vzero,\sqrt{nP}}} P_{\vbfx_1|\bfm_1}(\vzeta_{i_0}|i_0) \probover{\vbfx_2\sim\cC_2}{{\inprod{\vzeta_{i_0}}{\vbfx_2}}>n\eta} \diff\vzeta_{i_0} \ge\eps. \label{eqn:extraction_by_markov}
\end{align}
Since $ P_{\vbfx_1|\bfm_1}(\vzeta_{i_0}|i_0) \ge0 $,  by the first mean value theorem (Lemma \ref{lem:mean_value_thm}) for integral, there exists an $ \vx_{i_0} $ such that the integral \eqref{eqn:extraction_by_markov} equals
\begin{align}
\probover{ \vbfx_2\sim\cC_2 }{ \inprod{\vx_{i_0}}{\vbfx_2} > n\eta } \intgover{\cB^n\paren{\vzero,\sqrt{nP}}}  P_{\vbfx_1|\bfm_1}(\vzeta_{i_0}|i_0) \diff\vzeta_{i_0} =& \probover{ \vbfx_2\sim\cC_2 }{ \inprod{\vx_{i_0}}{\vbfx_2} > n\eta } \ge \eps. \label{eqn:extraction_by_mean_value}
\end{align}
Define a halfspace 
\begin{align}
\cH = \cH_{\vx_{i_0}, \eta} \coloneqq& \curbrkt{\vx\in\bR^n\colon \inprod{\vx_{i_0}}{\vx}>n\eta}. \notag
\end{align}
Define subcode $ \cC_2' $  as
$\cC_2'\coloneqq \cC_2\cap\cH$. 
Note that $ \cC_2' $ is a subcode contained in the pink cap as shown in Fig. \ref{fig:awgn_extract_subcode_with_large_corr}. 

For each $ j\in[M_2] $, define
\begin{align}
Z_j \coloneqq& \intgover{\cB^n\paren{\vzero,\sqrt{nP}}} P_{\vbfx_2|\bfm_2}(\vxi_j|j) \one{\cH}(\vxi_j) \diff\vxi_j. \notag 
\end{align}
Note that $ Z_j\le1 $ for every $j$.
It is not hard to see that $ \cC_2' $ can also be written as
\begin{align}
\cC_2' \coloneqq \bigcup_{j\in[M_2]\colon Z_j>0} \curbrkt{\vx_j\in\cC_2\colon \vx_j\in\cH }. \notag
\end{align}
The encoder of $ \cC_2' $ is identified with the following conditional distribution: for every $ j $,
\begin{align}
P_{\vbfx_2|\bfm_2}'(\vx_j|j) =& \frac{1}{Z_j}P_{\vbfx_2|\bfm_2}(\vx_j|j)\one{\cH}(\vx_j) , \notag
\end{align}
Let $K $ be the size of message set of $\cC_2'$. Note that $ K = \curbrkt{j\in[M_2]\colon Z_j>0} $. By Eqn. \eqref{eqn:extraction_by_mean_value}, 
\begin{align}
\eps\le& \frac{1}{M_2}\sum_{j\in[M_2]}\intgover{\cB^n\paren{\vzero,\sqrt{nP}}} P_{\vbfx_2|\bfm_2}(\vxi_j|j)\one{\cH}(\vxi_j)\diff\vxi_j \notag \\
=& \frac{1}{M_2}\sum_{j\in[M_2]} Z_j \notag \\
=& \frac{1}{M_2}\sum_{j\in[M_2]} Z_j\indicator{Z_j>0} \notag \\
\le& \frac{1}{M_2}\sum_{j\in[M_2]}\indicator{Z_j>0} \notag \\
=& K/M_2, \notag
\end{align}
i.e., $ K\ge M_2\eps $.

Let $\vx_j^*\in\cB^n\paren{\vzero,\sqrt{nP}} $ be such that ${\inprod{\vx_{i_0}}{\vx_j^*}} = \eta $. 
(Note that $\vx_j^* $ is on the boundary of the cap but it may not be a codeword in $\cC_2$.) 
Define $\theta = \angle_{\vx_{i_0},\vx_j^*} $. Then, as shown in Fig. \ref{fig:awgn_extract_subcode_with_large_corr}, we have
\begin{align}
\cos\theta =& \frac{{\inprod{\vx_{i_0}}{\vx_j^*}}}{\normtwo{\vx_{i_0}}\normtwo{\vx_j^*}} \notag \\
>& \frac{n\eta}{\sqrt{nP}\sqrt{nP}} \notag \\
=& \eta/P. \notag
\end{align}
Hence the radius $\sqrt{nP'}$ the the cap can be computed as follows.
\begin{align}
\paren{\frac{\sqrt{nP'}}{\sqrt{nP}}}^2 =& (\sin\theta)^2  \notag \\
=& 1-(\cos\theta)^2 \notag \\
<& 1-(\eta/P)^2. \notag 
\end{align}
We get $P'< P-\eta^2/P<P$.

Now move the  cap (together with codewords in it) so that its center becomes the origin. 
We get a new code $\cC_2'' $ of the same cardinality $K$ as $\cC_2' $. Every codeword $\vx_j''\in\cC_2'' $ satisfies $\normtwo{\vx_j''}\le\sqrt{nP'}$.



\begin{figure}[htbp]
	\centering
	\includegraphics[width=0.95\textwidth]{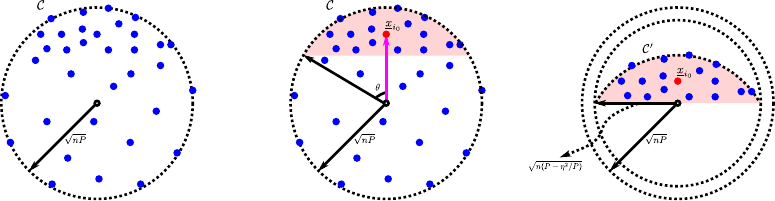}
	\caption{Extracting subcode with large correlation. If an AWGN-capacity-achieving code is highly correlated on average, then we can find a spherical cap among which all codewords have large correlation with the center of the cap. Furthermore, this cap contains a constant fraction of codewords. Moving such a cap to the origin, we get a code of the same rate but lower power. Used over the same channel, it still has vanishing error probability, which violates the fundamental limits.}
	\label{fig:awgn_extract_subcode_with_large_corr}
\end{figure}

Equip $\cC_2'' $ with the same decoder $\dec_2 $ as $\cC_2 $. We claim that when used over an AWGN$(P',N)$ channel, $\cC_2'' $ also has vanishing average probability of error.  Indeed, first note that translating codewords does not change the pairwise distance, hence $P_{\e,\avg}(\cC_2'') = P_{\e,\avg}(\cC_2') $. (Here we use the same decoder $\dec_2 $ for $\cC_2'$ as well.) It suffices to bound $P_{\e,\avg}(\cC_2') $. To this end, define, for every $ m $ and $ \vx_m $,
\begin{align}
P_{\e}(m,\vx_m)\coloneqq& \prob{\wh\bfm\ne m|\bfm=m,\;\vbfx = \vx_m} \notag \\
=& \probover{\vbfg\sim\cN(\vzero,N\bfI_n)}{ \dec(\vx_m + \vbfg) \ne m }. \notag
\end{align}
Then
\begin{align}
    M_2\delta_n=&M_2P_{\e,\avg}(\cC_2) \notag \\
    =&\sum_{j\in[M_2]} \intgover{\cB^n\paren{\vzero,\sqrt{nP}}} P_{\vbfx_2|\bfm_2}(\vxi_j|j) P_{\e}(j,\vxi_j)\diff\vxi_j \notag \\
    \ge&\sum_{j\in[M_2]\colon Z_j>0 }  \intgover{\cB^n\paren{\vzero,\sqrt{nP}}} Z_jP_{\vbfx_2|\bfm_2}'(\vxi_j|j) P_{\e}(j,\vxi_j)\diff\vxi_j \notag \\
    \ge& Z_* \sum_{j\in[M_2]\colon Z_j>0 }  \intgover{\cB^n\paren{\vzero,\sqrt{nP}}} P_{\vbfx_2|\bfm_2}'(\vxi_j|j) P_{\e}(j,\vxi_j)\diff\vxi_j  . \label{eqn:bound_pe_subcode_tobecont}
\end{align}
where in Inequality \eqref{eqn:bound_pe_subcode_tobecont}, we let
\begin{align}
Z_* \coloneqq& \min_{j\in[M_2]\colon Z_j>0}Z_j. \notag
\end{align}
Note that $ Z_*>0 $ is a constant independent of $ n $. 
From Eqn. \eqref{eqn:bound_pe_subcode_tobecont}, we get
\begin{align}
    P_{\e,\avg}(\cC') =& \frac{1}{K} \sum_{j\in[M_2]\colon Z_j>0} \intgover{\cB^n\paren{\vzero,\sqrt{nP}}} P_{\vbfx_2|\bfm_2}'(\vxi_j|j) P_{\e}(j,\vxi_j) \diff\vxi_j \notag \\
    \le&\frac{M_2}{KZ_*}\delta_n  \notag \\
    \le&\frac{\eps}{Z_*}\delta_n \notag \\ 
    =&o_n(1).  \notag 
\end{align}
Moreover,  $\cC_2''$ achieves essentially the same rate as $\cC_2 $ which achieves the capacity of AWGN$(P,N)$ channels. 
\begin{align}
R(\cC_2'') =& R(\cC_2') \notag \\
=&\frac{1}{n}\log K \notag\\
\ge& \frac{1}{n}\log(M_2\eps)\notag \\
=&R(\cC_2) + \frac{\log\eps}{n}\notag \\
\xrightarrow{n\to\infty}& R(\cC_2)\notag \\
\xrightarrow{\delta_2\to0}& \frac{1}{2}\log\paren{1+\frac{P}{N}}. \notag
\end{align}
However, the AWGN$(P',N)$ that $\cC_2'' $ is used over has capacity $\frac{1}{2}\log(1+P'/N)< \frac{1}{2}\log(1+P/N) $.
This violates the fundamental channel coding theorem by Shannon and finishes the proof. 
\end{proof}


Finally, we list several straightforward corollaries of Lemma \ref{lem:empirical_indep_awgn} that may be useful elsewhere. 
\begin{corollary}
Given any (possibly stochastic) codes $\cC_1$, $\cC_2$ and $\cC$ that are good for   AWGN$(P,N)$ channels,  for any constant $\eta\in(0,1)$ and $k\in\bZ_{\ge2} $, it holds that
\begin{align}
\limsup_{n\to\infty}\probover{\substack{\vbfx_1\sim\cC_1\\\vbfx_2\sim\cC_2}}{{\inprod{\vbfx_1}{\vbfx_2}}<-n\eta} &= 0, \label{eqn:cor_other_side} \\
\limsup_{n\to\infty}\probover{\substack{\vbfx_1\sim\cC_1\\\vbfx_2\sim\cC_2}}{\abs{\inprod{\vbfx_1}{\vbfx_2}}>n\eta} &= 0, \label{eqn:cor_two_sided} \\
\limsup_{n\to\infty}\probover{\substack{\vbfx,\vbfx'\iid\cC}}{\abs{\inprod{\vbfx}{\vbfx'}}>n\eta} &= 0. \label{eqn:cor_one_codebook} \\
\limsup_{n\to\infty}\probover{\substack{\vbfx_1,\cdots,\vbfx_k\iid\cC}}{\bigcup_{\substack{i,j\in[k]\\i\ne j}}\curbrkt{\abs{\inprod{\vbfx_i}{\vbfx_j}}>n\eta}} &= 0. \label{eqn:cor_many_cw} 
\end{align}
\end{corollary}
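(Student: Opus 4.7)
The plan is to derive all four bounds from Lemma~\ref{lem:empirical_indep_awgn} by elementary reductions: a symmetry trick for the sign, a binary union bound to combine the two one-sided tails, specialization to a single codebook, and a union bound over pairs for the multi-codeword statement.

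First, for Eqn.~\eqref{eqn:cor_other_side}, I would apply Lemma~\ref{lem:empirical_indep_awgn} to the pair $(\cC_1, -\cC_2)$, where $-\cC_2 \coloneq \{-\vx : \vx \in \cC_2\}$ is obtained by negating every codeword (equivalently, by replacing the encoder $P_{\vbfx_2|\bfm_2}(\cdot|m)$ with its reflection). The key observation is that $-\cC_2$ is still good for the AWGN$(P,N)$ channel: it has the same rate and the same power, and equipping it with the modified decoder $\dec_2'(\vy) \coloneq \dec_2(-\vy)$ gives error probability $\prob{\dec_2(\enc_2(m) - \vbfg) \ne m}$, which equals $P_{\e,\avg}(\cC_2)$ because $-\vbfg$ and $\vbfg$ are equidistributed under $\cN(\vzero,N\bfI_n)$. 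Since $\inprod{\vbfx_1}{\vbfx_2} < -n\eta$ iff $\inprod{\vbfx_1}{-\vbfx_2} > n\eta$, the conclusion follows.

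Second, for Eqn.~\eqref{eqn:cor_two_sided}, I would write
\begin{align*}
\probover{\substack{\vbfx_1\sim\cC_1\\\vbfx_2\sim\cC_2}}{\abs{\inprod{\vbfx_1}{\vbfx_2}}>n\eta} \le \probover{\substack{\vbfx_1\sim\cC_1\\\vbfx_2\sim\cC_2}}{\inprod{\vbfx_1}{\vbfx_2}>n\eta} + \probover{\substack{\vbfx_1\sim\cC_1\\\vbfx_2\sim\cC_2}}{\inprod{\vbfx_1}{\vbfx_2}<-n\eta}
\end{align*}
and apply Lemma~\ref{lem:empirical_indep_awgn} and Eqn.~\eqref{eqn:cor_other_side} term by term. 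For Eqn.~\eqref{eqn:cor_one_codebook}, I would instantiate Eqn.~\eqref{eqn:cor_two_sided} with $\cC_1 = \cC_2 = \cC$, noting that drawing $\vbfx,\vbfx'$ i.i.d.\ from $\cC$ is precisely the setup of Lemma~\ref{lem:empirical_indep_awgn} applied to two (independent) copies of $\cC$, since the messages $\bfm_1,\bfm_2$ there are independent and uniform on $[|\cC|]$.

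Finally, for Eqn.~\eqref{eqn:cor_many_cw}, a union bound over the $\binom{k}{2}$ unordered pairs $\{i,j\}\subseteq[k]$ yields
\begin{align*}
\probover{\substack{\vbfx_1,\ldots,\vbfx_k\iid\cC}}{\bigcup_{i\ne j}\curbrkt{\abs{\inprod{\vbfx_i}{\vbfx_j}}>n\eta}} \le \binom{k}{2}\probover{\vbfx,\vbfx'\iid\cC}{\abs{\inprod{\vbfx}{\vbfx'}}>n\eta},
\end{align*}
where each marginal pair is again an i.i.d.\ draw as in Eqn.~\eqref{eqn:cor_one_codebook}. Since $k$ is a fixed constant and the right-hand side probability is $o_n(1)$, the total is $o_n(1)$. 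The only subtle point in this plan is verifying that $-\cC_2$ is AWGN-good, which is handled by the noise-symmetry argument above; everything else is routine union bounding and specialization. There is no real obstacle here, as the lemma already does all the heavy lifting.
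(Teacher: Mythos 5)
Your proof is correct and follows the same skeleton as the paper's: union bound to get \eqref{eqn:cor_two_sided}, specialize $\cC_1=\cC_2=\cC$ for \eqref{eqn:cor_one_codebook}, and a $\binom{k}{2}$-fold union bound for \eqref{eqn:cor_many_cw}. The only difference is in \eqref{eqn:cor_other_side}: the paper invokes ``the same argument as Lemma~\ref{lem:empirical_indep_awgn}'' (i.e., rerun the extraction argument with the halfspace $\{\vx : \inprod{\vx_{i_0}}{\vx} < -n\eta\}$), while you instead reduce to the lemma by applying it to $(\cC_1, -\cC_2)$ and checking that $-\cC_2$, with decoder $\vy\mapsto\dec_2(-\vy)$, is still AWGN-good by the symmetry of Gaussian noise; both are valid, and your reduction is a clean way to avoid re-deriving the lemma.
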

\begin{proof}
Eqn. \eqref{eqn:cor_other_side} follows from the same argument as Lemma \ref{lem:empirical_indep_awgn}. Eqn. \eqref{eqn:cor_two_sided} follows from a union bound that combines Lemma \ref{lem:empirical_indep_awgn} and Eqn. \eqref{eqn:cor_other_side}. Eqn. \eqref{eqn:cor_one_codebook} follows by setting $\cC_2 = \cC_1 =\cC $ in Eqn. \eqref{eqn:cor_two_sided}. Finally, since there are $\binom{k}{2} = \cO_n(1) $ many distinct $(i,j)$ pairs,  Eqn. \eqref{eqn:cor_many_cw} follows  from Eqn. \eqref{eqn:cor_one_codebook} and a union bound.
\end{proof}

Using similar ideas, we prove another empirical property that is universal to all AWGN-good codes, thought it is not used in our main proof. 
\begin{lemma}
Given any  (possibly stochastic) code $\cC$  that is good for   AWGN$(P,N)$ channels,  for any constant $\eta\in(0,1)$, it holds that
\begin{align}
\limsup_{n\to\infty}\probover{\vbfx\sim\cC}{\normtwo{\vbfx}\le\sqrt{nP(1-\eta)} } =& 0. \notag
\end{align}
\end{lemma}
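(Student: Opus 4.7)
The plan is to mimic the contradiction argument used in the proof of Lemma \ref{lem:empirical_indep_awgn}: if a constant fraction of codewords have norm below $\sqrt{nP(1-\eta)}$, we extract a subcode of essentially the same rate but strictly smaller power, and then use it to beat Shannon's capacity formula for an AWGN channel with reduced power.

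First I would assume towards a contradiction that $\limsup_n \prob{\normtwo{\vbfx}\le \sqrt{nP(1-\eta)}}=2\eps>0$, so that for infinitely many sufficiently large $n$ one has $\prob{\normtwo{\vbfx}\le \sqrt{nP(1-\eta)}}\ge\eps$. Writing the probability as an average over messages,
\begin{align*}
\eps \le \frac{1}{M}\sum_{j\in[M]} \intgover{\cB^n\paren{\vzero,\sqrt{nP}}} P_{\vbfx|\bfm}(\vxi|j)\, \one{\cB^n\paren{\vzero,\sqrt{nP(1-\eta)}}}(\vxi)\, \diff\vxi = \frac{1}{M}\sum_{j\in[M]} Z_j,
\end{align*}
where $Z_j\in[0,1]$ is the inner integral for message $j$. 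Exactly as in the proof of Lemma \ref{lem:empirical_indep_awgn}, we extract the subcode $\cC'$ consisting of messages $j$ with $Z_j>0$ (cardinality $K$), with conditional encoding distribution $P'_{\vbfx|\bfm}(\vx|j)=\frac{1}{Z_j}P_{\vbfx|\bfm}(\vx|j)\one{\cB^n\paren{\vzero,\sqrt{nP(1-\eta)}}}(\vx)$. A simple counting argument (upper bounding each $Z_j$ by $1$) forces $K\ge M\eps$, so
\begin{align*}
R(\cC') \ge R(\cC) + \frac{\log \eps}{n} \xrightarrow{n\to\infty} \frac{1}{2}\log\paren{1+\frac{P}{N}}.
\end{align*}

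Next I would equip $\cC'$ with the decoder $\dec$ inherited from $\cC$ and use it over an AWGN$(P',N)$ channel with $P'\coloneqq P(1-\eta)<P$. By construction every codeword of $\cC'$ lies in $\cB^n\paren{\vzero,\sqrt{nP'}}$, so the power constraint is met. Letting $Z_*\coloneqq\min_{j:Z_j>0}Z_j>0$ (a constant, in fact obtainable with the same careful argument as in the preceding lemma),
\begin{align*}
P_{\e,\avg}(\cC') = \frac{1}{K}\sum_{j:Z_j>0}\intgover{} P'_{\vbfx|\bfm}(\vxi|j)\, P_\e(j,\vxi)\diff\vxi \le \frac{1}{KZ_*}\sum_{j\in[M]}\intgover{} P_{\vbfx|\bfm}(\vxi|j)\, P_\e(j,\vxi)\diff\vxi = \frac{M}{KZ_*}\, P_{\e,\avg}(\cC) \le \frac{1}{\eps Z_*} \delta_n = o_n(1),
\end{align*}
where $\delta_n\coloneqq P_{\e,\avg}(\cC)\to 0$ by AWGN-goodness.

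Finally, I would derive the contradiction: the sequence $\cC'$ achieves rate approaching $\frac{1}{2}\log(1+P/N)$ with vanishing average probability of error on an AWGN$(P',N)$ channel, but the capacity of this channel is $\frac{1}{2}\log(1+P'/N) = \frac{1}{2}\log(1+P(1-\eta)/N) < \frac{1}{2}\log(1+P/N)$, violating Shannon's channel coding theorem. The main obstacle is the same bookkeeping nuisance encountered in Lemma \ref{lem:empirical_indep_awgn}: handling stochastic encoders cleanly so that the extracted subcode genuinely has a constant-factor increase in average error probability relative to the full code. Once the $Z_*$ issue is addressed exactly as in that lemma (either by a Markov/mean-value extraction of a single ``good'' message index as in Eqn.~\eqref{eqn:extraction_by_markov}--\eqref{eqn:extraction_by_mean_value}, or by the subcode construction above), the rest is routine.
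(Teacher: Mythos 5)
Your proposal is correct and follows essentially the same approach as the paper: assume a constant fraction of encoder mass lies in $\cB^n(\vzero,\sqrt{nP(1-\eta)})$, extract the corresponding subcode (with roughly the same rate, strictly smaller power, still vanishing average error under the original decoder), and contradict the Shannon capacity of AWGN$(P(1-\eta),N)$. You spell out the stochastic-encoder bookkeeping (the $Z_j$ normalization and the $Z_*$ factor) more explicitly than the paper's terse "by the same considerations as in Lemma~\ref{lem:empirical_indep_awgn}", but the mechanism, the key inequality $K\ge M\eps$, and the final contradiction are identical to the paper's proof.
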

\begin{proof}
Suppose towards a contradiction that for  some constant $\eps>0$
\begin{align}
\eps =& \probover{\vbfx\sim\cC}{\normtwo{\vbfx}\le\sqrt{nP(1-\eta)} }   \notag \\
=& \frac{1}{M}\sum_{j\in[M]} Z_j, \notag
\end{align}
where we defined 
\begin{align}
Z_j \coloneqq& \intgover{\cB^n\paren{\vzero,\sqrt{nP}}} P_{\bfx|\bfm}(\vxi_j|j)\indicator{\normtwo{\vxi_j}<\sqrt{nP(1-\eta)}} \diff\vxi_j , \notag
\end{align}
for each $ j\in[M] $.
Now define
\[\cC'\coloneqq\curbrkt{ \vx\in\cC\colon \normtwo{\vx}\le\sqrt{nP(1-\eta)} }.\]
By the same considerations as in Lemma \ref{lem:empirical_indep_awgn}, we have
\begin{itemize}
	\item on the one hand, $\cC'$ has $o_n(1) $ average probability of error when used over AWGN$(P(1-\eta),N)$ channels which have capacity $\frac{1}{2}\log(1+P(1-\eta)/N)<\frac{1}{2}\log(1+P/N) $;
	\item on the other hand, the number of messages that $\cC'$ encodes is  $M\eps$, in particular, $\cC'$ achieves rate arbitrarily close to $\frac{1}{2}\log(1+P/N) $,
\end{itemize}
which is a contradiction. 
\end{proof}

\subsection{\texorpdfstring{\underline{$\mathbf{z}$}-aware symmetrization}{z}}
\label{sec:z-aware_symm}
\begin{lemma}
For a $ (P,N) $ quadratically constrained two-way adversarial channel, assume $ N = 3P(1+\eps)/4 $ for some constant $ \eps>0 $. Then any codebook pair $ (\cC_A,\cC_B) $ of sizes $ |\cC_A|\ge\frac{\eps}{2(1+\eps)} $ and $ |\cC_B|\ge\frac{\eps}{2(1+\eps)} $ has average error probabilities $ P_{\e,\avg,A}\ge\frac{\eps}{4(1+\eps)} $ and $ P_{\e,\avg,B}\ge\frac{\eps}{4(1+\eps)} $. 
\end{lemma}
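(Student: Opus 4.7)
The strategy is James' $\vbfz$-aware symmetrization. To induce an error at Bob's decoder, James draws an auxiliary message $\bfm_A'$ uniformly from $\cM$ independently of everything else, forms $\wt\vbfs_B \coloneq -\frac{1}{2}(\vbfz - \vbfx_{A,\bfm_A'})$, and transmits $\vbfs_B = \wt\vbfs_B$ whenever $\normtwo{\wt\vbfs_B}^2 \le nN$ and a dummy vector (e.g.\ $\vzero$) otherwise. On the event $\cE_B \coloneq \{\normtwo{\wt\vbfs_B}^2 \le nN\}$, Bob's received word equals
\begin{align*}
\vbfy_B = \frac{1}{2}\paren{\vbfx_{A,\bfm_A} + \vbfx_{A,\bfm_A'}} + \frac{1}{2}\vbfx_{B,\bfm_B},
\end{align*}
which is a symmetric function of $(\bfm_A, \bfm_A')$, so no decoder can distinguish the hypothesis ``Alice sent $\bfm_A$'' from ``Alice sent $\bfm_A'$''. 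The symmetric strategy (sampling $\bfm_B'$ and targeting Alice) will handle the bound on $P_{\e,A}$.

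The first step is to control $\prob{\cE_B}$. Setting $Y \coloneq \normtwo{\vbfx_{A,\bfm_A} + \vbfx_{B,\bfm_B} - \vbfx_{A,\bfm_A'}}^2$ and using mutual independence of $\bfm_A, \bfm_B, \bfm_A'$ together with the fact that $\bfm_A$ and $\bfm_A'$ share the same marginal distribution on $\cM$, a direct expansion gives
\begin{align*}
\expt{Y} = \expt{\normtwo{\vbfx_{A,\bfm_A}}^2} + \expt{\normtwo{\vbfx_{B,\bfm_B}}^2} + \expt{\normtwo{\vbfx_{A,\bfm_A'}}^2} - 2\normtwo{\bar\vbfx_A}^2 \le 3nP,
\end{align*}
where $\bar\vbfx_A \coloneq \expt{\vbfx_{A,\bfm_A}}$ and the cross terms involving $\inprod{\bar\vbfx_A}{\bar\vbfx_B}$ cancel exactly. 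Markov's inequality then yields $\prob{\cE_B^c} \le \frac{3nP}{4nN} = \frac{1}{1+\eps}$, so $\prob{\cE_B} \ge \frac{\eps}{1+\eps}$.

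The second step is the standard symmetrization bound. Let $\dec_B$ be any (possibly stochastic) decoder; by conditioning on the decoder's randomness it suffices to argue for deterministic $\dec_B$. On $\cE_B \cap \{\bfm_A \ne \bfm_A'\}$, $\vbfy_B$ depends only on the unordered pair $\{\bfm_A, \bfm_A'\}$ and on $\bfm_B$, so $\dec_B(\vbfy_B)$ can equal at most one of $\bfm_A, \bfm_A'$. Averaging the indicator $\indicator{\dec_B(\vbfy_B) \ne \bfm_A}$ over the measure-preserving swap $(\bfm_A,\bfm_A') \leftrightarrow (\bfm_A', \bfm_A)$ therefore produces
\begin{align*}
P_{\e,B}(\cC_A,\cC_B) \ge \frac{1}{2}\probover{\bfm_A,\bfm_B,\bfm_A'}{\cE_B \cap \{\bfm_A \ne \bfm_A'\}} \ge \frac{1}{2}\paren{\frac{\eps}{1+\eps} - \frac{1}{|\cC_A|}}.
\end{align*}
Interpreting the size hypothesis in the lemma as $|\cC_A| \ge \frac{2(1+\eps)}{\eps}$ (the reciprocal of the printed expression, which appears to be a typo), this gives $P_{\e,B} \ge \frac{\eps}{4(1+\eps)}$; the bound on $P_{\e,A}$ follows identically from the mirrored attack.

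The calculations are all short and the only subtlety worth highlighting is the expectation of $Y$: the cancellation of the $2\inprod{\bar\vbfx_A}{\bar\vbfx_B}$ terms is what allows us to bound $\expt{Y}$ by $3nP$ without any zero-mean assumption on the codebooks, and hence to apply Markov cleanly. Beyond that, the argument is a direct adaptation of the classical symmetrizability obstruction for AVCs to the two-way additive setting, with the fake message $\bfm_A'$ playing the role of the adversary's decoy codeword.
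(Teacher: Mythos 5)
Your proposal follows the paper's $\vbfz$-aware symmetrization strategy essentially verbatim: the same decoy codeword $\vbfx_{A,\bfm_A'}$, the same $\wt\vbfs_B = -\tfrac{1}{2}(\vbfz - \vbfx_{A,\bfm_A'})$, the same Markov bound on the clipping event, and the same two-term bound $P_{\e,B}\ge\tfrac12(1-\prob{\cE_B^c}-1/|\cC_A|)$, and you correctly spot that the codebook-size hypothesis in the lemma should read $|\cC_A|\ge\frac{2(1+\eps)}{\eps}$ rather than its reciprocal. Your treatment of $\expt{\normtwo{\wt\vbfs}^2}$ is a small but genuine improvement: the paper invokes a WLOG zero-mean assumption (``by similar considerations'') and then drops all cross terms, whereas you note that the two $\inprod{\bar\vbfx_A}{\bar\vbfx_B}$ cross terms cancel exactly and the remaining $-2\normtwo{\bar\vbfx_A}^2$ is nonpositive, so the bound $\expt{Y}\le 3nP$ holds for any codebook with no centering step. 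This removes a reduction that the paper never fully justifies for this attack.

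One caveat that applies equally to your writeup and to the paper's: the step asserting error at least $1/2$ conditioned on the good event tacitly assumes the swap $(\bfm_A,\bfm_A')\leftrightarrow(\bfm_A',\bfm_A)$ preserves the event $\cE_B$. It does not, because $\normtwo{\wt\vbfs_B}$ and $\normtwo{\wt\vbfs_B^\sigma}$ differ by $\pm 2\inprod{\vbfx_A-\vbfx_{A'}}{\vbfx_B}$ in the square. A fully rigorous version of the symmetrization needs to condition on $\cE_B\cap\cE_B^\sigma$, which costs an extra $\prob{\cE_B^c}$ term and weakens the numerical threshold. This is a gap in the paper's argument, not one you introduced, but if you want your write-up to be airtight you should address it (or at least flag it as the paper does not).
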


\begin{proof}
Given any  codebooks $\cC_A$ and $\cC_B$ of positive rate, by similar considerations, we can assume without loss of generality that $\expt{\vbfx_A} = \expt{\vbfx_B} = \vzero$, where the expectation is over $\vbfx_A$ and $\vbfx_B$ that are randomly chosen from $\cC_A$ and $\cC_B$, respectively.

Define $\wt\vbfs = -\frac{1}{2}(\vbfz - \vbfx_A') = -\frac{1}{2}(\vbfx_A+\vbfx_B-\vbfx_A')$, where $\vbfx_A'$ is a  random codeword from $\cC_A.$
Define $\vbfs$ as follows.
\begin{align}
\vbfs = &\begin{cases}
{\wt\vbfs},&\normtwo{\wt\vbfs}\le\sqrt{nN}\\
\sqrt{nN}\frac{\wt\vbfs}{\normtwo{\wt\vbfs}},&\ow
\end{cases}.\notag
\end{align}
Define error events  
\begin{align}
	\cE_1 \coloneqq& \curbrkt{ \normtwo{\wt\vbfs}>\sqrt{nN} }, \notag \\
	\cE_2 \coloneqq& \curbrkt{ \vbfx_A = \vbfx_A' }. \notag
\end{align}
Under the above jamming strategy, Bob receives
\begin{align}
\vbfy_B =& \vbfx_A+\vbfx_B - \frac{1}{2}(\vbfx_A+\vbfx_B - \vbfx_A')\notag\\
=&\frac{1}{2}(\vbfx_A+\vbfx_A')+\frac{1}{2}\vbfx_B.\notag
\end{align}
If $ \wt\vbfs $ satisfies  power constraint, , cancelling his own signal, Bob effectively receives $\wt\vbfy_B = \frac{1}{2}(\vbfx_A+\vbfx_A')$. If neither $\cE_1 $ nor $ \cE_2 $  happens,  then Bob has no way to distinguish between $\vbfx_A$ and $\vbfx_A'$ and the decoding error probability is at least $1/2$ under any decoding rule. 

We now formally lower bound the probability of error under such a jamming strategy.
\begin{align}
P_{\e,B} =& \prob{\wh\bfm_A \ne \bfm_A}\notag \\
\ge& \prob{\curbrkt{\wh\bfm_A \ne \bfm_A} \cap \cE_1^c\cap\cE_2^c } \notag \\
=& \prob{\cE_1^c\cap\cE_2^c}\prob{\wh\bfm_A \ne \bfm_A | \cE_1^c\cap\cE_2^c} \notag \\
\ge& \frac{1}{2}(1 - \prob{\cE_1} - \prob{\cE_2} ). \notag
\end{align}

First note that $\prob{\cE_2} = 1/ |\cC_A| $ which is at most $ \frac{\eps}{2(1+\eps)} $ as long as $ |\cC_A|\ge\frac{2(1+\eps)}{\eps} $.

We next  upper bound $\prob{\cE_1}$. Suppose $N = \frac{3}{4}P(1+\eps)$. 
By Markov's inequality, 
\begin{align}
\prob{\cE} =& \prob{\normtwo{\wt\vbfs} > \sqrt{nN}} \notag \\
\le& \frac{\expt{\normtwo{\wt\vbfs}^2}}{nN}. \notag 
\end{align}
It suffices to upper bound $\expt{\normtwo{\wt\vbfs}^2}$.
\begin{align}
\expt{\normtwo{\wt\vbfs}^2} =& \expt{\normtwo{ -\frac{1}{2}(\vbfx_A+\vbfx_B - \vbfx_A') }^2} \notag \\
=&\frac{1}{4}\paren{ \expt{\normtwo{\vbfx_A}^2} + \expt{\normtwo{\vbfx_B}^2} + \expt{\normtwo{\vbfx_A'}^2} + 2\expt{\inprod{\vbfx_A}{\vbfx_B}} - 2\expt{\inprod{\vbfx_A}{\vbfx_A'}} - 2\expt{\inprod{\vbfx_B}{\vbfx_A'}} } \notag \\
\le&\frac{1}{4}(nP+nP+nP+0-0-0) \notag \\
=&3nP/4. \notag
\end{align}
Then we get that
\begin{align*}
\prob{\cE}\le&\frac{3nP/4}{nN} \notag \\
=&\frac{1}{1+\eps}. \notag
\end{align*}


Substituting the above bound back, we have 
\begin{align}
P_{\e,B} \ge& \frac{1}{2}\paren{1- \frac{1}{1+\eps} - \frac{\eps}{2(1+\eps)}} \notag \\
=& \frac{\eps}{4(1+\eps)}. \notag
\end{align}
\end{proof}

\subsection{Some remarks}
\begin{enumerate}
    \item 
    Using tools from \cite{polyanskiy-verdu-2014-empirical-distr}, we are able to get a satisfactory bound on $Q$ under \emph{maximum}  probability of error criterion. 
    However, such a criterion makes our problem much harder and less interesting. 
    Indeed, by symmetry, let us consider Bob. 
    To make the maximum error probability large, James only needs to focus on one message. 
    Said differently, we can assume that James knows the message corresponding to the transmitted codeword. 
    Under \emph{deterministic} encoding, this means that he knows the actual codeword $\vbfx_B $ from Bob. 
    Given his observation $\vbfz = \vbfx_A+\vbfx_B $, he also knows $\vbfx_A $. Since Bob aims to decode the message corresponding to $\vbfx_A $, James is essentially omniscient in this case.
    The problem of determining the channel capacity of Bob collapses to  the long-standing sphere packing problem.
    In fact \cite{zhang-quadratic-isit}, even  \emph{stochastic} encoding does not help beat the sphere packing bound. 
    As long as James knows the transmitted message, there is a reduction from stochastic encoding to deterministic encoding which turns James omniscient again.
    \item The effective channel to Bob who aims to decode $ \vbfx_A $ is like a myopic adversarial channel if we treat $\vbfx_B $ as noise to James. One difference is that  the noise to James is known to Bob, which is usually not assumed in the myopic model. 
	\item In the general asymmetric case where $P_A $ and $P_B $ can differ, and $N_A $ and $N_B $ can also differ, following exactly the same proof as in Sec. \ref{sec:z-aware_symm}, we get that 
	\begin{itemize}
		\item $C_B =0 $ if $N_B>\frac{2P_A+P_B}{4} $;
		\item $C_A =0 $ if $N_A>\frac{2P_B+P_A}{4} $.
	\end{itemize}
	\item Empirical properties of good codes are not applicable in Sec. \ref{sec:z-aware_symm}. 
	If the channel is symmetrizable, the capacity is zero and any code has subexponential size.  It does not make sense to talk about capacity-achieving distributions, letting alone empirical properties w.r.t. such distributions. 
\end{enumerate}

\section{Concluding remarks and open problems}\label{sec:concl_rk_open_problems}
This paper studies fundamental limits to a two-party message exchange problem over a two-way channel controlled by a malicious adversary who has access to the sum of transmitted signals. 
We conclude the paper with some final remarks and open questions for future research. 
\begin{itemize}
	\item Only in the high-rate regime, our upper bound due to scale-and-babble attack matches our lower bound based on expurgated lattice codes and estimation-type decoder. 
	Specifically, we require $ \snr $ to be a function ($g(\delta)$) of the gap-to-capacity $\delta$.
	We believe such a technical requirement can be relaxed to a condition that $ \snr>K $ for certain universal constant $K$ (independent of $\delta$).
	The can be potentially proved by bounding the error probability also over the random lattice construction, e.g., via Construction-A.\footnote{Indeed, one of the ingredients of achievability, the sumset property, has already been proved in Sec. \ref{sec:improved_sumset} without imposing the constraint $\snr\ge g(\delta) $. However, we have trouble finishing the rest of the proof. The main challenge is due to dependencies among random lattice points inherently caused by linearity.}
	In this way, the technical requirement on $ \snr $ will be replaced by a  large field size $q$ of the based code in Construction-A, which we are fine to afford. 
	\item We do not believe that the constraint on $ \snr $ can be completely removed. 
	Instead, we believe that in low-$\snr$ regime the capacity is strictly less than $ \frac{1}{2}\log\paren{\frac{1}{2}+\frac{P}{N}} $.
	The intuition comes from our symmetrization result. 
	The bound $ \frac{1}{2}\log\paren{\frac{1}{2}+\frac{P}{N}} $ is only valid when  $ \snr\ge1/2 $ since otherwise it is negative. 
	However, our $\vbfz$-aware symmetrization attack shows that no positive rate can be achieved as long as $ \snr\le4/3 $. The threshold $4/3$ is larger than $1/2$ at which the bound $ \frac{1}{2}\log\paren{\frac{1}{2}+\frac{P}{N}} $ is still strictly positive. 
	Such a gap suggests that our bound may not be tight in the low-$\snr$ regime. 
	Understanding the behaviour of capacity in the low-$\snr$ regime remains an intriguing open question.
\end{itemize}

\section{Acknowledgement}
SJ would like to thank Bobak Nazer and Or Ordentlich for helpful discussions at the early stage of this work when he visited Boston University on sabbatical. 

\appendices
\section{Lattice primer}\label{sec:primer_lattices}

For a tutorial introduction to lattices and their applications, see the book by Zamir~\cite{zamir2014latticebook} or the notes by Barvinok~\cite{barvinok2013math669}.

If $ \vv_1,\ldots,\vv_k $  are linearly independent vectors in $ \bR^n $, then the set of all integer linear combinations of $ \vv_1,\ldots,\vv_k $ is called the lattice generated by the vectors $ \vv_1,\ldots,\vv_k $, i.e.,
\[
\Lf \coloneqq \curbrkt{\sum_{i=1}^{k}a_i\vv_i: a_i\in\bZ }.
\]
If  $ \bfG=[\vv_1\cdots \vv_k] $, then we can write $ \Lf = \bfG\bZ^k $. The matrix $ \bfG $ is called a generator matrix for $ \Lf $. The generator matrix of a lattice is not unique.
The integer $ k $ is invariant for a lattice and is called the rank of $ \Lf $. In this  paper, we only consider lattices in $ \bR^n $ having rank $ n $.
It is obvious that $ \Lf $ is a discrete subgroup of $ \bR^n $ under vector addition. It is also a fact that every discrete subgroup of $ \bR^n $ is a lattice~\cite{barvinok2013math669}.

For any lattice $ \Lf $, it is natural to define the quantizer $ Q_{\Lf} $ which maps every point in $ \bR^n $ to the closest lattice point, i.e., for every $ \vx\in\bR^n $,
\begin{equation}
    \label{eqn:lattice_quant}
    Q_{\Lf}(\vx) \coloneqq \argmin{\vy\in\Lf}\Vert \vy-\vx \Vert,
\end{equation}
where we assume that ties (in computing the closest lattice point) are resolved according to some arbitrary but fixed rule. Associated with the quantizer is the quantization error
\[
[\vx]\bmod \Lf \coloneqq \vx - Q_{\Lf}(\vx).
\]

For every lattice $ \Lf $, we define the following parameters:
\begin{itemize}
	\item The set $$ \cP(\Lf)\coloneqq \{ \bfG\vx:\vx\in [0,1 )^n \}, $$
	where $ \bfG $ is a generator matrix of $ \Lf $, is called the fundamental parallelepiped of $ \Lf $.
	\item The fundamental Voronoi region $ \cV(\Lf) $ is the set of all points in $ \bR^n $ which are closest to the zero lattice point. In other words,
	\[
	\cV(\Lf) \coloneqq \{ \vx\in\bR^n: Q_{\Lf}(\vx) = \vzero \}.
	\]
	Any set $ \cS\subset  \bR^n  $ such that the set of translates of $ \cS $ by lattice points, i.e., $ \{\cS+\vx :\vx\in\Lf \} $ form a partition of $ \bR^n $, is called a fundamental region of $ \Lf$. It is a fact that every fundamental region of $ \Lf $ has the same volume equal to $\det\Lf\coloneqq |\det(\bfG)| $, where $ \bfG $ is any generator matrix of $ \Lf $. The quantity $\det \Lf $ is called the determinant or covolume of $ \Lf $ (also denoted by $ \vol(\Lf) $).
	It is a fact that $ \det\Lf = \vol(\cV(\Lf)) $. 
	\item The covering radius $ \rcov(\Lf) $ is the radius of the smallest closed ball in $ \bR^n $ which contains $ \cV(\Lf) $. It is also equal to the length of the largest vector within $ \cV(\Lf) $.
	\item The packing radius $ \rpack(\Lf) $ is the radius of the largest open ball which is contained within $ \cV(\Lf) $. Equivalently, it is half the minimum distance between two lattice points.
	\item The effective radius $ \reff(\Lf) $ is equal to the radius of a ball having volume equal to $ \vol(\cV(\Lf)) $.
\end{itemize}
Clearly, we have $ \rpack(\Lf)\leq \reff(\Lf)\leq \rcov(\Lf) $.

In the context of power-constrained communication over Gaussian channels, a  lattice code is typically the set of all lattice points within a convex compact subset of $ \bR^n $, i.e., $ \cC = \Lf\cap\cB $ for some set $ \cB\subset \bR^n $. Usually $ \cB $ is taken to be $ \cB^n(\vzero ,\sqrt{nP}) $ or $ \cV(\Lc) $ for some lattice $ \Lc $ constructed so as to satisfy the power constraint.

If $ \Lc,\Lf $ are two lattices in $ \bR^n $ with the property that $ \Lc\subsetneq\Lf $, then $ \Lc $ is said to be nested within (or, a sublattice of) $ \Lf $. A nested lattice code with a fine lattice $ \Lf $ and coarse lattice $ \Lc\subsetneq \Lf $ is the lattice code $ \Lf\cap\cV(\Lc) $. 

Lattices have been extensively used for problems of packing, covering and communication over Gaussian channels. For many problems of interest, we want to construct high-dimensional lattices $ \Lf $ such that $ \rpack(\Lf)/\reff(\Lf) $ is as large as possible, and $ \rcov(\Lf)/\reff(\Lf) $ is as small as possible. A class of lattices that has these properties is the class of Construction-A lattices, which we describe next.

Let $ q $ be a prime number, and $ \cC_{\mathrm{lin}} $ be an $ (n,k) $ linear code over $ \bF_q $. The Construction-A lattice obtained from $ \cC_{\mathrm{lin}} $ is defined to be
\[
\Lf(\cC_{\mathrm{lin}}) \coloneqq \{ \vv\in\bZ^n: [\vv]\bmod (q\bZ^n)\in\Phi(\cC) \},
\] 
where $ \Phi $ denotes the natural embedding of $ \bF_q^n $ in $ \bR^n $.
An equivalent definition is that $ \Lf(\cC_{\mathrm{lin}}) = \Phi(\cC_{\mathrm{lin}})+q\bZ^n $.
We make use of the following result to choose our coarse lattices:
\begin{theorem}[\cite{erez-2005-lattices-goodfor-everything}]
	For every $ \delta>0 $,
	there exist sequences of prime numbers $ q_n $ and positive integers $ k_n $ such that if $ \cC_{\mathrm{lin}} $ is a randomly chosen linear code\footnote{The $ (n,k_n) $ random code is obtained by choosing an $ n\times k_n $ generator matrix uniformly at random over $ \bF_q $.} over $ \bF_{q_n} $, then
	\[
	\Pr\left[ \frac{\rpack(\Lf(\cC_{\mathrm{lin}}))}{\reff(\Lf(\cC_{\mathrm{lin}}))}<\frac{1}{2}-\delta \text{ or } \frac{\rcov(\Lf(\cC_{\mathrm{lin}}))}{\reff(\Lf(\cC_{\mathrm{lin}}))}>1+\delta\right] = o(1).
	\] 
\end{theorem}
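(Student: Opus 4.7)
\medskip

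\noindent\textbf{Proof proposal.}
The plan is to establish the two goodness properties separately via different mechanisms and then combine them by a union bound on the underlying probability space (the uniformly random generator matrix over $\bF_{q_n}$). Call the two bad events
\[
\cE_{\pack} \coloneqq \left\{ \frac{\rpack(\Lf(\cC_{\mathrm{lin}}))}{\reff(\Lf(\cC_{\mathrm{lin}}))} < \frac{1}{2} - \delta \right\}, \qquad
\cE_{\cov} \coloneqq \left\{ \frac{\rcov(\Lf(\cC_{\mathrm{lin}}))}{\reff(\Lf(\cC_{\mathrm{lin}}))} > 1 + \delta \right\}.
\]
It suffices to exhibit sequences $q_n,k_n$ making both $\Pr[\cE_{\pack}]$ and $\Pr[\cE_{\cov}]$ equal to $o(1)$, since then $\Pr[\cE_{\pack}\cup\cE_{\cov}]\le \Pr[\cE_{\pack}]+\Pr[\cE_{\cov}] = o(1)$.

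For packing goodness I would use a Minkowski--Hlawka-style first moment argument. The determinant of $\Lf(\cC_{\mathrm{lin}})$ is $q^{n-k}$, so $\reff$ is determined by $V_n\reff^n = q^{n-k}$. Fix $r = (1-2\delta)\reff$ and note that $\rpack < (\tfrac{1}{2}-\delta)\reff$ iff there is a nonzero $\vv\in\Lf(\cC_{\mathrm{lin}})$ with $\|\vv\|\le r$. For any fixed $\vv\in\bZ^n\setminus q\bZ^n$, the probability (over a uniformly random generator matrix defining $\cC_{\mathrm{lin}}$) that $[\vv]\bmod q\bZ^n\in\Phi(\cC_{\mathrm{lin}})$ is exactly $q^{-(n-k)}$. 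Provided $r < q/\sqrt{n}$ (which the parameter choices below will satisfy), no nonzero vector of $q\bZ^n$ lies inside $\cB^n(\vzero,r)$, so a union bound yields
\[
\Pr[\cE_{\pack}] \;\le\; \frac{|\bZ^n\cap\cB^n(\vzero,r)| - 1}{q^{n-k}} \;\le\; \frac{V_n r^n(1+o(1))}{q^{n-k}} \;=\; (1-2\delta)^n(1+o(1)),
\]
which is $o(1)$. This is the familiar Minkowski--Hlawka calculation specialized to the Construction-A ensemble.

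For covering goodness I would invoke the main covering result of Erez--Litsyn--Zamir that is already quoted in the paper as Lemma~\ref{lem:coveringgoodness}: under an appropriate regime of parameters (specifically $q^k$ matched to $1/V_n\reff^n$ for a fixed $\reff$, with $k=(1-c)n$, $k=\omega(\log^2 n)$, $q=\omega(\sqrt n)$, $\log q=o(n/\log n)$), the random Construction-A lattice satisfies $\rcov/\reff \le f(n) = 1+o(1)$ with probability $1-o(1)$. I would choose $q_n,k_n$ inside this regime and, after a global scaling (which leaves all three ratios $\rpack/\reff$, $\rcov/\reff$, $\rpack/\rcov$ invariant since each of $\rpack$, $\rcov$, $\reff$ is homogeneous of degree one in the lattice), arrange that the effective radius also satisfies $r=(1-2\delta)\reff<q/\sqrt n$ so that the packing calculation above applies. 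Concretely, one can first fix a target $\reff$ (e.g.\ a constant), then pick $q_n,k_n$ so that $q_n^{n-k_n}=V_n\reff^n$ and the side conditions of Lemma~\ref{lem:coveringgoodness} all hold; any choice with, e.g., $q_n\asymp n$ and $k_n$ solved from the density constraint works. Both events $\cE_{\pack}$ and $\cE_{\cov}$ then have probability $o(1)$, and a union bound finishes the proof.

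The only non-routine step is the covering half, and for that step the plan is simply to cite the already-established Erez--Litsyn--Zamir result (Lemma~\ref{lem:coveringgoodness}) rather than reprove it. The subtlety that does deserve care is choosing a single parameter regime $(q_n,k_n)$ in which \emph{both} the Minkowski--Hlawka first moment bound and the Erez--Litsyn--Zamir covering bound are simultaneously applicable; in particular, one must verify that $r=(1-2\delta)\reff$ remains below the $q/\sqrt n$ threshold required to discard the $q\bZ^n$ contribution in the packing union bound, which is automatic under $q_n=\omega(\sqrt n)$ and $\reff$ held constant.
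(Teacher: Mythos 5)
The paper does not prove this statement at all: it is imported verbatim from Erez--Litsyn--Zamir \cite{erez-2005-lattices-goodfor-everything} as a cited black box, so any actual argument you give is necessarily a different route. Your outline is the standard (and correct) one: split into the packing and covering bad events, handle covering by quoting the ELZ covering theorem already reproduced as Lemma \ref{lem:coveringgoodness}, handle packing by a Minkowski--Hlawka first-moment count over the Construction-A ensemble, and union-bound. What this buys over the paper's bare citation is an explicit reason why the two goodness properties can be made to hold \emph{simultaneously} for one choice of $(q_n,k_n)$, which is exactly the content of the theorem. Two details need tightening. First, for the generator-matrix ensemble the probability that a fixed nonzero $\vu\in\bF_q^n$ lies in $\cC_{\mathrm{lin}}$ is not exactly $q^{-(n-k)}$; it is at most $(q^k-1)q^{-n}<q^{k-n}$ (union over the $q^k-1$ nonzero message vectors), which is all your union bound needs, but the word ``exactly'' should go. Second, you mix normalizations: with $\reff$ defined by $V_n\reff^n=q^{n-k}$ you are working with the unscaled lattice $\Phi(\cC_{\mathrm{lin}})+q\bZ^n$, whose effective radius grows like $q^{1-k_n/n}\sqrt{n}$ and cannot be ``held constant,'' whereas Lemma \ref{lem:coveringgoodness} is stated for the $1/q$-scaled lattice $\frac1q\Phi(\cC')+\bZ^n$ with constant $\reff$. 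Since all three radii are degree-one homogeneous the ratios transfer, but the always-present sublattice you must exclude from the packing count changes accordingly: unscaled it is $q\bZ^n$ (nonzero norm $\ge q$, so $r<q$ suffices and your $q=\omega(\sqrt n)$ remark is fine), while scaled it is $\bZ^n$ itself, whose nonzero vectors have norm $1$ and force $\rpack\le 1/2$; there the packing event requires choosing the constant $\reff\le 1/(1-2\delta)$, a constraint on the parameter choice rather than something ``automatic.'' Both fixes are routine, so the proposal stands as a sound reconstruction once the bookkeeping is done in a single normalization.
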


\bibliographystyle{alpha}
\bibliography{IEEEabrv,ref}

\end{document}